\def\dOi{11(1:18)2015}
\keywords{wireless systems, broadcast communication, collisions, timed process calculi, barbed congruence, extensional semantics}
\newenvironment{definition}{\begin{defi}\rm}{ \hfill\qed \end{defi}}
\newenvironment{theorem}{\begin{thm} \rm}{\end{thm}}
\newenvironment{proposition}{\begin{prop} \rm}{\end{prop}}
\newenvironment{lemma}{\begin{lem} \rm}{\end{lem}}
\newenvironment{corollary}{\begin{cor}\rm}{\end{cor}}
\newenvironment{example}{\begin{exa}\rm}{ \hfill\qed\end{exa}}
\newenvironment{remark}{\begin{rem}\rm}{ \hfill\qed\end{rem}}
\definecolor{greenish}{rgb}{.24,.5,.26} 
\newcommand{\MHx}{\color{greenish}} 
\newcommand{\MHc}[1]{ {\MHx   #1}}
\renewcommand{\MHc}[1]{{#1}}
\newcommand{\MHf}[1]{\marginpar{\MHx~\bf\fbox{\footnotemark}~}\footnotetext{ \color{greenish}   #1}}
\renewcommand{\MHf}[1]{}
\newcommand{\ffalse}{\ensuremath{\text{false}}}
\newcommand{\rcvno}[1]{\ensuremath{\mathop{\#\textsf{Rcv}({#1})}}}
\newcommand{\pre}{\ensuremath{\text{pre}}}
\newcommand{\post}{\ensuremath{\text{post}}}
\title[Modelling  MAC-layer communications in wireless systems]{Modelling  MAC-layer communications in wireless systems\rsuper*}
\author[A. Cerone]{Andrea Cerone\rsuper a}
\address{{\lsuper a}IMDEA Software Institute, Spain}
\email{andrea.cerone@imdea.org}
\author[M. Hennessy]{Matthew Hennessy\rsuper b}
\address{{\lsuper b}School of Statistics and Computer Science, Trinity College Dublin, Ireland}
\email{Matthew.Hennessy@cs.tcd.ie}
\thanks{{\lsuper{a,b}}The first and the second authors are supported by SFI project SFI 06 IN.1 1898.}
\author[M. Merro]{Massimo Merro\rsuper c}
\address{{\lsuper c}Dipartimento di Informatica, Universit\`a degli Studi di Verona, Italy}
\email{massimo.merro@univr.it}
\thanks{{\lsuper c}The third author was partially supported by the PRIN 2010-2011 national project ``Security Horizons''.}
\begin{document} 

\begin{abstract}
  We present a timed process calculus for modelling wireless networks
  in which individual stations broadcast and receive messages;
  moreover the broadcasts are subject to collisions.  Based on a
  reduction semantics for the calculus we define a contextual
  equivalence to compare the external behaviour of such wireless
  networks.  Further, we construct an extensional 
\MHc{LTS (labelled transition system) which models the
  activities of stations that can be directly observed by the external
  environment. Standard bisimulations in this LTS provide a sound proof method
  for proving systems contextually equivalence. We illustrate the usefulness of the
  proof methodology by a series of examples. Finally we  show
  that this proof method is also complete, for a large class of systems.}

\end{abstract}

\maketitle



\section{Introduction}
\label{sec:intro}

Wireless networks are becoming increasingly pervasive with
applications across many domains, \cite{rappaport,wirelesssurvey}.
They are also becoming increasingly complex, with their behaviour
depending on ever more sophisticated protocols. There are different
levels of abstraction at which these can be defined and implemented,
from the very basic level in which the communication primitives
consist of sending and receiving electromagnetic signals, to the
higher level where the basic primitives allow the 
initiation of connections between nodes in a wireless system and the exchange of
data between them \cite{tanenbaum}.

Assuring the correctness of the behaviour of a wireless network 
has always been difficult. \MHc{Several} approaches have been 
proposed to address this issue for networks described at a high level 
\cite{nanz,Merro07,Godskesen07,GhasWanFok10,SRS10,KouzapasP11,Borgstrom_etal11,Cerone_Hennessy2013};
\MHc{these typically}
allow the formal description of protocols at the  
\emph{network layer} of the \emph{TCP/IP} reference model 
\cite{tanenbaum}.  \MHc{However}
there are few frameworks in the literature which consider networks 
described at the MAC-Sublayer of the \emph{TCP/IP} reference model \cite{LaneseS10,MerroBS11,BGMR12,Wang13}. 
This is the topic of the current paper.  
We propose a process calculus for describing and verifying wireless networks at the 
\emph{MAC-Sublayer} of the \emph{TCP/IP} reference model. 

This calculus, called the Calculus of Collision-prone Communicating Processes (CCCP), 
has been largely inspired by TCWS \cite{MerroBS11};
in particular CCCP  inherits its communication features but simplifies considerably
the syntax, the reduction semantics, the notion of observation, and as we will see the behavioural theory. 
In CCCP a wireless system is considered to be a collection of wireless stations which 
transmit and receive messages.
The transmission of messages is \emph{broadcast\/}, and it is 
\emph{time-consuming\/}; 
the transmission of a message $v$ can require several time slots (or instants). 
In addition, wireless stations in our calculus are sensitive to \emph{collisions\/}; 
if two different stations are transmitting a value over a channel $c$ at the same time slot 
then a collision occurs; as a result, the content of the messages originally being transmitted is lost.

More specifically, in CCCP a state of a wireless network (or simply network, or system) 
will be described by a \emph{configuration}
of the form 
\begin{math}
  \conf{\Gamma}{ W}
\end{math}
where $W$ describes the code running at individual wireless stations
and $\Gamma$ represents the communication state of channels.  
At any given point of time there may be \emph{exposed}
communication channels, that is channels containing messages (or values) in
transmission; this information will be recorded in $\Gamma$. 

Such systems evolve by the broadcast of  messages between stations, the passage 
of time, or some other internal activity, such as the occurrence of collisions 
and their consequences. One of the topics of the paper is to capture
formally these complex evolutions, by defining  a \emph{reduction
  semantics}, whose judgements take the form 
$\conf{\Gamma_1}{ W_1} \red \conf{\Gamma_2}{ W_2 }$. 
We show that the reduction semantics we propose satisfies 
some desirable time properties such as \emph{time determinism\/}, \emph{maximal progress\/} and \emph{patience\/}~\cite{Sifakis94,HennessyR95,Yi91}. 

However the main aim of the paper is to develop a behavioural theory
of wireless networks with time-consuming communications. To this end we need a formal notion of when two
such systems are indistinguishable from the point of view of
users. Having a reduction semantics it is now straightforward to adapt
a standard notion of \emph{contextual equivalence}: 
$\conf{\Gamma_1}{ W_1} \simeq \conf{\Gamma_2}{ W_2 }$.
Intuitively this means that either system, $\conf{\Gamma_1}{ W_1}$ or
$\conf{\Gamma_2}{ W_2}$, can be replaced by the other in a larger
system without changing the observable behaviour of the overall
system. Formally, we use the approach of 
\cite{hy,EnvBisim07}, 
often called
\emph{reduction barbed congruence}, 
rather
than that of \cite{MiSa92}\footnote{See page 106 of \cite{SaWabook} for a brief
discussion of the difference.}.  
The only parameter in the definition of our contextual equivalence is the choice of
primitive observation or \emph{barb};  our choice is natural for wireless systems:
the ability  to transmit on
an idle 
(or unexposed) 
channel, that is a channel with 
no active transmissions.

As explained in papers such as \cite{RS08-dbtm,dpibook}, contextual
equivalences are determined by so-called \emph{extensional actions},
that is the set of minimal observable interactions which a system can
have with its external environment. For CCCP determining these actions
is non-trivial. Although values can be transmitted and received on
channels, the presence of collisions means that these are not
necessarily observable. 
In fact the important point is not
the transmission of a value, but its successful delivery.  Also,
although the basic notion of observation on systems does not involve
the recording of the passage of time, 
this has to be taken into account extensionally in order  to gain a proper extensional account of
systems. 

The extensional semantics determines an LTS (labelled transition
  system) over \MHc{configurations}, which in turn gives rise to the standard
  notion of (weak) bisimulation equivalence between configurations. This
  gives a powerful co-inductive proof technique: to show that two
  systems are behaviourally equivalent it is sufficient to exhibit a
  witness bisimulation which contains them. 
  
  One result of this paper is that weak bisimulation in the extensional LTS
is  sound with respect to the touchstone contextual equivalence:
if two systems are related by some bisimulation in the extensional
LTS then they are contextually equivalent.  In order to show the effectiveness
 of our bisimulation proof method we 
prove a number of non-obvious system equalities. 
However, the main contribution 
of the current paper 
is that completeness holds 
for a large class of networks, called \emph{well-formed}.
If two such 
networks are contextually equivalent
then there is some bisimulation, based on our novel extensional actions, 
which contains them. 

To the best of our knowledge, this is the first result of 
full abstraction for weak barbed congruence, for a calculus 
of wireless systems where communication is subject to collisions. 
Also, the only other result in the field of which we are aware is 
the one illustrated in \cite{MerroBS11}. Here a sound but not complete bisimulation based
proof method is developed for (a different form of) reduction barbed congruence. 
In this paper, both soundness and completeness are achieved by simplifying 
the calculus and isolating novel extensional actions.

We end this introduction with an outline of the paper. 
In Section \ref{sec:calculus} we present the calculus CCCP. 
More precisely, 
Section \ref{sec:syntax} contains the syntax of our language; 
Section \ref{sec:intsem} introduces the intensional 
semantics; here the adjective \emph{intensional} is used to 
stress the fact that the actions of this semantics correspond 
to those activities which can be performed by a network. 
Section \ref{sec:redsem}
provides the reduction 
semantics, which models the intra-actions that can be performed by a network 
when isolated from the external environment.; Section \ref{sec:cxt} defines
 our touchstone contextually-defined behavioural equivalence for comparing 
wireless networks. 

In Section \ref{sec:extsem} we address the problem of 
defining the minimal observable activities of systems. 
These are defined as actions of an extensional 
semantics in Section \ref{sec:ea}, while in Section \ref{sec:bisim} 
we consider the bisimulation principle induced by such actions. 
Here the adjective \emph{extensional} is used to stress the fact that 
the actions of such a semantics correspond to those activities 
which can be observed by the external environment of a network.

In Section \ref{sec:fullabstraction} we present the main results of the paper. 
First we prove that our bisimulation proof technique is sound with respect to the 
contextual equivalence, Section \ref{sec:soundness}. 
In Section \ref{sec:completeness} we prove that, for 
a large class of configurations,  called well-formed, 
 our proof technique is also complete. 

The usefulness of our bisimulation proof technique is shown in Section \ref{sec:apps}, 
where we consider simple case studies which model common features of 
wireless networks at the Mac-Layer. 

Section \ref{sec:conclusions} concludes the paper with a comparison with the related work.

\section{The calculus}
\label{sec:calculus}
As already discusses 
a wireless system will be represented in our calculus as a \emph{configuration} of the form 
\begin{math}
  \conf{\Gamma}{ W},
\end{math}
 where $W$ describes the code running at individual wireless stations and $\Gamma$ 
is a channel environment containing the transmission information for channels. A possible evolution of a system will then be given by a sequence
of computation steps:
\begin{align}\label{eq:steps}
  &\conf{\Gamma_1}{ W_1} \red \conf{\Gamma_2}{ W_2 } \red \ldots \ldots \red 
\conf{\Gamma_k}{W_k} \ldots \red \ldots 
\end{align}
where intuitively each step corresponds to either the passage of time, a broadcast  from 
a station, or some unspecified internal computation; the code running at stations evolves as a 
computation proceeds, but so also does the state of the underlying 
channel environment. In the following we will use the meta-variable $\confC$ to 
range over configurations.

\subsection{Syntax}
\label{sec:syntax}
\begin{table}[!t]
  \begin{center}
    \begin{math}
 \arraycolsep 0pt
      \begin{array}{@{\hspace*{4mm}}rcl@{\hspace*{40mm}}l@{\hspace*{12mm}}rcl@{\hspace*{7mm}}l}
          \\[2pt]
          W & \Bdf & P & \mbox{station code}\\[2pt]
          & \Bor & \arcv c x P & \mbox{active receiver}\\[2pt]
          & \Bor & W_1 | W_2 & \mbox{parallel composition}\\[2pt]
          & \Bor & \crest{c}{(n,v)}.W & \mbox{channel restriction}
         \\
\\
          P, Q
          & \Bdf & 
           \bcastc  c e  P
          & \mbox{broadcast} \\[2pt]
                    & \Bor & 
           \rcvtimec c x P Q & \mbox{receiver with timeout}\\[2pt]
          &\Bor  &\delay \sigma P
          & \mbox{delay} \\[2pt]
           & \Bor & \tau.P & \mbox{internal activity}\\[2pt]
          & \Bor & 
         P + Q  & \mbox{choice}\\[2pt]
  & \Bor &
          \matchb{b} P Q
          & \mbox{matching}\\[2pt]
          & \Bor & X & \mbox{process variable}\\[2pt]
          & \Bor &   \nil           & \mbox{termination} \\[2pt]
          & \Bor & 
           \fix X P
          & \mbox{recursion} 
         \\
\\[2pt]
\multicolumn{4}{l}{\textit{Channel Environment:}
\Q \Gamma : \chanset \rightarrow \mathbb{N} \times \valset}\\[6pt]
         \end{array}
    \end{math}
   \end{center}
\caption{CCCP: Syntax}
\label{tab:syntax}
\end{table}
Formally we assume a set of channels $\chanset$, ranged over by
  $c,d,\cdots$,  and a set of values $\valset$, which contains 
a set of data-variables, ranged over by $x,y,\cdots$ and a special value
$\err$; this value will be used to denote  faulty transmissions. 
The set of \emph{closed values}, that is those not containing occurrences of 
variables, are ranged over by $v,w,\cdots$. 
We
also assume that every closed value $v \in \valset$ has an
  associated strictly positive integer $\delta_v$, which denotes the
number of time slots needed by a wireless station to transmit
$v$.
Finally, we assume a language of expressions ${\mathbf{Exp}}$ 
which can be built from values in $\valset$; we also assume a 
function $\interpr{\cdot}$, for evaluating expressions 
with no occurrences of data-variables into closed values.

A channel environment is a mapping $\Gamma : \chanset \rightarrow
\mathbb{N} \times \valset$. In a configuration $\conf{\Gamma}{W}$
where $\Gamma(c) = (n,v)$ for some channel $c$, there is a wireless station
which is
currently transmitting the value $v$ for the next $n$ time slots.  We
will use some suggestive notation for channel environments:
$\Gamma \vdash_{\mathrm{t}} c:n$ in place of
$\Gamma(c) = (n,w)$ for some $w$, $\Gamma \vdash_{\mathrm{v}} c:w$ in place of
$\Gamma(c) = (n,w)$ for some $n$.  If $\Gamma \vdash_{\mathrm{t}} c : 0$ we say
that channel $c$ is idle in $\Gamma$, and we denote it with $\Gamma
\vdash c: \cfree$. Otherwise we say that $c$ is exposed in $\Gamma$,
denoted by $\Gamma \vdash c: \cbusy$.  The channel environment $\Gamma$ 
such that $\Gamma \vdash c: \cfree$ for every channel $c$ 
is said to be \emph{stable}. 
Often we will compare channel environments according to the amount of time 
instants for which channels will be exposed; we say that 
$\Gamma \leq \Gamma'$ if, for any channel $c$, $\Gamma \vdash_{\mathrm{t}} c : n$ 
implies $\Gamma' \vdash_{\mathrm{t}} c: m$, for some $m$ such that  $n \leq m$.

The syntax for system terms $W$ is given in Table~\ref{tab:syntax},
\MHc{where $P$} ranges over code for  programming individual
stations, which is also illustrated in Table~\ref{tab:syntax}. 
A system term $W$ is a collection of individual threads running in parallel, with
possibly some channels restricted. As we will see in Section \ref{sec:apps}, 
channel restriction can be used to model non-flat network topologies.

Each thread may be either an inactive piece of code $P$ or
an active code of the form $\arcv c x P$. This latter term represents a wireless station
which is receiving a value from the channel $c$; when the
value is eventually received the variable $x$ will be replaced with the 
received value in the 
code $P$. 

The syntax for station code is based on standard process
calculus constructs. 
The main constructs are 
time-dependent reception from a channel $\rcvtimec c x P Q$, 
explicit time delay $\sigma.P$, and broadcast along a channel $ \bcastc c {e}
  P$; here the value being broadcast is the one obtained by evaluating $e$ via 
  the function $\interpr{\cdot}$, provided that $e$ does not contain any occurrence 
  of data-variables.
Of the remaining
standard constructs the most notable is matching, $ \matchb{b} P Q$
which branches to $P$ or $Q$, depending on the value of the Boolean
expression $b$. Such boolean expressions can be either 
equality tests of the form $e_1 = e_2$, or terms of the form 
$\expsd{c}$, which will be used to check whether channel $c$ 
is exposed, that is it is being used for transmission. 

In the construct $\fix X P$ occurrences of the recursion variable $X$ in $P$ are bound; 
similarly in
the terms $\rcvtimec c x P Q$ and   $\arcv c x P$  
 the data-variable $x$ is 
bound in $P$.
This gives
rise to the standard notions of free and bound variables, $\alpha$-conversion
and capture-avoiding substitution; 
In a configuration of the form $\conf{\Gamma}{W}$, we 
assume that $W$ is closed, meaning that all its occurrences of 
both data-variables and process variables are bound. 
In general, we always assume that a system term $W$ is closed, 
unless otherwise stated. Sometimes we will 
need to consider system terms with free occurrences of process variables, 
we will explicitly say that they are open system terms.
System terms, both open and closed, are identified up to $\alpha$-conversion. 
We assume that all occurrences of recursion variables are  \emph{guarded}; 
they must occur within either a broadcast, input residual, timeout branch, time delay prefix, or within an 
execution branch of a matching construct. 
This ensures that  recursive calls cannot be used to build up infinite loops 
within a time slot

\begin{example}
\label{ex:collision.syntax}
Consider the configuration 
$$\confC_1 = \conf{\Gamma}{S_1 | S_2 | R_1}$$ where 
\begin{eqnarray*}
S_1 &=& \bcastc{c}{v_0}{\nil}\\
S_2 &=& \sigma.\bcastc{c}{v_1}{\nil}\\
R_1 &=& \rcvtimec{c}{x}{P}{\nil}
\end{eqnarray*}
and $\Gamma$ is the stable channel environment. Further, 
we assume that $\delta_{v_0} = 2$ and $ \delta_{v_1} = 1$. 
This configuration contains two sender stations, running the code $S_1$ and 
$S_2$, respectively,  and a receiving station, running the code $R_1$. 
In the first time slot, the station running the code $S_1$ broadcasts the 
value $v_0$ along channel $c$. 
The station running the code $R_1$ starts receiving such a value and 
it will be busy in 
receiving it for the next two time slots. 
In the first time slot the station running the code $S_2$ is idle. 
It is only in the second time slot that this station will broadcast a value along 
channel $c$. At this point the receiving station will be exposed to two transmissions; 
the transmission of value $v_0$, which is still in progress, and the transmission of 
value $v_1$. As a result, a collision happens, and the value received by the receiver 
will be at the end error value $\err$. 

The formal behaviour of the configuration $\confC_1$ will be explained in Example 
\ref{ex:collision}.
\end{example}

We use a number of notational conventions. $\prod_{i \in I}W_i$ means
the parallel composition of all stations $W_i$, for $i \in I$.  
We identify $\prod_{i \in I}W_i$ \MHc{with $\nil$} if $I = \emptyset$.
We will omit trailing occurrences of $\nil$, render
$\crest{c}{(n,v)}.W$ as $\nu c.W$ when the values $(n,v)$ are not
relevant to the discussion, and use $\nu {\tilde c}.W$ as an abbreviation for a
sequence $\tilde c$ of such restrictions. We write  
$\rcvtimec c x P {}$ for $\rcvtimec c x P \nil$. Finally, 
we abbreviate the recursive process $\fix X {\rcvtimec c x P X}$ with 
$\rcvc c x P$; \MHc{as we will see}  this is a persistent listener at channel 
$c$ waiting for an incoming message. 

\subsection{Intensional semantics}
 \label{sec:intsem}
Our first goal is to formally define  computation steps among configurations
 of the form $\conf
{\Gamma_1} {W_1} \red \conf {\Gamma_2} {W_2}$. 
In order to do that, we first 
define the evolution of system terms with respect to a channel
environment $\Gamma$ via a set of SOS rules whose judgements take the
form $\conf{\Gamma}{W_1} \trans{\lambda} W_2$, were $\lambda$ is 
an intensional action taking one of the following forms: 
\begin{enumerate}
\item $\sndac c v$, denoting a station starting broadcasting 
value $v$ along channel $c$
\item $\sigma$, denoting the passage of one time slot, or time instant
\item $\tau$, denoting an internal action
\item $\rcva c v$, denoting a station in the external environment starting 
broadcasting value $v$ on channel $c$. 
\end{enumerate}
These actions $\lambda$ will have an
effect also on the channel environment, which we  describe
by means of a
functional $\gupd{\lambda}{\cdot} : {\mathbf{Env}} \rightarrow
{\mathbf{Env}}$, where ${\mathbf{Env}}$ is the set of channel
environments.

\begin{definition}\label{def:conf.after}[Channel Environment update]
 Let $\Gamma \in \mathbf{Env}$ be an arbitrary channel environment and 
 $c \in \chanset$ an arbitrary channel. Let $t_c$ 
 and $v_c$ be the exposure time and the value transmitted along 
channel $c$ in $\Gamma$, respectively, that is $\Gamma \vdash_{\mathrm{t}} c: t_c$ 
and $\Gamma \vdash_{\mathrm{v}} c: v_c$. 
For any intensional action $\lambda$, we let 
$\gupd \lambda \Gamma$ 
be the unique channel environment determined by the following 
definitions:\footnote{For convenience
                 we assume $0-1$ to be $0$.}
\begin{enumerate}
\item $\tupd \Gamma \vdash_{\mathrm{t}} c : t_c - 1$ and $\tupd \Gamma \vdash_{\mathrm{v}} c : v_c$; 
\item
for any value $v \in \valset$, let $\gupd{c!v}{\Gamma}$ be the channel environment such that 
  \begin{align*}
       &\gupd{c!v} {\Gamma} \vdash_{\mathrm{t}} c :
       \begin{cases}
                        \deltav  &\, \text{if}\,\,  \Gamma \vdash c:\cfree \\
                       \max(\delta_v, t_c)& \, \text{if}\,\,  \Gamma \vdash c:\cbusy\\        
      \end{cases}
       & \gupd{c!v} {\Gamma}  \vdash_{\mathrm{v}} c :  
      \begin{cases}
                        v  &\, \text{if}\,\,  \Gamma \vdash c:\cfree \\
                       \err & \, \text{if}\,\,  \Gamma \vdash c:\cbusy\\        
      \end{cases}
  \end{align*}
  and for any channel $d$, $d \neq c$, let  $\gupd{c!v}{\Gamma} \vdash_{\mathrm{t}} d: t_d$ and  
  $\gupd{c!v}{\Gamma} \vdash_{\mathrm{v}} d : v_d$;  
\item for any value $v$, $\gupd{c?v}{\Gamma} = \gupd{c!v}{\Gamma}$; 
\item $\gupd{\tau}{\Gamma} = \Gamma$.
\end{enumerate}
\end{definition} 

Let us describe the intuitive meaning of this definition. When
time passes, the time of exposure of each channel decreases by one
time unit.  
The predicates $\gupd{c!v}{\Gamma}$ and 
$\gupd{c?v}{\Gamma}$ model how collisions are handled in our
calculus. When a station begins broadcasting a value $v$ over an 
idle channel $c$ this channel becomes exposed for the amount of time
required to transmit $v$, that is $\delta_v$.  If the channel is not
idle a collision happens. As a consequence, the value that will be
received by a receiving station, when all transmissions over channel
$c$ terminate, is the error value $\err$, and the exposure time is
adjusted accordingly.
Finally the definition of $\gupd{\tau}{\Gamma}$ reflects the
intuition that internal activities do not affect the exposure state
of channels.

Let us turn our attention to the  intensional semantics of system terms.
For the sake of clarity, the inference rules for the evolution of system terms, 
$\conf{\Gamma}{W_1} \trans{\lambda} W_2$, 
are split in four tables, each one focusing 
on a particular form of activity.

\begin{table}[!t]
\[
  \begin{array}{llll}
    &\Txiom{Snd}{\interpr{e} = v 
}
         {\conf \Gamma { \bcastc c e P}
          \trans{\sndac c v} {\asnd \sigma \deltav P}}
&&
\Txiom{Rcv}{  
\Gamma \vdash c:\cfree}
          {\conf \Gamma {\rcvtimec c x P Q}
          \trans{\rcva c v} {\arcv c x P}
         }
\\\\
&
\Txiom{RcvIgn}{\neg\isrcv{\conf{\Gamma}{W},c}}
{\conf \Gamma W \trans{c?v} {W}}
&&
\Txiom{Sync}{\conf \Gamma W_1 \trans{c!v}  {W_1'} \Q 
 \conf \Gamma W_2 \trans{c?v}  {W_2'}}
{\conf \Gamma W_1 | W_2 \trans{c!v}  {W_1' | W_2'}}
\\\\
&\multicolumn{3}{c}{
\Txiom{RcvPar}{\conf \Gamma W_1 \trans{\rcva c v}  {W_1'} \Q 
\conf \Gamma W_2 \trans{\rcva c v}  {W_2'}}
{\conf \Gamma W_1 | W_2 \trans{\rcva c v}  {W_1' | W_2'}}
}
  \end{array}
 \]
     \caption{Intensional semantics:  transmission}
    \label{tab:proc}
  \end{table} 

Table~\ref{tab:proc} contains the rules governing transmission.
Rule (Snd) models a non-blocking  broadcast of a message  
along channel $c$.  The value $v$ sent by process $\bcastc c e P$ is
the one obtained by evaluating an expression $e$; note that here we are 
assuming that $e$ is closed, hence we can evaluate it to a closed value 
via the function $\interpr{\cdot}$.
A transmission can fire at any time,  independently
on the state of the network; the
notation $\sigma^{\deltav}$ represents the time delay operator
$\sigma$ iterated $\deltav$ times.  So when the process $\bcastc c v
P$ broadcasts, it has to wait $\deltav$ time units (the time required to transmit $v$) before the residual
$P$ is activated.  On the other hand, reception of a message by a
time-guarded listener $\rcvtimec c x P Q$ depends on the state of the
channel environment. If the channel $c$ is free then rule (Rcv) indicates that
reception can start and the listener evolves into the active receiver
$\arcv c x P$. 

%
Rule \rulename{RcvIgn} states that 
if a system term $W$ is not waiting for a message  
along a channel $c$, or if $c$ is already exposed, then any broadcast 
along $c$ is ignored by the configuration $\conf{\Gamma}{W}$. 
Here $\isrcv{\conf{\Gamma}{W}, c}$ is a predicate which evaluates to true 
in the case that in $\conf{\Gamma}{W}$ channel $c$ is not exposed, 
and
%
%
$W$ contains among its parallel 
\MHc{components
at least one} non-guarded 
 receiver of the form $\rcvtimec c x {P} {Q}$ which is actively 
\MHc{awaiting  a message}. 
Formally, we first define a predicate $\isrcv{W,c}$ for open 
terms, which is then lifted to configurations. For open terms we have
 $\isrcv{\conf{\Gamma}{W},c}$ is defined inductively as 
\begin{eqnarray*}
\isrcv{P,c} = \ffalse&\mbox{provided }& P = \bcastc c e Q, P = \tau.Q \mbox{ or } P = X\\
\isrcv{\rcvtimec d x P Q, c} =\ttrue&\text{if and only if}& d=c\\
\isrcv{P+Q,c} = \ttrue &\mbox{if and only if}& \isrcv{P,c} = \ttrue \mbox{ or } \isrcv{Q,c} = \ttrue \\
\isrcv{\fix X P,c} = \ttrue &\mbox{if and only if}& \isrcv{P,c} = \ttrue\\
&&\\
\isrcv{\arcv c x P, d} = \ffalse &\mbox{always}\\
\isrcv{W_1 | W_2,c} =\ttrue &\mbox{if and only if}& \isrcv{W_1, c} = \ttrue \mbox{ or } \isrcv{W_2,c} = \ttrue\\
\isrcv{\nu d.W,c} = \ttrue &\mbox{if and only if}& \isrcv{W,c} = \ttrue \mbox{, where we assume } d \neq c
\label{page:isrcv}
\end{eqnarray*}
Then, for any configuration $\conf{\Gamma}{W}$, we let $\isrcv{\conf{\Gamma}{W}, c} = \ttrue$ if
and only if $\Gamma \vdash c: \cfree$ and $\isrcv{W,c} = \ttrue$.

The remaining two rules in Table~\ref{tab:proc} (Sync) and (RcvPar) 
serve to synchronise parallel stations on the same transmission
 \cite{HeRa98,Sifakis94,CBS}.

\begin{example}[Transmission]
\label{ex:intsem1}
Let $\confC_0 = \conf{\Gamma_0}{W_0}$, where 
 $W_0 =
  \bcastzeroc{c}{v_0} \,|\,$ 
$\rcvtimec d x \nil {(\rcvtimec c x Q {})} \,|\, \rcvtimec c x P {}$, 
with $\delta_{v_0} = 2$, and $\Gamma_0$ a stable environment. 

Using rule \rulename{Snd} we can infer 
$\conf{\Gamma_0}{\bcastzeroc c {v_0}} \trans{\snda c {v_0}} \sigma^2$; 
this station starts transmitting the value $v_0$ along channel $c$.
Rule \rulename{RcvIgn} can be used to derive 
the transition $\conf{\Gamma_0}{\rcvtimec d x {\nil} {(\rcvtimec c x Q {})}} \trans{\rcva {c} {v_0}} 
\rcvtimec d x {\nil} {(\rcvtimec c x Q {})}$, in which  the broadcast of value $v_0$ along 
channel $c$ is ignored.
On the other hand, Rule \rulename{RcvIgn} cannot be applied 
to the configuration $\conf{\Gamma_0}{\rcvtimec c x P {}}$, 
since this station is waiting to receive a value on channel $c$; however we can 
derive the transition $\conf{\Gamma_0}{\rcvtimec c x P {}} \trans{\rcva c {v_0}} 
\arcv c x P$ using Rule \rulename{Rcv}. 

We can put together the three transitions above  using the rule 
(Sync), leading to the transition $\confC_0 \trans{\snda c v} W_1$, 
where $W_1 = \sigma^2 | \rcvtimec d x {\nil} {(\rcvtimec c x Q {})} | \arcv c x P$.
\end{example}

\begin{example}[Ignored Receptions]
\label{ex:rcvign}
Consider the configuration $\confC = \conf{\Gamma}{\bcastzeroc c v  \;|\; \rcvtimec c x P Q}$, 
where $\deltav = 1$ and $\Gamma$ is such that $\Gamma \vdash c:
\cbusy$, say $\Gamma \vdash_{\mathrm{t}} c : 1$. 
Using the rules introduced so far we can derive
\begin{align}
  \label{eq:1}
  \confC \trans{c!v} 
\conf{\Gamma}{\sigma \;|\; \rcvtimec c x P Q}
\end{align}
describing the unblocked sending of the value $v$ along the channel
$c$. This can be inferred using Rule \rulename{Sync} from
$\conf{\Gamma}{\bcastzeroc c v} \trans{\sndac c v} \sigma$, which can
be inferred using Rule \rulename{Snd}, and the judgement
$\conf{\Gamma}{\rcvtimec c x P Q} \trans{\rcva c v} \rcvtimec c x P
Q$.  This latter can be inferred using Rule \rulename{RcvIgn}, because
$\Gamma \vdash c: \cbusy$ means that $\isrcv{\conf{\Gamma}{\rcvtimec c
    x P Q}, c} = \text{false}$.

In the transition (\ref{eq:1}) above the receiver  $\rcvtimec c x P Q$
ignores the transmission of $v$ along $c$. One might have expected 
it to accept this value. However the channel is already exposed, $\Gamma \vdash c:
\cbusy$, and thus the receptor can not properly synchronise 
properly with the sender.  We will see later, in Example~\ref{ex:rcvfail},
that a transmission errors actually occurs. 
\end{example}

\begin{table}[!t]
\[
  \begin{array}{ll}
   \Txiom{TimeNil}{}
{\conf \Gamma \nil \trans{\sigma}  \nil}
&
\Txiom{Sleep}{}{\conf \Gamma {\delay \sigma P} 
\trans{\sigma}  P}
\\\\
\Txiom{ActRcv}{\Gamma \vdash_{\mathrm{t}} c: n,\, n > 1 }
{\conf{\Gamma}{\arcv c x  P} \trans{\sigma}
 {{\arcv c x  P}}}
&
\Txiom{EndRcv}{\Gamma \vdash_{\mathrm{t}} c: 1,\,\, \Gamma \vdash_{\mathrm{v}} c = w  }
{\conf \Gamma {\arcv c x  P} \trans{\sigma}
 {{{\subst w x}P}}}    
\\\\
\multicolumn{2}{c}{\Txiom{Timeout}{\Gamma \vdash c: \cfree }
{\conf \Gamma {\rcvtimec c x P Q} \trans{\sigma}  {Q}}}
  \end{array}
\]
    \caption{Intensional semantics: timed transitions}
    \label{tab:net2}
  \end{table}   

The transitions for modelling 
the passage of time, $\conf \Gamma W \trans{\sigma} W'$,
are given in Table~\ref{tab:net2}. Rules (TimeNil) and (Sleep) are 
straightforward. In 
rules (ActRcv) and (EndRcv) we see that the active receiver $\arcv c x
P$ continues to wait for the transmitted value to make its way through
the network; when the allocated transmission time elapses the value is
then delivered and the receiver evolves to ${\subst w x}P$.  
 Finally, Rule 
(Timeout) implements the idea that ${\rcvtimec c x P Q}$ is a
time-guarded receptor; when time passes it evolves into the
alternative $Q$. However this only happens if the channel $c$ is not
exposed. What happens if it is exposed is explained in 
Table~\ref{tab:net3}.

\begin{example}[Passage of Time]
\label{ex:intsem2}
Let $\confC_1 = \conf{\Gamma_1}{W_1}$, where $\Gamma_1(c) = (2,v_0)$,  $ \Gamma_1 
\vdash d: \cfree$ 
and $W_1 = \sigma^2 | \rcvtimec d x {\nil} {\rcvtimec c x Q {}} | \arcv c x P$ is the system term 
derived in Example \ref{ex:intsem1}.
We show how a $\sigma$-action can be derived for this configuration. 
First note that $\conf{\Gamma_1}{\sigma^2} \trans{\sigma} 
\sigma$; this transition can be derived using Rule \rulename{Sleep}. 
Since $d$ is idle in $\Gamma_1$, we can apply Rule \rulename{TimeOut} 
to infer the transition $\conf{\Gamma_1}{\rcvtimec d x {\nil}{(\rcvtimec c x {Q}{})}} 
\trans{\sigma} \rcvtimec c x {Q}{}$; time passed before a value could 
be broadcast along channel $d$, causing a timeout in the station waiting 
to receive a value along $d$.
Finally, since $\Gamma_1 \vdash_{\mathrm{v}} c: 2$, we can use Rule 
\rulename{ActRcv} to derive $\conf{\Gamma_1}{\arcv c x P} 
\trans{\sigma} \arcv c x P$.
 
At this point we can use twice Rule \rulename{TimePar} (which 
is given in Table \ref{tab:struct})  to 
infer a $\sigma$-action performed by $\confC_1$. This leads 
to the transition $\confC_1 \trans{\sigma} W_2$, where 
$W_2 = \sigma | \rcvtimec c x Q {} | \arcv c x P$.
\end{example}
 \begin{table}[!t]
\begin{align*}
&\Txiom{RcvLate}{
\Gamma \vdash c:\cbusy }
          {
\conf \Gamma {\rcvtimec c x P Q}
         \trans{\tau} {\arcv c x {{\subst {\err} x}P}}
         }
&&
\Txiom{Tau}{}{\conf \Gamma {\tau.P} \trans{\tau}  P}
\\\\
&\Txiom{Then}
{\interpr{b}_\Gamma = \mbox{true}}
{ \conf {\Gamma}{\matchb b {P} {Q}} \trans{\tau} 
   {\sigma.P}} 
&&
\Txiom{Else}
{\interpr{b}_\Gamma = \mbox{false}}
{ \conf {\Gamma}{\matchb b {P} {Q}} \trans{\tau} 
{\sigma.Q}}
\end{align*}
\caption{Intensional semantics: - internal activity}
\label{tab:net3}
\end{table}

Table \ref{tab:net3} is devoted to internal transitions $\conf \Gamma W \trans{\tau} W'$. 
\MHc{Let us first  explain} rule (RcvLate). Intuitively the process
$\rcvtimec c x P Q$ is ready to start receiving a value on channel $c$. 
However if $c$ is exposed this means that a transmission is already taking
 place. Since the process has therefore missed the start of the transmission 
it will receive an  error value.
\MHc{Thus the rule (RcvLate) reflects the fact that in wireless systems a collision takes place
if there is a misalignment between the transmission and reception of a message.}
The remaining rules are straightforward. 
Note that in the matching construct 
we use a channel environment dependent
evaluation function for Boolean expressions $\interpr{b}_\Gamma$ (note that this has not to 
be confused with the function $\interpr{\cdot}$, used to evaluate closed expressions), 
\MHc{\MHf{Andrea: please check}because of the presence of the exposure predicate $\expsd{c}$
in the Boolean language. Formally we have that $\interpr{e_1 = e_2}_{\Gamma} = \ttrue$ evaluates to 
true if and only if $\interpr{e_1} = \interpr{e_2}$, and $\interpr{\expsd c}_{\Gamma} = \ttrue$ if 
and only if $\Gamma \vdash c: \cbusy$. 
We remark that checking for the exposure of a channel 
amounts to listening on the channel for a value. But in wireless systems
it is not possible to both listen and transmit within the same time unit, as
communication is half-duplex, \cite{rappaport}.
As a consequence in our intensional 
semantics, in the rules (Then) and (Else),  the execution of both branches is 
delayed of one time unit.}

\begin{example}
\label{ex:intsem3}
Let $\Gamma_2$ be a channel environment such that $\Gamma_2(c) = (1,v)$, 
and consider the configuration $\confC_2 = \conf{\Gamma_2}{W_2}$, 
where $W_2 = \sigma \, | \, \rcvtimec c x Q {} \, | \, \arcv c x P$ 
has been defined in Example \ref{ex:intsem2}. 

Note that this configuration contains both a receiver process and 
an active receiver 
along the exposed channel $c$. We can think of the receiver 
$\rcvtimec c x Q {}$
as a process which missed the synchronisation with a 
broadcast which has been previously performed along channel $c$; 
as a consequence this process is doomed to receive an error value. 

This situation is modelled by Rule \rulename{RcvLate}, 
which allows us to infer the transition $\conf{\Gamma_2}{\rcvtimec c x Q}{} 
\trans{\tau} \arcv c x {\{\err/x \}Q}$. 
As we will see, Rule 
\rulename{TauPar}  which we introduce in Table \ref{tab:struct}, ensures that 
$\tau$-actions are contextual. 
This means that the transition derived 
above allows us to infer the transition $\confC_2 \trans{\tau} 
W_3$, where $W_3 = \sigma \, | \, \arcv c x {\{\err/x\} Q} \, | \, \arcv c x P$.
\end{example}

\begin{example}[On rules \rulename{RcvIgn} and \rulename{RcvLate}]
\label{ex:rcvfail}
Consider again the configuration $\confC$ of Example \ref{ex:rcvign}. 
Recall that $\confC = \conf{\Gamma}{\bcastzeroc c v} | \rcvtimec c x P Q$, where 
$\Gamma \vdash_{\mathrm{v}} c : 1$ and $\delta_v = 1$. In Example \ref{ex:rcvign} we have shown 
that $\confC \trans{\sndac c v} \sigma | \rcvtimec c x P Q$, where the proof of the 
transition contains an application of Rule \rulename{RcvIgn}. 
This transition represents the unblocked transmission of the value 
$v$ along the channel $c$, which also changes the channel environment
from $\Gamma$ to $\gupd{\sndac c v}{\Gamma}$. Now consider the 
resulting configuration 
$\confC' = \conf{ \gupd{\sndac c v}{\Gamma} }{\sigma | \rcvtimec c x P
  Q}$.
As $\gupd{\sndac c v}{\Gamma} \vdash c:\cbusy$ we can use 
Rule \rulename{RcvLate}\footnote{An application of 
Rule \rulename{TauPar} from  Table \ref{tab:struct} is also required.},  to infer the transition 
$\confC' \trans{\tau} \sigma | \arcv c x {\{\err/x\}P}$,
modelling the expected error in transmission along channel $c$ 
due to a collision.


Note also that we could have applied Rule \rulename{RcvLate} directly to the initial configuration 
$\confC = \conf{\Gamma}{\bcastzeroc c v} | \rcvtimec c x P Q$,
leading to the  transition $\confC \trans{\tau} 
c!\langle v \rangle  | \arcv c x {\{\err/x\}P}$, again reflecting an
error in transmission along the channel $c$ due to the fact that it is
already exposed. In fact we have the transition 
$\conf{\Gamma}{W} | \rcvtimec c x P Q \trans{\tau} 
W  | \arcv c x {\{\err/x\}P}$, regardless of the form of $W$.
This emphasises the fact that the inability of the receiver to receive
correctly the value being transmitted is because the channel is
already exposed and not because another station is 
willing to broadcast along it. 
\end{example}

\begin{remark}
The previous example together with 
Example \ref{ex:rcvign} shows that there is a delicate interplay
between the rules \rulename{RcvIgn} and \rulename{RcvLate},
particularly when modelling the effect of an external broadcast on 
receivers in the presence of exposed channels. The overall goal of our
intensional semantics is to ensure that it has certain natural
properties, such as \emph{input-enabledness}. This ensures that for 
any configuration $\conf{\Gamma}{W}$ and any $c?v$ there exists some transition
$\conf{\Gamma}{W}    \trans{c?v} W'$. Here $W'$ records the effect of an external
broadcast of $v$ along $c$ has on the configuration; if the
broadcast is actually ignored by all stations in the configuration
then $W'$ will coincide with $W$. \emph{Input-enabledness} also helps
us in ensuring that broadcasts are independent of their environment. 
For example, we require the configuration $(\conf{\Gamma}{\bcastzeroc
  c v | W})$
to be able to perform the broadcast of value $v$ along channel $c$, regardless of 
the structure of $W$, even if $c$ is exposed in $\Gamma$.
Such a transition can only be inferred from Rule \rulename{Sync} 
if we we match the output action along channel 
$c$ performed by the configuration $\conf{\Gamma}{\bcastzeroc c v}$ with an input 
action  performed by $\conf{\Gamma}{W}$. \emph{Input-enabledness} will
ensure that the latter input action is always possible.  

In Section \ref{sec:intsem.properties} we will 
show that our intensional semantics in fact satisfies a number of 
natural properties, including \emph{input-enabledness}; see Lemma
\ref{lem:rcv-enabling}. 
This would obviously be not true if, by omitting Rule
\rulename{Rcvlgn}, we were to forbid inputs over
exposed channels.
\end{remark}

\begin{table}[t]
\[
\begin{array}{ll}
\Txiom{TimePar}
{\conf {\Gamma} W_1 \trans{\sigma}  {W'_1} \Q 
\conf {\Gamma} W_2 \trans{\sigma}  W'_2}
{\conf {\Gamma} {W_1|W_2} \trans{\sigma}  {W'_1 | W'_2}
}
&
\Txiom{TauPar}{\conf \Gamma {W_1} \trans{\tau}  {W'_1}}
{\conf \Gamma {W_1|W_2} \trans{\tau} {W'_1|W_2}} 
\\\\
\Txiom{Rec}
{   \conf{\Gamma}{ \{ \fix X P/X\}P} \trans{\lambda} W }
{\conf {\Gamma}{\fix X P}      \trans{\lambda} W}
& 
\Txiom{Sum}{\conf \Gamma {P \trans{\lambda}  W} 
\q\: \lambda \in\{ 
\tau,  c!v\} 
}
{\conf \Gamma {P+Q} \trans{\lambda} W}
\\\\
\Txiom{SumTime}
{\conf{\Gamma}{P} \trans{\sigma}{P'} \Q \conf{\Gamma}{Q} \trans{\sigma} {Q'}}
{\conf{\Gamma}{P+Q}\trans{\sigma} P' + Q'}
& 
\Txiom{SumRcv}
{\conf \Gamma P \trans{\rcva c v} W \Q \isrcv{\conf{\Gamma}{P},c }}
 {\conf \Gamma {P+Q} \trans{\rcva c v} {W}}
\\\\
\Txiom{ResI}{\conf{\Gamma[c\mapsto(n,v)]}{W} \trans{c!v} {W'} }
{\conf \Gamma {\crest{c}{(n,v)}.W} \trans{\tau} 
{\crest{c}{\gupd{c!v}{\Gamma}(c)}.W'}}  
&
\Txiom{ResV}{\conf{\Gamma[c\mapsto(n,v)]}{W} \trans{\lambda} {W'},\,\, c \not\in \lambda}
{\conf \Gamma {\crest{c}{(n,v)}.W} \trans{\lambda} 
{\crest{c}{(n,v)}.W}}  
\end{array}
\]
\caption{Intensional semantics: - structural rules}
\label{tab:struct}
\end{table}

The final set of rules, in Table~\ref{tab:struct}, are structural. 
Rule (TimePar) models how 
$\sigma$-actions are derived for collections of threads. Rules
(TauPar), (Rec) and (Sum) are standard. 
Rule
(SumTime) is necessary to ensure \emph{time determinism} (see Proposition~\ref{prop:time-determinism}). 
Rule (SumRcv) guaranteed that only effective
receptions can decide in a choice process. 
 Finally 
 Rules (ResI) and (ResV) show how restricted channels are
handled. Intuitively moves from the configuration $\conf \Gamma
{\crest{c}{(n,v)}.W}$ are inherited from the configuration
$\conf{\Gamma[c\mapsto(n,v)]}{W} $; here the channel environment 
$\Gamma[c\mapsto(n,v)]$ is the same as $\Gamma$ except that $c$ has
associated with it (temporarily) the information $(n,v)$. However if
this move mentions the restricted channel $c$ then the inherited move
is rendered as an internal action $\tau$, (ResI). Moreover the
information associated with the restricted channel in the residual is
updated, using the function $\gupd{\sndac c v}{\cdot}$ previously defined. 
Rules (TauPar), (Sum) and  (SumRcv)  have their 
symmetric counterparts. 

\label{sec:intsem.properties}

In the remainder of this section we illustrate some of the main properties enjoyed by 
the intensional semantics illustrated in Section \ref{sec:intsem}. 
The contents of this part are purely technical and 
needed only for the proofs of the results illustrated later in the 
paper: they may be safely skipped by the  reader 
not interested in details.

In broadcast process calculi transmission of a value is usually modelled 
as a non-blocking action \cite{CBS,MerroBS11,Cerone_Hennessy2013}, 
meaning that all configurations should always be able to receive
  an arbitrary value along an arbitrary channel. This is a derived
  property of our calculus: 
\begin{lemma}[Input enabledness]
\label{lem:rcv-enabling}
Let $\conf \Gamma W$ be a configuration. Then for any channel $c$ and value 
$v$ we have that 
$\conf{\Gamma}{W} \trans{\rcva c v} W'$ for some $W'$; further 
\begin{enumerate}
\item 
\label{rcv-enabling1}
$\neg\isrcv{\conf{\Gamma}{W},c}$ implies $W'=W$ 
\item 
\label{rcv-enabling2}
$\isrcv{\conf{\Gamma}{W},c}$ implies $W' \neq W$, and 
\MHc{\MHf{Andrea: please check}for every value $w$,  $\conf{\Gamma}{W} \trans{c?w} W'$.  } 
\end{enumerate}
\end{lemma}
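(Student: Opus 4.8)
The plan is to proceed by structural induction on the system term $W$, establishing simultaneously the existence of the transition $\conf{\Gamma}{W} \trans{\rcva c v} W'$ and the two refinements on $W'$. First I would handle the base cases for station code $P$: for $P = \bcastc c e Q$, $P = \tau.Q$, $P = X$, $P = \nil$, the predicate $\isrcv{P,c}$ is $\ffalse$, so rule $\rulename{RcvIgn}$ applies and yields $W' = W$, settling clause (1). For the same reason rule $\rulename{RcvIgn}$ also applies when $P$ is a delay $\sigma.Q$, a matching construct, or a timeout $\rcvtimec d x Q R$ with $d \neq c$. The only interesting base case is $P = \rcvtimec c x Q R$: here $\isrcv{P,c} = \ttrue$ precisely when $\Gamma \vdash c:\cfree$; if $c$ is idle, rule $\rulename{Rcv}$ gives $\conf{\Gamma}{P} \trans{\rcva c v} \arcv c x Q \neq P$, and the same applies for any value $w$, so clause (2) holds; if $c$ is exposed, then $\isrcv{\conf{\Gamma}{P},c} = \ffalse$ (by the configuration-level definition of $\isrcv{\cdot}$), and again $\rulename{RcvIgn}$ fires with $W' = W$, matching clause (1). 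The active receiver $\arcv c x Q$ satisfies $\isrcv{\arcv c x Q, d} = \ffalse$ always, so $\rulename{RcvIgn}$ applies with $W' = W$.

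Next I would treat the inductive cases. For parallel composition $W = W_1 | W_2$: by induction we get $\conf{\Gamma}{W_i} \trans{\rcva c v} W_i'$ for $i = 1,2$, and rule $\rulename{RcvPar}$ assembles these into $\conf{\Gamma}{W_1|W_2} \trans{\rcva c v} W_1'|W_2'$. For clause (1), note $\neg\isrcv{\conf{\Gamma}{W_1|W_2},c}$ forces $\neg\isrcv{\conf{\Gamma}{W_i},c}$ for both $i$ (using $\isrcv{W_1|W_2,c} = \isrcv{W_1,c} \vee \isrcv{W_2,c}$ together with the shared side-condition $\Gamma \vdash c:\cfree$), so by induction $W_i' = W_i$ and hence $W' = W$. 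For clause (2), $\isrcv{\conf{\Gamma}{W_1|W_2},c}$ gives $\Gamma \vdash c:\cfree$ and $\isrcv{W_j,c} = \ttrue$ for at least one $j$; for that $j$ induction yields $W_j' \neq W_j$, so $W' = W_1'|W_2' \neq W_1|W_2$, and the "for every value $w$" part propagates componentwise through $\rulename{RcvPar}$ — for the component with $\isrcv{W_j,c} = \ttrue$ we use the inductive strengthening, and for a component with $\neg\isrcv{\conf{\Gamma}{W_i},c}$ we use clause (1) to get the $c?w$-transition back to itself (this is exactly why the "for every $w$" clause must be carried through the induction rather than proved afterwards). The restriction case $W = \crest{c}{(n,v)}.W_0 = \nu d.W_0$: since $\rulename{RcvPar}/\rulename{Rcv}/\rulename{RcvIgn}$ are the only rules producing an $\rcva c v$-labelled transition and none of them is $\rulename{ResV}$ with an input label, I would apply $\rulename{ResV}$ (the input action on $c \neq d$ is inherited, and $c \notin \lambda$), invoking induction on $\conf{\Gamma[d\mapsto(n,v)]}{W_0}$; the predicate clauses follow from $\isrcv{\nu d.W_0,c} = \isrcv{W_0,c}$. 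The remaining process-code cases ($P + Q$, $\matchb b P Q$, $\fix X P$) reduce to their subterms: for sum use $\rulename{SumRcv}$ when $\isrcv{\conf{\Gamma}{P},c}$ (or its symmetric form), else $\rulename{RcvIgn}$; for recursion use $\rulename{Rec}$ together with $\isrcv{\fix X P,c} = \isrcv{P,c}$ and the guardedness assumption so that $\{\fix X P/X\}P$ has the same shape modulo one unfolding.

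The main obstacle will be bookkeeping around the interaction between $\rulename{RcvIgn}$ and $\rulename{RcvPar}$ when assembling a parallel composition whose components are a mix of "genuine receivers" (for which the transition strictly changes the term and is uniform in the received value) and "ignorers" (for which the transition is the identity): one must be careful that $\rulename{RcvPar}$ is indeed applicable to glue an $\rulename{RcvIgn}$-derived self-loop on $W_1$ with a $\rulename{Rcv}$-derived genuine reception on $W_2$, since $\rulename{RcvPar}$ requires both premises to carry the \emph{same} label $\rcva c v$ — which is fine, but the subtlety is that the value $v$ is genuinely free in the $\rulename{RcvIgn}$ premise, and this is precisely what licenses the "for every value $w$" strengthening in clause (2). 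A secondary subtlety is the configuration-level versus term-level definition of $\isrcv{\cdot}$: clause (1)'s hypothesis $\neg\isrcv{\conf{\Gamma}{W},c}$ can hold either because $\Gamma \vdash c:\cbusy$ or because $\isrcv{W,c} = \ffalse$, and these two reasons must be kept separate when pushing the hypothesis down to subterms (the channel-exposure reason is global and is inherited by every subterm, whereas the term-structural reason distributes over $|$). I expect these to be routine once the induction hypothesis is stated in the strengthened form above, so I would front-load the correct statement of the hypothesis and let the case analysis fall out mechanically.
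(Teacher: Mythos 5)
Your case analysis for the existence part matches the paper's, but two steps in your plan would not go through as stated. The first is recursion: your induction is structural on $W$, yet the only rule producing a transition for $\fix X P$ is \rulename{Rec}, whose premise concerns $\{\fix X P/X\}P$, which is \emph{not} a structural subterm of $\fix X P$. Saying that guardedness makes the unfolding have "the same shape modulo one unfolding" is the right intuition but not an induction step; the paper resolves this by proving the existence claim for \emph{open} terms, quantified over all closing process environments $\rho$ (Lemmas on $\isrcv{\cdot}$ and on $\conf{\Gamma}{W\rho}\trans{\rcva c v}W'\rho$), so that the recursion case can instantiate $\rho' = \rho[X \mapsto (\fix X P)\rho]$ and stay within an induction on the open body $P$. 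Guardedness is what makes $\isrcv{\cdot}$ and the derivative independent of $\rho$, but without this (or an equivalent measure on unfoldings) your structural induction has no valid recursion case — and note the same problem infects your clause-(2) bookkeeping, since the derivative of $\fix X P$ is obtained from the derivative of the unfolding.

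The second gap is the $W' \neq W$ claim at sums (and, via \rulename{Rec}, at recursion). For $W = P + Q$ with $\isrcv{\conf{\Gamma}{P},c}=\ttrue$, rule \rulename{SumRcv} discards $Q$: the derivative is the derivative $W'$ of $P$ alone, so your inductive hypothesis only yields $W' \neq P$, not the required $W' \neq P+Q$. The paper sidesteps this (and gets a uniform argument for all rule cases, including \rulename{ResV} and \rulename{RcvPar}) by rule induction on the derivation together with the counting function $\rcvno{\cdot,c}$, showing that any $\rcva c v$-transition from a configuration satisfying $\isrcv{\cdot}$ strictly increases the number of active receivers on $c$; since $P+Q$ is station code it has none, so the derivative cannot equal it. Your componentwise argument for parallel composition is fine, and your observation that the \rulename{Rcv}/\rulename{RcvIgn} derivatives do not depend on the received value is essentially how the "for every $w$" strengthening is justified, but you should either adopt the paper's two devices (closing substitutions for recursion, the receiver count for clause (2)) or supply equivalents; as written the plan fails at exactly those two points.
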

\begin{proof}
See the Appendix, Page \pageref{proof:rcv-enabling}. 
\end{proof}

Our model of  time also conforms to a well-established
approach in the literature; see for example 
\cite{Sifakis94,Yi91}: 
  \begin{proposition}[Time Determinism]
\label{prop:time-determinism}
    Suppose $\confC \trans{\sigma} W_1$ and  $\confC \trans{\sigma} W_2$.
          Then $W_1 = W_2$. 

\end{proposition}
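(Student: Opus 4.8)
The plan is to argue by induction on the height of the derivation of $\confC \trans{\sigma} W_1$, writing $\confC = \conf{\Gamma}{W}$. The crux is that the $\sigma$-labelled transition relation is \emph{syntax-directed}: scanning Tables~\ref{tab:net2} and~\ref{tab:struct}, for each top-level construct of $W$ there is exactly one rule whose conclusion can have the form $\conf{\Gamma}{W} \trans{\sigma} \cdot$, and every side-condition occurring in such a rule depends only on $\Gamma$ and $W$. (No $\sigma$-emitting rule has a separate symmetric variant: the symmetric counterparts mentioned in Table~\ref{tab:struct} belong to \rulename{TauPar}, \rulename{Sum} and \rulename{SumRcv}, none of which emits $\sigma$, while \rulename{TimePar} and \rulename{SumTime} are already symmetric.) Consequently the last rule of any derivation of $\confC \trans{\sigma} W_2$ is forced to coincide with the last rule used for $W_1$, and it remains only to compare residuals.

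I would first dispatch the axioms, where the residual is immediate: $W = \nil$ forces $W_1 = W_2 = \nil$ via \rulename{TimeNil}; $W = \delay{\sigma}{P}$ forces $W_1 = W_2 = P$ via \rulename{Sleep}; $W = \rcvtimec{c}{x}{P}{Q}$ admits a $\sigma$-step only through \rulename{Timeout}, whose side-condition $\Gamma \vdash c : \cfree$ fixes $W_1 = W_2 = Q$ (when instead $\Gamma \vdash c : \cbusy$ the term can take only the $\tau$-step of \rulename{RcvLate}, so no ambiguity arises); and $W = \arcv{c}{x}{P}$ is handled by \rulename{ActRcv} or \rulename{EndRcv}, whose side-conditions $\Gamma \vdash_{\mathrm{t}} c : n$ with $n > 1$, respectively $n = 1$, are mutually exclusive, with the delivered value $w$ in \rulename{EndRcv} uniquely determined since $\Gamma$ is a function. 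For the remaining prefixes ($\bcastc{c}{e}{P}$, $\tau.P$, $\matchb{b}{P}{Q}$) no $\sigma$-rule applies, so the hypothesis is vacuous (a bare process variable $X$ cannot occur in a closed $W$).

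The inductive cases all reduce to the induction hypothesis applied to strictly shorter sub-derivations. If $W = W' \mid W''$ the only rule is \rulename{TimePar}, so the two derivations split into sub-derivations $\conf{\Gamma}{W'} \trans{\sigma} \cdot$ and $\conf{\Gamma}{W''} \trans{\sigma} \cdot$ whose residuals agree pairwise by induction; the cases $W = P + Q$ (via \rulename{SumTime}) and $W = \crest{c}{(n,v)}.W'$ (via \rulename{ResV}, where both derivations descend to $\conf{\Gamma[c\mapsto(n,v)]}{W'} \trans{\sigma} \cdot$) are identical. Finally, $W = \fix{X}{P}$ is handled by \rulename{Rec}, whose single premise $\conf{\Gamma}{\{\fix{X}{P}/X\}P} \trans{\sigma} \cdot$, although syntactically larger than $W$, has a strictly shorter derivation, so the induction hypothesis still applies. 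This last point is the only real subtlety I anticipate: one must induct on derivation height rather than on the structure of $W$ in order to accommodate \rulename{Rec} --- guardedness of recursion merely guarantees that these derivations are finite, but is not otherwise needed. Everything else is a mechanical check that the forced rule, together with its side-conditions, leaves no freedom in the residual.
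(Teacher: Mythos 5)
Your proof is correct and follows essentially the same route as the paper's: induction on the derivation of $\confC \trans{\sigma} W_1$, observing that the last rule applied and its side-conditions are forced by the shape of $W$ and $\Gamma$ (with \rulename{ActRcv}/\rulename{EndRcv} separated by mutually exclusive conditions on $\Gamma$), and closing the \rulename{TimePar}, \rulename{SumTime}, \rulename{ResV} and \rulename{Rec} cases by the inductive hypothesis on the shorter sub-derivations. The only minor difference is that the paper routes the $+$ and $\mid$ cases through an auxiliary inversion lemma (Lemma~\ref{lem:time.auxiliary}), which is robust to re-bracketing of the summands/components, whereas you invert \rulename{SumTime} and \rulename{TimePar} directly, which is legitimate under the paper's stated convention that terms are identified only up to $\alpha$-conversion.
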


\begin{proof}
By induction on the proof of the transition $\confC \trans{\sigma} W_1$. 
See the Appendix, Page \pageref{proof:time-determinism} for details.
\end{proof}

\begin{proposition}[Maximal Progress]
\label{prop:maximal-progress}
Suppose $\confC \trans{\sigma} W_1$. If $\lambda \in \{ \tau, \, c!v \}$, for some $c$ and $v$, then there is no
$W_2$ such that $\confC \trans{\lambda} W_2$. 
\end{proposition}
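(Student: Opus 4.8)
The plan is to proceed by induction on the derivation of the time transition $\confC \trans{\sigma} W_1$, where $\confC = \conf{\Gamma}{W}$. The statement to maintain is: whenever $\conf{\Gamma}{W} \trans{\sigma} W_1$, then $\conf{\Gamma}{W}$ admits no transition $\trans{\tau} W_2$ and no transition $\trans{c!v} W_2$ for any $c, v$. Since the time rules (TimeNil), (Sleep), (ActRcv), (EndRcv), (Timeout), (TimePar), (SumTime), (ResV) are comparatively few, a case analysis on the last rule used to derive $\conf{\Gamma}{W} \trans{\sigma} W_1$ is manageable.

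First I would dispatch the base cases. For (TimeNil) we have $W = \nil$, and inspection of all the rules in Tables~\ref{tab:proc}--\ref{tab:struct} shows no $\tau$- or $c!v$-transition is derivable for $\nil$. For (Sleep), $W = \delay{\sigma}{P}$, and again no internal or broadcast rule has a conclusion of this shape; the only applicable rule is (Sleep) itself. For (ActRcv) and (EndRcv), $W = \arcv{c}{x}{P}$; the only rules whose conclusion matches an active receiver are the time rules and (RcvIgn) (for inputs $c?v$, which are excluded from the statement), so there is no $\tau$ or broadcast move. For (Timeout), $W = \rcvtimec{c}{x}{P}{Q}$ with the side condition $\Gamma \vdash c:\cfree$; the potential conflicting moves would come from (RcvLate) — but that rule requires $\Gamma \vdash c:\cbusy$, which contradicts the side condition of (Timeout) — and there is no broadcast rule with a receiver in the conclusion. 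This mutual exclusivity of the side conditions of (Timeout) and (RcvLate) is the crux of the base cases.

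For the inductive cases I would handle (TimePar), (SumTime) and (ResV). For (TimePar), $W = W_1 | W_2$ with $\conf{\Gamma}{W_i} \trans{\sigma} W_i'$; a $\tau$-move of $W$ must come from (TauPar) (or its symmetric version), hence from a $\tau$-move of some $W_i$, contradicting the induction hypothesis applied to that $W_i$; a broadcast $c!v$ of $W$ must come from (Sync), which requires a $c!v$-move of one component, again contradicting the induction hypothesis. The cases (SumTime) and (ResV) are analogous, using (Sum)/(SumRcv) and (ResI)/(ResV) respectively as the only rules that could produce the conflicting moves, and invoking the induction hypothesis on the premise; note that in the (ResV) case a $\tau$-move could additionally arise via (ResI), which requires the underlying configuration to make a $c!v$-move on the restricted channel, and this is excluded by the induction hypothesis applied to $\conf{\Gamma[c\mapsto(n,v)]}{W}$.

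The main obstacle I expect is purely bookkeeping: being exhaustive about which rules can produce a $\tau$- or broadcast-transition for each syntactic form, and in particular getting the (ResV)/(ResI) interaction right, since there a time step of a restricted term must be matched against both internal moves of the body and internal moves arising from the body broadcasting on the hidden channel. The genuinely substantive point, however, is small and local: it is exactly the incompatibility of the hypotheses $\Gamma \vdash c:\cfree$ and $\Gamma \vdash c:\cbusy$ that rules out the only real clash, namely (Timeout) versus (RcvLate) on a time-guarded receiver. Everything else follows by shape analysis and the induction hypothesis.
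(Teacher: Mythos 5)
Your proposal is correct and takes essentially the same route as the paper's own proof: a rule induction on the derivation of $\confC \trans{\sigma} W_1$, in which the only genuine clash is \rulename{Timeout} versus \rulename{RcvLate}, excluded by the incompatible side conditions $\Gamma \vdash c:\cfree$ and $\Gamma \vdash c:\cbusy$, while the structural cases (\rulename{SumTime}, \rulename{TimePar}, restriction) are discharged by appealing to the induction hypothesis on the premises. One small addition: your enumeration of $\sigma$-deriving rules should also include \rulename{Rec}, since $\sigma$-moves of $\fix X P$ are inferred from the unfolding; that case is dispatched exactly like the others, by applying the induction hypothesis to the transition of $\conf{\Gamma}{\{\fix X P/X\}P}$.
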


\begin{proof}
    By induction on the proof of the derivation $\confC \trans{\sigma} W_1$. 
    See the Appendix, Page \pageref{proof:maximal-progress} for details. 
  \end{proof}

Another important property concerns the exposure state of channel 
environments. This property states that non-timed transitions are identified up-to 
channel environments which share the same set of idle channels.

\begin{proposition}[Exposure Consistency]
\label{prop:exposure-consistency}
Let $\Gamma_1,\Gamma_2$ be two channel environments such that
$\Gamma_1 \vdash c: \cbusy$ if and only if $\Gamma_2 \vdash c:
\cbusy$ for every channel $c$. 
Then for any system term $W$ and action $\lambda \neq \sigma$, 
$\conf{\Gamma_1}{W} \trans{\lambda} W'$ implies $\conf{\Gamma_2}{W} 
\trans{\lambda} W'$.
\end{proposition}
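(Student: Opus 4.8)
The plan is to argue by induction on the derivation of $\conf{\Gamma_1}{W}\trans{\lambda}W'$. The point is that, since $\lambda\neq\sigma$, the last rule applied cannot be one of the rules producing timed actions (\rulename{TimeNil}, \rulename{Sleep}, \rulename{ActRcv}, \rulename{EndRcv}, \rulename{Timeout}, \rulename{TimePar}, \rulename{SumTime}), and none of the remaining rules ever inspects the exposure time $t_c$ or the transmitted value $v_c$ of a channel in the environment: every environment-dependent side condition merely tests whether some channel is idle or exposed. This happens through $\Gamma\vdash c:\cfree$ in \rulename{Rcv}, through $\Gamma\vdash c:\cbusy$ in \rulename{RcvLate}, through the predicate $\isrcv{\conf{\Gamma}{W},c}$ in \rulename{RcvIgn} and \rulename{SumRcv}, and through the Boolean evaluation $\interpr{b}_{\Gamma}$ in \rulename{Then} and \rulename{Else}. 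Since by hypothesis $\Gamma_1$ and $\Gamma_2$ induce the same partition of $\chanset$ into idle and exposed channels, all these conditions transfer from $\Gamma_1$ to $\Gamma_2$.

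Before the induction I would record three auxiliary facts. First, for every system term $W$ and channel $c$ the predicate $\isrcv{\conf{\Gamma}{W},c}$ depends on $\Gamma$ only via the assertion $\Gamma\vdash c:\cfree$, because the predicate $\isrcv{W,c}$ on terms is defined purely syntactically; hence $\isrcv{\conf{\Gamma_1}{W},c}$ holds iff $\isrcv{\conf{\Gamma_2}{W},c}$ does. Second, a case analysis on the form of a Boolean expression $b$ gives $\interpr{b}_{\Gamma_1}=\interpr{b}_{\Gamma_2}$: the only clause in the definition of $\interpr{\cdot}_{\Gamma}$ consulting the environment is $\interpr{\expsd c}_{\Gamma}=\ttrue\iff\Gamma\vdash c:\cbusy$, which is insensitive to the change of environment by assumption. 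Third — needed for the restriction rules — if $\Gamma_1$ and $\Gamma_2$ agree on the idle/exposed status of every channel, then so do $\Gamma_1[c\mapsto(n,v)]$ and $\Gamma_2[c\mapsto(n,v)]$, since these now additionally agree on $c$.

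With these in hand the case analysis on the last rule is mechanical. The axioms \rulename{Snd} and \rulename{Tau} have no environment-dependent side condition, so the transition and its residual carry over verbatim; \rulename{Rcv} and \rulename{RcvLate} transfer because $\Gamma\vdash c:\cfree$, resp.\ $\Gamma\vdash c:\cbusy$, is preserved; \rulename{RcvIgn} transfers by the first auxiliary fact; and \rulename{Then}/\rulename{Else} transfer by the second — in each case with the same residual. For the inductive rules \rulename{Sync}, \rulename{RcvPar}, \rulename{TauPar}, \rulename{Rec}, \rulename{Sum}, \rulename{SumRcv} and their symmetric variants, every premise transition is labelled by one of $c!v$, $c?v$, $\rcva c v$, $\tau$, hence by an action $\neq\sigma$, so the induction hypothesis applies to the sub-derivations and the conclusion is reassembled with the same rule (using the first auxiliary fact for the $\isrcv$ side condition of \rulename{SumRcv}). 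Finally, in \rulename{ResI} and \rulename{ResV} the premise is a transition of $\conf{\Gamma_1[c\mapsto(n,v)]}{W}$ labelled by $c!v$, resp.\ by the given $\lambda\neq\sigma$; by the third auxiliary fact and the induction hypothesis this yields the matching transition of $\conf{\Gamma_2[c\mapsto(n,v)]}{W}$ with the same residual, and the rule then produces the required transition of $\conf{\Gamma_2}{\crest{c}{(n,v)}.W}$, the updated restriction binder in the \rulename{ResI} case being unchanged since it is computed from the local datum $(n,v)$ and the broadcast value alone.

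I do not foresee any genuine obstacle. The only points deserving explicit care are the third auxiliary fact — checking that the hypothesis on exposure states survives the environment update $[c\mapsto(n,v)]$ used by \rulename{ResI}/\rulename{ResV} — and the verification that the new binder introduced by \rulename{ResI} does not secretly depend on the discarded, and possibly different, exposure times recorded for $c$ in $\Gamma_1$ and $\Gamma_2$.
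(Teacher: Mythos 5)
Your proof is correct and takes essentially the same route as the paper's: rule induction on the derivation, with the observation that every environment-dependent side condition ($\Gamma \vdash c:\cfree$ or $\cbusy$, the predicate $\isrcv{\cdot,c}$, and the evaluation $\interpr{b}_\Gamma$) only consults the idle/exposed status of channels, which $\Gamma_1$ and $\Gamma_2$ share by hypothesis. The paper's appendix proof only spells out the \rulename{Rcv}, \rulename{RcvLate}, \rulename{Then} and \rulename{Sync} cases, so your explicit treatment of the remaining rules -- in particular the restriction rules, where you check that the agreement on exposure states survives the update $[c\mapsto(n,v)]$ and that the new binder in \rulename{ResI} depends only on the local datum and the broadcast value -- simply fills in cases the paper leaves to the reader.
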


\begin{proof}
By Induction on the proof of the derivation $\conf{\Gamma_1}{W} \trans{\lambda} W'$. 
See the Appendix, Page \pageref{proof:exposure-consistency} for details.
\end{proof}

We end our discussion on the intensional semantics \MHc{with a technical result on the 
interaction between stations in systems; this will be useful in later developments.}
\begin{proposition}[Parallel components]
\label{prop:parallel-components}
Let  $\conf \Gamma {W_1| W_2}$ be a configuration. 
\label{prop:par-comp}
\begin{enumerate}
\item $\conf \Gamma {W_1 | W_2} \trans{\tau} W$ if and only if
\begin{itemize}
\item either there is $W'_1$ such that $\conf {\Gamma}{W_1} \trans{\tau} 
{W'_1}$ with $W=W'_1 | W_2$
\item or there is $W'_2$ such that $\conf {\Gamma}{W_2} \trans{\tau} 
{W'_2}$ with $W=W_1 | W'_2$. 
\end{itemize} 
\item 
\label{par2}
$\conf \Gamma {W_1 | W_2} \trans{\rcva c v} W$ if and only if there are $W'_1$ and 
$W'_2$ such that $\conf \Gamma {W_1} \trans{\rcva c v}  {W'_1}$, 
$\conf \Gamma {W_2} \trans{\rcva c v}  {W'_2}$ and $W = W'_1 | W'_2$. 
\item 
\label{par3}
$\conf \Gamma {W_1 | W_2} \trans{c!v} W$ if and only if there are $W'_1$ and 
$W'_2$ such that 
\begin{itemize}
\item $\conf \Gamma {W_1} \trans{c!v} {W'_1}$, 
$\conf \Gamma {W_2} \trans{\rcva c v} {W'_2}$ and $W = W'_1 | W'_2$
\item 
or $\conf \Gamma {W_1} \trans{\rcva c v}  {W'_1}$, 
$\conf \Gamma {W_2} \trans{c!v} {W'_2}$ and $W = W'_1 | W'_2$. 
\end{itemize}
\item 
\label{prop:sigma}
$\conf \Gamma {W_1 | W_2} \trans{\sigma} W$ if and only if there are $W'_1$ and 
$W'_2$ such that 
$\conf \Gamma {W_1} \trans{\sigma}  {W'_1}$, 
$\conf \Gamma {W_2} \trans{\sigma}  {W'_2}$ and $W = W'_1 | W'_2$.
\hfill\qed
\end{enumerate} 
\end{proposition}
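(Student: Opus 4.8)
The plan is to prove each of the four statements as a separate biconditional, and to note first that every right-to-left implication is immediate: the decomposition asserted in each case is literally the premise/conclusion pattern of one of the SOS rules whose source is a parallel composition. For the first statement one applies rule (TauPar) (or its symmetric form); for the second, rule (RcvPar); for the third, rule (Sync) (used with the two components in either order); for the fourth, rule (TimePar). So the real content lies in the left-to-right implications, and I would obtain all of them by inversion on the last rule of the derivation $\conf{\Gamma}{W_1 | W_2} \trans{\lambda} W$.

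The structural observation that makes this work is that the source of the transition is syntactically a parallel composition. Since the metavariable $P$ ranges only over station code (which does not include parallel composition), no $P$-sourced rule can match; inspecting Tables~\ref{tab:proc}, \ref{tab:net2}, \ref{tab:net3} and~\ref{tab:struct} then shows that the last rule must be one of (TauPar) (or its symmetric form), (TimePar), (Sync), (RcvPar), or — because its conclusion has an entirely unconstrained source — (RcvIgn). The action $\lambda$ narrows this further. If $\lambda = \sigma$, only (TimePar) is possible, and a single inversion step gives the fourth statement. If $\lambda = \tau$, only (TauPar) and its symmetric form are possible, and these yield exactly the two disjuncts of the first statement, with the non-moving component carried along unchanged. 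If $\lambda = c!v$, only (Sync) (in either orientation of the components) is possible, yielding the third statement. In all of these cases the witnesses $W_1'$, $W_2'$ and the side premises are read directly off the matched rule instance.

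The only case that needs more than plain inversion is the input action in the second statement, because besides (RcvPar) — which at once provides $W_1'$ and $W_2'$ — the transition $\conf{\Gamma}{W_1 | W_2} \trans{\rcva c v} W$ could also have been derived by (RcvIgn) applied directly to $W_1 | W_2$, giving $W = W_1 | W_2$. In that subcase I would unfold the side condition $\neg\isrcv{\conf{\Gamma}{W_1 | W_2}, c}$: by definition it amounts to $\Gamma \vdash c : \cbusy$ or $\neg\isrcv{W_1 | W_2, c}$, and since $\isrcv{W_1 | W_2, c} = \isrcv{W_1,c} \lor \isrcv{W_2,c}$, either subcase forces $\neg\isrcv{\conf{\Gamma}{W_i}, c}$ for $i = 1, 2$. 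Hence (RcvIgn) applies to each component, so $\conf{\Gamma}{W_i} \trans{\rcva c v} W_i$ (this is also the content of Lemma~\ref{lem:rcv-enabling}), and because (RcvIgn) leaves the term unchanged we get $W = W_1 | W_2 = W_1' | W_2'$ with $W_i' = W_i$, which is precisely the form required by the second statement.

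I expect this bookkeeping around (RcvIgn) and the inductive definition of the $\isrcv$ predicate on parallel terms to be the only delicate point: there is no induction to perform, but one must check that whenever the composite system ignores an incoming broadcast each component also ignores it, so that the componentwise reading of the second statement remains legitimate. Everything else reduces to reading off the appropriate rule.
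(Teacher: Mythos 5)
Your right-to-left directions, the narrowing of the possible last rules by inspecting which rules can have a parallel composition as source, and the \rulename{RcvIgn} subcase for input actions (via the definition of $\isrcv{\cdot,c}$ and Lemma~\ref{lem:rcv-enabling}) are all fine. The gap is in the claim that the left-to-right directions are plain inversion with ``no induction to perform''. That would be legitimate only if parallel composition had unique decomposition, i.e.\ if the last instance of \rulename{Sync}, \rulename{RcvPar}, \rulename{TauPar} or \rulename{TimePar} necessarily split the term exactly at the displayed boundary between $W_1$ and $W_2$. The paper does not work under that assumption: terms built with $|$ are treated modulo associativity (cf.\ the notation $\prod_{i\in I}W_i$), and in the paper's own proof of item~(3) the term $W_1|W_2$ is rewritten as $\prod_{i=1}^{k}P_i$ and the equation $W_a|W_b = W_1|W_2$ obtained by inverting \rulename{Sync} is resolved into three cases, two of which have the rule's split straddling the $W_1$/$W_2$ boundary (e.g.\ $W_1 = W_a|W_x$ and $W_b = W_x|W_2$ for some $W_x$). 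Your argument silently excludes these straddling cases, so as written the only-if direction of each item is incomplete.

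Handling the straddling cases is exactly where the induction enters: the paper argues by induction on the number $k$ of parallel components, uses item~(2) --- proved independently --- to redistribute the transition $\conf{\Gamma}{W_x|W_2}\trans{\rcva c v}W_b'$ over its parts (and the inductive hypothesis for the transition of $W_a = W_1|W_x$ in the symmetric overlap case), and then reassembles transitions of $W_1$ and of $W_2$ using \rulename{Sync} and \rulename{RcvPar}; the same pattern is needed for the $\tau$, input and $\sigma$ cases. So you must either state and justify that $|$ is a free binary constructor with unique top-level decomposition --- which is not the convention the paper's proof adopts --- or add this induction and the case analysis on how the two decompositions of $W_1|W_2$ overlap.
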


\begin{proof}
Details for (3) are given in the Appendix; see Page \pageref{proof:parallel-components}. The other three statements can 
be proved similarly.
\end{proof}

\subsection{Reduction semantics}
\label{sec:redsem}
We are now in a position to formally define the individual computation steps
for wireless systems, alluded to informally in (\ref{eq:steps}) above. 

\begin{definition}[Reduction]\label{def:step}
  We write 
$\conf{\Gamma}{W} \red \conf{\Gamma'}{W'}$ if
\begin{enumerate}[label=(\roman*)] 
\item (Transmission) $\conf{\Gamma}{W} \trans{c!v} W'$ for some channel $c$ and value $v$,
where  $\Gamma' = \gupd{c!v}{\Gamma}$

\item (Time) $\conf{\Gamma}{W} \trans{\sigma} W'$ and $\Gamma' = \tupd{\Gamma}$

\item (Internal)  $\conf{\Gamma}{W} \trans{\tau} W'$ and $\Gamma' = \gupd{\tau}{\Gamma}$. 
\end{enumerate}
The intuition here should be obvious; computation proceeds either by the transmission of values 
between  stations,
the passage of time, or internal activity; further, 
the exposure state of channels is updated according to 
the performed transition.
\end{definition}
Sometimes it will be useful to distinguish between 
instantaneous reductions and timed reductions; instantaneous 
reductions, $\conf{\Gamma_1}{W_1} \red_i \conf{\Gamma_2}{W_2}$, 
are those derived via clauses \MHc{(i) or  (iii) above;}  timed 
reductions are denoted with the symbol $\red_{\sigma}$ and 
coincide with reductions derived using clause \MHc{(ii)}.
We use the notation $\conf{\Gamma}{W} \red_i$ ($\conf{\Gamma}{W} \red_{\sigma}$) 
if there exists $\conf{\Gamma'}{W'}$ such that $\conf{\Gamma}{W} \red_{i} \conf{\Gamma'}{W'}$ 
($\conf{\Gamma}{W} \red_{\sigma} \conf{\Gamma'}{W'}$), and $\conf{\Gamma}{W} \not\red_i$ 
($\conf{\Gamma}{W} \not\red_{\sigma}$) to stress that there is no configuration $\conf{\Gamma'}{W'}$ 
such that $\conf{\Gamma}{W} \red_{i} \conf{\Gamma'}{W'}$ ($\conf{\Gamma}{W} \red_{i} \conf{\Gamma'}{W'}$).

\begin{example}
\label{ex:redsem1}
We  show how the transitions we have inferred 
in the Examples~\ref{ex:intsem1}, \ref{ex:intsem2} and \ref{ex:intsem3} 
can be combined together to derive a computation fragment 
for the configuration $\confC_0$ considered in Example 
\ref{ex:intsem1}. 

Let $\confC_i = \conf{ \Gamma_i}{ W_i}$, $i\in 0,{..},2$, be as defined in 
the examples mentioned above. Note that $\Gamma_1 = \gupd{\sndac c {v_0}}{\Gamma_0}$ and 
 $\Gamma_2 = \gupd{\sigma}{\Gamma_1}$. 
We have already shown that $\confC_0 \trans{\sndac c {v_0}} W_1$; 
this transition, together with the equality $\Gamma_1 = \gupd{\sndac c {v_0}}{\Gamma_0}$, 
can be used to infer the reduction $\confC_0 \red_i \confC_1$. 
A similar argument shows that $\confC_1 \red_{\sigma} \confC_2$.
Finally,  if we let $\confC_3$ denote $\conf{\Gamma_2}{W_3}$ we also have
$\confC_2 \red_{i} \confC_3$ since $\confC_2 \trans{\tau} W_3$ and 
$\Gamma_2 = \gupd{\tau}{\Gamma_2}$. 
\end{example}

\begin{example}[Time-consuming transmission]
Consider a wireless system with two stations, that is  
a configuration $\confC_1$ of the form $ \conf {\Gamma_1} {P_1 | Q_1} $. Let us suppose
  \begin{align*}
    &P_1 \,\text{ is }\, \bcastc c w R,   \qquad Q_1 \,\text{ is }\, \rcvtimec c x {S} T_1
  \end{align*}
where $\Gamma_1$ is a stable channel environment and $\delta_w=2$. Then 
\begin{align}\label{eq:step1}
  &{\confC_1} \red \confC_2
\end{align}
where $\confC_2$ has the form $ \conf {\Gamma_2} {P_2 | Q_2}$ and 
\begin{align*}
  &P_2 \,\text{ is }\, \asnd \sigma 2 R  &&Q_2 \,\,\text{ is }\, \arcv c x S  && \Gamma_2 \vdash_{\mathrm{t}} c:2
  &&\Gamma_2 \vdash_{\mathrm{v}} c: w 
\end{align*}
The move from $P_1$ to $P_2$ is via an application of the rule (Snd), from $Q_1$ to $Q_2$ relies on
(Rcv) and they are combined together using (Sync) to obtain
\begin{math}
  \conf {\Gamma_1} {P_1 | Q_1} \trans{\sndac c w} P_2 | Q_2.
\end{math}
The final step (\ref{eq:step1}) results from (Transmission) in Definition~\ref{def:step}.  

The next step $\confC_2 \red \confC_3 = \conf{\Gamma_3}{\asnd \sigma{} R | Q_2}$ is via (Time) in Definition~\ref{def:step}; 
here the only change to the channel environment is
that $\Gamma_3 \vdash_{\mathrm{t}} c:1$. The inference of the transition 
\begin{align*}
 &\conf {\Gamma_2}  {P_2 | Q_2} \trans{\sigma} \asnd \sigma {} R  \,|\, Q_2
\end{align*}
uses the rules (Sleep),  (ActRcv) and (TimePar). 

The final move we consider, $\confC_3 \red \confC_4 = \conf {\Gamma}
{R | \subst w x {S}}$, is another instance of (Time). However here the
delay action is inferred using (Sleep), (EndRcv) and (TimePar). Thus
in three reduction steps the value $w$ has been transmitted from the
first station to the second one along the channel $c$, in two units of
time.

Now suppose we change $P_1$ to $P'_1 = \sigma.P_1$, obtaining thus the
configuration $\confC_1' = \conf {\Gamma_1} {P'_1 | Q_1}$.  Then the
first step, $\confC'_1 \red \confC'_2$ is a (Time) step, with
$\confC'_2 = \conf{\Gamma_1}{P_1 | T_1}$. Here an instance of the rule
(Timeout) is used in the transition from $Q_1$ to $T_1$. In
$\confC'_2$ the station $P_1$ is now ready to transmit on channel $c$,
but the second station has stopped listening. The next step 
depends on the exact form of $T_1$; if for example 
$\isrcv{T_1,c}$ is false then by an application of rule (RcvIgn) we can derive
$\confC'_2 \red \confC'_3 = \conf {\Gamma_2} {P_2 | T_1}$. Here the transmission of
$w$ along $c$  started but nobody was listening. 

Finally, suppose $T_1$ is a delayed listener on channel $c$, say $\delay \sigma T_2$ where $T_2$ is
$ \rcvtimec c y {S_2} U_2$. Then we have the (Time) step 
$\confC'_3 \red \confC'_4 = \conf {\Gamma_3}  {\asnd \sigma {} R  \,|\, T_2}$ and now the second
station, $T_2$,  is ready to listen. However,  as $\Gamma_3 \vdash c: \cbusy$, 
station $T_2$ is joining the
transmission too late. 
\MHc{To reflect this we can derive}
the (Internal) step 
\begin{align*}
\confC'_4 \red \confC'_5 = \conf {\Gamma_3} {\asnd \sigma {} R  \,|\, \arcv c y   {{\subst \err y}\MHc{S_2}}} 
\end{align*}
\MHc{using the rules (RcvLate) and
(TauPar), among others.} 
At the end of the transmission, in one more time step, the second station will therefore 
end up with an error in reception. 

In the revised system $\confC'_1 = \conf {\Gamma_1} {\sigma.{P_1'} | Q_1}
$ the second station missed the delayed transmission from $P_1'$. 
\MHc{However we}
can change the code at the second station to accommodate this
delay, \MHc{by replacing $Q_1$ with the persistent listener} $Q'_1 = \rcvc c x S$.
 We leave the reader
to check that starting from the  configuration $\conf
{\Gamma_1} {\sigma.{P_1'} | Q'_1}$ the value $w$ will be successfully
transmitted between the stations in four reduction steps.
\end{example}

\begin{example}[Collisions]
\label{ex:collision}
Let us now consider again the configuration $\confC_1 
= \conf{\Gamma}{S_1 | S_2 | R_1}$ of 
Example \ref{ex:collision.syntax}.
In this configuration the station $S_1$ can perform a 
broadcast, leading to the reduction 
$\confC_1 \red \confC_2 = \conf{\Gamma_1}{\sigma^2 | S_2 
| \arcv c x P}$, the derivation of which 
requires an instance of the rule \rulename{RcvIgn}, 
$\conf{\Gamma}{S_1} \trans{\rcva c {v_1}} \MHc{S_1}$; 
here the channel environment $\Gamma_1$ is defined as 
$\gupd{\sndac c {v_0}}{\Gamma}$, leading to 
$\Gamma_1(c) = (2,v_0)$. 
We can now derive the reduction 
$\confC_2 \red \confC_3 = \conf{\Gamma_2}{ 
\sigma | \bcastzeroc c {v_1} | \arcv c x P}$, where 
$\Gamma_2 = \gupd{\sigma}{\Gamma_1}$ meaning 
that $\Gamma_2 \vdash_{\mathrm{t}} c : 1$. 

In this configuration the second station is ready to 
 broadcast value $v_1$ along channel $c$. 
Since there 
is already a value being transmitted along this channel, 
we expect this second broadcast to cause a \MHc{collision; further}, 
since the amount of time required for transmitting value $v_1$ 
is equal to the time needed to end the transmission of 
value $v_0$, we expect that the broadcast performed 
by the first station does not affect the amount of 
time for which \MHc{the channel $c$ is exposed.}
 
\MHc{Formally this is reflected in the reduction} 
$\confC_3 
\red \confC_3' = \conf{\Gamma_2'}{\sigma | \sigma | \arcv c x P}$.
Here the reduction of the system term uses the  sub-inferences 
$\conf{\Gamma_2}{\sigma} \trans{\rcva c {v_1}} \sigma$, 
$\conf{\Gamma_2}{\bcastzeroc c {v_1}} \trans{\sndac c {v_1}} \sigma$ and 
$\conf{\Gamma_2}{\arcv c x P} \trans{\rcva c {v_1}} 
\arcv c x P$; the first and the third of these transitions can be derived using Rule  
\rulename{RcvIgn}, while the second one can be derived using Rule \rulename{Bcast}.
Consequently $\Gamma_2' = 
\gupd{\sndac c {v_1}}{\Gamma_2}$, and  since 
$\Gamma_2 \vdash c: \cbusy$ we obtain 
$\Gamma_2'(c) = (1,\err)$; this represents the fact that
a collision 
has occurred, and thus the special  value $\err$
will eventually be delivered on $c$.

At this point we can derive the reductions 
$\confC_3' \red_\sigma \confC_4 = \conf{\Gamma}{\nil | \nil | \{\err/x\}P}$, 
meaning that the transmission along channel $c$ terminates in one time instant, 
leading the receiving station to detect a collision. 
The reduction above can be obtained from the transitions 
$\conf{\Gamma_2'}{\sigma} \trans{\sigma} \nil$ and 
$\conf{\Gamma_2'}{\arcv c x P} \trans{\sigma} \{\err/x\}P$, obtained 
via rules (TimeNil) and (EndRcv) presented in Table \ref{tab:net2}. 

Now, suppose  we change the amount of time required 
to transmit value $v_1$ from $1$ to $2$, and consider 
again the configuration $\confC_3$ above. 
In this case the transmission of value $v_1$ will 
also cause a collision; however, in this case 
the transmission of value $v_1$ is long enough 
to continue after that of value $v_0$ has finished; 
as a consequence, we expect that the time 
required for channel $c$ to be released rises 
when the broadcast of $v_1$ happens. 

In fact, in this case we have the reduction 
$\confC_3 \red \confC_3'' = \conf{\Gamma_2''}{\sigma | \sigma^2 | 
\arcv c x P}$, where $\Gamma_2'' = \gupd{\sndac c {v_1}}{\Gamma_2}$ 
and specifically $\Gamma_2''(c) = (2, \err)$. Now, two time 
instants 
are needed for the transmission along channel $c$ to end, 
leading to the sequence of (timed) reductions $\confC_3'' 
\red_\sigma \red_\sigma \confC_4.$
\end{example}

\subsection{Behavioural Equivalence}
\label{sec:cxt}
In this section we propose a notion of timed behavioural equivalence
for our wireless networks. Our touchstone system equality is 
\emph{reduction 
barbed congruence} \cite{hy,pibook,MiSa92,JeffreyRathke05hopi}, 
a standard contextually defined
\MHc{process}  equivalence. Intuitively, two terms are reduction barbed congruent  
if they have the same \emph{basic observables}, in all \MHc{parallel} contexts, 
under all possible \emph{computations}.
The \MHc{formal definition relies on two crucial 
concepts,  a reduction semantics to describe how  systems
evolve, which we have already defined, and  a notion of basic observable which says what
the environment can observe directly of  a system.
There is some choice as to what to take as a basic observation, or \emph{barb}, of a wireless system.
In standard process calculi this is usually taken to be the ability of the
environment to receive a value along a channel. But the series of 
examples we have just seen demonstrates that this is problematic, in the presence of possible collisions
and the passage of time. Instead we choose a more appropriate notion for wireless systems,
one which is already present in our language for station code}: 
\emph{channel exposure\/}.
\begin{definition}[Barbs]
\label{def:observability}  
\MHc{
We say the configuration $\conf \Gamma W$  has a \emph{strong barb on $c$},  
written $\conf \Gamma W \downarrow_{c}$, if 
$\Gamma \vdash c:\cbusy$. 
We write $\conf \Gamma W \Downarrow_{c}$, a \emph{weak barb},  if there exists a configuration 
$\confC'$ such that $\conf \Gamma W \red^* \confC'$ 
and $\confC' \downarrow_{c}$. Note that we allow the passage of time in the definition of weak barb.
}
\end{definition}

\begin{definition}
  Let $\RR$ be a relation over configurations. 
  \begin{enumerate}
  \item $\RR$  is said to be \emph{barb preserving} if 
$\conf {\Gamma_1} {W_1} \MHc{\Downarrow_{c}}$ 
implies $\conf {\Gamma_2} {W_2} \Downarrow_c$, 
whenever 
$(\conf {\Gamma_1} {W_1}) \RRr (\conf {\Gamma_2} W_2)$. 

\item It is \emph{reduction-closed} if  
$(\conf {\Gamma_1} {W_1}) \RRr (\conf {\Gamma_2}{W_2})$ and 
$\conf {\Gamma_1} {W_1} \red \conf {\Gamma_1'} {W'_1}$ imply there is some
$\conf {\Gamma'_2}{W'_2}$ such
that $\conf {\Gamma_2} {W_2} \red^* \conf {\Gamma'_2} {W'_2}$ and  
$(\conf {\Gamma'_1} {W'_1}) \RRr (\conf {\Gamma'_2}{W'_2})$. 

\item It is \emph{contextual} if $\conf {\Gamma_1} {W_1} \, \RRr \,
  \conf {\Gamma_2}{W_2}$, implies $ \conf {\Gamma_1} {(W_1 | W)} \;
  \RRr \; \conf{\Gamma_2}{(W_2 | W)} $ for all processes \MHc{$W$.}
  \end{enumerate}
\end{definition}
With these concepts we now have everything in place for a standard
definition of contextual equivalence between systems:
\begin{definition}
\label{def:rbc}
\MHc{[Reduction barbed congruence]}, 
written $\simeq$, is the largest symmetric  relation over 
\MHc{ configurations}  which is barb preserving,
reduction-closed  and contextual. 
\end{definition}

In the remainder of this section we explore via examples the
implications of Definition~\ref{def:rbc}. The notion of a fresh channel will be
important; we say that $c$ is \emph{fresh} for the configuration
$\conf \Gamma W$ if it does not occur free in $W$ and $\Gamma \vdash
c:\cfree$. Note that \MHc{we can always pick a fresh channel for an arbitrary
configuration.}

\begin{example}\label{ex:transmit}
  Let us assume that $\Gamma \vdash c: \cfree$. Then it is easy to see that
\begin{align}\label{eq:barb.ex1}
\conf \Gamma \bcastc c {v_0} P  \,\not\simeq\, \conf \Gamma \bcastc c {v_1} P
\end{align}
under the assumption that $v_0$ and $v_1$ are different values. For let $T$ be the testing
context 
\begin{displaymath}
  \rcvtimec c x   {\matchb {x=v_0}{\bcastzeroc   {\eureka} \arb } \nil} {}
\end{displaymath} 
where \eureka is fresh, and $\arb$ is some arbitrary value. 
Then $\conf \Gamma {\bcastc c {v_0} P | T}$ has a weak barb on \eureka which is not the case for 
 $\conf \Gamma {\bcastc c {v_1} P | T}$. Since $\simeq$ is contextual and barb preserving, the statement 
(\ref{eq:barb.ex1}) above follows. 

However such tests will not distinguish between $\conf{\Gamma}{Q_1}$ and 
$\conf{\Gamma}{Q_2}$, where 
\begin{align*}
  & Q_1 =   \bcastzeroc c {v_0}  |  \bcastc c {v_1} P    
&& \text{and}\;&Q_2  =  \bcastzeroc c {v_1} |\bcastc c {v_0} P 
\end{align*}
assuming that $\delta_{v_0} = \delta_{v_1}$.
In both configurations $\conf \Gamma Q_1$ and $\conf \Gamma Q_2$ a
collision will occur at channel $c$ and a receiving station, such as $T$, 
will receive the error value $\err$ at the end of the transmission. So there is reason to
hope that
\begin{math}
  \conf \Gamma Q_1  \;\simeq\,\conf \Gamma Q_2.
\end{math}
However we must wait for for the proof techniques of the next section to 
establish this equivalence; \MHc{see Example~\ref{ex:transmit.bisim}.} 
\end{example}

The above example suggests that transmitted values can be observed only at the 
end of a transmission; \MHc{so} if a collision happens, there is 
no possibility \MHc{of determining} the value that was originally broadcast. 
This concept is stressed even more in the following example.

\begin{example}[Equating values]
\label{ex:equators}
Let $\Gamma$ be a stable channel environment, 
$W_0 = \bcastzeroc c {v_0}, W_1 = \bcastzeroc c {v_1}$ and consider 
the configurations $\conf{\Gamma}{W_0}, 
\conf{\Gamma}{W_1}$; here we assume that 
$v_0$ and $v_1$ are two different values with 
possibly different \MHc{transmission times}. 

We already argued in Example \ref{ex:transmit} that 
these two configurations can be distinguished by the 
context 
\begin{displaymath}
  \rcvtimec c x   { \matchb {x=v_0} {\bcastzeroc   {\eureka} \arb } \nil } {}
\end{displaymath}

However, the two configurations above can be made 
indistinguishable if we add to each of 
them a parallel component 
that causes a collision on channel $c$. To this end, 
let 
\[Eq(v_0,v_1) = \sigma^h.\bcastzeroc c {\arb}\]
 for 
some positive integer $h$ and value $\arb$ 
such that 
$h < \min{(\delta_{v_0}, \delta_{v_1})}$ 
and $\delta_\arb \geq \max{(\delta_{v_0}, \delta_{v_1})} - h$. Now, 
 consider the configurations $\confC_0 = \conf{\Gamma}{W_0 | Eq(v_0,v_1)}$, 
$\confC_1 = \conf{\Gamma}{W_1 | Eq(v_0,v_1)}$. 

One could hope that there exists a context which is 
able to distinguish these two configurations. 
However, before the transmission of $v_0$ ends in 
$\confC_0$, a second broadcast along channel $c$ will 
fire, causing a collision; the same happens before 
the end of transmission of value $v_1$ in $\confC_1$. 
Further, the total amount of time for which channel 
$c$ will be exposed is the same for both configurations, 
so that one can  argue that it is impossible to 
provide a context which is able to distinguish 
$\confC_0$ from $\confC_1$. In order to prove this 
to be formally true, we have to wait until the next 
section.  
\end{example}

Collisions can also be used to merge two different 
transmissions on the same channel in a 
single corrupted transmission.

\begin{example}[Merging Transmissions]
\label{ex:merge}
Let  $\Gamma$ be a stable channel environment,
$W_0 = \bcastc c {v_0} {\bcastzeroc c {v_1}}$, 
$W_1 = \bcastc c {v_1} {\bcastzeroc c {v_0}}$. 
In $\conf{\Gamma}{W_0}$ a broadcast of value $v_0$ 
along channel $c$ can fire; when the transmission of 
$v_0$ is finished, a second broadcast of value 
$v_1$ along the same channel can also fire. 
The behaviour of $\conf{\Gamma}{W_1}$ is 
similar, though the order of the two values to 
be broadcast is swapped. 
 \MHc{Note that it is possible 
to distinguish the two configurations 
$\conf{\Gamma}{W_0}$ and $\conf{\Gamma}{W_1}$ 
using  the test
\begin{displaymath}
  \rcvtimec c x   {\matchb {x=v_0} {\bcastzeroc   {\eureka} \arb }  \nil} {}
\end{displaymath}
we have already seen in the previous example. 

However} suppose now that we add a parallel component to both configurations 
which broadcasts another value along channel $c$ before 
the transmission of value $v_0$ ($v_1$) has finished, and which 
terminates after the broadcast of value $v_1$ ($v_0$) has 
begun. More formally, let 
\[ Mrg(v_0,v_1) = \sigma^{h}.\bcastzeroc c \arb \]
where $h = \operatorname{min}(\delta_{v_0}, \delta_{v_1}) -1$ and 
$\delta_{\arb} = {|} \delta_{v_0} - \delta_{v_1}{|} +2$.

Consider the configurations $\conf{\Gamma}{W_0 | Mrg(v_0,v_1)}$, 
$\conf{\Gamma}{W_1 | Mrg(v_0,v_1)}$. In both configurations 
a collision occurs; further, once the transmission 
of value $v_0$ has begun in the former configuration, 
channel $c$ will remain exposed until the transmission 
of value $v_1$ has finished. A similar behaviour can be observed on 
the second configuration. This leads to the intuition \MHf{Where is this proved?}
that $\conf{\Gamma}{W_0 | Mrg(v_0,v_1)} \simeq \conf{\Gamma'} {W_1 | 
Mrg(v_0,v_1)}$; we prove this in Example \ref{ex:merging.bisimilar}, for a particular instance 
of transmission values for $v_0, v_1$. 
\end{example}

A priori reductions ignore the passage of time, and therefore one might suspect that reduction
barbed congruence is impervious to the precise timing of activities. But the next example demonstrates that 
this is not the case. 
\begin{example}[Observing the passage of time]
\label{ex:time}
  Consider the two processes
 $Q_1 = \bcastzeroc c {v_0}$ and  $Q_2 = \sigma.{Q_1}$, 
and again let us assume that $\Gamma \vdash c:\cfree$. There is very little difference between
the behaviours of $\conf \Gamma {Q_1}$ and  $\conf \Gamma {Q_2}$; both will transmit (successfully)
the value $v_0$, although the latter is a little slower. 
However this slight difference can be observed.
Consider the test $T$ defined by 
\begin{displaymath}
  \matchb{\expsd{c}} {\bcastzeroc \eureka \arb} {  \nil    }
\end{displaymath}
In fact, $\conf{\Gamma}{(Q_1 | T)}$ can start a transmission 
along channel $c$, after which the predicate $\expsd{c}$ will be evaluated in 
the system term $T$. The resulting configuration is given by 
$\conf{\Gamma'}{\sigma^{\delta_{v_0}} | \sigma. \bcastzeroc \eureka \arb}$; 
at this point, it is not difficult to note that the configuration has 
a weak barb on $\eureka$.


On the other hand, 
 the \emph{unique} reduction from $\confC_2 = \conf \Gamma {(Q_2 | T)}$ 
leads to the evaluation of the exposure predicate $\expsd{c}$; since 
$\Gamma \vdash c: \cfree$ the only possibility for the 
resulting configuration is given by $\confC_2' = \conf{\Gamma}{Q_2 | \sigma}$. 
Since $\eureka$ is a fresh channel, it is now immediate to note that 
$\confC_2' \not\Downarrow_{\eureka}$ and hence also $\confC_2 \not\Downarrow_{\eureka}$.
For the test to work correctly it is essential 
that $\Gamma \vdash c:
\cfree$.  \MHf{Where, later on?}
Here we would like to point out that 
using the proof methodology developed in Section \ref{sec:bisim} we 
are able to show that if $\Gamma' \vdash_{\mathrm{t}} c: n$ 
and $n > \delta_{v_0}$ 
then $\conf {\Gamma'} Q_1 \simeq \conf {\Gamma'} Q_2$. 
\end{example}

Behind this example is the general principle that reduction barbed
congruence is actually sensitive to the passage of time; \MHc{this is proved
formally in}
Proposition~\ref{prop:delay.preservation} of
Section~\ref{sec:completeness}.

\begin{example}
  As a final example we illustrate the use of channel restriction. 
Assume that $v_1$ and $v_2$ are some kind of values which 
can be compared via a (total) order relation $\preccurlyeq$. 
Consider the configuration\\
\begin{math}
 \conf  \Gamma {  \nu c:(0,\cdot).({\bcastzeroc c {v_1} } | P_e |R)}
\end{math}
where  the station code is given by
\begin{align*}
  P_e &=   \sigma . \fix X {(\matchb {\expsd{c}} {X}{\bcastzeroc c {v_2}  })}  \\
   R &= \rcvc c x {R_1}\\
   R_1 &= \rcvc c y { \matchb{y \preccurlyeq x}{ \bcastzeroc d x}{ \bcastzeroc d y}  }
\end{align*}
Intuitively the receiver $R$ waits indefinitely for two values along
the restricted channel $c$ and broadcasts the largest on channel
$d$. Intuitively the use of channel restriction here shelters $c$ from
external interference.
Assuming $\Gamma \vdash d: \cfree$ we will be able to show that 
\begin{displaymath}
  \conf \Gamma {  \nu c:(0,\cdot).({\bcastc c {v_1} \nil} | P_e |R)}  \;\;\simeq\;\; \conf \Gamma {\sigma^{\delta_{v_1}+\delta_{v_2}+2}.\bcastc d w \nil}
\end{displaymath}
provided $w = \max(v_1,v_2)$. 
\end{example}

\section{Extensional Semantics}
\label{sec:extsem}
Proving that two configurations $\confC_1$ and $\confC_2$ are barbed congruent 
can be difficult, due to the contextuality constraint imposed in Definition \ref{def:rbc}.
Therefore, we want to give a co-inductive characterisation of the
contextual equivalence $\simeq$ 
between  \MHc{configurations}, in terms
of a standard bisimulation equivalence over some extensional LTS. In this 
section
 we first present the extensional semantics, then we recall the standard
definition of (weak) bisimulation over configurations.
We show,
by means of a number of examples, the usefulness of the actions introduced 
in the extensional semantics. 

\subsection{Extensional actions}
\label{sec:ea}
The extensional semantics is designed by addressing the question: 
what actions can be detected by an external observer? 
Example~\ref{ex:time} indicates that the passage of time
is observable. The effect of inputs received from the external environment 
also has to be taken into account. In contrast, the discussion in
Example~\ref{ex:transmit} indicates that, due to the possibility of
collisions, the treatment of \MHc{transmissions}
is more subtle. It turns out that the transmission itself is not
important; instead we must take into consideration the successful
delivery of the transmitted value.

\begin{table}[!t]
\begin{align*}
&
\Txiom{Input}{\conf \Gamma W \trans{c?v}  {W'}  }
{\conf \Gamma W \exttrans{\rcva c v} \conf {\gupd{c?v}{\Gamma}}{W'}}
&&
\Txiom{Time}
{\conf \Gamma W \trans{\sigma} {W'}}
{\conf {\Gamma} {W} \exttrans{\sigma} \conf {\gupd{\sigma}{\Gamma}} {W'}
}
\\\\
&\Txiom{Shh}{ \conf{\Gamma}{W} \trans{\sndac c v} {W'} }
{\conf \Gamma W \exttrans{\tau} \conf {\gupd{c!v}{\Gamma}}{W'}}
&&
\Txiom{TauExt}{\conf \Gamma W \trans{\tau}  {W'}}
{\conf \Gamma W \exttrans{\tau} \conf {\Gamma} {W'}}
\\\\
&
\Txiom{Deliver}{\Gamma(c)=(1,v) \Q \conf \Gamma W \trans{\sigma} {W'} }
{\conf \Gamma W \exttrans{\gamma(c,v)} \conf {\gupd{\sigma}{\Gamma}}{W'}}
&&
\Txiom{Idle}{\Gamma \vdash c : \cfree}
{\conf \Gamma W \exttrans{\iota(c)} \conf \Gamma W}
\end{align*}
    \caption{Extensional actions}
    \label{tab:extensional}
  \end{table}

In Table~\ref{tab:extensional} we give the rules defining the extensional actions,
$\mathcal C \exttrans{\alpha} \mathcal C'$, which can take one of the forms:
\begin{itemize}
\item Input: $\mathcal C \exttrans{c?v} \mathcal C'$, 
this is inherited directly from the intensional semantics

\item Time: $\mathcal C \exttrans{\sigma} \mathcal C'$, also inherited from the 
intensional semantics

\item Internal:   $\mathcal C \exttrans{\tau} \mathcal C'$, this corresponds to the combination
of the Internal and Transmission rules from the reduction semantics, in Definition~\ref{def:step}

\item Delivery:  $\mathcal C \exttrans{\gamma(c,v)} \mathcal C'$, this corresponds to the successful 
delivery of the value $v$ which was in transmission along the channel $c$

\item Free: $\mathcal C \exttrans{\iota(c)} \mathcal C$, a predicate indicating that channel $c$ is 
not exposed, and therefore ready to start a potentially successful transmission. 
\end{itemize}

\begin{remark}
\label{rem:operational_correspondence}
The rules provided in Table \ref{tab:extensional} guarantee that 
$\tau$-extensional actions coincide with instantaneous reductions. 
In fact, whenever $\conf{\Gamma}{W} \red \conf{\Gamma'}{W'}$ then 
 either $\conf{\Gamma}{W} \trans{\tau} W'$, and hence 
$\conf{\Gamma}{W} \exttrans{\tau} \conf{\Gamma'}{W'}$ follows by 
an application of Rule \rulename{TauExt}, with  $\Gamma' = \gupd{\tau}{\Gamma}$, 
or $\conf{\Gamma}{W} \trans{c!v} W'$ and  $\conf{\Gamma}{W} \exttrans{\tau} 
\conf{\Gamma'}{W'}$ is ensured by Rule \rulename{Shh}, with 
$\Gamma' = \gupd{c!v}{\Gamma}$.  The opposite implication 
can be proved analogously. 

Similarly, it is easy to check extensional $\sigma$-actions coincide with timed reductions: 
$\conf{\Gamma}{W} \red_{\sigma} \conf{\Gamma'}{W'}$ if and only 
if $\conf{\Gamma}{W} \exttrans{\sigma} 
\conf{\Gamma'}{W'}$.
\end{remark}

\subsection{Bisimulation equivalence}
\label{sec:bisim}
\leaveout{The definition of  reduction barbed congruence is simple and 
intuitive. However, due to the universal quantification on parallel 
contexts, it may be quite difficult to prove that 
two terms are barbed congruent. Simpler proof techniques are based on 
labelled bisimilarities. In this section, we propose an 
appropriate notion of weak bisimulation between networks based on 
weak actions.} 
\MHc{The extensional actions of the previous section endows   systems in CCCP with the structure
of an LTS. 
Weak extensional actions in this LTS are defined as usual, with
 $\mathcal C    \extTrans{\alpha} \mathcal C'$  denoting  
$\mathcal C \exttransStar{\tau} \,\exttrans{\alpha} \, \exttransStar{\tau}\, \mathcal C'$.
We will use  $\mathcal C \extTrans{\ } \mathcal C'$ to denote  
$\mathcal C  \exttransStar{\tau} \mathcal C'$,  
and the formulation of bisimulations 
is facilitated by the notation  
$\mathcal C    \extTrans{\hat{\alpha}} \mathcal C'$, which is again
standard: for $\alpha = \tau$ this denotes   $\mathcal C \extTrans{\ } \mathcal C'$ while 
for $\alpha \not= \tau$ 
it is  $\mathcal C    \extTrans{\alpha} \mathcal C'$. 
}
We now have the standard definition of weak bisimulation equivalence in the resulting LTS
which for convenience we recall. 

\begin{definition}
\label{def:bisim}
Let $\RR$ be a binary relation over configurations. 
We say that $\RR$ is a bisimulation if for every extensional
action $\alpha$, whenever $\calC_1 \RRr \calC_2$
\begin{enumerate}[label=(\roman*)]
\item  $\calC_1 \exttrans{\alpha}  \calC_1'$ implies  $\calC_2 \extTrans{\hat{\alpha}}  \calC_2'$, 
for some $\calC_2'$, satisfying  $\calC_1' \RRr \calC_2'$
\item conversely,
$\calC_2 \exttrans{\alpha}  \calC_2'$ implies  $\calC_1 \extTrans{\hat{\alpha}}  \calC_1'$, for some $\calC_1'$,
such that   $\calC_1' \RRr \calC_2'$.
\end{enumerate}
We write $\calC_1 \approx \calC_2$, if there is a bisimulation $\RR$ such that 
$\calC_1 \RRr \calC_2$.
\end{definition}
Our goal is to 
\MHc{demonstrate that this form of bisimulation provides a sound and useful 
proof method for showing behavioural equivalence between wireless systems 
described in CCCP; moreover for a large class of systems it will also turn out
to be complete. 

The next two examples show that the 
introduction of the actions 
$\iota(c)$ and $\gamma(c,v)$ are necessary for soundness.}

\begin{example}[On the rule \rulename{Idle}]
\label{ex:iota.actions}
Suppose \MHc{we were to} drop the rule \rulename{Idle} in the extensional semantics;
\MHc{then} 
consider the configurations
\begin{eqnarray*}
\conf{\Gamma_1}{W_1} &=& \tau.\nil\\
\conf{\Gamma_2}{W_2} &=& \bcastzeroc{c}{v}
\end{eqnarray*}
where $\Gamma_1(c) = (1,v)$, $\Gamma_2(c) = (0,\cdot)$ and 
$\delta_v = 1$.

\MHc{
If we were to drop the actions $\iota(c)$ from the extensional semantics
then the extensional LTSs generated by these two configurations would be isomorphic; 
recall that a broadcast action in the intensional semantics always
corresponds to a $\tau$ action in its extensional counterpart. Thus 
they would be related by the amended version of bisimulation equivalence. 

However, we also have that $\conf{\Gamma_1}{W_1} \not\simeq
\conf{\Gamma_2}{W_2}$.
This can be proved by exhibiting a distinguishing context.
To this end, consider the system
$T =\matchb{\expsd{c}}{\nil}{\bcastzeroc{\mbox{eureka}}{\mbox{\arb}}}$.
Then $\conf{\Gamma_2}{W_2 | T}$ has a weak barb on the channel $\mbox{eureka}$,
which  obviously $\conf{\Gamma_2}{W_1 | T}$ can not match.
}
\end{example}

\begin{example}[On the rule \rulename{Deliver}]
\label{ex:gamma.actions}
\MHc{
Consider the configuration $\conf{\Gamma_2}{W_2}$ from the 
previous example; consider also the configuration $\conf{\Gamma_2}{W_3}$, where $W_3 = \bcastzeroc c w$ 
for some value $w$, different from $v$, such that $\delta_w = 1$.
Finally, let  
$T' = \rcvtimec c x {\matchb{x=v}{\bcastzeroc{\mbox{eureka}}{\mbox{\arb}}}{\nil}}{}$.
Then, assuming $w$ is different from $v$, 
 $\conf{\Gamma_2}{W_3 |T'}$ can not produce a barb on  $\mbox{eureka}$. 
On the other hand, $\conf{\Gamma_2}{W_2 |T'}$  can produce such a barb. 
It follows that $\conf{\Gamma_2}{W_2} \not\simeq
\conf{\Gamma_2}{W_3}$.

Note also that $\conf{\Gamma_2}{W_3} \not\approx \conf{\Gamma_2}{W_2}$, 
since the (weak) action $\conf{\Gamma_2}{W_3} \extTrans{\gamma(c,w)} \conf{\Gamma}{\nil}$ 
cannot be matched by $\conf{\Gamma_2}{W_2}$.
However, if we were to drop the rule \rulename{Deliver} in the extensional semantics, 
thereby eliminating the actions  $\gamma(c,v)$, then it would be straightforward
to exhibit a bisimulation containing the pair $(\conf{\Gamma_2}{W_3}, \conf{\Gamma_2}{W_2})$. 
Thus again the amended version of bisimulation equivalence would be unsound. 
}
\end{example}

The two examples above show that both rules \rulename{Idle} and
\rulename{Deliver} are necessary to achieve the soundness of our
bisimulation proof method for reduction barbed congruence. 

\MHc{In the remainder of this section we give a further series of examples,
showing that bisimulations in our extensional LTS offer a viable proof technique
for demonstrating behavioural equivalence for at least simple  wireless systems.}

\begin{example}[Transmission]
\label{ex:transmit.bisim} 
\MHc{
Here we revisit Example~\ref{ex:transmit}. 
Let $\Gamma$ be a stable channel environment, and 
consider the configurations 
$\confC_0 = \conf{\Gamma}{W}$,  
$\confC_1 = \conf{\Gamma}{V}$, 
where $W = {\bcastc c {v_0} P | \bcastzeroc c {v_1}}$, 
$V = {\bcastc c {v_1} P | \bcastzeroc c {v_0}}$;  
note that these two configurations are taken from
the second part of  Example \ref{ex:transmit}.  

Our aim is to show that $\confC_0 \approx \confC_1$, when $\delta_{v_0} = \delta_{v_1}$; for convenience 
let us assume that $\delta_{v_0} = \delta_{v_1} = 1$. 
The idea here is to describe the required bisimulation by matching up system terms. 
}
To this end we define 
the following system terms: 
\[
\begin{array}{rcl@{\hspace*{1cm}}rcl}
W_0 &=& \sigma.P | \bcastzeroc c {v_1}
 & V_1 &=& \sigma.P | \bcastzeroc c {v_0}\\[2pt]
W_1 &=& \bcastc c {v_0} P | \sigma
 & V_0 &=& \bcastc c {v_1} P | \sigma\\[2pt]
E &=& \sigma.P | \sigma 
&
{E'} &=& {P | \nil} 
\end{array}
\]
Then for any channel environment $\Delta$ 
we have 
the following transitions in the extensional semantics:
\[
\begin{array}{rcl@{\hspace*{1cm}}rcl}
\conf{\Delta}{W} &\exttrans{\tau}& \conf{\gupd{\sndac c {v_0}}{\Delta}}{W_0}&
\conf{\Delta}{V} &\exttrans{\tau}& \conf{\gupd{\sndac c {v_0}}{\Delta}}{V_0}\\
\conf{\Delta}{W} &\exttrans{\tau} &\conf{\gupd{\sndac c {v_1}}{\Delta}}{W_1}&
\conf{\Delta}{V} &\exttrans{\tau} &\conf{\gupd{\sndac c {v_1}}{\Delta}}{V_1}\\
\conf{\Delta}{W} &\exttrans{\rcva d w} &\conf{\gupd{\rcva d w}{\Delta}}{W}&
\conf{\Delta}{V} &\exttrans{\rcva d w} & \conf{\gupd{\rcva d w}{\Delta}}{V}\\
\conf{\Delta}{W} &\exttrans{\iota(d)} &\conf{\Delta}{W} \mbox{ if } \Delta \vdash d: \cfree&
\conf{\Delta}{V} &\exttrans{\iota(d)} &\conf{\Delta}{V} \mbox{ if } \Delta \vdash d: \cfree\\
&&\\
\conf{\Delta}{W_0} &\exttrans{\tau}& \conf{\gupd{\sndac c {v_1}}{\Delta}}{E}&
\conf{\Delta}{V_0} &\exttrans{\tau}& \conf{\gupd{\sndac c {v_1}}{\Delta}}{E}\\
\conf{\Delta}{W_0} &\exttrans{\rcva d w}& \conf{\gupd{\rcva d w}{\Delta}}{W_0}&
\conf{\Delta}{V_0} &\exttrans{\rcva d w}& \conf{\gupd{\rcva d w}{\Delta}}{V_0}\\
\conf{\Delta}{W_0} &\exttrans{\iota(d)}& \conf{\Delta}{W_0} \mbox{ if } \Delta \vdash d: \cfree&
\conf{\Delta}{V_0} &\exttrans{\iota(d)}& \conf{\Delta}{V_0} \mbox{ if } \Delta \vdash d: \cfree\\
&&\\
\end{array}
\]
\[
\begin{array}{rcl@{\hspace*{1cm}}rcl}
\conf{\Delta}{W_1} &\exttrans{\tau}& \conf{\gupd{\sndac c {v_0}}{\Delta}}{E}&
\conf{\Delta}{V_1} &\exttrans{\tau}& \conf{\gupd{\sndac c {v_0}}{\Delta}}{E}\\
\conf{\Delta}{W_1} &\exttrans{\rcva d w}& \conf{\gupd{\rcva d w}{\Delta}}{W_1}&
\conf{\Delta}{V_1} &\exttrans{\rcva d w}& \conf{\gupd{\rcva d w}{\Delta}}{V_1}\\
\conf{\Delta}{W_1} &\exttrans{\iota(d)}& \conf{\Delta}{W_1} \mbox{ if } \Delta \vdash d: \cfree&
\conf{\Delta}{V_1} &\exttrans{\iota(d)}& \conf{\Delta}{V_1} \mbox{ if } \Delta \vdash d: \cfree
\end{array}
\]
\MHc{
Here $d$ ranges over arbitrary channel names, including $c$\MHf{Andrea: is this true?}.

Then consider the following relation:
\[
\Ss = \{(\conf{\Delta}{W}, \conf{\Delta}{V}),\,\, (\conf{\Delta}{W_0},\,\,
            \conf{\Delta}{V_0}), (\conf{\Delta}{W_1}, \conf{\Delta}{V_1}) \;|\; 
\Delta \text{ is a channel environment}\} \enspace . 
\]
Using the above tabulation of actions one can now show  that $\Ss$ is a bisimulation; 
for $\confC \,\Ss\, \confC' $ 
each possible action of $\confC$  can be
matched by $\confC'$ by performing exactly the same action, and vice-versa. 

Since $(\confC_0, \confC_1) \in \Ss$,  it follows that $\confC_0 \approx \confC_1$. 
}
\end{example}

\begin{table}
\[
\begin{array}{rcl}
\conf{\Delta}{W} &\Ss&  \conf{\Delta}{V}\\ 
\conf{\Delta}{W_0} &\Ss& \conf{\Delta}{V_0}\\ 
\conf{(\Delta[c \mapsto (1, v_0)])}{W_0} &\Ss& \conf{(\Delta[c \mapsto (2, v_1)])}{V_1}\\ 
\conf{(\Delta[c \mapsto (1, \err)])}{W_0} &\Ss& \conf{(\Delta[c \mapsto (2, \err)])}{V_1}\\
\conf{\Lambda}{W_\arb} &\Ss& \conf{\Lambda}{V_{\arb}}\\ 
\conf{\Delta}{W_\err} &\Ss& \conf{\Delta}{V_\err}\\ 
\conf{\Delta}{W'} &\Ss& \conf{\Delta}{V'}
\end{array}
\]
$\Delta$ arbitrary channel environment,\\
$\Lambda$ arbitrary channel environment such that $\Lambda(c) = (k,w)$ for some $k \geq 2$
\caption{A relation $\Ss$ for comparing the configurations 
$\confC_0, \confC_1$ of Example \ref{ex:equators.bisim}}
\label{fig:equators.bisimulation}
\end{table}

\begin{example}[Equators]
\label{ex:equators.bisim}
Let us consider  the configurations $\confC_0, \confC_1$ 
of Example \ref{ex:equators}. Recall that 
$\confC_0 = \conf{\Gamma}{W}$, where $W = \bcastzeroc c {v_0} | \sigma^{h}.
\bcastzeroc c {\arb}$ and $\confC_1 = \conf{\Gamma}{V}$, where 
$V = \bcastzeroc c {v_1} 
| \sigma^{h}.\bcastzeroc c {\arb}$; further, recall that $\Gamma$ is a stable 
channel environment and $h, \arb$ are a positive integer and a 
value, respectively, such that $h < \min{(\delta_{v_0}, \delta_{v_1})}$, 
$\delta_{\arb} \geq \max{(\delta_{v_0}, \delta_{v_1})} - h$. 
Without loss of generality, for this example we assume 
$\delta_{v_0} = 1, \delta_{v_1} = 2$, 
$h = 0$ and $\delta_{\arb} = 2$.

For the sake of \MHc{convenience}  we define the following 
system terms:
\[
\begin{array}{rcl@{\hspace*{1cm}}rcl}
W_0 &=& \sigma | \bcastzeroc {c} {\arb}&
V_1 &=& \sigma^2 | \bcastzeroc {c} {\arb}\\
W_{\arb} &=& \bcastzeroc c {v_0} | \sigma^2&
V_{\arb} &=& \bcastzeroc c {v_1} | \sigma^2\\
W_\err &=& \sigma | \sigma^2&
V_{\err} &=& \sigma^2 | \sigma^2\\
W' &=& \nil | \sigma&
V' &=& \sigma | \sigma\\
E &=& \nil | \nil
\end{array}
\]

Let us consider the relation $\Ss$ depicted in 
\MHc{Table~\ref{fig:equators.bisimulation}}; note that 
$(\confC_0, \confC_1) \in \Ss$, so that in order to 
prove that $\confC_0 \approx \confC_1$ it is sufficient to 
show that \MHc{$\Ss$} is  a bisimulation. 
Note that in the relation $\Ss$ the system terms $W_\arb, V_\arb$ are always associated with a channel environment 
in which the channel $c$ is exposed. In fact, if $\Lambda$ were a channel environment such that 
$\Lambda \vdash c: \cfree$, it would not be difficult to prove that $\conf{\Lambda}{W_\err} \not\approx 
\conf{\Lambda}{V_\err}$; this is because the values broadcast by these two configurations are different. 

\MHc{
Let us  list the main the extensional actions from configurations using these system terms:
}
\[
\begin{array}{lcr} 
\conf{\Delta}{W} &\exttrans{\tau}& \conf{(\Delta[c \mapsto (1, v_0)])}{W_0} \mbox{ if } \Delta \vdash c: \cfree\\
\conf{\Delta}{V} & \exttrans{\tau}& \conf{(\Delta[c \mapsto (2, v_1)])}{V_1}\mbox{ if } \Delta \vdash c: \cfree\\
\conf{\Delta}{W} &\exttrans{\tau}& \conf{(\Delta[c \mapsto (2, \arb)])}{W_\arb}\\
\conf{\Delta}{V} & \exttrans{\tau}& \conf{(\Delta[c \mapsto (2, \arb)])}{V_\arb}\\
\conf{\Delta}{W} &\exttrans{\rcva d w }& \conf{(\gupd{\rcva d w}{\Delta})}{W}\\
\conf{\Delta}{V} &\exttrans{\rcva d w }& \conf{(\gupd{\rcva d w}{\Delta})}{V}\\
\\
\conf{(\Delta[c \mapsto (1,v_0)])}{W_0} &\exttrans{\tau}& \conf{(\Delta[c \mapsto (2,\err)])}{W_\err}\\
\conf{(\Delta[c \mapsto (2,v_1)])}{V_1} &\exttrans{\tau}& \conf{(\Delta[c \mapsto (2,\err)])}{W_\err}\\
\conf{\Delta}{W_0} & \exttrans{\rcva c w}& \conf{(\Delta[c \mapsto (1, \err)])}{W_0} \mbox{ if } \Delta \vdash c: \cbusy, \delta_w = 1\\
\conf{\Delta}{V_1} & \exttrans{\rcva c w}& \conf{(\Delta[c \mapsto (2, \err)])}{V_1} \mbox{ if } \Delta \vdash c: \cbusy, \delta_w = 1\\
\conf{\Delta}{W_0} & \exttrans{\rcva c w}& \conf{(\Delta[c \mapsto (\delta_w, \err)])}{W_0} \mbox{ if } \Delta \vdash c: \cbusy, \delta_w > 1\\
\conf{\Delta}{V_1} & \exttrans{\rcva c w}& \conf{(\Delta[c \mapsto (\delta_w, \err)])}{V_1} \mbox{ if } \Delta \vdash c: \cbusy, \delta_w > 1\\
\\
\conf{\Lambda}{W_{\arb}} &\exttrans{\tau}& \conf{(\gupd{\sndac c {v_0}}{\Lambda})}{W_\err}\\
\conf{\Lambda}{V_\arb} &\exttrans{\tau}& \conf{(\gupd{\sndac c {v_1}}{\Lambda})}{V_{\err}}\\
\\
\conf{\Delta}{W_\err} &\exttrans{\sigma}& \conf{(\gupd{\sigma}{\Delta})}{W'}\\
\conf{\Delta}{V_\err} &\exttrans{\sigma}& \conf{(\gupd{\sigma}{\Delta})}{V'}\\
\\
\conf{\Delta}{W'} &\exttrans{\sigma}& \conf{(\gupd{\sigma}{\Delta})}{E}\\
\conf{\Delta}{V'} & \exttrans{\sigma}& \conf{(\gupd{\sigma}{\Delta})}{E}
\end{array}
\]
Here $\Delta, \Lambda$ are two arbitrary channel environments, \MHc{but $\Lambda$ 
is subject to the constraint that} $\Lambda(c) = (k, w)$ for 
some value $w$ and integer $k \geq 2$. 
This last requirement ensures that $(\gupd{\snda c v_0}{\Lambda}) = (\gupd{\snda c v_1}{\Lambda})$.
\MHc{
With the aid of this tabulation one can now show that $\Ss$ is indeed a bisimulation and therefore that
 $\confC_0 \approx \confC_1$.}  
\end{example}

\begin{table}
\[
\begin{array}{rcl}
\conf{\Delta}{W} &\Ss& \conf{\Delta}{V}\\
\conf{\Delta[c \mapsto (1,w)]}{W_0} &\Ss& \conf{\Delta[c \mapsto (2,w)]}{V_1}\\ 
\conf{\Delta[c \mapsto (k+2,w)}{W_0}&\Ss&\conf{\Delta[c \mapsto (k+2,w)]}{V_1}\\
\conf{\Delta[c \mapsto (k+3,w)]{W_\arb}}&\Ss&\conf{\Delta[c \mapsto (k+3,w)]}{V_\arb}\\
\conf{\Delta[c\mapsto(k+3,\err)}{W_{\err}} &\Ss& \conf{\Delta[c \mapsto(k+3,\err)]}{V_{\err}}\\
\conf{\Delta[c \mapsto(k+2,\err)}{W'} &\Ss& \conf{\Delta[c \mapsto(k+2,\err)}{V'}\\
\conf{\Delta[c \mapsto(k+2, \err)]}{W_1}&\Ss& \conf{\Delta[c \mapsto(k+2,\err)]}{V'}\\ 
\conf{\Delta[c \mapsto(k+1, \err)]}{E'}&\Ss& \conf{\Delta[c \mapsto(k+1, \err)]}{V''}
\end{array}
\]
$\Delta$ arbitrary channel environment, $w$ arbitrary value 
(possibly $\err$) and $k \geq 0$.
\caption{A relation $\Ss$ for comparing the configurations 
$\confC_0, \confC_1$ of Example \ref{ex:merging.bisimilar}}
\label{fig:merging.bisimulation}
\end{table}

\begin{example}[Merging]
\label{ex:merging.bisimilar}
The last example we provide considers the merging of 
two transmissions in a single \MHc{transmission}
as suggested in the Example~\ref{ex:merge}.  
Let $\Gamma$ be a stable channel environment and $v_0, v_1$ 
be two values such that $\delta_{v_0}=1, \delta_{v_1}=2$. 
\MHc{Also let}  $\arb$ be a value such that $\delta_{\arb} = 3$. 
Consider the configurations 
\begin{align*}
  &\confC_0 = \conf{\Gamma}{W}
  &&
  &\confC_1 = \conf{\Gamma}{V}
\end{align*}
where $W = {\bcastc c {v_0} 
{\bcastzeroc c {v_1}} | \bcastzeroc c \arb}$ 
and $V = {\bcastc c {v_1} {\bcastzeroc c {v_0}} | 
\bcastzeroc c \arb}$.

Then  $\confC_0 \approx \confC_1$. 
As in previous examples, this statement can be proved formally 
by exhibiting a bisimulation that contains the \MHc{pair} $(\confC_0, \confC_1)$; 
to this end, define the following system terms:
\[
\begin{array}{rcl@{\hspace*{1cm}}rcl}
W_0 &=& \sigma.\bcastzeroc {c}{v_1} | \bcastzeroc c {\arb}&
V_1 &=& \sigma^2.\bcastzeroc{c}{v_0} | \bcastzeroc c {\arb}\\
W_{\arb} &=& \bcastc c {v_0} {\bcastzeroc c {v_1}} | \sigma^3&
V_{\arb} &=& \bcastc c {v_1}{\bcastzeroc c {v_0}} | \sigma^3\\
W_{\err} &=& \sigma.\bcastzeroc c {v_1} | \sigma^3&
W_{\err} &=& \sigma^2.\bcastzeroc c {v_0} | \sigma^3\\
W' &=& \bcastzeroc c {v_1} | \sigma^2\\
W_1 &=&\sigma^2 | \sigma^2&
V' &=& \sigma.\bcastzeroc c {v_0} | \sigma^2\\
E' &=& \sigma | \sigma&
V'' &=& \bcastzeroc c {v_0} | \sigma\\
E &=& \nil | \nil
\end{array}
\]
\MHc{
Consider now the relation $\Ss$ depicted in Table~\ref{fig:merging.bisimulation}; 
note that $\confC_0 \,\,\Ss\,\, \confC_1$. 
Also, $\Ss$ is a weak bisimulation. In order to show this, we list the non-trivial transitions 
for both configurations $\confC_0, \confC_1$ and their 
derivatives, which are needed to perform the proof.
}
\[
\begin{array}{lcr} 
\conf{\Delta[(c \mapsto (0,\cdot)]}{W} &\exttrans{\tau}& \conf{\Delta[c\mapsto(1,v_0)]}{W_0}\\
\conf{\Delta[(c \mapsto(0,\cdot)]}{V} &\exttrans{\tau}& \conf{\Delta[c\mapsto(2,v_1)]}{V_1}\\
&&\\
\conf{\Delta[c \mapsto(0,\cdot)]}{W} &\exttrans{\tau}& \conf{\Delta[c \mapsto(3,\arb)]}{W_\arb}\\
\conf{\Delta[c \mapsto(0,\cdot)]}{V} &\exttrans{\tau}&\conf{\Delta[c \mapsto(3,\arb])]}{V_{\arb}}\\
&&\\
\conf{\Delta[c \mapsto(k, \cdot)]}{W} & \exttrans{\tau} &\conf{\Delta[c \mapsto (k, \err)]}{W_0} \mbox{ if } k > 0\\
\conf{\Delta[c \mapsto(k, \cdot)]}{V} & \exttrans{\tau} &\conf{\Delta[c \mapsto(2, \err)]}{V_1} \mbox{ if } 0 < k \leq 2\\
\conf{\Delta[c \mapsto(k,\cdot)]}{V} & \exttrans{\tau} &\conf{\Delta[c \mapsto(k, \err)]}{V_1} \mbox{ if } k > 2\\
&&\\
\conf{\Delta[ c \mapsto(k,\cdot)]}{W} & \exttrans{\tau}& \conf{\Delta[c \mapsto(3,\err)]}{W_\arb} \mbox{ if } 0 < k \leq 3\\
\conf{\Delta[ c \mapsto(k,\cdot)]}{W} & \exttrans{\tau}& \conf{\Delta[c \mapsto(k,\err)]}{W_\arb} \mbox{ if } k > 3\\
\conf{\Delta[ c \mapsto(k, \cdot)]}{V} & \exttrans{\tau}& \conf{\Delta[c \mapsto(3,\err)]}{V_\arb} \mbox{ if } 0 < k \leq 3\\
\conf{\Delta[ c \mapsto(k, \cdot)]}{V} & \exttrans{\tau}& \conf{\Delta[c \mapsto(k,\err)]}{V_\arb} \mbox{ if } k > 3\\

&&\\
\conf{\Delta}{W} &\exttrans{d?v}& \conf{\gupd{d?v}{\Delta}}{W}\\
\conf{\Delta}{V} &\exttrans{d?v}& \conf{\gupd{d?v}{\Delta}}{V}\\

\end{array}
\]
\[
\begin{array}{lcr}
\conf{\Delta[(c \mapsto(1,v_0)]}{W_0} &\exttrans{\tau}& \conf{\Delta[(c \mapsto (3, \err)]}{W_\err}\\
\conf{\Delta[c \mapsto(2,v_1)]}{V_1} &\exttrans{\tau}& \conf{\Delta[(c \mapsto (3, \err)]}{V_{\err}}\\
&&\\
\conf{\Delta[c \mapsto(k,\cdot)]}{W_0}&\exttrans{\tau}& \conf{\Delta[c \mapsto (3, \err)]}{W_\err} \mbox{ if } 0 < 3 \leq k\\
\conf{\Delta[c \mapsto(k,\cdot)]}{V_1} & \exttrans{\tau}& \conf{\Delta[c \mapsto (3, \err)]}{V_\err} \mbox{ if } 0 < 3 \leq k\\
&&\\
\conf{\Delta[c \mapsto(k,\cdot)]}{W_0} &\exttrans{\tau}& \conf{\Delta[c \mapsto(k, \err)]}{W_{\err}} \mbox{ if } k > 3\\
\conf{\Delta[c \mapsto(k, \cdot)]}{V_1} &\exttrans{\tau}& \conf{\Delta[c \mapsto(k, \err)]}{V_{\err}} \mbox{ if } k > 3\\
&&\\
\conf{\Delta}{W_0} &\exttrans{d?w}& \conf{\gupd{d?w}{\Delta}}{W_0}\\
\conf{\Delta}{V_1} &\exttrans{d?v}& \conf{\gupd{d?w}{\Delta}}{V_1}
\end{array}
\]
\[
\begin{array}{lcr}
\conf{\Delta[c \mapsto (k,\cdot)]}{W_\arb} &\exttrans{\tau}& \conf{\Delta[c \mapsto k,\cdot]}{W_\err} \mbox{ if } k > 3\\
\conf{\Delta[c \mapsto (k,\cdot)]}{V_\arb} &\exttrans{\tau}& \conf{\Delta[c \mapsto k,\cdot]}{V_\err} \mbox{ if } k > 3\\
&&\\
\conf{\Delta}{W_\arb} &\exttrans{d?w}& \conf{\gupd{d?w}{\Delta}}{W_\arb}\\
\conf{\Delta}{V_\arb} &\exttrans{d?w}&\conf{\gupd{d?w}{\Delta}}{V_\arb}
\end{array}
\]
\[
\begin{array}{lcr}

\end{array}
\]
\end{example}

\section{Full abstraction}
\label{sec:fullabstraction}
\MHc{
In this section, we show that the co-inductive proof method based on the
bisimulation of the previous section is sound with 
respect to the contextual equivalence of Section~\ref{sec:cxt}; 
this is the
subject of Section~\ref{sec:soundness}.
Moreover it is complete for a large class of systems. 
This class is isolated in Section \ref{sec:wellformed}, and the completeness
result is then given in Section~\ref{sec:completeness.proof}. 
}

\subsection{Soundness}
\label{sec:soundness}

\MHc{
In this section we prove that 
(weak) bisimulation equivalence is 
contained in reduction barbed congruence. The main difficulty is in proving the contextuality of 
the bisimulation equivalence. But first some  auxiliary 
results. 
}

\begin{lemma}[Update of Channel Environments]
\label{lem:taus-Gamma}
If $\conf{\Gamma}{W} \extTrans{} 
\conf{\Gamma'}{W'}$ then $\Gamma \leq \Gamma'$.
\end{lemma}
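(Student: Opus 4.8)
The plan is to reduce the statement to a single-step claim and then iterate. By definition, a weak silent move $\conf{\Gamma}{W} \extTrans{} \conf{\Gamma'}{W'}$ is a finite sequence of extensional $\tau$-transitions
\[
\conf{\Gamma}{W} = \confC_0 \exttrans{\tau} \confC_1 \exttrans{\tau} \cdots \exttrans{\tau} \confC_k = \conf{\Gamma'}{W'}.
\]
The relation $\leq$ on channel environments is reflexive (take $m=n$ in its definition) and transitive (by transitivity of $\leq$ on $\mathbb{N}$). Hence it suffices to prove the one-step version: if $\conf{\Gamma}{W} \exttrans{\tau} \conf{\Gamma''}{W''}$ then $\Gamma \leq \Gamma''$; the lemma then follows by a straightforward induction on $k$, using transitivity of $\leq$ at each step.

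For the one-step claim I would inspect the two rules of Table~\ref{tab:extensional} that yield a $\tau$-labelled extensional transition, namely \rulename{TauExt} and \rulename{Shh}. If the move is derived by \rulename{TauExt}, then $\Gamma'' = \Gamma$ and there is nothing to prove by reflexivity of $\leq$. If it is derived by \rulename{Shh}, then $\Gamma'' = \gupd{c!v}{\Gamma}$ for some channel $c$ and value $v$, so it remains to check $\Gamma \leq \gupd{c!v}{\Gamma}$. This is immediate from Definition~\ref{def:conf.after}(2): for every channel $d \neq c$ the exposure time is left unchanged, while for $c$ itself the new exposure time is either $\delta_v$ (when $\Gamma \vdash c : \cfree$, i.e.\ $t_c = 0$) or $\max(\delta_v, t_c)$ (when $\Gamma \vdash c : \cbusy$); in both cases it is at least $t_c$, since $\delta_v > 0$ by the assumption on transmission times.

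The argument is entirely routine, so I would not expect any real obstacle. The only point requiring the slightest care is the use of $\delta_v > 0$, which is exactly what guarantees $t_c = 0 \leq \delta_v$ in the idle case of the \rulename{Shh} step; the rest is bookkeeping about the channel-environment update and transitivity of $\leq$.
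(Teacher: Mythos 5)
Your proof is correct and follows essentially the same route as the paper's: a case analysis on the two rules \rulename{TauExt} and \rulename{Shh} for a single extensional $\tau$-step, then chaining through the sequence of steps via transitivity of $\leq$. Your explicit check that $\Gamma \leq \gupd{c!v}{\Gamma}$ in the \rulename{Shh} case (using $\delta_v > 0$ and the $\max$ in Definition~\ref{def:conf.after}) simply fills in a detail the paper leaves implicit.
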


\begin{proof}
See the Appendix, Page \pageref{proof:taus-gamma}.

\end{proof}

%

Below we report a result on channel exposure for bisimilarity;
\MHc{ a similar result for  reduction barbed congruence will
also be proved,}
in Proposition~\ref{prop:exposure}. 
\begin{lemma}[Channel exposure w.r.t. $\approx$]
\label{lem:channel-exposure}
Whenever $\conf {\Gamma_1} {W_1} \approx \conf {\Gamma_2} {W_2}$
then  $\Gamma_1 \vdash c: \cfree$ if and only if $\Gamma_2 \vdash c: \cfree$. 
\label{lem:exp-equivalent}
\end{lemma}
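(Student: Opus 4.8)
The plan is to exploit the extensional action $\iota(c)$: by \rulename{Idle} it is the only rule producing an $\iota(c)$-labelled transition, and it is available from $\conf{\Gamma}{W}$ \emph{precisely} when $\Gamma \vdash c:\cfree$, leaving the configuration unchanged. Combined with Lemma~\ref{lem:taus-Gamma}, which says that $\tau$-sequences can only increase channel exposure, this will pin down the channel environment reached by the matching move.

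So suppose $\conf{\Gamma_1}{W_1} \approx \conf{\Gamma_2}{W_2}$ and, by symmetry, assume $\Gamma_1 \vdash c:\cfree$. Then \rulename{Idle} gives $\conf{\Gamma_1}{W_1} \exttrans{\iota(c)} \conf{\Gamma_1}{W_1}$. Since $\approx$ is a bisimulation and $\iota(c)\neq\tau$, there is some $\conf{\Gamma_2'}{W_2'}$ with $\conf{\Gamma_2}{W_2} \extTrans{\iota(c)} \conf{\Gamma_2'}{W_2'}$. Unfolding this weak transition, and using that an $\iota(c)$-step has identical source and target, it decomposes as
\[
\conf{\Gamma_2}{W_2} \;\extTrans{}\; \conf{\bar\Gamma}{\bar W} \;\exttrans{\iota(c)}\; \conf{\bar\Gamma}{\bar W} \;\extTrans{}\; \conf{\Gamma_2'}{W_2'}
\]
for some $\conf{\bar\Gamma}{\bar W}$; the premise of \rulename{Idle} then forces $\bar\Gamma \vdash c:\cfree$.

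Finally I apply Lemma~\ref{lem:taus-Gamma} to the prefix $\conf{\Gamma_2}{W_2} \extTrans{} \conf{\bar\Gamma}{\bar W}$, obtaining $\Gamma_2 \leq \bar\Gamma$. Thus if $\Gamma_2 \vdash_{\mathrm{t}} c: n$ then $\bar\Gamma \vdash_{\mathrm{t}} c: m$ with $n \leq m$; but $\bar\Gamma \vdash c:\cfree$ means $m=0$, so $n=0$, i.e.\ $\Gamma_2 \vdash c:\cfree$. The converse implication is the same argument with the roles of $\conf{\Gamma_1}{W_1}$ and $\conf{\Gamma_2}{W_2}$ exchanged. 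The only point requiring care — not really an obstacle — is the correct unfolding of the weak $\iota(c)$-transition and the observation that no $\tau$-step can turn an exposed channel into an idle one (a broadcast on $c$ can only extend its exposure, and \rulename{TauExt} leaves $\Gamma$ untouched); this is exactly the content of Lemma~\ref{lem:taus-Gamma}, which is why the matching move of $\conf{\Gamma_2}{W_2}$ must already have $c$ idle in $\Gamma_2$.
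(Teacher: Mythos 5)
Your proof is correct and follows essentially the same route as the paper: perform an $\iota(c)$-step via \rulename{Idle}, match it with a weak $\iota(c)$-move, and use Lemma~\ref{lem:taus-Gamma} to push the idleness of $c$ back to $\Gamma_2$. The only difference is presentational — the paper packages your unfolding of the weak $\iota(c)$-transition plus the monotonicity argument as Corollary~\ref{cor:iota.actions}, which you have simply inlined.
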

\begin{proof}
See the Appendix, Page \pageref{proof:channel-exposure}
\end{proof}

%
%

%


In order to prove that weak bisimulation is 
sound with respect to reduction barbed congruence we need 
to show that $\approx$ is 
preserved by parallel composition.

\begin{theorem}[$\approx$ is contextual]
\label{thm:congruence}  
\MHc{Suppose
$\conf {\Gamma_1}{W_1} \approx \conf {\Gamma_2}{W_2}$. 
Then for any system term $W$, $\conf {\Gamma_1} 
{(W_1 | W)} \approx\conf {\Gamma_2} {(W_2 | W)}$.}
\end{theorem}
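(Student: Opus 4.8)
The plan is to exhibit a witnessing bisimulation. I would take
\[
\RR \;=\; \{(\conf{\Gamma_1}{W_1 | W},\ \conf{\Gamma_2}{W_2 | W}) \;:\; \conf{\Gamma_1}{W_1} \approx \conf{\Gamma_2}{W_2},\ W \text{ a closed system term}\} \;\cup\; \approx
\]
and show that $\RR$ satisfies Definition~\ref{def:bisim}; the theorem then follows at once, since the given pair lies in $\RR$. Restrictions occurring inside $W$ cause no difficulty: by \rulename{ResI}/\rulename{ResV} a move internal to a subterm $\nu d.V$ of $W$ neither touches the ambient environment $\Gamma_i$ nor involves $W_i$, so it is replayed verbatim on the other side; hence $\RR$ need not be explicitly closed under restriction contexts.

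To check the transfer conditions, suppose $\conf{\Gamma_1}{W_1 | W} \exttrans{\alpha} \confC'$. First I would decompose this move with Proposition~\ref{prop:parallel-components} (together with the decompositions induced by the extensional rules \rulename{Shh}, \rulename{TauExt}, \rulename{Time}, \rulename{Deliver}, \rulename{Idle}), obtaining three situations: the move is internal to $W$; or its active side is $W_1$, contributing some extensional action $\beta$ (with $\alpha=\beta$, or $\alpha=\tau$ arising from a synchronisation of an output of one component with an input of the other); or it is a time/idle/delivery move, in which case \emph{both} $W_1$ and $W$ participate. In the first situation $\conf{\Gamma_2}{W_2 | W}$ replies with the same internal move of $W$, keeping $W_2$ and $\Gamma_2$ fixed, and the resulting pair stays in $\RR$. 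In the other two, I would use $\conf{\Gamma_1}{W_1} \approx \conf{\Gamma_2}{W_2}$ to obtain a matching weak transition $\conf{\Gamma_2}{W_2} \extTrans{\hat\beta} \conf{\Gamma_2'}{W_2'}$ with $\conf{\Gamma_1'}{W_1'} \approx \conf{\Gamma_2'}{W_2'}$, then \emph{lift} it through the context --- internal steps of $W_2$ via \rulename{TauPar}, transmissions of $W_2$ absorbed by inputs of $W$ (input-enabledness, Lemma~\ref{lem:rcv-enabling}), time steps via \rulename{TimePar} --- and finally recombine with the matching move of $W$. Two bookkeeping points keep the environments aligned: a single synchronising step updates $\Gamma$ \emph{identically} on both sides because $\gupd{c?v}{\Gamma}=\gupd{c!v}{\Gamma}$; and by Lemma~\ref{lem:channel-exposure} the two environments always agree on which channels are idle, so the idleness-guarded rules (\rulename{Rcv}, \rulename{Timeout}, \rulename{Idle}) fire in lockstep, while Propositions~\ref{prop:time-determinism} and~\ref{prop:maximal-progress} force time moves to be answered by time moves.

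I expect the real work to lie in two places. The first is an auxiliary lemma: $\approx$ is preserved by matching inputs, i.e.\ if $\conf{\Gamma_1}{W_1}\approx\conf{\Gamma_2}{W_2}$ and $\conf{\Gamma_i}{W_i}\trans{c?v}W_i'$ then $\conf{\gupd{c!v}{\Gamma_1}}{W_1'}\approx\conf{\gupd{c!v}{\Gamma_2}}{W_2'}$. This is exactly what is needed to keep the residuals of an external transmission ($W$ sends $v$ on $c$, forcing $W_i$ to input) inside $\RR$; I would prove it by a direct bisimulation argument, the delicate case being a collision, where $c$ is already exposed on both sides and the residual exposure times $\max(\delta_v,t_c^1)$, $\max(\delta_v,t_c^2)$ may differ --- reconciled using Lemma~\ref{lem:taus-Gamma} and the fact that a collided transmission always carries $\err$. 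The second, and the genuine obstacle, is that the matching weak transition $\conf{\Gamma_2}{W_2}\extTrans{\hat\beta}$ may contain internal transmissions on a channel $d$ the context $W$ is listening on; replaying them forces $W$ to evolve on the $\Gamma_2$-side (catching the value via \rulename{Rcv} or going late via \rulename{RcvLate}), while the matching $\tau$-moves of $W_1$ only expose $d$ eventually on the $\Gamma_1$-side, and possibly with a different value and timing, so the two copies of the context drift apart. To handle this I would strengthen the candidate relation to allow the contexts on the two sides to differ, pairing $\conf{\Gamma_1}{W_1|U_1}$ with $\conf{\Gamma_2}{W_2|U_2}$ whenever $U_1,U_2$ arise from a common $W$ by replaying compatible input/transmission/time histories; the decisive point is that once $d$ is exposed a listener deterministically ends up delivering $\err$ irrespective of which transmission caused the collision and of the exact residual exposure time, so the drift is confined to $\err$-valued collisions and is absorbed by $\conf{\Gamma_1'}{W_1'}\approx\conf{\Gamma_2'}{W_2'}$ together with Lemmas~\ref{lem:channel-exposure} and~\ref{lem:taus-Gamma}. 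With this enriched relation and the input lemma in hand, the case analysis sketched above closes, the (so enriched) $\RR$ is a bisimulation, and therefore $\conf{\Gamma_1}{W_1|W}\approx\conf{\Gamma_2}{W_2|W}$.
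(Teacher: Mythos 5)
Your overall strategy — the candidate relation pairing $\conf{\Gamma_1}{W_1|W}$ with $\conf{\Gamma_2}{W_2|W}$ and a case analysis driven by Proposition~\ref{prop:parallel-components} — is exactly the paper's, but the two places where you say the real work lies are where the argument goes wrong. First, your auxiliary lemma is false as stated: input residuals are not unique (e.g.\ $\rcvtimec c x {P_1}{} + \rcvtimec c x {P_2}{}$ can input to either active receiver), so from $\conf{\Gamma_1}{W_1}\approx\conf{\Gamma_2}{W_2}$ and \emph{arbitrary} strong inputs $\conf{\Gamma_i}{W_i}\trans{c?v}W_i'$ you cannot conclude the residuals are bisimilar. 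It is also unnecessary: $c?v$ is itself an extensional action (rule \rulename{Input}), so when the context broadcasts you simply invoke the definition of $\approx$ to get a \emph{weak} matching input $\conf{\Gamma_2}{W_2}\extTrans{c?v}\conf{\Gamma_2'}{W_2'}$ with bisimilar residuals, and lift it through the context.

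Second, the ``genuine obstacle'' you identify — drift of the two copies of the context along the matching weak $\tau$-sequence — does not in fact arise, and this is the key idea you are missing (it is how the paper closes the \emph{unenriched} relation). Because the source pair and the residual pair are both bisimilar, Lemma~\ref{lem:channel-exposure} forces the initial and final environments of the matching sequence to agree on exposure with those on the left, which differ only at the pivotal channel $c$; and by Lemma~\ref{lem:taus-Gamma} exposure can only grow along $\tau$-steps. Hence any intermediate broadcast on a channel $d\neq c$ in the matching sequence must occur on a channel that was \emph{already exposed}, so $\isrcv{\conf{\cdot}{W},d}$ is false and \rulename{RcvIgn} carries the context through literally unchanged; enabled \rulename{RcvLate} moves of the context need never be fired in a weak transition. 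For the one pivotal transmission on $c$, Lemma~\ref{lem:rcv-enabling}(\ref{rcv-enabling2}) says the context's input residual $W'$ is the \emph{same for every transmitted value}, and Proposition~\ref{prop:exposure-consistency} transfers the input to $\Gamma_2^{\pre}$; so even if $W_2$'s transmission carries $w\neq v$, both sides land on the same syntactic context $W'$ (the differing values and exposure times live only in the channel environments, which the relation already allows to differ, absorbed by $\conf{\widehat{\Gamma_1}}{W_1'}\approx\conf{\widehat{\Gamma_2}}{W_2'}$). Consequently your proposed enrichment to pairs $U_1\neq U_2$ with ``compatible histories'' is not needed, and as presented it is also where your proof is incomplete: the relation is never defined precisely and its closure under all extensional actions is asserted rather than proven, so the decisive step of the argument is missing.
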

\begin{proof}
Let the relation $\Ss$ over configurations be defined as follows:
\[
\{ \big( \conf {\Gamma_1} {W_1|W} \, , \, \conf {\Gamma_2} {W_2|W}) : \q
\conf {\Gamma_1} {W_1} \approx \conf {\Gamma_2}{W_2}\, \}
\]
It is sufficient to show that $\Ss$ is a bisimulation in the extensional semantics. 
To do so, by symmetry,  we need to show that an arbitrary extensional action
\begin{align}\label{eq:anaction}
  \conf {\Gamma_1} {W_1 | W} \exttrans{\alpha}
\conf {\widehat{\Gamma_1}} {\widehat{W_1}}  
\end{align}
can be matched by  $\conf {\Gamma_2} {W_2 | W}$ via a corresponding weak extensional action.

\MHc{
The action (\ref{eq:anaction}) can be inferred by any of the  six rules in Table~\ref{tab:extensional}.
We consider only one case, the most interesting one \rulename{Shh}. 
So here $\alpha$ is $\tau$ and 
 $\conf {\Gamma_1} {W_1 | W} \trans{c!v} {\widehat{W_1}} $, for
some $c$ and $v$, and $\widehat{\Gamma_1} = \gupd{c!v}{\Gamma}$. This transition in turn can always be inferred by an application of the rule
\rulename{Sync}, or its symmetric counterpart, from Table~\ref{tab:proc}. Here we only consider the former case; 
the proof for the second case is slightly different, though it uses the same proof strategies illustrated below. 
For the case we are considering, we have that 
}
\begin{itemize}
\item $\conf {\Gamma_1} {W_1} \trans{c!v}W'_1$
\item  $\conf {\Gamma_1} {W} \trans{c?v}W'$
\item ${\widehat{W_1}} = W'_1 | W'$
\end{itemize}

By an application of rule \rulename{Shh} it follows that $\conf {\Gamma_1}
{W_1} \exttrans{\tau} \conf {\widehat{\Gamma_1}}{W'_1}$. 
Since $\conf {\Gamma_1}{W_1} \approx \conf{\Gamma_2}{W_2}$, there is $\conf {\Gamma'_2}{W'_2}$ 
such that $\conf {\Gamma_2}{W_2} \extTrans{\ }\conf {\Gamma'_2}{W'_2}$ and 
$\conf {\Gamma'_1}{W'_1} \approx \conf{\Gamma'_2}{W'_2}$. Note that Lemma \ref{lem:exp-equivalent} 
ensures that whenever $\Gamma_1 \vdash d: \cbusy$ then also $\Gamma_2 \vdash d: \cbusy$, for any channel 
$d$ . Similarly, if $\widehat{\Gamma_1} \vdash d: \cbusy$ then $\widehat{\Gamma_2} \vdash d: \cbusy$. 
That is, $\Gamma_1$ agrees with $\Gamma_2$ on the exposure state of each channel; the same applies 
to $\widehat{\Gamma_1}$ and $\widehat{\Gamma_2}$.

Further, recall that $\widehat{\Gamma_1} = \gupd{c!v}{\Gamma_1}$. Therefore we have that, for any channel $d \neq c$, 
$\Gamma_1 \vdash d: \cbusy$ iff $\widehat{\Gamma_1} \vdash d: \cbusy$; for channel $c$, 
we have that $\widehat{\Gamma_1} \vdash c: \cbusy$. 
That is, the exposure states of $\Gamma_1$ and $\widehat{\Gamma_1}$ differ only in the entry at channel $c$, 
and only if such a channel was idle in $\Gamma_1$. 

Since $\Gamma_1$ and $\widehat{\Gamma_1}$ agree 
with $\Gamma_2, \widehat{\Gamma_2}$, respectively, on the exposure state of each channel, 
it has also to be that the exposure states of $\Gamma_2$ and $\widehat{\Gamma_2}$ differ only at the entry 
at channel $c$, and only when the latter is idle in $\Gamma_2$; formally $\Gamma_2 \vdash d: \cbusy$ iff 
$\widehat{\Gamma_2} \vdash d: \cbusy$ when $d \neq c$, 
and  $\widehat{\Gamma_2} \vdash c: \cbusy$.

Next we show that the action $\conf{\Gamma_1}{W_1 | W} \exttrans{\tau}$ $\conf{\widehat{\Gamma_1}}{W_1' | W'}$ 
can be matched by a weak action $\conf{\Gamma_2}{W_2 | W} \extTrans{\,} \conf{\widehat{\Gamma_2}}{W_2' | W'}$. 
Since $\conf{\widehat{\Gamma_1}}{W_1'} \approx \conf{\widehat{\Gamma_2}}{W_2'}$, the above statement would imply that 
$(\conf{\widehat{\Gamma_1}}{W_1' | W}) \, \Ss \, (\conf{\widehat{\Gamma_2}}{W_2' | W})$, which is exactly what we want to prove.
There are two possible cases, according to whether $\conf{\Gamma_1}{W}$ is able to detect a value 
broadcast along channel $c$:
\begin{enumerate}
\item 
\label{firstcase1.1}
$\neg\isrcv{\conf{\Gamma_1}{W}, c}$. 
By Lemma~\ref{lem:rcv-enabling}(\ref{rcv-enabling1}), in the 
transition $\conf {\Gamma_1}{W} \trans{c?v} W'$ it must be \MHc{that} $W'=W$. 
We have to show that the transition $\conf{\Gamma_2}{W_2} \extTrans{\,} \conf{\widehat{\Gamma_2}}{W_2'}$ 
implies that $\conf{\Gamma_2}{W_2 | W} \extTrans{\,} \conf{\widehat{\Gamma_2}}{W_2' | W}$. 
To this end, we prove a stronger statement: whenever we have a sequence of transitions 
\[
\conf{\Gamma^0}{V^0} \exttrans{\tau} \conf{\Gamma^1}{V^1} 
\exttrans{\tau} \cdots \exttrans{\tau} \conf{\Gamma^n}{V^n}
\]
of arbitrary length $n \geq 0$, and such that for any $d \neq c$,  
$\Gamma^0 \vdash d: \cbusy$ if and only if $\Gamma^n \vdash d: \cbusy$, and 
$\neg\isrcv{\conf{\Gamma_0}{W}}$. Then  
\[
\conf{\Gamma^0}{V^0 | W} \exttrans{\tau} \conf{\Gamma^1}{V^1 | W} 
\exttrans{\tau} \cdots \exttrans{\tau} \conf{\Gamma^n}{V^n | W}
\]
Further, $\neg\isrcv{\conf{\Gamma^n}{W}, c}$.
By choosing $\conf{\Gamma^0}{V^0} = \conf{\Gamma_2}{W_2}$ and $\conf{\Gamma^n}{V^n} 
= \conf{\widehat{\Gamma_2}}{W'_2}$ we obtain that $\conf{\Gamma_2}{W_2 | W} \extTrans{\,} \conf{\widehat{\Gamma_2}}{W_2' | W}$.

The proof of the aforementioned statement is by induction on $n$.
\begin{enumerate}
\item If $n = 0$ then there is nothing to prove. 
\item Let $n > 0$. By inductive hypothesis we
assume that the statement is true for $n-1$. 
By Lemma \ref{lem:taus-Gamma} we know that 
$\Gamma^0 \leq \Gamma^{n-1} \leq \Gamma^n$. 
Let $d \neq c$;
if $\Gamma^0 \vdash d: \cbusy$, then 
$\Gamma^{n-1} \vdash d: \cbusy$ since $\Gamma^0 \leq \Gamma^{n-1}$. 
Conversely, if $\Gamma^{n-1} \vdash d: \cbusy$ then $\Gamma^{n-1} \leq 
\Gamma^n$ implies that $\Gamma^n \vdash d: \cbusy$, and by hypothesis 
we get that $\Gamma^{0} \vdash d: \cbusy$. 

Therefore we can apply the inductive 
hypothesis to obtain the sequence of transitions 
\[
\conf{\Gamma^0}{V^0 | W} \exttrans{\tau} \conf{\Gamma^1}{V^1 | W} 
\exttrans{\tau} \cdots \exttrans{\tau} \conf{\Gamma^{n-1}}{V^{n-1} | W}
\]
and infer that $\neg\isrcv{\conf{\Gamma^{n-1}}{W}, c}$.
Consider now the transition $\conf{\Gamma^{n-1}}{V^{n-1}} \exttrans{\tau} \conf{\Gamma^n}{V^n}$. 
There are different ways in which this extensional transition could have been inferred: 
\begin{itemize}
\item if this transition has been obtained by an application of Rule \rulename{TauExt} of Table \ref{tab:extensional}, 
then we have that $\conf{\Gamma^{n-1}}{V^{n-1}} \trans{\tau} {V^n}$, and $\Gamma^n = \gupd{\tau}{\Gamma^{n-1}}$. By Rule 
\rulename{TauPar} we also have that $\conf{\Gamma^{n-1}}{V^{n-1} | W} \trans{\tau} V^n | W$, 
which can now be translated in an extensional $\tau$-action 
$\conf{\Gamma^{n-1}}{V^{n-1} | W} \exttrans{\tau} \conf{\Gamma^n}{V^n | W}$ 
via an application of Rule \rulename{TauExt}.
\item if the transition has been obtained by an application of Rule \rulename{Shh} of Table \ref{tab:extensional}, 
then $\conf{\Gamma^{n-1}}{V^{n-1}} \trans{d!w} V^n$, and $\Gamma^n = \gupd{d!w}{\Gamma^{n-1}}$. 
Let us perform a case analysis on the channel $d$:

\begin{itemize}
\item
If $d = c$, then since $\neg\isrcv{\conf{\Gamma^{n-1}}{W}, c}$, Lemma \ref{lem:rcv-enabling}(1) ensures that we 
have the transition $\conf{\Gamma^{n-1}}{W} \trans{c?w} W$. Note also that 
now $\Gamma^n \vdash c: \cbusy$, so that it follows $\neg\isrcv{\conf{\Gamma^n}{W}, c}$. Now by applying 
Rule \rulename{Sync} to the transitions $\conf{\Gamma^{n-1}}{V^{n-1}} \trans{c!w} {V^n}$ 
and $\conf{\Gamma^{n-1}}{W} \trans{c?w} {W}$ we obtain $\conf{\Gamma^{n-1}}{V^{n-1} | W} 
\trans{d!v} V^n | W$. The latter can be converted into an extensional $\tau$-transition 
$\conf{\Gamma^{n-1}}{V^{n-1} | W} \exttrans{\tau} \conf{\Gamma^n}{V^n | W}$
using Rule \rulename{Shh} and the fact that $\Gamma^{n} = \gupd{c!w}{\Gamma^{n-1}}$.

\item It remains to check the case $d \neq c$. First note that if we have $\Gamma^{n-1} \vdash c: \cbusy$ then 
also $\Gamma^{n} \vdash c: \cbusy$ (since $\Gamma^{n-1} \leq \Gamma^n)$, 
so that $\neg\isrcv{\conf{\Gamma^{n}}{W}, c}$. On the other hand, if $\Gamma^{n-1} \vdash c: \cfree$, we 
can still prove that $\neg\isrcv{\conf{\Gamma^{n}}{W}, c}$ via an induction on the structure of $W$.\footnote{Intuitively, 
we just need to check that there are no unguarded receivers along channel $c$ appearing in $W$.}

Finally, note that since $\Gamma^n = \gupd{d!w}{\Gamma^{n-1}}$ implies 
that $\Gamma^n \vdash d : \cbusy$. By hypothesis we get that $\Gamma^0 \vdash d: \cbusy$, which 
leads to $\Gamma^{n-1} \vdash d: \cbusy$ (recalling that $\Gamma^0 \leq \Gamma^{n-1}$). Therefore 
we have that $\neg\isrcv{\conf{\Gamma^{n-1}}{W}, d}$, and by Lemma \ref{lem:rcv-enabling}(1) we 
obtain that $\conf{\Gamma^{n-1}}{W} \trans{d?v} W$. Now we can proceed as in the 
case $d = c$ to infer the extensional transition $\conf{\Gamma^{n-1}}{V^{n-1} | W} \exttrans{\tau} 
\conf{\Gamma^n}{V^n | W}$.

\end{itemize}
\end{itemize}

\end{enumerate}

\item Suppose now that $\isrcv{\conf{\Gamma_1}{W},c}$.  
By Lemma~\ref{lem:rcv-enabling}(\ref{rcv-enabling2})
the transition $\conf {\Gamma_1}{W} \trans{c?v} W'$ leads to $W' \neq
W$. Also, in this case we have that $\Gamma_1 \vdash c: \cfree$, which 
also gives $\Gamma_2 \vdash c: \cfree$ by Lemma \ref{lem:channel-exposure}. Since we have $\widehat{\Gamma_2} \vdash 
c: \cbusy$, it has to be the case that we can unfold the weak transition $\conf{\Gamma_2}{W_2} 
\extTrans{\,} \conf{\widehat{\Gamma_2}}{W'_2}$ as 
\[
\conf{\Gamma_2}{W_2} \extTrans{\,} \conf{\Gamma_2^{\mbox{pre}}}{W_2^{\mbox{pre}}} 
\exttrans{\tau} \conf{\Gamma_2^{\mbox{post}}}{W_2^{\mbox{post}}} \extTrans{\,} \conf{\widehat{\Gamma_2}}{W_2'}
\]
where $\Gamma_2^{\mbox{pre}} \vdash c: \cfree$ and $\Gamma_2^{\mbox{post}} \vdash c: \cbusy$. 
Note also that Lemma \ref{lem:taus-Gamma} ensures that, for any channel $d \neq c$,  
$\Gamma_2 \vdash d: \cbusy$ implies $\Gamma_2^{\mbox{pre}} \vdash d: \cbusy$, and 
$\Gamma_2^{\mbox{pre}} \vdash d: \cbusy$ implies $\widehat{\Gamma_2} \vdash d: \cbusy$, which 
by hypothesis leads to $\Gamma_2 \vdash d: \cbusy$. Similarly we can show that $\Gamma_2^{\mbox{post}} 
\vdash d: \cbusy$ if and only if $\widehat{\Gamma_2} \vdash d: \cbusy$. 
That is, $\Gamma_2, \Gamma_2'$ agree with $\Gamma_2^{\mbox{pre}}, 
\Gamma_2^{\mbox{post}}$ on the exposure state of each channel, respectively. 
Now, in a way similar to the first case, we can prove that we have the following transitions:
\begin{itemize}
\item $\conf{\Gamma_2}{W_2 | W} \extTrans{\,} \conf{\Gamma_2^{\mbox{pre}}}{W_2^{\mbox{pre}} | W}$,
\item $\conf{\Gamma_2^{\mbox{post}}}{W_2^{\mbox{post}} | W'}\extTrans{\,} \conf{\widehat{\Gamma_2}}{W'_2 | W'}$.
\end{itemize}
so that it remains to show that $\conf{\Gamma_2^{\mbox{pre}}}{W_2^{\mbox{pre}} | W} \exttrans{\tau} 
\conf{\Gamma_2^{\mbox{post}}}{W_2^{\mbox{post}} | W'}$. Note that, since 
$\Gamma_2^{\mbox{pre}} \vdash c: \cfree$ and $\Gamma_2^{\mbox{post}} \vdash c: \cbusy$, 
it has to be the case that the transition $\conf{\Gamma_2^{\mbox{pre}}}{W_2^{\mbox{pre}}} \exttrans{\tau} 
\conf{\Gamma_2^{\mbox{post}}}{W_2^{\mbox{post}}}$ has been induced by the intensional one 
$\conf{\Gamma_2^{\mbox{pre}}}{W_2^{\mbox{pre}}} \trans{c!w} W_2^{\mbox{post}}$, 
and $\Gamma_2^{\mbox{post}} = \gupd{c!w}{\Gamma_2^{\mbox{pre}}}$. 

Now note that, since $\conf{\Gamma_1}{W}{\trans{c?v}}W'$ we 
also have that $\conf{\Gamma_1}{W}{\trans{c!w}} W'$ by Lemma \ref{lem:rcv-enabling}(2). 
Finally, note that for any channel $c$, $\Gamma_1 \vdash d: \cbusy$ iff 
$\Gamma_2 \vdash d: \cbusy$ (as $\conf{\Gamma_1}{W_1} \approx \conf{\Gamma_2}{W_2}$) iff 
$\Gamma_2^{\mbox{pre}} \vdash d: \cbusy$. By Proposition \ref{prop:exposure-consistency} 
it follows that $\conf{\Gamma_1}{W}{\trans{c!w}} W'$ implies $\conf{\Gamma_2^{\mbox{pre}}}{W} 
\trans{c?w} W'$. We can now apply Rule \rulename{Sync} to such a transition, and the transition 
$\conf{\Gamma_2^{\mbox{pre}}}{W_2^{\mbox{pre}}} \trans{c!w} W_2^{\mbox{post}}$, to infer  
$\conf{\Gamma_2^{\mbox{pre}}}{W_2^{\mbox{pre}} | W} \trans{c!w} W_2^{\mbox{post}} | W'$. 
The last transitions induces the extensional action $\conf{\Gamma_2^{\mbox{post}}}{W_2^{\mbox{post}} | W'}
\exttrans{\tau} \conf{\Gamma_2^{\mbox{post}}}{W_2^{\mbox{post}} | W'}$, as we wanted to prove. 

We have built the sequence of transitions 
\[
\conf{\Gamma_2}{W_2 | W} \extTrans{\,} \conf{\Gamma_2^{\mbox{pre}}}{W_2^{\mbox{pre}} | W} 
\exttrans{\tau} \conf{\Gamma_2^{\mbox{post}}}{W_2^{\mbox{post}} | W'} \extTrans{\,} \conf{\widehat{\Gamma_2}}{W_2' | W'}
\]
which can be synthesised as $\conf{\Gamma_2}{W_2 | W} \extTrans{\,} \conf{\widehat{\Gamma_2}}{W_2' | W'}$, 
which is exactly the transition that we wanted to derive.

\end{enumerate}
\end{proof}

\begin{theorem}\MHc{[Soundness] $\confC_1 \approx \confC_2$ implies  $\confC_1 \simeq \confC_2$.} 
\end{theorem}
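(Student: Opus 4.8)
The plan is to exploit the coinductive definition of $\simeq$ in Definition~\ref{def:rbc}: since $\simeq$ is the largest symmetric relation that is barb preserving, reduction-closed and contextual, it suffices to check that $\approx$ enjoys all three properties. Symmetry of $\approx$ is immediate, because the inverse of a bisimulation is again a bisimulation (conditions (i) and (ii) of Definition~\ref{def:bisim} are simply interchanged), and contextuality of $\approx$ has already been established in Theorem~\ref{thm:congruence}. So the only real work is barb preservation and reduction-closure, and for both of these the key ingredient is Remark~\ref{rem:operational_correspondence}, which identifies the extensional $\tau$- and $\sigma$-actions with, respectively, the instantaneous and the timed reductions of Definition~\ref{def:step}.

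For reduction-closure I would argue as follows. Suppose $\confC_1 \approx \confC_2$ and $\confC_1 \red \confC_1'$. By Remark~\ref{rem:operational_correspondence} this single reduction is either an extensional action $\confC_1 \exttrans{\tau} \confC_1'$ or an extensional action $\confC_1 \exttrans{\sigma} \confC_1'$. Playing the bisimulation game of Definition~\ref{def:bisim} then yields, in the first case, $\confC_2 \exttransStar{\tau} \confC_2'$, and in the second case $\confC_2 \extTrans{\sigma} \confC_2'$, i.e.\ a sequence $\confC_2 \exttransStar{\tau}\exttrans{\sigma}\exttransStar{\tau} \confC_2'$, with $\confC_1' \approx \confC_2'$ in either case. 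Reading each step of the matching move back through Remark~\ref{rem:operational_correspondence}, every such step is a reduction, so $\confC_2 \red^* \confC_2'$, which is exactly what reduction-closure demands.

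For barb preservation I would start from $\confC_1 \approx \confC_2$ with $\confC_1 \Downarrow_c$, say $\confC_1 \red^* \confC'$ and $\confC' \downarrow_c$. Viewing each reduction along $\confC_1 \red^* \confC'$ as a $\tau$- or $\sigma$-extensional action and iterating the bisimulation game (matching $\tau$ by $\exttransStar{\tau}$ and $\sigma$ by $\extTrans{\sigma}$), I would obtain a configuration $\confC''$ with $\confC' \approx \confC''$ and, again via Remark~\ref{rem:operational_correspondence}, with $\confC_2 \red^* \confC''$. Since $\confC' \downarrow_c$ means that $c$ is exposed in the channel environment of $\confC'$, and $\confC' \approx \confC''$, Lemma~\ref{lem:channel-exposure} forces $c$ to be exposed in $\confC''$ as well, so $\confC'' \downarrow_c$ and therefore $\confC_2 \Downarrow_c$. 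Combining the three closure properties of $\approx$ with the maximality of $\simeq$ yields $\approx\,\subseteq\,\simeq$.

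I do not expect any genuine obstacle at this stage: all the difficulty of soundness has been concentrated into the contextuality statement, Theorem~\ref{thm:congruence}, whose proof is the delicate part precisely because it must track how an external parallel component reacts to exposed channels through the interplay of \rulename{RcvIgn}, \rulename{RcvLate} and \rulename{Sync}. Once that theorem is in hand, barb preservation and reduction-closure of $\approx$ are routine consequences of the exact operational correspondence between extensional $\tau$/$\sigma$-actions and reductions, together with the channel-exposure invariant of Lemma~\ref{lem:channel-exposure}.
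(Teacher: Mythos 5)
Your proposal is correct and follows essentially the same route as the paper's proof: show that $\approx$ is reduction-closed, barb preserving and contextual, using Remark~\ref{rem:operational_correspondence} to translate between reductions and extensional $\tau$/$\sigma$-actions, Lemma~\ref{lem:channel-exposure} for barbs, and Theorem~\ref{thm:congruence} for contextuality. The only (harmless) difference is that you spell out weak-barb preservation by iterating the bisimulation game along the reduction sequence, whereas the paper argues from the strong barb and leaves that iteration implicit.
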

\begin{proof}
It suffices to prove that bisimilarity is reduction-closed, barb preserving
and contextual. 
\begin{description}
\item[Reduction Closure] Note that if $\confC_1 \red \confC_1'$, then 
we have two possible cases; either $\confC_1 \red_i \confC_1'$ or $\confC_1 \red_{\sigma} 
\confC_1'$. If $\confC_1 \red_{i} \confC_1'$  
then it is not difficult to see that $\confC_1 \exttrans{\tau} \confC_1'$
(see Remark~\ref{rem:operational_correspondence}). 
Similarly, if 
$\confC_1 \red_{\sigma} \confC_1'$  then $\confC_1 \exttrans{\sigma} \confC_1'$. 
Since $\confC_1 \approx \confC_2$, it follows that there exists $\confC_2'$ such that 
$\confC_2 \extTrans{} \confC_2'$ (respectively, $\confC_2 \extTrans{\sigma} \confC_2')$ with 
$\confC_1' \approx \confC_2'$. By Remark~\ref{rem:operational_correspondence} the last transition can be rewritten as 
a sequence of reductions $\confC_2 \red_i^{\ast} \confC_2'$ (respectively, $\confC_2 \red_i^{\ast}\red_{\sigma}\red_i^{\ast}
\confC_2'$), from which it follows $\confC_2 \red^{\ast} \confC_2'$,
\item[Barb Preservation] Let $\confC_1 = \conf{\Gamma_1}{W_1}$ and 
$\confC_2 = \conf{\Gamma_2}{W_2}$. 
Suppose that $\confC_1 \downarrow c$ for some channel $c$; by definition 
we have that  $\Gamma-1 \vdash c: \cbusy$. By Lemma 
\ref{lem:exp-equivalent} we also have that  $\Gamma_2 \vdash c: \cbusy$. 
This ensures that $\confC_2 \downarrow_c$, and more generally $\confC_2 \Downarrow_c$. 
\item[Contextuality] contextuality has already been proved as Theorem~\ref{thm:congruence}. 
\end{description}
\end{proof}

\subsection{Completeness}
\label{sec:completeness}
Having proved soundness, it remains to check whether our bisimulation proof 
technique is also complete with respect to reduction barbed congruence; that is, 
whenever we have $\conf{\Gamma_1}{W_1} \simeq \conf{\Gamma_2}{W_2}$, 
then there exists a bisimulation that contains the pair $(\conf{\Gamma_1}{W_1}, \conf{\Gamma_2}{W_2})$.
Unfortunately, this is not true for arbitrary configurations, 
as shown by the following Example:

\begin{example}
\label{ex:illformed}
 Let $\Gamma_1 \vdash c : \cbusy$,
  $\Gamma_2 \vdash c: \cfree$ and 
  consider the two configurations $\confC_1 = \conf{\Gamma_1}{ \crest{d}(0,\cdot).(\arcv d x {\nil}})$ 
   and $\confC_2 = \conf{\Gamma_2}{\bcastzeroc c v | \crest{d}(0,\cdot).(\arcv d x {\nil})}$. 
    Note that both configurations include an active receiver 
    placed along an  idle, restricted channel. The presence of such an active receiver is somewhat 
   problematic, as it does not allow the passage of time in both configurations, 
    according to our definition of timed reductions. Indeed, the reader can check that, 
    in the intensional semantics, no transition $\trans{\sigma}$ is defined for a configuration 
    of the form $\conf{\Gamma[d \mapsto (0,\cdot)]}{\arcv d x {P}}$; 
    consequently, $\sigma$-transitions are not allowed for the configuration 
    $\conf{\Gamma}{\crest{d}(0,v).(\arcv d x {P})}$ either. Similarly, weak $\sigma$-transitions 
    are note enabled in $\confC_2$.
    
    Now note that, since any occurrence of channel $d$ is restricted in both $\confC_1, \confC_2$, 
    we cannot enable the passage of time for them via the composition with a system term 
    $T$. That is, for any system term $T$, and configuration $\widehat{\confC_1}, \widehat{\confC_2}$, such that 
    $\confC_1 | T \red_i^{\ast} \widehat{\confC_1}, \;\confC_2 | T \red_{i}^{\ast} \widehat{\confC_2}$, we have that 
    $\widehat{\confC_1} \not\red_{\sigma}$ and $\widehat{\confC_2} \not\red{\sigma}$. 
    
    Now it is not difficult to show that $\confC_1 \simeq \confC_2$. At least informally, the only difference 
    between these two configurations lies in the exposure state of  channel $c$, and in the fact 
    that $\confC_2$ can broadcast along channel $c$. Such a broadcast ensures that the 
    strong barb at channel $c$, enabled in $\confC_1$, can be matched by a weak barb enabled 
    at $\confC_2$. On the other hand, the difference in the exposure state of 
    channel $c$ in $\confC_1, \confC_2$ could  
    be detected via a test $T$ which contains an exposure check $\expsd{c}$; however, this construct 
    requires the passage of time in order   to determine that channel $c$ is free (exposed) in 
    $\confC_1 | T$ (respectively, $\confC_2 | T$). But, as we have already noticed, time is not allowed to pass 
    in such configurations. Formally, to prove $\confC_1 \simeq \confC_2$ 
    it suffices to show that the relation 
    \begin{eqnarray*}\{&(\conf{\Delta}{\crest{d}(0,\cdot).(\arcv d x {\nil})}, 
    \conf{\Delta'}{\crest{d}(0,\cdot).( \bcastzeroc c v | \arcv d x {P}}) &|\\
    |& \Delta \vdash c: \cbusy, \Delta \vdash d: \cbusy \mbox{ iff } \Delta' \vdash d: \cbusy \mbox{ for } 
    d \neq c & \}
    \end{eqnarray*}
    is barb-preserving, reduction closed and contextual.
    
    Therefore we have shown that $\confC_1 \simeq \confC_2$; however, $\Gamma_1 \vdash c: \cfree$, 
    while $\Gamma_2 \vdash c: \cbusy$. Therefore, by Lemma \ref{lem:channel-exposure} it 
    also has to be $\confC_1 \not\approx \confC_2$.
\end{example}

\subsubsection{Well-formed systems}
\label{sec:wellformed}

The counterexample to completeness illustrated in Example \ref{ex:illformed} 
relies on the existence of configurations which do not let time pass. 
These can be built by placing an active receiver along an idle, 
restricted channel. 
However, such configurations are not interesting per se, as it is  
counter-intuitive to allow wireless stations to receive a value along a channel, 
when there is no value being transmitted. 

It is interesting, in fact, to ask ourselves if our proof 
methodology based on bisimulations is complete, if we were to restrict our focus to a setting where active receivers 
along idle channels were explicitly forbidden. These take the name of 
\emph{well-formed} configurations, and can be defined as below:

\begin{definition}[Well-formedness]
\label{def:wellformed}
The set of \MHc{well-formed configurations} $\text{WNets}$ is the least set such that 
\vspace{-1em}
\begin{eqnarray*}
\conf{\Gamma}{P} \in \text{WNets}&&\MHc{\mbox{for all processes}\, P}\\
\Gamma \vdash c: \cbusy &\text{implies}& \conf{\Gamma}{\arcv c x P} \in \text{WNets}\\
\conf{\Gamma}{W_1}, \conf{\Gamma}{W_2} \in \text{WNets} &\text{implies}& \conf{\Gamma}{W_1 | W_2} \in \text{WNets}\\
\conf{\Gamma[c \mapsto (n,v)]}{W} \in \text{WNets} &\text{implies}& \conf{\Gamma}{\nu c:(n,v).W} \in \text{WNets}
\end{eqnarray*}
\end{definition}

A configuration $\conf{\Gamma}{W}$ is \MHc{well-formed} if it 
does not contain any receiving station along an idle channel. 
Note that the configurations from Example \ref{ex:illformed} are not 
well-formed. 
Clearly, well-formed configurations are preserved at runtime.
\begin{lemma}
\label{lem:wf.preserved}
  Suppose $\confC$ is well-formed and $\confC \red \confC'$. Then
$\confC'$ is also well-formed. 
\end{lemma}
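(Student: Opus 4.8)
The plan is to proceed by a case analysis on the reduction $\confC \red \confC'$, following Definition~\ref{def:step}: the reduction is either an instantaneous transmission step $\conf{\Gamma}{W} \trans{c!v} W'$ with $\Gamma' = \gupd{c!v}{\Gamma}$, a timed step $\conf{\Gamma}{W} \trans{\sigma} W'$ with $\Gamma' = \tupd{\Gamma}$, or an internal step $\conf{\Gamma}{W} \trans{\tau} W'$ with $\Gamma' = \Gamma$. In each case I want to show that well-formedness of $\conf{\Gamma}{W}$ is inherited by $\conf{\Gamma'}{W'}$. The natural vehicle is an auxiliary lemma proved by induction on the derivation of the intensional transition $\conf{\Gamma}{W} \trans{\lambda} W'$ (for $\lambda \in \{c!v, \sigma, \tau\}$, and also for the auxiliary actions $c?v$ and $\rcva c v$ that appear as premises of \rulename{Sync} and \rulename{RcvPar}), establishing: if $\conf{\Gamma}{W}$ is well-formed, then $\conf{\gupd{\lambda}{\Gamma}}{W'}$ is well-formed. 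Since the definition of $\text{WNets}$ is structural (a process $P$ is always well-formed; an active receiver $\arcv c x P$ is well-formed only when $\Gamma \vdash c : \cbusy$; well-formedness is closed under $\mid$ relative to a fixed $\Gamma$; and under restriction via the $\Gamma[c \mapsto (n,v)]$ bookkeeping), the induction is driven by matching each SOS rule against these four clauses.

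First I would dispatch the easy rules. For \rulename{TauExt}-type internal steps — \rulename{Tau}, \rulename{Then}, \rulename{Else} — the residual is a process (or $\sigma.P$, again a process), $\Gamma$ is unchanged, and processes are unconditionally well-formed. For \rulename{Rec}, \rulename{Sum}, \rulename{SumTime}, \rulename{SumRcv} and the matching rules the residual is again just a process, or is handled by the inductive hypothesis on the premise. For \rulename{TimePar}, \rulename{TauPar}, \rulename{Sync}, \rulename{RcvPar} the residual is a parallel composition and I apply the induction hypothesis to each component, using that the clause ``$\conf{\Gamma}{W_1}, \conf{\Gamma}{W_2} \in \text{WNets}$ implies $\conf{\Gamma}{W_1 \mid W_2} \in \text{WNets}$'' works relative to the \emph{same} updated environment $\gupd{\lambda}{\Gamma}$ on both sides (note $\gupd{\lambda}{\cdot}$ is a function of $\Gamma$, so both components see the identical new environment). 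The restriction rules \rulename{ResI} and \rulename{ResV} are where the $\Gamma[c \mapsto (n,v)]$ bookkeeping in the last clause of Definition~\ref{def:wellformed} is used: one unfolds the restriction, applies the inductive hypothesis to the transition of $W$ under $\Gamma[c \mapsto (n,v)]$, and then re-wraps, checking that the updated restriction information $\gupd{c!v}{\Gamma}(c)$ (in the \rulename{ResI} case) is exactly what the well-formedness clause expects.

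The genuinely delicate cases are the ones that can \emph{create} an active receiver or \emph{turn an exposed channel idle}: rule \rulename{Rcv} (a $\rcvtimec c x P Q$ becomes $\arcv c x P$, requiring $\Gamma \vdash c : \cfree$ as a premise), rule \rulename{RcvLate} ($\rcvtimec c x P Q$ becomes $\arcv c x {\{\err/x\}P}$, requiring $\Gamma \vdash c : \cbusy$), and the timed rules \rulename{ActRcv}/\rulename{EndRcv}, combined with the fact that $\tupd{\Gamma}$ decrements all exposure counters. The point I expect to be the main obstacle is reconciling rule \rulename{Rcv} with well-formedness: it produces $\arcv c x P$ over a channel that is \emph{idle} in $\Gamma$, which on its face violates the second clause of Definition~\ref{def:wellformed}. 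The resolution — which I would spell out carefully — is that \rulename{Rcv} fires only with an $\rcva c v$ (external-input) label, and such a label is \emph{not} one of the three that induce a reduction; an $\rcva c v$ transition only participates in a reduction after being synchronised via \rulename{Sync} with a matching $c!v$ output, and that output step updates $\Gamma$ to $\gupd{c!v}{\Gamma}$, under which $c$ is now $\cbusy$ (by Definition~\ref{def:conf.after}(2)). So I should state the auxiliary induction carefully so that for the output action $c!v$ the residual is well-formed with respect to $\gupd{c!v}{\Gamma}$, and separately track that for the bare input/receive labels $\rcva c v$ the residual may contain a ``premature'' active receiver over $c$ which is legitimised precisely once the output is applied. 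Symmetrically, for the timed case I must verify that whenever $\tupd{\Gamma}$ makes a channel idle (an exposure counter drops from $1$ to $0$), rule \rulename{EndRcv} has simultaneously consumed any active receiver $\arcv c x P$ on that channel — turning it into $\{\err/x\}P$ or $\{w/x\}P$, a process — so no well-formedness violation survives; \rulename{ActRcv}, which retains an active receiver, fires only when the counter stays $\geq 1$, i.e. the channel remains $\cbusy$ in $\tupd{\Gamma}$. Handling this interaction between the \rulename{Sync}/\rulename{RcvPar} synchronisation discipline and the channel-environment update is the crux; once it is set up correctly the remaining cases are routine.
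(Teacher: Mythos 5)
Your plan is essentially the paper's proof: a rule induction on the intensional transition $\conf{\Gamma}{W} \trans{\lambda} W'$ establishing that $\conf{\gupd{\lambda}{\Gamma}}{W'}$ is well-formed (for all labels, inputs included), then instantiated at $\lambda \in \{c!v, \sigma, \tau\}$ to obtain the lemma; your handling of \rulename{EndRcv}, \rulename{ActRcv} and the decrement performed by $\gupd{\sigma}{\Gamma}$ also matches the paper's. The one place you diverge is the \rulename{Rcv} case: the ``premature active receiver'' bookkeeping you propose for input labels is unnecessary. By clause (3) of Definition~\ref{def:conf.after}, $\gupd{c?v}{\Gamma} = \gupd{c!v}{\Gamma}$, so the updated environment already has $c$ exposed, and the uniform induction statement makes $\conf{\gupd{\rcva c v}{\Gamma}}{\arcv c x P}$ well-formed outright --- this is exactly how the paper dispatches \rulename{Rcv}. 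It also keeps \rulename{RcvPar} and \rulename{Sync} trivial, since both components' residuals are well-formed with respect to the \emph{same} updated environment ($\gupd{\sndac c v}{\Gamma} = \gupd{\rcva c v}{\Gamma}$), whereas a weakened invariant for inputs would force you to re-verify it through those rules. So the argument is sound; just drop the special-case treatment of inputs and run the same induction uniformly.
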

\begin{proof}
See the Appendix, Page \pageref{proof:wf.preserved}.
%
\end{proof}

The main property of well-formed systems is that they allow the passage of time, 
so long as all internal activity has ceased:
\begin{proposition}[Patience]\label{prop:noti2sigma}
Let $\confC$ be a well-formed configuration for which 
there is no $\confC'$ such that 
$\confC  \red_i\confC'$; then $\confC \red_{\sigma}\confC''$, for some configuration $\confC''$.
\end{proposition}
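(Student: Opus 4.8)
The plan is to establish the slightly stronger claim that a well-formed configuration $\conf{\Gamma}{W}$ with $\conf{\Gamma}{W} \not\red_i$ always has an intensional time step, i.e.\ $\conf{\Gamma}{W} \trans{\sigma} W'$ for some $W'$; the desired timed reduction $\confC \red_{\sigma} \conf{\tupd{\Gamma}}{W'}$ is then immediate from clause (ii) of Definition~\ref{def:step}. I would prove this by induction on the structure of the system term $W$ (with the channel environment universally quantified), using the inductive generation of $\text{WNets}$ in Definition~\ref{def:wellformed} to carry out the necessary inversions.

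\emph{The base case $W = P$.} For a process $P$ I would run a secondary induction on a measure $m(P)$ that counts the ``guard-free upper part'' of the syntax tree: set $m(\cdot)=0$ on $\bcastc c e Q$, $\rcvtimec c x Q R$, $\sigma.Q$, $\tau.Q$, $\matchb b Q R$ and $\nil$, and put $m(Q_1+Q_2)=1+m(Q_1)+m(Q_2)$ and $m(\fix X Q)=1+m(Q)$. Because recursion variables are guarded, every occurrence of $X$ in $Q$ lies below a guard, so $m(\{\fix X Q/X\}Q)=m(Q)<m(\fix X Q)$, and of course $m(Q_i)<m(Q_1+Q_2)$. A case analysis on the outermost operator of $P$ then suffices: $\bcastc c e Q$, $\tau.Q$ and $\matchb b Q R$ enable a $c!v$ or $\tau$ transition, and $\rcvtimec c x Q R$ with $\Gamma \vdash c:\cbusy$ enables a $\tau$ transition by rule (RcvLate), so all four cases are ruled out by the hypothesis $\conf{\Gamma}{P}\not\red_i$; the remaining forms $\nil$, $\sigma.Q$ and $\rcvtimec c x Q R$ with $\Gamma \vdash c:\cfree$ have a $\sigma$-transition directly, by (TimeNil), (Sleep) and (Timeout) respectively. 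For $P = Q_1+Q_2$, from $\conf{\Gamma}{Q_1+Q_2}\not\red_i$ one gets, via rule (Sum) and clauses (i),(iii) of Definition~\ref{def:step}, that $\conf{\Gamma}{Q_i}\not\red_i$ for each summand; the secondary hypothesis then yields $\conf{\Gamma}{Q_i}\trans{\sigma}Q_i'$, and (SumTime) recombines these into $\conf{\Gamma}{Q_1+Q_2}\trans{\sigma}Q_1'+Q_2'$. For $P = \fix X Q$, rule (Rec) makes the transitions of $\conf{\Gamma}{\fix X Q}$ coincide with those of $\conf{\Gamma}{\{\fix X Q/X\}Q}$, so the latter also has no instantaneous reduction; since the measure strictly decreased, the secondary hypothesis provides a $\sigma$-transition, which (Rec) lifts back.

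\emph{The inductive cases.} If $W = \arcv c x P$, well-formedness forces $\Gamma \vdash c:\cbusy$, say $\Gamma \vdash_{\mathrm{t}} c : n$ with $n \geq 1$, and then (ActRcv) (if $n>1$) or (EndRcv) (if $n=1$) supplies a $\sigma$-transition; this is precisely the point at which well-formedness is indispensable, as Example~\ref{ex:illformed} shows. If $W = W_1 | W_2$, inversion on $\text{WNets}$ gives that $\conf{\Gamma}{W_1}$ and $\conf{\Gamma}{W_2}$ are well-formed, and moreover neither admits an instantaneous reduction: a $\tau$-move of $W_i$ lifts to $W_1 | W_2$ via (TauPar), and a $c!v$-move of $W_i$ lifts via (Sync) using a matching input on the other component, which exists by input enabledness (Lemma~\ref{lem:rcv-enabling}); either would contradict $\conf{\Gamma}{W_1|W_2}\not\red_i$. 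The inductive hypothesis then supplies $\conf{\Gamma}{W_i}\trans{\sigma}W_i'$, and (TimePar) recombines them. If $W = \crest{c}{(n,v)}.W_1$, inversion gives that $\conf{\Gamma[c\mapsto(n,v)]}{W_1}$ is well-formed, and rules (ResI), (ResV) show that any $\tau$- or output-move of this inner configuration would induce an instantaneous reduction of $\conf{\Gamma}{\crest{c}{(n,v)}.W_1}$; hence the inner configuration has no instantaneous reduction, the inductive hypothesis gives it a $\sigma$-transition, and (ResV) with $\lambda = \sigma$ (for which $c \notin \lambda$ trivially) transfers this to $\conf{\Gamma}{\crest{c}{(n,v)}.W_1}$.

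The step I expect to be the main obstacle is the process case, and within it the secondary induction over guarded recursion and choice: one has to check carefully that unfolding a guarded recursion cannot increase the syntactic measure — this is exactly what guardedness delivers — and that the rules (Sum), (SumTime) and (SumRcv) genuinely propagate ``no instantaneous reduction'' down to the summands while still allowing the two $\sigma$-transitions so obtained to be recombined into a single one for the choice.
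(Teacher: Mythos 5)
Your argument is correct and follows essentially the same route as the paper's appendix proof: a structural induction on the system term in which the process forms $\bcastc c e P$, $\tau.P$, matching, and $\rcvtimec c x P Q$ over an exposed channel are excluded by the hypothesis $\confC \not\red_i$, the forms $\nil$, $\sigma.P$ and $\rcvtimec c x P Q$ over an idle channel yield $\sigma$-moves by \rulename{TimeNil}, \rulename{Sleep} and \rulename{Timeout}, well-formedness supplies \rulename{ActRcv}/\rulename{EndRcv} for active receivers, and \rulename{Sum}, \rulename{Rec}, input-enabledness with \rulename{Sync}/\rulename{TauPar}, and \rulename{ResI}/\rulename{ResV} push the hypothesis to the components, whose $\sigma$-transitions are recombined by \rulename{SumTime}, \rulename{Rec}, \rulename{TimePar} and \rulename{ResV}. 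The one genuine refinement is your secondary measure $m$: the paper's proof simply ``applies the inductive hypothesis'' to the unfolding $\{\fix X P/X\}P$, which is not a structural subterm, so your guardedness-preserving measure is precisely what is needed to make that step well-founded, and you also spell out the parallel and restriction cases that the paper leaves implicit.
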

\MHf{Andrea: Remember we found a bug in your proof of this result}
\begin{proof}
Details 
for the most important cases are given in the Appendix; see Page \pageref{proof:noti2sigma}.
\end{proof}

However, Patience alone does not preclude the possibility of exhibiting a configuration 
in which time never passes. In fact, it only ensures the passage of time when 
 instantaneous reduction are not possible anymore. However, it could be the 
 case that a configuration $\confC$ enables an infinite sequence of 
 instantaneous reductions, and by maximal progress (Proposition \ref{prop:maximal-progress}) 
 the passage of time would be forbidden. 
 As we will prove presently, this phenomenon does not arise for 
 CCCP configurations; we recall in fact that, in recursive processes of the form $\fix X P$, 
 we require all free occurrences of the process variable $X$ in $P$ to be guarded by a 
 time-consuming construct. This limitation is sufficient to prevent the existence 
 of configurations which do not allow time to pass; further, it is also necessary, 
 as shown by the following example.
 
%

\begin{example}\label{ex:nonWT}\MHf{Andrea: please check}
  Suppose we  remove the constraint in the syntax  that process variables have to be guarded 
  by time-consuming constructs in fixed point processes.
  Let $W$ denote the code $\fix X {(\tau.X)}$. 
Then we have an infinite sequence of internal actions
\begin{align*}
  &\conf{\Gamma}{W} \red_i \confC_1 \red_i \ldots \confC_k \red_i
\end{align*}
Indeed one can show that if $\conf{\Gamma}{W} \red^\ast \confC'$ then 
$\confC' \red_i$. Maximal progress then ensures that $\confC' \notred_{\sigma}$.

\end{example}

\begin{example}\label{ex:nonWT.noncomplete}
Again, suppose we remove the constraint on guarded 
recursion in the syntax of  CCCP. Then our bisimulation proof principle would not 
be complete; to see this, it is sufficient to consider the two 
configurations $\conf{\Gamma}{\fix X {(\tau.X)}}$ and 
$\conf{\Gamma'}{\fix X {(\tau.X)} | \bcastzeroc c v}$, where 
$\Gamma \vdash c: \cbusy$ and $\Gamma' \vdash c: \cfree$. By Lemma \ref{lem:channel-exposure} 
these two configurations are not bisimilar, as they differ in the exposure state of channel $c$. 
On the other hand, none of these two configurations allow the passage of time. 
As we have already argued in 
Example \ref{ex:illformed}, when the passage of time is not allowed in a configuration, it is not possible 
to provide a context that determines the exposure state of a channel. Then it is not difficult 
to show that $\conf{\Gamma}{\fix X {(\tau.X)}} \simeq \conf{\Gamma'}{\fix X {(\tau.X)} | \bcastzeroc c v}$. 
This can be done by simply showing that the relation 
\begin{eqnarray*}
{\mathcal S} &=& \{(\conf{\Delta}{\fix X{\tau.X}}, \conf{\Delta'}{\fix X {\tau.X} | \bcastzeroc c v}),
 (\conf{\Delta_c}{\fix X{\tau.X}}, \conf{\Delta_c'}{\fix X {\tau.X} | \sigma^{\delta_v}}) |\\
&& \Delta \vdash d: \cbusy \text{ if and only if } \Delta' \vdash d: \cbusy, d \neq c,\\
&&\Delta_c \vdash d: \cbusy \text{ if and only if } \Delta'_c \vdash d: \cbusy, \text{with } d \text{ arbitrary }\}
\end{eqnarray*}
is a bisimulation.
\end{example}

Let us state precisely what we mean when we say that 
infinite sequences of instantaneous reductions are not allowed in 
our calculus. In practice, we give a slightly stronger definition, requiring 
that the amount of instantaneous reductions that can be performed in 
sequence  by a configuration $\confC$ is bounded.

\begin{definition}[Well-timed configurations]
A configuration $\confC$  is \emph{well-timed}, \cite{MerMac12}, if 
there exists an upper bound $k \in \mathbb{N}$ 
such that whenever  $\confC \; (\red_i)^{h} \; \confC'$ for some 
$h \geq 0$, then $h \leq k$.
\end{definition}
Contrarily to well-formedness, which is
a simple syntactic constraint, \emph{well-timedness} means that the
designer of the network has to ensure that the code placed at the
station nodes can never lead to divergent behaviour. 
As we already argued, however, the constraint we have placed on 
the syntax of system terms that each recursive definition is weakly guarded in $P$, 
is sufficient to ensure well-timedness.
One simple method for ensuring this is to only use recursive definitions $\fix X P$ 
where $X$ is weakly guarded in $P$; that is, every occurrence of $X$ is within an 
input, output or time delay prefix, or it is included within a branch of a matching construct.
These are exactly the conditions that we placed for recursion variables when defining our calculus. 
Thus, we would expect every configuration in our calculus to be well-timed.

\begin{proposition}
\label{cor:wf2wt}
Any configuration $\conf{\Gamma}{W}$
is well-timed.
\end{proposition}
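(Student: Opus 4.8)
The plan is to attach to every closed system term $W$ a natural number $\mu(W)$ that strictly decreases along every instantaneous reduction; the bound $k=\mu(W)$ then witnesses well-timedness of $\conf{\Gamma}{W}$, since $\mu$ takes values in $\mathbb{N}$. I would define $\mu$ by structural recursion, putting $\mu(\nil)=\mu(X)=\mu(\arcv c x P)=\mu(\delay\sigma P)=0$, $\mu(\bcastc c e P)=\mu(\matchb b P Q)=\mu(\rcvtimec c x P Q)=1$, $\mu(\tau.P)=1+\mu(P)$, $\mu(P+Q)=\max(\mu(P),\mu(Q))$, $\mu(W_1 | W_2)=\mu(W_1)+\mu(W_2)$, $\mu(\crest{c}{(n,v)}.W)=\mu(W)$, and $\mu(\fix X P)=\mu(P)$. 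The last clause is a legitimate structural recursion because $P$ is a proper subterm of $\fix X P$ and the recursion variable $X$ is treated as a constant of weight $0$. Intuitively $\mu(W)$ over-approximates the number of $\tau$- or broadcast-steps $W$ can perform before the next $\sigma$-step: $\nil$, active receivers and $\sigma$-prefixes can only let time pass; a broadcast, a match, or a late reception costs one step and pushes its continuation behind a time guard; and $\tau.P$ costs one step plus whatever $P$ costs.

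The key lemma is that $\mu$ is invariant under the unfolding of guarded recursion: if every free occurrence of $X$ in $P$ is guarded --- i.e.\ lies inside a broadcast, an input residual, a timeout branch, a time-delay prefix, or a branch of a matching --- then $\mu(\{R/X\}P)=\mu(P)$ for every term $R$; taking $R=\fix X P$ yields $\mu(\{\fix X P/X\}P)=\mu(\fix X P)$. This is a routine induction on the structure of $P$: for each of the five guarding constructs the value of $\mu$ does not depend on the relevant subterm, so a substitution inside it is invisible to $\mu$; in the remaining non-variable cases ($\nil$, a variable $Y\ne X$, $\tau.{-}$, ${-}+{-}$, $\crest{c}{(n,v)}.{-}$, a nested $\fix$) guardedness of $X$ propagates to the immediate subterms and the induction hypothesis applies. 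This lemma is the only place where the syntactic guardedness condition on recursion is used, and it is the heart of the argument; Example~\ref{ex:nonWT} shows the statement is false without that condition.

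I would also need the auxiliary fact that inputs never increase $\mu$: for all $\Gamma$, if $\conf{\Gamma}{W}\trans{c?v}W'$ then $\mu(W')\le\mu(W)$, by induction on the derivation using \rulename{Rcv} ($1\mapsto 0$), \rulename{RcvIgn} (no change), \rulename{RcvPar}, \rulename{SumRcv}, \rulename{Rec} (via the key lemma) and \rulename{ResV}. With these in place, the main decrease property reads: for all $\Gamma$, if $\conf{\Gamma}{W}\trans{\lambda}W'$ with $\lambda\in\{\tau\}\cup\{c!v\mid c\in\chanset,\,v\in\valset\}$ --- the transitions underlying $\red_i$ by Definition~\ref{def:step} --- then $\mu(W')<\mu(W)$. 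This is again an induction on the derivation of the transition: \rulename{Snd} ($1\mapsto 0$, since $\delta_v\ge 1$), \rulename{Tau} ($1+\mu(P)\mapsto\mu(P)$), \rulename{Then}, \rulename{Else} ($1\mapsto 0$) and \rulename{RcvLate} ($1\mapsto 0$) are immediate; \rulename{Sum}, \rulename{TauPar}, \rulename{ResI} and \rulename{ResV} apply the induction hypothesis to a strictly smaller configuration; \rulename{Rec} combines the key lemma with the induction hypothesis; and \rulename{Sync} combines the induction hypothesis on the broadcasting component with the auxiliary fact on the receiving one.

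Finally, if $\confC=\conf{\Gamma}{W}$ and $\confC\,(\red_i)^h\,\confC'$, then each of the $h$ steps is a $\tau$- or $c!v$-transition, so $\mu$ strictly decreases at every step, whence $h\le\mu(W)$; thus $\confC$ is well-timed with $k=\mu(W)$. The main obstacle, and the only genuinely delicate point, is calibrating $\mu$ on the guarding constructs precisely so that the invariance-under-unfolding lemma holds; once that is set up, the two inductions on transition derivations are mechanical.
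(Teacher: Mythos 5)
Your proof is correct, and it takes a genuinely different route from the paper's. The paper argues in two stages: it first proves (Proposition~\ref{prop:welltimed}) that every single-threaded configuration $\conf{\Gamma}{P\rho}$ is well-timed by structural induction on $P$, using process environments to handle guarded recursion, and then proves the general statement by writing $W$ as a parallel product $\prod_i P_i$ and doing an induction on the length $h$ of the instantaneous-reduction sequence, with a case analysis showing $h\leq\sum_i k_{P_i}$; notably it carries this out only for the restriction-free fragment, remarking that a structural congruence would be needed to treat $\nu c.W$. You instead build one explicit ranking function $\mu$ on system terms and prove strict decrease of $\mu$ along every $\tau$- and $c!v$-labelled intensional transition (by rule induction), together with the two lemmas your approach really hinges on: invariance of $\mu$ under unfolding of guarded recursion, $\mu(\{\fix X P/X\}P)=\mu(P)$ --- which is exactly where the syntactic guardedness condition enters, mirroring the paper's reliance on it --- and non-increase of $\mu$ under inputs, which is what makes the \rulename{Sync} case go through where the paper instead re-analyses how a parallel composition can reduce. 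What your route buys is uniformity: parallel composition is just additivity of $\mu$, and channel restriction is handled directly by $\mu(\crest{c}{(n,v)}.W)=\mu(W)$ together with the \rulename{ResI}/\rulename{ResV} cases, so you cover the full calculus rather than the fragment treated in the appendix (note only that the printed conclusion of \rulename{ResV} in Table~\ref{tab:struct} should read $\crest{c}{(n,v)}.W'$; with the rule as literally misprinted no strict-decrease argument --- indeed no proof of the proposition at all --- could succeed, so your reading is the intended one). The calibration of $\mu$ on the guarding constructs (value independent of the guarded subterm, $\delta_v\geq 1$ for \rulename{Snd}) is exactly the delicate point you identify, and your choices make all the rule cases check out.
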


\begin{proof}
See the Appendix, Page \pageref{proof:wf2wt}.
\end{proof}

%
%

Next we prove a very useful result for well-defined
configurations; the proof emphasises the roles of well-formedness and
well-timedness in the configurations being tested.
\begin{proposition}\label{prop:exposure}
  Suppose $\conf {\Gamma_1} W_1 \simeq \conf{\Gamma_2} W_2$, where
  both are well-formed.  Then $\Gamma_1 \vdash c:\cfree$ implies
  $\Gamma_2 \vdash c:\cfree$.
\end{proposition}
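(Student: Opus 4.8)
The plan is to use a distinguishing context built around the exposure predicate $\expsd{c}$, exactly as in Example~\ref{ex:time}, and then lean on well-formedness and well-timedness to guarantee that the context actually gets a chance to run. Suppose, for contradiction, that $\conf{\Gamma_1}{W_1} \simeq \conf{\Gamma_2}{W_2}$ with both well-formed, $\Gamma_1 \vdash c:\cfree$ but $\Gamma_2 \vdash c:\cbusy$. Pick a fresh channel $\eureka$ and a value $\arb$, and consider the test
\[
T \;=\; \matchb{\expsd{c}} {\nil} {\bcastzeroc \eureka \arb}.
\]
Place $T$ in parallel with both configurations. In $\conf{\Gamma_2}{W_2 \mid T}$ the channel $c$ is exposed, so the only $\tau$-move available to $T$ via rule (Then) leads to $\sigma.\nil$; more precisely, starting from $\conf{\Gamma_2}{W_2 \mid T}$ the environment can never raise a barb on $\eureka$, because the \emph{then}-branch is $\nil$ and no other component mentions $\eureka$. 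By contrast, in $\conf{\Gamma_1}{W_1 \mid T}$, since $\Gamma_1 \vdash c:\cfree$, rule (Else) fires giving $\conf{\Gamma_1}{W_1 \mid \sigma.\bcastzeroc \eureka \arb}$, after which a timed reduction delivers $\conf{\Gamma_1'}{W_1' \mid \bcastzeroc \eureka \arb}$ and a broadcast on $\eureka$ exposes that channel, i.e. $\conf{\Gamma_1}{W_1 \mid T} \Downarrow_{\eureka}$.

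The subtlety — and the main obstacle — is the same one flagged in Example~\ref{ex:illformed}: the (Else) move only produces $\sigma.\bcastzeroc \eureka \arb$, so to actually fire the broadcast we need a timed reduction $\red_{\sigma}$ to be enabled in the composite configuration $\conf{\Gamma_1}{W_1 \mid \sigma.\bcastzeroc \eureka \arb}$ (or in some configuration reachable from it by instantaneous reductions). This is precisely where well-formedness and well-timedness are needed. First, composing $W_1$ with $T$ keeps the configuration well-formed, since $T$ is a pure process and well-formedness is closed under parallel composition and, by Lemma~\ref{lem:wf.preserved}, under reduction. Second, by Proposition~\ref{cor:wf2wt} the configuration is well-timed, so the chain of instantaneous reductions starting from $\conf{\Gamma_1}{W_1 \mid \sigma.\bcastzeroc \eureka \arb}$ is bounded; hence it reaches some $\widehat{\confC}$ with $\widehat{\confC} \not\red_i$. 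By Patience (Proposition~\ref{prop:noti2sigma}), $\widehat{\confC} \red_{\sigma} \confC''$. One checks that along this instantaneous-reduction prefix the prefixed component $\sigma.\bcastzeroc \eureka \arb$ is untouched (an internal reduction of the residual of a parallel composition leaves the other component alone, Proposition~\ref{prop:parallel-components}), and a $\sigma$-move takes it to $\bcastzeroc \eureka \arb$; after that the broadcast on $\eureka$ can fire, yielding the weak barb.

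Thus $\conf{\Gamma_1}{W_1 \mid T} \Downarrow_{\eureka}$ while $\conf{\Gamma_2}{W_2 \mid T} \not\Downarrow_{\eureka}$: the latter holds because no reduction of $\conf{\Gamma_2}{W_2 \mid T}$ can ever produce a component that broadcasts on the fresh channel $\eureka$, since the only branch of $T$ that could is discarded by rule (Then) given $\Gamma_2 \vdash c:\cbusy$, and $\eureka$ is fresh for $W_2$. Since $\simeq$ is contextual, $\conf{\Gamma_1}{W_1 \mid T} \simeq \conf{\Gamma_2}{W_2 \mid T}$; since it is barb-preserving, the two must agree on the weak barb $\Downarrow_{\eureka}$, a contradiction. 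Hence $\Gamma_2 \vdash c:\cfree$, as required. The one routine lemma worth isolating along the way is that, for a well-formed configuration, an instantaneous-reduction prefix followed by a single $\sigma$-step can be driven to fire a guarded broadcast sitting in parallel — this is a direct combination of well-timedness, Patience, and the parallel-decomposition rules, and I would state it as an auxiliary claim rather than re-derive it inline.
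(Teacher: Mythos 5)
Your overall strategy is the same as the paper's: the same test $T$ built around $\expsd{c}$, contextuality plus barb preservation on the fresh channel $\eureka$, and well-formedness, well-timedness (Proposition~\ref{cor:wf2wt}) and Patience (Proposition~\ref{prop:noti2sigma}) to let time pass on the $\Gamma_1$ side so that the $\sigma$-guarded broadcast on $\eureka$ eventually fires; phrasing the argument as a contradiction rather than the paper's direct derivation is immaterial, as is your choice to apply well-timedness and Patience to the composite rather than to $\conf{\Gamma_1}{W_1}$ alone.

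There is, however, a gap on the $\Gamma_2$ side. You justify $\conf{\Gamma_2}{W_2\mid T}\not\Downarrow_{\eureka}$ solely by observing that, since $\Gamma_2\vdash c:\cbusy$, rule (Then) discards the $\eureka$-branch --- that is, you evaluate the match in the \emph{initial} channel environment. But a weak barb permits arbitrary reductions, including timed ones, before the match in $T$ is ever resolved; a priori the transmission on $c$ in $W_2$ could complete first, leaving $c$ idle, after which (Else) would fire and raise the barb. Ruling this out requires two facts you never invoke: (i) while the match is unresolved the composite admits no $\sigma$-transition (the matching construct has no timed rule in Table~\ref{tab:net2}; equivalently, Maximal Progress, Proposition~\ref{prop:maximal-progress}, applies since its $\tau$-move is always enabled), so only instantaneous reductions can precede the evaluation of $\expsd{c}$; and (ii) instantaneous reductions never decrease exposure (Lemma~\ref{lem:taus-Gamma}), so $c$ is still exposed when the match is finally evaluated. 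Fact (ii) is precisely the lemma on which the paper's own concluding step rests, and since the entire point of restricting to well-formed configurations is the timing subtlety exhibited in Example~\ref{ex:illformed}, this step needs to be argued explicitly rather than asserted.
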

\begin{proof}
Let $\conf{\Gamma_1}{W_1} \simeq \conf{\Gamma_2}{W_2}$ 
and suppose $\Gamma_1 \vdash c : \cfree$ for some channel $c$. 
Consider the testing \MHc{code:}
$$T =  \matchb{\expsd{c}}{\nil} 
                                { \bcastzeroc  {\eureka} \arb } 
                                 \enspace  
$$
\MHc{From the  definition of $\simeq$ we know   that} 
$\conf{\Gamma_1}{W_1 | T} \simeq \conf{\Gamma_2}{W_2 |T}$. 
Since $\conf{\Gamma_1}{W_1}$ is well-timed, by definition 
there is a configuration $\confC$ such that 
 $\conf{\Gamma_1}{W_1} \,\red_i^\ast \confC $ and 
$\confC \not{\red}_i$. Because  $\conf{\Gamma_1}{W_1}$ is well-formed
so is $\confC$. By Proposition~\ref{prop:noti2sigma} 
there is a configuration $\confC'$ such that $\confC \red_{\sigma} \confC'$.
Let $\confC' = \conf{\Gamma'}{W'}$, for some $\Gamma'$ and $W'$. Now, if we define $\confC'' = \conf{\gupd{\eureka!\arb}{\Gamma'}}{W'}$ 
and $T' = \sigma.\bcastzeroc \eureka \arb$, it is easy to see that there exists
a sequence 
of reductions of the following shape: 
%
\[
\conf{\Gamma_1}{W_1 | T} \red_i \conf{\Gamma_1}{W_1 | T'} 
\red^{\ast}_i \confC | T' \red_{\sigma} \confC' | \bcastzeroc {\eureka}{\arb} 
\red_i \confC'' | \sigma^{\delta_{\arb}} 
\]
where $\confC'' | \sigma^{\delta_{\arb}}\downarrow_{\eureka}$. By definition this implies that $\conf{\Gamma_1}{W_1 | T} \Downarrow_{\eureka}$.

Note that the existence of the sequence of reductions above relies on the fact that 
$\conf{\Gamma_1}{W_1}$ is well-timed.The timed transition $\confC | T' \red_{\sigma} \confC' | \bcastzeroc \eureka \arb$ 
in such a sequence
is derived from the timed transitions performed by their components; if $\confC$ were not able to perform a $\sigma$-transition, 
in fact, we would have not been able to derive the timed reduction for the overall configuration $\confC | T'$.

Since $\conf{\Gamma_1}{W_1 | T} \simeq \conf{\Gamma_2}{W_2 | T}$ we also 
have that $\conf{\Gamma_2}{W_2 | T} \Downarrow_{\eureka}$. 
This is only possible if 
$$
\conf{\Gamma_2}{W_2 | T} \,\,\red_i^{\ast}\,\, \conf{\Gamma_2'}{W_2' | T' } 
\,\,\red_i^{\ast} \red_{\sigma} \red_i^{\ast}\,\, \conf{\Gamma_2''}{W_2'' | \sigma^{\delta_\arb}} 
$$
where $\Gamma_2'$ is a channel environment such that $\Gamma_2' \vdash c: \cfree$. 
\MHc{From Lemma \ref{lem:taus-Gamma} (recall that $\tau$-extensional actions coincide 
with instantaneous reductions) we get the required  $\Gamma_2 \vdash c: \cfree$.}
\end{proof}

We remark once again that restricting our attention to well-formed configurations is crucial in order to ensure 
the validity of Proposition \ref{prop:exposure}. In fact, in Example \ref{ex:illformed} 
we have already provided an example of two (ill-formed) configurations which are reduction 
barbed congruent, but which differ in  the exposure state of a channel.

Another important property that we will need from well-formed configurations 
concerns the definition of reduction barbed congruence itself; 
the reduction closure property which we used to define $\simeq$ can be 
strengthened by requiring instantaneous reductions to be matched by sequences 
of instantaneous reductions, and timed reductions to be matched by timed reductions, 
possibly preceded and followed by sequences of instantaneous ones. 
To prove this property we will need the following technical result, 
which will also be used later: 

\begin{lemma}\label{lem:fresh}
  Suppose  
$\conf {\Gamma_1} {W_1 | T } 
\simeq \conf {\Gamma_2} {W_2 | T } $ where each channel occurring free in 
$T$ does not occur free in $W_1$, nor in $W_2$ and is idle in both $\Gamma_1$ and 
$\Gamma_2$; then  
 $\conf {\Gamma_1} {W_1} \simeq \conf {\Gamma_2} {W_2} $.
\end{lemma}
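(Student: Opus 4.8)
The plan is to exhibit a relation containing the pair $(\conf{\Gamma_1}{W_1},\conf{\Gamma_2}{W_2})$ that is barb‑preserving, reduction‑closed and contextual, so that by Definition~\ref{def:rbc} it is contained in $\simeq$. The natural candidate is
\[
\mathcal{R}\;=\;\bigl\{\,(\conf{\Delta_1}{V_1},\,\conf{\Delta_2}{V_2})\ :\ \conf{\Delta_1}{V_1 \mid U}\simeq\conf{\Delta_2}{V_2 \mid U}\ \text{for some system term }U\,\bigr\}
\]
where $U$ ranges over terms whose free channels occur free in neither $V_1$ nor $V_2$ and which are exposed identically in $\Delta_1$ and in $\Delta_2$ (so the hypothesis of the lemma is the instance $U=T$, $V_i=W_i$, $\Delta_i=\Gamma_i$, with every free channel of $T$ idle in both). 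Since reduction in CCCP never creates free channels (there is no communication of channel names, and restricted channels stay restricted), and since $\gupd{\sndac c v}{\cdot}$ acts identically on two environments that agree at $c$, this side condition on $U$ is preserved along computations; that is what will let $\mathcal{R}$ be closed under the moves of the bisimulation game played on the enveloping configurations.

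The technical heart is a decomposition of the behaviour of $\conf{\Delta}{V \mid U}$ whenever the free channels of $U$ are fresh for $V$. Using Proposition~\ref{prop:parallel-components} together with input‑enabledness, Lemma~\ref{lem:rcv-enabling}(\ref{rcv-enabling1}) --- note that $U$ has no receiver on a channel free in $V$, and conversely --- one sees that every reduction of $\conf{\Delta}{V \mid U}$ is either (a) a reduction inside $V$, with $U$ performing only the vacuous input $\conf{\Delta}{U}\trans{c?v}U$; or (b) a reduction inside $U$, with $V$ unchanged by the symmetric argument; or (c) a time step, which by Proposition~\ref{prop:parallel-components}(\ref{prop:sigma}) occurs exactly when both $\conf{\Delta}{V}$ and $\conf{\Delta}{U}$ can let time pass, and is then unique by Proposition~\ref{prop:time-determinism}. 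Moreover the channel environment of any reachable state factors: the activity of $V$ only affects channels free in $V$, that of $U$ only channels free in $U$, and these sets are disjoint. With this decomposition, barb preservation and reduction closure of $\mathcal{R}$ are reduced to the corresponding properties of $\simeq$ on the enveloping configurations: a reduction of $\conf{\Delta_1}{V_1}$ is lifted to $\conf{\Delta_1}{V_1 \mid U}$ --- first draining the finitely many instantaneous $U$‑only reductions, which terminate by well‑timedness, Proposition~\ref{cor:wf2wt}, after which patience yields a joint time step when the lifted step is a $\sigma$‑step (Proposition~\ref{prop:noti2sigma}) --- then matched on the right using $\conf{\Delta_1}{V_1 \mid U}\simeq\conf{\Delta_2}{V_2 \mid U}$, and finally the matching sequence is projected back onto $V_2$ by erasing the $U$‑only steps; erasing these is sound precisely because they leave the $V$‑part of the environment and the $V$‑component untouched. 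Contextuality of $\mathcal{R}$ follows from contextuality of $\simeq$ and the commutativity and associativity of parallel composition up to $\simeq$, after $\alpha$‑renaming the bound channels of $U$ away from the incoming context.

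The principal obstacle is the bookkeeping around time, the single point at which $V_i$ and $U$ are genuinely coupled. First, a time step of $\conf{\Delta}{V_i \mid U}$ forces $U$ to advance as well, so after matching a bisimulation move the two sides may carry different residuals of the context; this is exactly why $\mathcal{R}$ must quantify existentially over $U$ rather than fix it, and one must check that the residual contexts produced on both sides remain admissible (fresh channels, identically exposed entries) --- which, as noted above, the factoring of the environment and the preservation of the side condition guarantee. Secondly, $\conf{\Delta}{V_i}$ in isolation may admit time steps that are blocked in $\conf{\Delta}{V_i \mid U}$; this is handled by the normalise‑then‑advance argument, invoking maximal progress (Proposition~\ref{prop:maximal-progress}) to know that once a $\sigma$‑step becomes available no instantaneous reduction competes with it. I expect the genuinely delicate sub‑case to be the one in which $U$ is ill‑formed and permanently prevents the passage of time; there one has to argue that the hypothesis $\conf{\Delta_1}{V_1 \mid U}\simeq\conf{\Delta_2}{V_2 \mid U}$ already pins down everything observable of $\conf{\Delta_1}{V_1}$ and $\conf{\Delta_2}{V_2}$ before any such $\sigma$‑step, so that the time steps unavailable to the enveloping systems can never be used to separate the two smaller ones.
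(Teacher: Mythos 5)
Your overall strategy---build a candidate relation out of $\simeq$-related compositions with a fresh context, prove it barb-preserving, reduction-closed and contextual, and rest the closure argument on a factorisation of the computations of $V\mid U$ when the free channels of $U$ are fresh---is the same as the paper's, which packages your decomposition as Corollary~\ref{cor:computation.independence}. The gap is in the shape of $\mathcal{R}$: you insist on a \emph{single shared} witness $U$ and on the \emph{same} environments $\Delta_1,\Delta_2$ appearing both in the pair and in the compositions, and with this rigidity the reduction-closure step fails. When a lifted move of $\conf{\Delta_1}{V_1\mid U}$ is answered using $\conf{\Delta_1}{V_1\mid U}\simeq\conf{\Delta_2}{V_2\mid U}$, reduction closure of $\simeq$ only provides \emph{some} matching sequence, and nothing prevents that sequence from performing $U$-only steps: resolving nondeterministic choices inside $U$, broadcasting on $U$'s fresh channels (thereby exposing them in the composed environment), or advancing $U$ through time differently from the left-hand copy. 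After projecting, what you actually know is $\conf{\Delta_1''}{V_1'\mid U}\simeq\conf{\Delta_2''}{V_2'\mid U'}$ with, in general, $U'\neq U$ and $\Delta_2''$ differing from the projected $\Delta_2'$ on the channels of $U$; this is not an instance of your $\mathcal{R}$, and quantifying existentially over one \emph{shared} $U$ does not repair it---re-synchronising the two residual contexts is essentially the lemma you are trying to prove. This is exactly why the paper's relation allows \emph{different} witnesses $T_1,T_2$ on the two sides (each independent of both $W_1$ and $W_2$) and requires the composed environments $\Gamma_1',\Gamma_2'$ to agree with $\Gamma_1,\Gamma_2$ only on the channels occurring free in the $W_i$; with that slack the residual pair re-enters the relation immediately, and reduction closure follows from Corollary~\ref{cor:computation.independence}.

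Two secondary points. For contextuality, $\alpha$-renaming only moves the \emph{bound} channels of $U$ out of the way; an incoming parallel context may mention the \emph{free} (fresh) channels of the witness, so the relation must additionally be closed under renaming or re-choice of those witness channels (a point the paper's outline also passes over in silence). Finally, your ``delicate sub-case'' of a $U$ that permanently blocks time cannot be handled by the argument you sketch: taking $W_1=\nil$, $W_2=\bcastzeroc c v$ and, as $T$, the time-blocking restricted receiver of Example~\ref{ex:illformed} (which has no free channels), the hypothesis of the lemma holds essentially by that example, while $\conf{\Gamma_1}{W_1}\not\simeq\conf{\Gamma_2}{W_2}$ because the differing exposure of $c$ becomes observable as soon as time may pass. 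So the lemma has to be read, as it is in fact used in Section~\ref{sec:completeness}, for configurations whose compositions with the test still admit the passage of time, rather than established for arbitrary $U$ by pinning down ``everything observable before the $\sigma$-step''.
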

\begin{proof}
See the Appendix, Page \pageref{proof:fresh}, for an outline.
\end{proof}

%

\begin{proposition}
\label{prop:reduction.isolation}
Let $\conf{\Gamma_1}{W_1}, \conf{\Gamma_2}{W_2}$ be two well-formed configurations 
such that $\conf{\Gamma_1}{W_1} \simeq \conf{\Gamma_2}{W_2}$. 
Then 
\begin{enumerate}[label=(\roman*)]
\item whenever $\conf{\Gamma_1}{W_1} \red_i \conf{\Gamma_1'}{W_1'}$ there exists 
a configuration $\conf{\Gamma_2'}{W_2'}$ such that $\conf{\Gamma_2}{W_2} \red_i^{\ast} 
\conf{\Gamma_2'}{W_2'}$, and $\conf{\Gamma_1'}{W_1'} \simeq \conf{\Gamma_2'}{W_2'}$,
\item whenever $\conf{\Gamma_1}{W_1} \red_\sigma \conf{\Gamma_1'}{W_1'}$ there 
exists a configuration $\conf{\Gamma_2'}{W_2'}$ such that $\conf{\Gamma_2}{W_2} 
\red_i^{\ast}\red_{\sigma}\red_{i}^{\ast} \conf{\Gamma_2'}{W_2'}$, and 
$\conf{\Gamma_1'}{W_1'} \simeq \conf{\Gamma_2'}{W_2'}$.
\end{enumerate}

\end{proposition}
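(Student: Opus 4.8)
We prove both clauses by a common device: we compose the two configurations with a ``clock'' context that exposes a fresh channel one time step from now, and with ``race'' contexts in the spirit of Example~\ref{ex:time}, and use the resulting barb discrepancies to force the matching computation supplied by reduction closure into the required shape. Fix channels $e,f$ fresh for both $W_1$ and $W_2$ (hence idle in $\Gamma_1$ and $\Gamma_2$), a value $\arb$, and set
\[
T = \sigma.\bcastzeroc e \arb, \qquad
T^{+} = \matchb{\expsd e}{\bcastzeroc f \arb}{\nil}, \qquad
T^{-} = \matchb{\expsd e}{\nil}{\bcastzeroc f \arb}.
\]
The argument leans on the following facts. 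Neither $\bcastzeroc e \arb$ nor a matching prefix has a $\sigma$-transition, so a parallel component of the form $T$, $\bcastzeroc e \arb$ or $T^{\pm}$ blocks the passage of time in a composite until it has been resolved; on the other hand, by well-timedness (Proposition~\ref{cor:wf2wt}) and Patience (Proposition~\ref{prop:noti2sigma}) such a component is always eventually resolved, and by Maximal Progress (Proposition~\ref{prop:maximal-progress}) a timed reduction of a composite is available exactly when no instantaneous one is. Input-enabledness (Lemma~\ref{lem:rcv-enabling}) carries any instantaneous reduction of $W_i$ through a parallel composition with one of these contexts, which, having no unguarded receiver on $e$, simply ignore broadcasts on $e$. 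Finally, well-formedness is preserved along reductions (Lemma~\ref{lem:wf.preserved}), and Lemma~\ref{lem:fresh} discards the auxiliary context at the end.

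\emph{Clause (i).} Suppose $\conf{\Gamma_1}{W_1} \red_i \conf{\Gamma_1'}{W_1'}$. By contextuality $\conf{\Gamma_1}{W_1 | T} \simeq \conf{\Gamma_2}{W_2 | T}$, and the instantaneous reduction lifts to $\conf{\Gamma_1}{W_1 | T} \red_i \conf{\Gamma_1'}{W_1' | T}$. Reduction closure yields $\conf{\Gamma_2}{W_2 | T} \red^{\ast} \conf{\Delta}{V}$ with $\conf{\Gamma_1'}{W_1' | T} \simeq \conf{\Delta}{V}$. The key claim is that this matching sequence uses no timed reduction. If it did, then by the facts above the $T$-component of $\conf{\Delta}{V}$ would already have consumed its $\sigma$-prefix, so that in $\conf{\Delta}{V}$ either $e$ is exposed, or the pattern ``how many time steps from now $e$ first becomes exposable'' differs from the value (exactly one) it has in $\conf{\Gamma_1'}{W_1' | T}$; the former possibility is ruled out by Proposition~\ref{prop:exposure} (both configurations are well-formed and $e$ is idle on the left), the latter by composing a further copy of $T^{+}$, or a delayed variant $\sigma^{k}.T^{+}$, and following the weak barb on $f$ exactly as in Example~\ref{ex:time}, which separates the two configurations and contradicts $\simeq$. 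Hence $\conf{\Gamma_2}{W_2 | T} \red_i^{\ast} \conf{\Delta}{V}$, so the clock is untouched and $\conf{\Delta}{V} = \conf{\Gamma_2'}{W_2' | T}$ with $\conf{\Gamma_2}{W_2} \red_i^{\ast} \conf{\Gamma_2'}{W_2'}$; since $e$ does not occur in $W_1',W_2'$ and is idle in $\Gamma_1',\Gamma_2'$, Lemma~\ref{lem:fresh} removes $T$ and gives $\conf{\Gamma_1'}{W_1'} \simeq \conf{\Gamma_2'}{W_2'}$.

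\emph{Clause (ii).} Suppose $\conf{\Gamma_1}{W_1} \red_\sigma \conf{\Gamma_1'}{W_1'}$; by Maximal Progress $\conf{\Gamma_1}{W_1} \not\red_i$. Take $T$ with $\delta_{\arb}\geq 2$. The timed reduction lifts to $\conf{\Gamma_1}{W_1 | T} \red_\sigma \conf{\Gamma_1'}{W_1' | \bcastzeroc e \arb}$, and reduction closure gives a matching $\conf{\Gamma_2}{W_2 | T} \red^{\ast} \conf{\Delta}{V}$. We show this sequence contains exactly one timed reduction. At least one: otherwise $e$ still sits behind the $\sigma$-prefix in $\conf{\Delta}{V}$, whereas on the left $e$ is idle but one instantaneous step from being exposed, a discrepancy detected by $T^{+}$. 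At most one: if it used two, then the clock, forced to fire between them, would have exposed $e$, and $\delta_{\arb}\geq 2$ keeps $e$ exposed after the second step (or, if many further time steps occurred, $e$ is released and never re-exposable); the first case contradicts Proposition~\ref{prop:exposure}, the second is detected by $T^{-}$, whose barb on $f$ appears precisely when $e$ is idle at the check. So $\conf{\Gamma_2}{W_2 | T} \red_i^{\ast}\red_\sigma\red_i^{\ast} \conf{\Delta}{V}$; Proposition~\ref{prop:exposure} applied once more forces the clock not to have fired after its $\sigma$-step, so $\conf{\Delta}{V} = \conf{\Gamma_2'}{W_2' | \bcastzeroc e \arb}$ with $\conf{\Gamma_2}{W_2} \red_i^{\ast}\red_\sigma\red_i^{\ast} \conf{\Gamma_2'}{W_2'}$, and Lemma~\ref{lem:fresh} removes the residual $\bcastzeroc e \arb$, yielding $\conf{\Gamma_1'}{W_1'} \simeq \conf{\Gamma_2'}{W_2'}$.

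The main obstacle is exactly this normalisation of the matching computation: turning the bare $\red^{\ast}$ given by reduction closure into $\red_i^{\ast}$, respectively $\red_i^{\ast}\red_\sigma\red_i^{\ast}$. One cannot argue locally that ``being able to perform an instantaneous step'' is a $\simeq$-invariant: it is not, since $\conf{\Gamma}{\tau.\nil} \simeq \conf{\Gamma}{\sigma.\nil}$ whenever $\Gamma$ is stable. The argument therefore has to be global and routed through the clock and race contexts, and the delicate point is checking that these contexts really do create the claimed barb discrepancies; this in turn depends on Patience, well-timedness and Maximal Progress to pin down when the auxiliary components fire and when time may pass in the composites. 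The remaining bookkeeping — lifting reductions through $|\,T$, projecting them back onto $W_2$, and checking the freshness hypotheses of Lemma~\ref{lem:fresh} — is routine.
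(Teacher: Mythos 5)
Your overall strategy is the one the paper itself uses: lift the given reduction to a parallel composition with a small testing context built on fresh channels, invoke contextuality and reduction closure, use Proposition~\ref{prop:exposure} and barb preservation to pin down both the shape of the matching sequence and the state of the tester at its end, then discard the tester with Lemma~\ref{lem:fresh} and project the composite steps back onto $W_2$. The difference is only in the choice of contexts. For clause (i) the paper composes with the \emph{undelayed} broadcast $\bcastzeroc \eureka \arb$: since that component has no $\sigma$-transition and is seen (via Proposition~\ref{prop:exposure} and the missing $\eureka$-barb) to be untouched at the end, no step of the matching sequence can be timed, and the normalisation is immediate. Your $\sigma$-prefixed clock does \emph{not} block time (rule (Sleep) gives $\sigma.P \trans{\sigma} P$, so your blanket claim that a component of the form $T$ ``blocks the passage of time'' is false as stated), which is why you need the additional $T^{+}$ contradiction to exclude the case where a timed step occurred but the clock has not yet fired; that sub-argument is essentially Example~\ref{ex:time} run through the contextuality of $\simeq$, and it is correct, just more work than the paper's choice demands. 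For clause (ii) the paper uses $\sigma.(\tau.\bcastzeroc \eureka \arb + \bcastzeroc \cfail \no)$, whose $\cfail$-branch does the job of your race contexts.

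There is, however, one step that fails as written. In clause (ii), the sub-case ``the clock fired and enough further timed steps occurred that $e$ has been released'' is \emph{not} detected by $T^{-}=\matchb{\expsd e}{\nil}{\bcastzeroc f \arb}$: weak barbs are existential, and in $\conf{\Gamma_1'}{W_1' \,|\, \bcastzeroc e \arb} \,|\, T^{-}$ there is a scheduling in which the exposure check is evaluated \emph{before} the pending broadcast on $e$ fires, so this configuration also exhibits the weak barb on $f$; hence $T^{-}$ creates no discrepancy and yields no contradiction. The sub-case is nonetheless easy to kill: once the clock has been consumed on the right, no residual code can ever transmit on the fresh channel $e$, so the right-hand configuration has no weak barb on $e$, whereas $\conf{\Gamma_1'}{W_1' \,|\, \bcastzeroc e \arb} \Downarrow_e$; barb preservation (or, equivalently, your $T^{+}$ test, which does produce the asymmetry, since on the right the check can never find $e$ exposed) closes it. With that repair, and with the projection of composite steps onto $W_2$ spelled out (every instantaneous composite step is a $W_2$-side step with the clock ignoring, and the unique timed step decomposes by Proposition~\ref{prop:parallel-components}), your proof goes through and establishes the same statement as the paper's.
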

\begin{proof}
See the Appendix, Page \pageref{proof:reduction.isolation}.
\end{proof}

\subsubsection{Proving Completeness}
\label{sec:completeness.proof}

We are now in the position to prove that, for well-formed configurations, 
our proof methodology is also complete. Given two well-formed configurations $\confC_1 \simeq \confC_2$, 
there exists a bisimulation $\Ss$ such that $\confC_1 \;\Ss\; \confC_2$. 

To prove completeness, we show that reduction barbed congruence is a bisimulation. 
That is, we need to show that for any extensional action $\alpha$, if $\confC_1 \simeq \confC_2$ and 
$\confC_1 \exttrans{\alpha} \confC_1'$, then there exists $\confC_2'$ such that 
$\confC_2 \extTrans{\hat{\alpha}} \confC_2'$ and $\confC_1' \simeq \confC_2'$. 
The special cases $\alpha = \tau$ and $\alpha = \sigma$ follow as a direct 
consequence of Proposition \ref{prop:reduction.isolation}. However, 
we state the results for the sake of consistency.
%
%
\begin{proposition}[Preserving extensional $\tau$s]\label{prop:tau}
  Suppose  $\conf {\Gamma_1} {W_1} \simeq \conf {\Gamma_2} {W_2} $ and 
      $\conf {\Gamma_1} {W_1} \exttrans{\tau} \conf {\Gamma'_1} {W'_1}$.
Then $\conf {\Gamma_2} {W_2} \extTrans{} \conf {\Gamma'_2} {W'_2}$ such that
 $\conf {\Gamma'_1} {W'_1} \simeq \conf {\Gamma'_2} {W'_2} $. \hfill\qed
\end{proposition}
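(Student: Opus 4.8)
The plan is to reduce this statement directly to Proposition~\ref{prop:reduction.isolation}(i), using the operational correspondence recorded in Remark~\ref{rem:operational_correspondence}. Recall from that remark that an extensional $\tau$-action coincides exactly with an instantaneous reduction: $\conf{\Gamma_1}{W_1} \exttrans{\tau} \conf{\Gamma'_1}{W'_1}$ holds if and only if $\conf{\Gamma_1}{W_1} \red_i \conf{\Gamma'_1}{W'_1}$. Hence from the hypothesis we immediately obtain the instantaneous reduction $\conf{\Gamma_1}{W_1} \red_i \conf{\Gamma'_1}{W'_1}$.

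First I would apply Proposition~\ref{prop:reduction.isolation}(i). Under the standing well-formedness assumption of Section~\ref{sec:completeness.proof} — which is precisely what makes the strengthened reduction closure of $\simeq$ available — and using $\conf{\Gamma_1}{W_1} \simeq \conf{\Gamma_2}{W_2}$, this instantaneous reduction can be matched by a (possibly empty) sequence of instantaneous reductions $\conf{\Gamma_2}{W_2} \red_i^{\ast} \conf{\Gamma'_2}{W'_2}$ with $\conf{\Gamma'_1}{W'_1} \simeq \conf{\Gamma'_2}{W'_2}$.

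Then I would translate this matching sequence back into the extensional LTS: applying Remark~\ref{rem:operational_correspondence} to each step of $\conf{\Gamma_2}{W_2} \red_i^{\ast} \conf{\Gamma'_2}{W'_2}$ yields $\conf{\Gamma_2}{W_2} \extTrans{} \conf{\Gamma'_2}{W'_2}$, which together with $\conf{\Gamma'_1}{W'_1} \simeq \conf{\Gamma'_2}{W'_2}$ is exactly the desired conclusion.

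There is essentially no remaining obstacle in this proof: all the genuine work — in particular the fact that an \emph{instantaneous} reduction can always be answered by \emph{instantaneous} reductions alone, rather than by an arbitrary interleaving of timed and instantaneous steps — has already been discharged in Proposition~\ref{prop:reduction.isolation}, whose proof in turn relies on well-formedness together with Patience (Proposition~\ref{prop:noti2sigma}) and well-timedness (Proposition~\ref{cor:wf2wt}). The one point to keep straight is that the correspondence of Remark~\ref{rem:operational_correspondence} is being used in both directions — once to pass from the extensional $\tau$-step to an instantaneous reduction, and once to pass back — so that $\tau$-labelled extensional steps and instantaneous reductions are genuinely interchangeable step by step.
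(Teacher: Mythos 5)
Your proposal is correct and is exactly the route the paper takes: the text preceding Proposition~\ref{prop:tau} states that the cases $\alpha=\tau$ and $\alpha=\sigma$ "follow as a direct consequence of Proposition~\ref{prop:reduction.isolation}", using the correspondence of Remark~\ref{rem:operational_correspondence} between extensional $\tau$-actions and instantaneous reductions, under the standing well-formedness assumption. Nothing further is needed.
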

\begin{proposition}[Preserving extensional $\sigma$s]\label{prop:delay.preservation}
  Suppose $\conf {\Gamma_1} {W_1} \simeq \conf {\Gamma_2} {W_2} $. Then 
$\conf {\Gamma_1} {W_1} \exttrans{\sigma} \conf {\Gamma'_1} {W'_1} $ implies 
$\conf {\Gamma_2} {W_2}  \extTrans{\sigma}  \conf {\Gamma'_2} {W'_2} $ such that 
$\conf {{\Gamma'_1}} {W'_1} \simeq \conf {\Gamma'_2} {W'_2} $. \hfill\qed
\end{proposition}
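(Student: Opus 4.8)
The plan is to reduce this statement to Proposition~\ref{prop:reduction.isolation}(ii) by passing through the reduction semantics. The key fact is Remark~\ref{rem:operational_correspondence}: an extensional $\sigma$-action is exactly a timed reduction $\red_{\sigma}$, and an extensional $\tau$-action is exactly an instantaneous reduction $\red_i$, in both directions. So the first step I would take is to rewrite the hypothesis $\conf {\Gamma_1} {W_1} \exttrans{\sigma} \conf {\Gamma'_1} {W'_1}$ as the timed reduction $\conf {\Gamma_1} {W_1} \red_{\sigma} \conf {\Gamma'_1} {W'_1}$.

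Next, working — as throughout this subsection — with well-formed configurations, and using the hypothesis $\conf {\Gamma_1} {W_1} \simeq \conf {\Gamma_2} {W_2}$, I would apply Proposition~\ref{prop:reduction.isolation}(ii) to the timed reduction just obtained. This produces a configuration $\conf {\Gamma'_2} {W'_2}$ with
\[
\conf {\Gamma_2} {W_2} \;\red_i^{\ast}\red_\sigma\red_i^{\ast}\; \conf {\Gamma'_2} {W'_2}
\qquad\text{and}\qquad
\conf {\Gamma'_1} {W'_1} \simeq \conf {\Gamma'_2} {W'_2}.
\]

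Finally I would translate this reduction chain back into the extensional LTS, again by Remark~\ref{rem:operational_correspondence}: each $\red_i$ is an extensional $\tau$-step and the single $\red_\sigma$ is an extensional $\sigma$-step, so the chain reads $\conf {\Gamma_2} {W_2} \exttransStar{\tau}\exttrans{\sigma}\exttransStar{\tau} \conf {\Gamma'_2} {W'_2}$, which is by definition $\conf {\Gamma_2} {W_2} \extTrans{\sigma} \conf {\Gamma'_2} {W'_2}$. Together with $\conf {\Gamma'_1} {W'_1} \simeq \conf {\Gamma'_2} {W'_2}$ this is exactly the required matching move.

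I do not expect a genuine obstacle at this level: essentially all of the work has already been discharged in Proposition~\ref{prop:reduction.isolation}, whose Appendix proof is where the delicate reasoning sits — in particular the construction, via Lemma~\ref{lem:fresh} and Propositions~\ref{prop:noti2sigma} and~\ref{prop:exposure}, of a context that forces a tested system to match the passage of time. The only point needing a moment's care here is to confirm that $\extTrans{\sigma}$ is defined as $\exttransStar{\tau}\exttrans{\sigma}\exttransStar{\tau}$ and that the operational correspondence of Remark~\ref{rem:operational_correspondence} is bidirectional, so that the chain $\red_i^{\ast}\red_\sigma\red_i^{\ast}$ matches a weak extensional $\sigma$-action on the nose; both are immediate from the relevant definitions.
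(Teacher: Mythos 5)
Your proposal is correct and is essentially the paper's own argument: the paper gives no separate proof, noting instead that the cases $\alpha=\tau$ and $\alpha=\sigma$ ``follow as a direct consequence of Proposition~\ref{prop:reduction.isolation}'', which is exactly the reduction-to-$\red_i^{\ast}\red_\sigma\red_i^{\ast}$ route you spell out via Remark~\ref{rem:operational_correspondence}. Your explicit remark that the configurations are taken well-formed (as required by Proposition~\ref{prop:reduction.isolation} and implicit throughout this subsection) is the only caveat, and you have handled it correctly.
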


Let us turn our attention to the remaining cases $\alpha \in \{c?v, \iota(c), \gamma(c,v)\}$. 
For each of them we define a distinguishing context $T_{\alpha}$; these are defined so 
that, given a well-formed configuration $\confC$, $\confC \extTrans{\hat{\alpha}} \confC'$ 
if and only if $\confC | T_{\alpha} \red^{\ast} \confC' | T^{\checkmark}_{\alpha}$, where $T^{\checkmark}_{\alpha}$ 
is uniquely determined by the action $\alpha$. Intuitively, the latter corresponds to the first state 
reached by the testing component when it has detected that the configuration 
$\confC$ has performed a weak $\alpha$-action; the system $T^{\checkmark}_{\alpha}$ is 
called the successful state for the action $\alpha$.

The tests $T_{\alpha}$ are defined below; here we assume that $\eureka, \fail$ are fresh channels, 
while $\delta_\arb = \delta_\no = 1$.

\begin{eqnarray*}
T_{\gamma(c,v)} &\deff& \nu d{:}(0, \cdot).(\arcv c x
{(\matchb{x{=}v}{\bcastzeroc{d}{\arb}}{\nil}) + \bcastzeroc \fail \no} |
\sigma^2.\matchb{\expsd{d}}{\bcastzeroc {\eureka}{\arb}}{\nil})\\
T_{c?v} &\deff &  (\bcastc c v {\bcastzeroc {\eureka}{\arb}} + \bcastzeroc \fail \no)\\
T_{\iota(c)} &\deff & (\matchb{\expsd{c}}{\nil}{\bcastzeroc {\eureka}{\arb}}) + \bcastzeroc {\fail}{\no}.
\end{eqnarray*}

We also list their respective successful states $T^{\checkmark}_{\alpha}$:
\begin{eqnarray*}
T^{\checkmark}_{\gamma(c,v)} &\deff & \nu d{:}(0, \cdot).(\sigma.\bcastzeroc{d}{\arb}{\nil} |
\sigma.\matchb{\expsd{d}}{\bcastzeroc {\eureka}{\arb}}{\nil})\\
T^{\checkmark}_{c?v} &\deff &  (\sigma^{\delta_v}.\bcastzeroc {\eureka}{\arb})\\
T^{\checkmark}_{\iota(c)} &\deff & \sigma.{\bcastzeroc {\eureka}{\arb}}
\end{eqnarray*}

\MHc{As an example we consider in detail the behaviour of the testing context $T_{\gamma(c,v)}$.}
This is designed to detect whether a configuration $\conf{\Gamma}{W}$ has performed 
a weak $\gamma(c,v)$-action. 
Let us discuss informally how the testing context 
$T_{\gamma(c,v)}$ operates. 
The fresh channels $\eureka$, $\fail$ 
play a different role: $\fail$ ensures that the 
reception along channel $c$ has finished, while $\eureka$ guarantees
that the received values is actually $v$. 

We provide a possible evolution of the testing contexts $T_\gamma(c,v)$ 
when running in a channel environment $\Gamma$ such that $\Gamma(c) = (1,v)$, 
and then we discuss how it works.
\[
\begin{array}{clcl}
&\multicolumn{3}{l}{\conf {\Gamma} {T_{\gamma(c,v)}}}\\
\red_{\sigma}& \conf {\Gamma_1}{T_1} &=& \conf{\Gamma_1'}{\nu d{:}(0, \cdot).(({\matchb{v{=}v}{\bcastzeroc{d}{\arb}}{\nil}}) + \bcastzeroc \fail \no  |\\
&&&| \sigma.\matchb{\expsd{d}}{\bcastzeroc {\eureka}{\arb}}{\nil})}\\
\red_{i} & \conf{\Gamma^{\checkmark}}{T^{\checkmark}} &= & \conf{\Gamma_2}{\nu d{:}(0, \cdot).({\sigma.{\bcastzeroc{d}{\arb}}} |
\sigma.\matchb{\expsd{d}}{\bcastzeroc {\eureka}{\arb}}{\nil})}\\
\red_{\sigma}&\conf{\Gamma_3}{T_3} &=& \conf{\Gamma_3}{\nu d{:}(0, \cdot).({{\bcastzeroc{d}{\arb}}} |
\matchb{\expsd{d}}{\bcastzeroc {\eureka}{\arb}}{\nil})}\\
\red_{i}&\conf{\Gamma_4}{T_4} &=& \conf{\Gamma_4}{\nu d{:}(1, \arb).({\sigma} |
\matchb{\expsd{d}}{\bcastzeroc {\eureka}{\arb}}{\nil})}\\
\red_{i}&\conf{\Gamma_5}{T_5} &=& \conf{\Gamma_5}{\nu d{:}(1, \arb).({\sigma} |
\sigma.\bcastzeroc {\eureka}{\arb})}\\
\red_{\sigma}& \conf{\Gamma_6}{T_6} &=& \conf{\Gamma_6}{\nu d{:}(0, \cdot).(\nil |
\bcastzeroc {\eureka}{\arb})}
\end{array}
\]

Initially a configuration of the form $\conf{\Gamma}{W | T_{\gamma(c,v)}}$ 
has a weak barb at channel $\fail$. Further, 
the testing component has an active receiver over channel $c$; note that 
the configuration $\conf{\Gamma}{W | T_{\gamma(c,v)}}$ 
is well-formed only if $\Gamma \vdash c: \cbusy$. 
If $\conf{\Gamma}{W | T_{\gamma(c,v)}} \extTrans{\gamma(c,v)} \conf{\Gamma_1}{W'}$, that is if 
$\Gamma(c) = (1,v)$, then after time passes the reception along channel $c$ in the 
testing component $T_{\gamma(c,v)}$ terminates. Formally, we have the sequence of 
reductions $\conf{\Gamma}{W | T_{\gamma(c,v)}} \red^{\ast}_i \red_{\sigma} \red_i^{\ast} 
\conf{\Gamma_1}{W' | T_1}$. Note that the component $T_1$ compares the received value along channel 
$c$ with $v$; this test can only succeed, and as a consequence we obtain a further instantaneous reduction 
$\conf{\Gamma_1}{W' | T_1} \red_i \conf{\Gamma^{\checkmark}}{W' |
  T^{\checkmark}}$; In  practice here 
we have $\Gamma^{\checkmark} = \Gamma_1$).  At this point 
we have detected that the configuration $\conf{\Gamma_1}{W_1}$ has performed the weak $\gamma(c,v)$-action, 
ending in $\conf{\Gamma_1}{W'}$. The rest of the computation is already determined, at least for the part 
concerning the testing component $T_1$, and leads $\conf{\Gamma^{\checkmark}}{W' | T^{\checkmark}}$ to 
output a barb on $\eureka$; further, in this configuration it is not possible to output a barb on $\fail$ anymore.

To see why this is true, note that in $\conf{\Gamma^{\checkmark}}{W' | T^{\checkmark}}$ the testing 
component $T^{\checkmark}$ is waiting for time to pass, before broadcasting value 
$\arb$ along a restricted channel $d$. Formally, we have the sequence of reductions 
$\conf{\Gamma^{\checkmark}}{W' | T^{\checkmark}} \red_i^{\ast} \red_{\sigma} 
\conf{\Gamma_3}{W_3 | T_3} \red_{i} \conf{\Gamma_4}{W_4 | T_4}$, where $\conf{\Gamma^{\checkmark}}{W'} 
\red_i^{\ast} \conf{\Gamma_2}{W_2}$ and $W_3 = W_4$ (note that each instantaneous reduction performed by 
the tested component does not affect the test at this point). 

Finally, in $\conf{\Gamma_4}{W_4 | T_4}$ the test checks whether the restricted channel $d$ is exposed. 
As this channel is effectively restricted in $T_4$, the test can only succeed, leading to 
$\conf{\Gamma_4}{W_4 | T_4} \red_i \conf{\Gamma_5}{W_5 | T_5}$, where $\Gamma_5 = \Gamma_4$ and 
$W_5 = W_4$. At this point we can let time pass, via a sequence of reductions of the form 
$\conf{\Gamma_5}{W_5 | T_5} \red_i^{\ast}\red_{\sigma}\red_i^{\ast} \conf{\Gamma_6}{W_6 | T_6}$. 
Now it is trivial to see that this configuration has a barb on $\eureka$.

Note that  in the computation of $\conf{\Gamma}{W | T_{\gamma(c,v)}}$ discussed 
above, there are two crucial checks that lead to enabling a barb over channel $\eureka$: 
\begin{itemize}
\item The received value is exactly $v$, 
\item The check that a broadcast along the restricted channel $d$ is performed after two time instants. 
Since the broadcast along channel $d$ is performed only one time instant after value $v$ has been 
delivered, this check ensures that  such a value has been actually delivered after one time instant.
\end{itemize}

%

\begin{proposition}[Detecting Inputs]
\label{prop:input.detection}
For any well-formed configuration $\conf{\Gamma}{W}$ we have 
that $\conf{\Gamma}{W} \extTrans{c?v} \conf{\Gamma'}{W'}$ if and only if 
$\conf{\Gamma}{W | T_{c?v} } \red_{i}^{\ast} \conf{\Gamma'}{W' | T^{\checkmark}_{c?v}}$.
\end{proposition}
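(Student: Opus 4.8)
The plan is to rely on the operational correspondence of Remark~\ref{rem:operational_correspondence}, so that the weak extensional action $\conf{\Gamma}{W}\extTrans{c?v}\conf{\Gamma'}{W'}$ may be unfolded as
\[
\conf{\Gamma}{W}\red_i^{\ast}\conf{\Gamma_1}{W_1}\exttrans{c?v}\conf{\Gamma_2}{W_2}\red_i^{\ast}\conf{\Gamma'}{W'},
\]
where the middle step is inferred by Rule~\rulename{Input}: $\conf{\Gamma_1}{W_1}\trans{c?v}W_2$ and $\Gamma_2=\gupd{c?v}{\Gamma_1}=\gupd{c!v}{\Gamma_1}$, the last equality holding by Definition~\ref{def:conf.after}(3). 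The whole argument rests on three elementary observations about the test: (a) $T_{c?v}$ performs no $\tau$- and no $\sigma$-transition and contains no unguarded receiver, so $\isrcv{T_{c?v},d}=\ffalse$ for every channel $d$; (b) the only non-input transitions available to $T_{c?v}$ are the two broadcasts $\conf{\Delta}{T_{c?v}}\trans{c!v}T^{\checkmark}_{c?v}$ and $\conf{\Delta}{T_{c?v}}\trans{\fail!\no}\sigma.\nil$, each obtained by \rulename{Snd} followed by \rulename{Sum}; and (c) $T^{\checkmark}_{c?v}=\sigma^{\delta_v}.\bcastzeroc{\eureka}{\arb}$ likewise performs neither a $\tau$- nor an output-transition and contains no unguarded receiver. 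By Lemma~\ref{lem:rcv-enabling}(1), both $T_{c?v}$ and $T^{\checkmark}_{c?v}$ therefore absorb unchanged any broadcast coming from the rest of the system, in any channel environment.

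For the forward implication, I would first lift the two reduction sequences $\conf{\Gamma}{W}\red_i^{\ast}\conf{\Gamma_1}{W_1}$ and $\conf{\Gamma_2}{W_2}\red_i^{\ast}\conf{\Gamma'}{W'}$ into the parallel context: a $\tau$-reduction of the $W$-component is propagated by \rulename{TauPar}, whereas a $c'!v'$-reduction synchronises with the input $\conf{\Delta}{T}\trans{c'?v'}T$ supplied by observation~(a), using Proposition~\ref{prop:parallel-components}(3); this yields $\conf{\Gamma}{W | T_{c?v}}\red_i^{\ast}\conf{\Gamma_1}{W_1 | T_{c?v}}$ and $\conf{\Gamma_2}{W_2 | T^{\checkmark}_{c?v}}\red_i^{\ast}\conf{\Gamma'}{W' | T^{\checkmark}_{c?v}}$. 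It then remains to insert the middle step: from $\conf{\Gamma_1}{W_1}\trans{c?v}W_2$ and $\conf{\Gamma_1}{T_{c?v}}\trans{c!v}T^{\checkmark}_{c?v}$, Proposition~\ref{prop:parallel-components}(3) gives $\conf{\Gamma_1}{W_1 | T_{c?v}}\trans{c!v}W_2 | T^{\checkmark}_{c?v}$, hence $\conf{\Gamma_1}{W_1 | T_{c?v}}\red_i\conf{\gupd{c!v}{\Gamma_1}}{W_2 | T^{\checkmark}_{c?v}}$, and $\gupd{c!v}{\Gamma_1}=\Gamma_2$ joins the chain.

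For the converse, I would analyse a given sequence $\conf{\Gamma}{W | T_{c?v}}=\confC^{0}\red_i\cdots\red_i\confC^{n}=\conf{\Gamma'}{W' | T^{\checkmark}_{c?v}}$ by tracking the test component. Using Proposition~\ref{prop:parallel-components}(1),(3) together with observations~(a)--(c), each step $\confC^{j}\red_i\confC^{j+1}$ either leaves the test component untouched --- in which case it projects to a reduction of the $W$-part --- or is the step in which the test performs one of its two broadcasts. Since the test starts in state $T_{c?v}$ and ends in state $T^{\checkmark}_{c?v}$, there is a least index $k$ at which the test moves; this move cannot be the $\fail$-broadcast, for it would send the test into the state $\sigma.\nil$, which has no instantaneous transition and absorbs inputs, hence is frozen and distinct from $T^{\checkmark}_{c?v}$; so it must be the $c$-broadcast, which forces, via Proposition~\ref{prop:parallel-components}(3), a synchronising transition $\conf{\Gamma^{k}}{W^{k}}\trans{c?v}W^{k+1}$ with $\Gamma^{k+1}=\gupd{c!v}{\Gamma^{k}}$ and test residual $T^{\checkmark}_{c?v}$. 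The prefix $\confC^{0}\red_i^{\ast}\confC^{k}$ then projects to $\conf{\Gamma}{W}\red_i^{\ast}\conf{\Gamma^{k}}{W^{k}}$; Rule~\rulename{Input} gives $\conf{\Gamma^{k}}{W^{k}}\exttrans{c?v}\conf{\Gamma^{k+1}}{W^{k+1}}$, using $\gupd{c?v}{\Gamma^{k}}=\gupd{c!v}{\Gamma^{k}}$; and the suffix $\confC^{k+1}\red_i^{\ast}\confC^{n}$, throughout which the test stays $T^{\checkmark}_{c?v}$ by observation~(c), projects to $\conf{\Gamma^{k+1}}{W^{k+1}}\red_i^{\ast}\conf{\Gamma'}{W'}$. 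Concatenating and appealing once more to Remark~\ref{rem:operational_correspondence} gives $\conf{\Gamma}{W}\extTrans{c?v}\conf{\Gamma'}{W'}$, as required.

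The only genuinely delicate point, and the step I expect to be the main obstacle, is the book-keeping in the converse: one must argue carefully that while the test component is in state $T_{c?v}$ (respectively $T^{\checkmark}_{c?v}$) it cannot be altered by the other component --- which is exactly what observations~(a) and (c), Lemma~\ref{lem:rcv-enabling}(1) and Proposition~\ref{prop:parallel-components} together secure --- and that the $\fail$-branch is a genuine dead end, which is the reason for defining $T^{\checkmark}_{c?v}$ to be syntactically distinct from $\sigma.\nil$. Note that well-formedness of $\conf{\Gamma}{W}$ is not actually used in this proof; it is retained for uniformity with the companion detection results for $\iota(c)$ and $\gamma(c,v)$.
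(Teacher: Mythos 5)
Your proof is correct and follows essentially the same route as the paper's: unfold the weak input into a pre/post decomposition, lift the instantaneous reductions past the test using the fact that $T_{c?v}$ and $T^{\checkmark}_{c?v}$ contain no receivers (Lemma~\ref{lem:rcv-enabling}(1)), obtain the key step by synchronising $W_1$'s input with the test's $c!v$ broadcast together with $\gupd{c?v}{\Gamma}=\gupd{c!v}{\Gamma}$, and in the converse track the test component to isolate the unique synchronisation step before projecting prefix and suffix. Your explicit least-index argument ruling out the $\fail$-branch is a point the paper's own proof leaves implicit, but it is the same overall strategy.
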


\begin{proof}
See the Appendix, Page \pageref{proof:input.detection}.
\end{proof}

\begin{proposition}[Detecting Exposure Checks]
\label{prop:exp.detection}
For any well-formed configuration $\conf{\Gamma}{W}$ we have 
that $\conf{\Gamma}{W} \extTrans{\iota(c)} \conf{\Gamma'}{W'}$ if and only if 
$\conf{\Gamma}{W | T_{\iota(c)} } \red_{i}^{\ast} \conf{\Gamma'}{W' | T^{\checkmark}_{\iota(c)}}$.
\end{proposition}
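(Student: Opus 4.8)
The plan is to establish the two implications separately, passing in each direction between instantaneous reductions and extensional $\tau$-actions via Remark~\ref{rem:operational_correspondence}, and decomposing or recomposing parallel moves via Proposition~\ref{prop:parallel-components}. Three facts about the testing context do all the work. First, $T_{\iota(c)}$ contains no receiver subterm, so by input-enabledness (Lemma~\ref{lem:rcv-enabling}) and rule \rulename{RcvIgn} any broadcast performed by $W$ leaves it unchanged, and the same holds for $T^{\checkmark}_{\iota(c)} = \sigma.\bcastzeroc{\eureka}{\arb}$. Second, the only instantaneous moves of $\conf{\Delta}{T_{\iota(c)}}$ are the three $\tau$-moves that commit the choice: the \rulename{Then}-branch of the matching and the $\bcastzeroc{\fail}{\no}$-summand each lead to $\sigma.\nil$ (using $\delta_{\no}=1$ for the latter), while the \rulename{Else}-branch leads to $\sigma.\bcastzeroc{\eureka}{\arb} = T^{\checkmark}_{\iota(c)}$ and is enabled exactly when $\Delta \vdash c:\cfree$, because $\interpr{\expsd c}_\Delta = \ttrue$ iff $\Delta \vdash c:\cbusy$. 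Third, neither $\sigma.\nil$ nor $T^{\checkmark}_{\iota(c)}$ has an instantaneous move of its own, so once the choice is committed the test component is frozen under $\red_i$. Finally, since $T_{\iota(c)}$ is a plain process, $\conf{\Gamma}{W \,|\, T_{\iota(c)}}$ is well-formed whenever $\conf{\Gamma}{W}$ is (Definition~\ref{def:wellformed}), and by Lemma~\ref{lem:wf.preserved} every configuration met below stays in $\text{WNets}$.

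For the ``if'' direction, assume $\conf{\Gamma}{W \,|\, T_{\iota(c)}} \red_i^{\ast} \conf{\Gamma'}{W' \,|\, T^{\checkmark}_{\iota(c)}}$. Since the test component changes, along this sequence the choice in $T_{\iota(c)}$ is committed exactly once; by the third fact that single move cannot be the $\bcastzeroc{\fail}{\no}$-move nor the \rulename{Then}-move, since both reach $\sigma.\nil$, from which $T^{\checkmark}_{\iota(c)}$ is unreachable under $\red_i$. Hence it is the \rulename{Else}-move, fired from some configuration $\conf{\Gamma_1}{W_1 \,|\, T_{\iota(c)}}$ with $\Gamma_1 \vdash c:\cfree$, and the sequence factors as
\[
\conf{\Gamma}{W \,|\, T_{\iota(c)}} \;\red_i^{\ast}\; \conf{\Gamma_1}{W_1 \,|\, T_{\iota(c)}} \;\red_i\; \conf{\Gamma_1}{W_1 \,|\, T^{\checkmark}_{\iota(c)}} \;\red_i^{\ast}\; \conf{\Gamma'}{W' \,|\, T^{\checkmark}_{\iota(c)}}
\]
with the test inert in the outer two blocks. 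Projecting those two blocks onto the first component by iterated use of Proposition~\ref{prop:parallel-components}(1) and (3) — the broadcast steps of $W$ being absorbed by the inert test thanks to the first fact — yields $\conf{\Gamma}{W} \red_i^{\ast} \conf{\Gamma_1}{W_1}$ and $\conf{\Gamma_1}{W_1} \red_i^{\ast} \conf{\Gamma'}{W'}$. By Remark~\ref{rem:operational_correspondence} these read $\conf{\Gamma}{W} \exttransStar{\tau} \conf{\Gamma_1}{W_1}$ and $\conf{\Gamma_1}{W_1} \exttransStar{\tau} \conf{\Gamma'}{W'}$; inserting $\conf{\Gamma_1}{W_1} \exttrans{\iota(c)} \conf{\Gamma_1}{W_1}$ by rule \rulename{Idle}, legitimate since $\Gamma_1 \vdash c:\cfree$, gives $\conf{\Gamma}{W} \extTrans{\iota(c)} \conf{\Gamma'}{W'}$.

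For the ``only if'' direction, assume $\conf{\Gamma}{W} \extTrans{\iota(c)} \conf{\Gamma'}{W'}$. Unfolding the weak action and using Remark~\ref{rem:operational_correspondence} and rule \rulename{Idle}, this says that $\conf{\Gamma}{W} \red_i^{\ast} \conf{\Gamma_1}{W_1}$ with $\Gamma_1 \vdash c:\cfree$ and $\conf{\Gamma_1}{W_1} \red_i^{\ast} \conf{\Gamma'}{W'}$. The next step is to lift the first block to $\conf{\Gamma}{W \,|\, T_{\iota(c)}} \red_i^{\ast} \conf{\Gamma_1}{W_1 \,|\, T_{\iota(c)}}$ using Proposition~\ref{prop:parallel-components}, each $\tau$-step by \rulename{TauPar} and each $d!v$-step by \rulename{Sync} with the test ignoring it (possible by the first fact); then, since $\Gamma_1 \vdash c:\cfree$, fire the \rulename{Else}-move of the test in parallel, obtaining $\conf{\Gamma_1}{W_1 \,|\, T_{\iota(c)}} \red_i \conf{\Gamma_1}{W_1 \,|\, T^{\checkmark}_{\iota(c)}}$; finally lift the second block in the same way, $T^{\checkmark}_{\iota(c)}$ being again receiver-free, to obtain $\conf{\Gamma_1}{W_1 \,|\, T^{\checkmark}_{\iota(c)}} \red_i^{\ast} \conf{\Gamma'}{W' \,|\, T^{\checkmark}_{\iota(c)}}$. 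Concatenating the three pieces yields the claim.

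The one delicate point I expect is the structural step in the ``if'' direction that pins down when and how the test commits: ruling out the spurious $\bcastzeroc{\fail}{\no}$- and \rulename{Then}-branches and showing that the test is frozen both before and after its single move, so that the computation genuinely factors through a state in which channel $c$ is idle. The parallel-decomposition bookkeeping is by contrast routine given Proposition~\ref{prop:parallel-components}, and well-formedness plays no essential role here beyond keeping every intermediate configuration in $\text{WNets}$, matching the hypotheses of the companion detection results such as Proposition~\ref{prop:input.detection}.
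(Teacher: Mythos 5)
Your proof is correct and takes essentially the same route as the paper's: both directions rely on the correspondence between instantaneous reductions and extensional $\tau$-actions, the inertness of the test under inputs, Proposition~\ref{prop:parallel-components} to lift and project parallel steps, and the observation that the test commits exactly once, necessarily via the Else-branch of the exposure check, from a configuration in which $c$ is idle, at which point \rulename{Idle} supplies the $\iota(c)$-action. The only cosmetic divergence is in the only-if direction, where the paper fires the Else-branch as the very first reduction (using Corollary~\ref{cor:iota.actions} to push $\Gamma \vdash c:\cfree$ back to the initial environment), whereas you fire it at the intermediate configuration where the \rulename{Idle} step occurred; both placements are legitimate.
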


\begin{proof}
See the Appendix, Page \pageref{proof:exp.detection}.
\end{proof}

\begin{proposition}[Detecting Delivery of Values]
\label{prop:delivery.detection}
For any well-formed configuration $\conf{\Gamma}{W}$ we have 
that $\conf{\Gamma}{W} \extTrans{\gamma(c,v)} \conf{\Gamma'}{W'}$ if and only if 
$\conf{\Gamma}{W | T_{\gamma(c,v)} } \red_{i}^{\ast}\red{\sigma}\red_i^{\ast} \conf{\Gamma'}{W' | T^{\checkmark}_{\gamma(c,v)}}$.
\end{proposition}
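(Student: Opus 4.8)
The plan is to prove the two implications by translating between reductions of $\conf{\Gamma}{W | T_{\gamma(c,v)}}$ and weak extensional moves of $\conf{\Gamma}{W}$, using Remark~\ref{rem:operational_correspondence} (instantaneous reductions coincide with $\exttrans{\tau}$, timed reductions with $\exttrans{\sigma}$), Rule~\rulename{Deliver} of Table~\ref{tab:extensional}, and Proposition~\ref{prop:parallel-components} to split the moves of a parallel composition between $W$ and the test. Throughout I would assume, via $\alpha$-conversion, that the restricted channel $d$ of $T_{\gamma(c,v)}$ and the fresh channels $\eureka,\fail$ do not occur free in $W$, and that $\conf{\Gamma}{W | T_{\gamma(c,v)}}$ is well-formed, so in particular $\Gamma \vdash c : \cbusy$ and the active receiver on $c$ inside the test is legitimate. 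The argument rests on three elementary facts. (a) $T_{\gamma(c,v)}$ performs no instantaneous move on its own, since under $\nu d$ its two threads are an active receiver on $c$ (offering only $\sigma$- and ignored-input moves) and a $\sigma^2$-guarded thread; the same holds for the intermediate test $T_1 = \nu d{:}(0,\cdot).(((\matchb{v{=}v}{\bcastzeroc{d}{\arb}}{\nil}) + \bcastzeroc{\fail}{\no}) \,|\, \sigma.\matchb{\expsd{d}}{\bcastzeroc{\eureka}{\arb}}{\nil})$, except for the single $\tau$-move that resolves the guard $v{=}v$ via \rulename{Then} and \rulename{Sum} and produces $T^{\checkmark}_{\gamma(c,v)}$. (b) For every channel $c'$ and value $w$ one has $\isrcv{T_{\gamma(c,v)},c'} = \ffalse$, so by Lemma~\ref{lem:rcv-enabling}(1) $\conf{\Delta}{T_{\gamma(c,v)}} \trans{c'?w} T_{\gamma(c,v)}$ for any $\Delta$, and likewise for $T_1$ and $T^{\checkmark}_{\gamma(c,v)}$; the test is input-transparent. (c) If $\Delta(c)=(1,v)$ then, by \rulename{EndRcv}, \rulename{Sleep}, \rulename{TimePar}, \rulename{ResV} and Proposition~\ref{prop:time-determinism}, the unique $\sigma$-successor of $\conf{\Delta}{T_{\gamma(c,v)}}$ is $T_1$, the active receiver having delivered exactly $v$.

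For the implication from a weak $\gamma(c,v)$-transition to the composite reduction ($\Rightarrow$), I would take $\conf{\Gamma}{W} \extTrans{\gamma(c,v)} \conf{\Gamma'}{W'}$ and, using Remark~\ref{rem:operational_correspondence} and Rule~\rulename{Deliver}, unfold it as $\conf{\Gamma}{W} \red_i^{\ast} \conf{\Gamma_1}{W_1} \red_{\sigma} \conf{\gupd{\sigma}{\Gamma_1}}{W_1'} \red_i^{\ast} \conf{\Gamma'}{W'}$, where $\Gamma_1(c)=(1,v)$ and the middle timed reduction rests on $\conf{\Gamma_1}{W_1}\trans{\sigma}W_1'$. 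Every instantaneous reduction of $W$ lifts to one of $W | T_{\gamma(c,v)}$ that leaves the test untouched: a $\tau$-step by \rulename{TauPar}, a $c'!w$-step by \rulename{Sync} together with fact (b), and in each case the channel environment is updated identically; hence $\conf{\Gamma}{W | T_{\gamma(c,v)}} \red_i^{\ast} \conf{\Gamma_1}{W_1 | T_{\gamma(c,v)}}$. The timed step lifts by \rulename{TimePar} and fact (c), replacing the test by $T_1$. The trailing instantaneous reductions of $W$ lift likewise (input-transparency of $T_1$), and one additional $\tau$-step carries $T_1$ to $T^{\checkmark}_{\gamma(c,v)}$ by fact (a); concatenating yields $\conf{\Gamma}{W | T_{\gamma(c,v)}} \red_i^{\ast}\red_{\sigma}\red_i^{\ast} \conf{\Gamma'}{W' | T^{\checkmark}_{\gamma(c,v)}}$.

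For the converse ($\Leftarrow$), I would start from such a reduction sequence and decompose it with Proposition~\ref{prop:parallel-components}. By facts (a) and (b) the opening $\red_i^{\ast}$ block is due entirely to $W$ with the test unchanged, giving $\conf{\Gamma}{W} \red_i^{\ast} \conf{\Gamma_1}{W_1}$; the $\red_{\sigma}$ step splits into $\conf{\Gamma_1}{W_1}\trans{\sigma}W_1'$ and $\conf{\Gamma_1}{T_{\gamma(c,v)}}\trans{\sigma}T$. The crux is that, for the closing $\red_i^{\ast}$ block to be able to reach exactly $T^{\checkmark}_{\gamma(c,v)}$, it must be that $T = T_1$: if $\Gamma_1 \vdash_{\mathrm{t}} c : n$ with $n>1$ then \rulename{ActRcv} keeps an unfired active receiver on $c$ which can never complete, since only one $\sigma$-step is available; if the value carried by $c$ has been turned into $\err$ by an intervening broadcast, then the guard comparing the delivered value with $v$ evaluates to $\ffalse$ (\rulename{Else}) and the test reaches $\sigma.\nil$, not $T^{\checkmark}_{\gamma(c,v)}$; and taking the $\bcastzeroc{\fail}{\no}$ summand, or exposing the restricted channel $d$, equally rules out $T^{\checkmark}_{\gamma(c,v)}$. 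Hence $\Gamma_1(c)=(1,v)$ and, by Rule~\rulename{Deliver}, $\conf{\Gamma_1}{W_1} \exttrans{\gamma(c,v)} \conf{\gupd{\sigma}{\Gamma_1}}{W_1'}$. The closing block splits into the instantaneous reductions of $W$ and the single $\tau$ resolving $T_1$ to $T^{\checkmark}_{\gamma(c,v)}$, giving $\conf{\gupd{\sigma}{\Gamma_1}}{W_1'} \red_i^{\ast} \conf{\Gamma'}{W'}$; by Remark~\ref{rem:operational_correspondence} the whole chain is precisely $\conf{\Gamma}{W} \extTrans{\gamma(c,v)} \conf{\Gamma'}{W'}$.

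The routine part is the $\Rightarrow$ lift-to-context argument; the main obstacle is the $\Leftarrow$ direction, and concretely the claim that reaching exactly $W' | T^{\checkmark}_{\gamma(c,v)}$ forces the intended choreography --- that the restricted channel $d$ stays idle, that $W$ causes no collision on $c$ before delivery, that the $\bcastzeroc{\fail}{\no}$ alternative is never taken, and that the test's active receiver can fire its $\sigma$-step only from an environment with $c \mapsto (1,v)$. Pinning this down requires well-formedness of $\conf{\Gamma}{W | T_{\gamma(c,v)}}$ (so that $\Gamma \vdash c : \cbusy$), Proposition~\ref{prop:time-determinism} and Proposition~\ref{prop:maximal-progress}, together with a short induction on the length of the closing reduction block (using Lemma~\ref{lem:rcv-enabling}) showing that none of these ``bad'' moves, once enabled, can be avoided by the rest of the computation.
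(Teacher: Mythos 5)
Your proposal is correct and follows essentially the same route as the paper's own proof: the same opening/timed/closing decomposition of the reduction sequence, the same input-transparency and quiescence facts about the test, the same use of \rulename{EndRcv} (via time determinism) to force $\Gamma^{\pre}(c)=(1,v)$ and hence Rule \rulename{Deliver}, and the same elimination of the $\err$, $\fail$ and lingering-receiver alternatives in the converse direction, with your $T_1$ being exactly the paper's intermediate test $T_v$. The only nitpick is that your fact (a) slightly overstates the quiescence of $T_1$ (it can also fire the $\bcastzeroc{\fail}{\no}$ summand, an instantaneous broadcast), but you correctly rule that branch out where it matters.
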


\begin{proof}
See the Appendix, Page \pageref{proof:delivery.detection}.
\end{proof}

Note that  in Propositions \ref{prop:input.detection}, \ref{prop:exp.detection} and \ref{prop:delivery.detection}, 
we emphasized whether the reductions needed to reach the successful configuration $\conf{\Gamma}{W' | T^{\checkmark}_{\alpha}}$ 
from $\conf{\Gamma}{W | T_{\alpha}}$ are instantaneous or timed.

We have stated all the results needed to prove completeness.
\begin{theorem}[Completeness]
On \MHc{well-formed} configurations, reduction barbed congruence 
implies bisimilarity. 
\end{theorem}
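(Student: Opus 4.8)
The plan is to show that the relation
\[
\RR \;=\; \{\,(\confC_1,\confC_2)\;\mid\;\confC_1,\confC_2 \text{ are well-formed and } \confC_1\simeq\confC_2\,\}
\]
is a bisimulation in the extensional LTS; since $\RR$ is symmetric, this yields the theorem. First I would observe that $\RR$ is closed under single extensional actions on both components: by Lemma~\ref{lem:wf.preserved} well-formedness is preserved by instantaneous and by timed reductions, hence (Remark~\ref{rem:operational_correspondence}) by $\tau$- and $\sigma$-extensional actions, and a $\gamma(c,v)$-action is itself a timed reduction; an $\iota(c)$-action leaves the configuration unchanged; and a $c?v$-action can only turn a guarded receiver $\rcvtimec c x P Q$ into an active receiver $\arcv c x P$ on the channel $c$, which is exposed in the updated environment $\gupd{c?v}{\Gamma}$, so the result is again well-formed. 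Thus it suffices to fix $(\confC_1,\confC_2)\in\RR$, an extensional action $\alpha$ and a transition $\confC_1\exttrans{\alpha}\confC_1'$, and to exhibit $\confC_2\extTrans{\hat\alpha}\confC_2'$ with $\confC_1'\simeq\confC_2'$.

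For $\alpha=\tau$ and $\alpha=\sigma$ there is nothing to do: these are exactly Proposition~\ref{prop:tau} and Proposition~\ref{prop:delay.preservation}, which were in turn extracted from the strengthened reduction closure of Proposition~\ref{prop:reduction.isolation}. The work lies in the three remaining cases $\alpha\in\{c?v,\iota(c),\gamma(c,v)\}$, which I would treat uniformly through the distinguishing contexts $T_\alpha$ and their successful states $T^{\checkmark}_\alpha$. From $\confC_1\exttrans{\alpha}\confC_1'$ we get $\confC_1\extTrans{\alpha}\confC_1'$, and $\confC_1 | T_\alpha$ is well-formed (for $\alpha\in\{c?v,\iota(c)\}$ because $T_\alpha$ is a process, and for $\alpha=\gamma(c,v)$ because a $\gamma(c,v)$-action is enabled at $\confC_1$ only when $c$ is exposed there, which is precisely the side condition needed to place the active receiver of $T_{\gamma(c,v)}$). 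The relevant detection proposition (Proposition~\ref{prop:input.detection}, \ref{prop:exp.detection} or \ref{prop:delivery.detection}) then converts the extensional action into a concrete computation $\confC_1 | T_\alpha \red^{\ast} \confC_1' | T^{\checkmark}_\alpha$, all of whose steps are instantaneous except for a single timed step in the $\gamma(c,v)$ case. By contextuality of $\simeq$ we have $\confC_1 | T_\alpha \simeq \confC_2 | T_\alpha$, and iterating Proposition~\ref{prop:reduction.isolation} along this computation (all configurations involved being well-formed, by Lemma~\ref{lem:wf.preserved}) produces a matching computation $\confC_2 | T_\alpha \red^{\ast} \mathcal{D}$, of the same instantaneous/timed shape, with $\confC_1' | T^{\checkmark}_\alpha \simeq \mathcal{D}$.

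It then remains to identify $\mathcal{D}$, and this is the heart of the argument. I would reason with the fresh channels $\eureka$ and $\fail$. Since $\fail$ occurs neither in $\confC_1'$ nor in $T^{\checkmark}_\alpha$, we have $\confC_1' | T^{\checkmark}_\alpha \not\Downarrow_{\fail}$, hence $\mathcal{D}\not\Downarrow_{\fail}$ by barb preservation; but $\confC_2 | T_\alpha$ can always expose $\fail$ via the $\bcastzeroc{\fail}{\no}$ summand of $T_\alpha$, so along $\confC_2 | T_\alpha \red^{\ast}\mathcal{D}$ the leading choice of $T_\alpha$ must have been resolved away from that summand. Likewise $\confC_1' | T^{\checkmark}_\alpha \Downarrow_{\eureka}$, so $\mathcal{D}\Downarrow_{\eureka}$, and since $\eureka$ is fresh this forces the choice to have been resolved along the branch that eventually broadcasts on $\eureka$, i.e.\ by performing the activity that $T_\alpha$ is built to detect (a broadcast on $c$ for $\alpha=c?v$; the branch of the exposure test that requires $c$ to be idle for $\alpha=\iota(c)$; reception of the value $v$ on $c$ followed by the delayed exposure check on the restricted channel $d$ for $\alpha=\gamma(c,v)$). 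A direct inspection of the rules then shows that immediately after this resolution the test component equals $T^{\checkmark}_\alpha$, and that $T^{\checkmark}_\alpha$ has no instantaneous transition and ignores every input, so it survives unchanged the subsequent reductions of the tested component; hence $\mathcal{D} = \confC_2' | T^{\checkmark}_\alpha$ for a suitable $\confC_2'$, via a computation of the appropriate shape. The converse direction of the detection proposition now gives $\confC_2\extTrans{\alpha}\confC_2'$, and since the free channels of $T^{\checkmark}_\alpha$ are fresh for $\confC_1'$ and $\confC_2'$ and are still idle, Lemma~\ref{lem:fresh} removes the test from $\confC_1' | T^{\checkmark}_\alpha \simeq \confC_2' | T^{\checkmark}_\alpha$, yielding $\confC_1'\simeq\confC_2'$ and hence $(\confC_1',\confC_2')\in\RR$. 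The main obstacle is exactly this identification of $\mathcal{D}$: passing from the bare facts that $\mathcal{D}$ is $\simeq$-equivalent to $\confC_1' | T^{\checkmark}_\alpha$ and reachable from $\confC_2 | T_\alpha$ to the structural conclusion that its test component is precisely $T^{\checkmark}_\alpha$ while $\confC_2$ has meanwhile performed the intended weak action, which is what makes Lemma~\ref{lem:fresh} applicable and closes the induction.
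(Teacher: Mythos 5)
Your proposal is correct and follows essentially the same route as the paper: show that $\simeq$ on well-formed configurations is itself a bisimulation, dispatch $\tau$ and $\sigma$ via Propositions~\ref{prop:tau} and~\ref{prop:delay.preservation}, and handle $c?v$, $\iota(c)$, $\gamma(c,v)$ through the contexts $T_\alpha$, Proposition~\ref{prop:reduction.isolation}, the barbs on $\eureka$/$\fail$ to force the test into $T^{\checkmark}_\alpha$, the detection propositions, and Lemma~\ref{lem:fresh} to strip the test. The only addition beyond the paper's argument is your explicit check that well-formedness is preserved by each extensional action, which the paper leaves implicit.
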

\begin{proof}
It is sufficient to show that the relation 
\[
\Ssr \deff \{ \big( \conf {\Gamma_1}{W_1} \, , \, \conf {\Gamma_2}{W_2}\big): \q
\conf {\Gamma_1}{W_1} \simeq \conf {\Gamma_2}{W_2}\}
\]
is a  bisimulation. 
To do so, suppose that $\conf{\Gamma_1}{W_1} \red \conf{\Gamma_2}{W_2}$, 
and that $\conf{\Gamma_1}{W_1} \simeq \conf{\Gamma_2}{W_2}$. 
If $\alpha = \tau$ or $\alpha = \sigma$, the result follows directly from propositions  
\ref{prop:tau} and \ref{prop:delay.preservation}, respectively. 

Now suppose that $\alpha = \gamma(c,v)$ for some channel $c$ and value $v$. 
Let  $\conf{\Gamma_1}{W_1} \exttrans{\gamma(c,v)} \conf{\Gamma_1'}{W_1'}$; 
by Proposition \ref{prop:delivery.detection} it follows 
that $\conf{\Gamma_1}{W_1 | T_{\gamma(c,v)}} \red_{i}^{\ast}\red_{\sigma}\red_i^{\ast} 
\conf{\Gamma_1'}{W_1' | T^{\checkmark}_{\gamma(c,v)}}$.
By the contextuality of reduction barbed congruence, and by Proposition \ref{prop:reduction.isolation}, 
it follows that $\conf{\Gamma_1}{W_2 | T_{\gamma(c,v)}} \red_{i}^{\ast} \red_{\sigma} \red_{i}^{\ast} 
\confC_2$ for some $\confC_2$ such that $\conf{\Gamma_1'}{W_1' | T^{\checkmark}_{\gamma(c,v)}} 
\simeq \confC_2$. Let $\confC_2 = \conf{\Gamma_2'}{\widehat{W_2}}$; note that $\Gamma_1' \vdash \eureka : \cfree$ (recall that we assumed that 
$\eureka$ is a fresh channel), so that by Proposition \ref{prop:exposure} it follows 
that $\Gamma_2' \vdash \eureka : \cfree$. Further, $\conf{\Gamma_1'}{W_1' | T^{\checkmark}_{\gamma(c,v)}} \Downarrow_{\eureka}$ 
and $\conf{\Gamma_1'}{W_1'}{T^{\checkmark}_{\gamma(c,v)}} \not\Downarrow_{\fail}$; therefore, 
we also have that $\conf{\Gamma'_2}{\hat{W_2}} \Downarrow_{\eureka}$ and 
$\conf{\Gamma'_2}{\hat{W_2}} \not\Downarrow_{\fail}$. Now, by inspecting all the possible evolutions 
of the configuration $\conf{\Gamma_2}{W_2 | T_{\gamma(c,v)}}$ it follows that the sequence of reductions 
$\conf{\Gamma_1}{W_2 | T_{\gamma(c,v)}} \red_{i}^{\ast} \red_{\sigma} \red_{i}^{\ast} \conf{\Gamma'_2}{\hat{W_2}}$, 
where $\Gamma_2' \vdash \eureka : \cfree$, $\conf{\Gamma'_2}{\hat{W_2}} \Downarrow_{\eureka}$ and 
$\conf{\Gamma'_2}{\hat{W_2}} \not\Downarrow_{\fail}$, is possible only if 
$\hat{W_2} = W_2' | T^{\checkmark}_{\gamma(c,v)}$. Consequently, 
Proposition \ref{prop:delivery.detection} ensures that $\conf{\Gamma_2}{W_2} 
\extTrans{\gamma(c,v)} \conf{\Gamma_2'}{W_2'}$.

We also need to show that $\conf{\Gamma_1'}{W_1'} \simeq \conf{\Gamma_2'}{W_2'}$; 
but this follows immediately from Lemma \ref{lem:fresh} and the fact that 
$\conf{\Gamma_1'}{W_1' | T^{\checkmark}_{\gamma(c,v)}} \simeq \conf{\Gamma_2'}{W_2' | T^{\checkmark}_{\gamma(c,v)}}$.

It remains to check the cases $\alpha = c?v$ and $\alpha = \iota(c)$; these can be 
proved analogously to the previous case, using proposition \ref{prop:input.detection} and 
\ref{prop:exp.detection}, respectively, in lieu of Proposition \ref{prop:delivery.detection}.
\end{proof}

\section{Applications }
\label{sec:apps}

In this section, we show how our calculus CCCP can be 
used to model different interesting behaviours 
which arise at the MAC sub-layer \cite{macsurvey} 
of wireless networks. 
\MHc{
Then, 
we exploit our bisimulation proof technique 
to provide examples of behaviourally equivalent networks.
In particular we give some examples comparing the behaviour of
routing protocols and \emph{Time Division Multiplexing}.

We start with some simple examples. The first show that 
stations which do not transmit on unrestricted channels
can not be detected. To this end we use $\fsn{W}$ to denote the
set of unrestricted channel names in the code $W$ which have 
transmission occurrences. Formally $\fsn{W}$ is defined inductively on 
(a possibly open system term) $W$ 
as the least set such that 
\begin{eqnarray*}
\fsn{\nil} &=& \fsn{X} = \emptyset\\
\fsn{\bcast c v P} &=& \{c\} \cup \fsn{P}\\
\fsn{\tau.P} &=& \fsn{\sigma.P} = \fsn{\arcv c x P} = \fsn{\fix X P} = \fsn{P}\\
\fsn{P+Q} &=& \fsn{\matchb b P Q} = \fsn{\rcvtimec c x P Q} = \fsn{P} \cup \fsn{Q}\\
&&\\
\fsn{W_1 | W_2} &=& \fsn{W_1} \cup \fsn{W_2}\\
\fsn{\nu c:(n,v).W} &=& \fsn{W_1} \setminus \{c\}
\end{eqnarray*}

\begin{example}[Unobservable systems]
Consider a wireless system in which no station  can  broadcast on any free channel. 
Intuitively none of its behaviour should be observable. In CCCP this means
that the system should be behaviourally equivalent to the \emph{empty}
system $\nil$.

Formally consider the configuration $\conf{\Gamma}{\nil}$ where
$\Gamma$ is an arbitrary channel environment. This configuration has
non-trivial extensional behaviour. For example it is input enabled, and
so can perform all extensional actions of the form $c?v$. It can also 
perform $\sigma$ actions, indicating the passage of time. 

Now let $W$ be arbitrary station code such that $\fsn{W} = \emptyset$, that is
it can not broadcast on any free channel. The configuration $\conf{\Gamma}{W}$
has similar behaviour. Indeed let $\Ss$ be the relation 
\begin{align*}
  &\{ (\conf{\Gamma}{W},\, \conf{\Gamma}{\nil})\,\,|\,\,  \fsn{W} = \emptyset\}
\end{align*}
Then it is straightforward to show that $\Ss$ is a bisimulation in the
extensional LTS. Our soundness result therefore ensures that 
\[ \conf{\Gamma}{W} \; \simeq \; \conf{\Gamma}{\nil}  \]
whenever $\fsn{W} = 0$. 
\end{example}
}

Next we consider what happens when  a channel
\MHc{becomes permanently exposed}. This situation can be modelled by 
using two stations $s_0, s_1$ which repeatedly  
send a value along channel $c$; 
each broadcast performed by $s_1$ takes place 
before the transmission of $s_0$ ends, and 
vice versa. In this case we say that the 
channel $c$ is \emph{corrupted\/}. 
Clearly, if a \MHc{system} transmits only on corrupted 
\MHc{channels}; then it cannot be detected. 
\MHc{Let us see how this scenario} is reflected in our behavioural
theory.

\begin{example}[Noise obfuscates transmissions]
Let $v$ be a value such that 
$\deltav = 2$ and let 
$\text{Snd}(c)$ denote the code   $\fix X \bcastc c v X$, 
which continually broadcasts an arbitrary value $v$ along $c$. To model the two stations
$s_0$ and $s_1$ discussed informally above we use the code
$\text{Noise}(c) = \text{Snd}(c) | \sigma.\text{Snd}(c)$.

Then, consider a configuration\MHf{does this need to be well-formed?}
$\conf{\Gamma}{W}$ such that $\fsn{W} \subseteq \{c\}$; that is 
does not transmit on free channels different from $c$. 
\MHc{Then\[
\conf{\Gamma}{W | \text{Noise(c)}} 
\; \simeq \; \conf{\Gamma}{\text{Noise(c)}}  \]
To prove this, it is sufficient to exhibit  bisimulation
containing the pair of configurations 
$(\conf{\Gamma}{W | \text{Noise(c)}},\,\,
\conf{\Gamma}{\text{Noise(c)}})$. 

We use the following abbreviations:
\begin{align*}
\text{Noise}'(c) &= \sigma^2.\text{Snd}(c) | \sigma.\text{Snd}(c)\\
\text{Noise}''(c) &= \sigma.\text{Snd}(c) | \text{Snd}(c)\\
\text{Noise}'''(c) &= \sigma.\text{Snd}(c) | \sigma^2.\text{Snd}(c)
\end{align*}
Then let $\Ss$ denote the following set of pairs of configurations:
\[
\begin{array}{lcr}
\{(\conf{\Delta}{W | \text{Noise(c)}}&,& \conf{\Delta'}{\text{Noise(c)}}),\\ 
(\conf{\Delta}{W | \text{Noise}'(c)}&,& \conf{\Delta'}{\text{Noise}'(c)}), \\
(\conf{\Delta}{W | \text{Noise''}(c)}&,& \conf{\Delta'}{\text{Noise}''(c)}),\\ 
(\conf{\Delta}{W | \text{Noise'''}(c)}&,&\conf{\Delta'}{\text{Noise}'''(c)})\,\,\,| \\ 
\multicolumn{3}{c}{\Delta, \Delta' \vdash c: \cbusy, \fsn{W} \subseteq \{c\}\,\,\, \}}
\end{array}
\]
Then it is possible to check that $\Ss$ is a weak bisimulation in the 
extensional LTS. \MHf{Andrea: is this actually a strong bisimulation?}
}
At least intuitively, this is because 
in the extensional LTS all outputs fired along the obfuscated 
channel $c$ corresponds to internal actions; further, in the configurations 
included in $\Ss$,  channel $c$ is never released, so that neither 
$\iota(c)$-actions nor $\gamma(c,v)$-actions can be performed by any 
configuration included in $\Ss$.
\end{example}

The \emph{Carrier Sense Multiple Access} (CSMA)
scheme~\cite{standard802.11} is a widely used MAC-layer  protocol
in which a device senses the channel  (\emph{physical carrier sense})
before transmitting.
More precisely, if the channel is sensed free the sender starts transmitting
immediately, that is in the next instant of time\footnote{\MHc{Recall} that
in wireless systems channels are half-duplex.};
if the channel is busy, that is some other station is transmitting,
the device keeps listening  \MHc{to} the channel until it becomes
idle and then starts transmitting immediately. This
strategy is called \emph{1-persistent\/} CSMA 
\MHc{and} can be easily expressed in our calculus 
in terms of the following process:
\[
\bcastCSMA c v P = \fix X {\matchb {\expsd c} X {\bcastc c v P}}
\]
So, by definition CSMA \MHc{transmissions} are delayed whenever the channel 
is busy. 

\MHc{In the next example we prove a natural property of CSMA transmissions}.
\begin{example}[Delay in CSMA broadcast]
\MHc{Suppose  $\Gamma \vdash_{\mathrm{t}} c: n$ for some $n>0$. }
Then, for any $k \leq n+1$ 
\begin{align}\label{eq:csma}
\conf{\Gamma}{\bcastCSMA c v P} \; \simeq \; 
\conf{\Gamma}{\sigma^k.\bcastCSMA c v P }
\end{align}
\MHc{Intuitively,  since $\Gamma \vdash_{\mathrm{t}} = n$,}
the transmission of value $v$ in $\conf{\Gamma}{\bcastCSMA c v P}$  
can take place only after at least $n$ instants of time. 
The same happens in $\conf{\Gamma}{\sigma^k.\bcastCSMA c v P}$.

\MHc{
Formally, to prove (\ref{eq:csma})  we need to exhibit a
bisimulation $\Ss$ 
which contains all pairs of the form $( \conf{\Gamma}{\bcastCSMA c v P},\, \sigma^k.\bcastCSMA c v P )$,
where $\Gamma$ is such that $\Gamma \vdash_{\mathrm{t}} :n > 0$ for some $n$ satisfying  $k \leq (n+1)$. 
One possible $\Ss$ takes the form $\RR \cup \ID$ where $\ID$ is the identity relation over
configurations and $\RR$ is given by:\MHf{Andrea: What is this bisimulation?}
\[
  \RR = \{ (\conf{\Delta_n}{\bcastCSMA c v P}, \conf{\Delta_n}{\sigma^h.\bcastCSMA c v P}) \;|\; 
  \Delta_n \vdash_{\mathrm{t}} c: n, h \leq n\}
\]
}
\end{example}

\MHc{In} our calculus the 
network topology is \emph{assumed to be} flat. However, we can 
exploit the presence of multiple channels 
to model networks with a more complicated 
topological structure. 
The idea \MHc{is to associate a particular 
channel with a collection of  stations} which 
are in the same neighbourhood.
\begin{table}
\begin{align*}
     \begin{tikzpicture}
          \node[state](s){$s$}; 
          \node[state](r)[right=of s]{$r$}; 
          \node 			(env)[right=of r]{};
 \path[to]
       (s) edge[thick] (r)
       (r) edge[thick] (env);
    \end{tikzpicture}
&&
    \begin{tikzpicture}
          \node[state](s){$s$}; 
          \node[state](r)[right=of s]{$r$};
          \node[state](e)[below=of r]{$e$}; 
          \node				(env)[right=of r]{};  
 \path[to]
       (s) edge[thick,text=d] (r)
       (e) edge[thick] (r)
       (r) edge[thick] (env);
    \end{tikzpicture}
\\
{\mathcal N}_0
&&
{\mathcal N}_1
%
\end{align*}

\caption{A simple topology for a network}.
\label{fig:net.topology}
\end{table}

\begin{example}[Network Topology]
\label{ex:net.topology} 
Suppose that we want to model a network 
with two stations $s$, $r$ with the 
following features:
\begin{itemize}
\item the range of transmission of $s$ is 
too short to reach external agents, 
\item the station $r$ is in the range of 
transmission of $s$, 
\item the range of transmission of $r$ 
is long enough to \MHc{also} reach external agents. 
\end{itemize}
A graphical representation of the network  
we want to model is given \MHc{as ${\mathcal N}_0$  of Table~\ref{fig:net.topology}. }
We can model this network topology
\MHc{by using a specific restricted channel, say $d$, for the 
local communication between stations $s$ and $r$. 
In CCCP a wireless system running on ${\mathcal N}_0$ would therefore take the
form 
\begin{align*}
  \confC_0 = 
\conf{\Gamma}{\nu d:(0,\cdot).(S | R)}
\end{align*}
where 
\begin{itemize}
\item $S$ represents the code running at station $s$; it can therefore 
only broadcast and receive along the restricted channel $d$ (recall that we do not want 
station $s$ to be able to communicate directly with the external environment)

\item $R$ represents the code running at station $r$; it can only receive 
values along the restricted channel $d$ (since in ${\mathcal N}_0$ station $r$ can receive messages 
broadcast by station $r$, but not by the external environment), while it is free to broadcast on other channels (since 
station $r$ is able to broadcast messages to the external environment) 
\end{itemize}
As a specific example we could let $S$ denote the single broadcast
$\bcastzeroc d v$, and $R = \fix X {\rcvtimec d x {\bcastzeroc c x} {X}}$.
Then in the configuration $\confC_0$ the station $s$ broadcasts 
as a value and station $r$ acts as a forwarder; this behaviour is reminiscent of 
range repeaters in wireless terminology.  

Suppose now that we want to add a second station $e$ 
to the above network topology, so that 
\begin{itemize}
\item  broadcasts  from $e$ can be detected by $r$; this can be accomplished by 
allowing the process used to model station $e$ to broadcasts along a restricted 
channel $d$.

\item  broadcasts from $e$ can not reach  $s$, nor the external 
environment. For this to be true, it is sufficient to require that 
the process which models the behaviour of station $e$ can broadcast 
values only along the restricted channel $d$; further, in order for 
ensuring that the station $e$ cannot detect values broadcast by 
$s$, we require that the process used to represent station $e$ 
does not use receivers along channel $d$.

\MHf{Andrea: But $s$ can use channel $d$ and so I don't understand
why you say that broadcasts from $e$ can not reach $s$} 
\end{itemize}
 The network topology 
we wish to model is depicted as ${\mathcal N}_1$ in 
Table~\ref{fig:net.topology} and so a wireless system running 
on this network takes the form 
\begin{align*}
  &\confC_1 = \crest{d}{(0,\cdot)}.(S | R | E)
\end{align*}
where $E$ is the code running at station $e$. 
As an example we could take $E$ to be the faulty code
$\bcastzeroc d v + \tau.\nil$. 

Then in $\confC_1$ station $r$ still acts as a forwarder for station 
$s$; however station $e$ can non-\newline deterministically 
decide whether to corrupt the transmission from node 
$s$ to $r$, causing a collision. 

\medskip

Let  us assume that the transmission time of the value used in these networks, $v$,
satisfies   $\deltav = \delta_{\err}$.  
Then we can show
\begin{align*}
\confC_0 &\simeq  \conf{\Gamma}{\sigma^{\delta_v}.\bcastzeroc c v}\\
\confC_1 & \simeq  \conf{\Gamma}{\tau.\sigma^{\delta_v}.\bcastzeroc c v + 
\tau.\sigma^{\delta_v}.\bcastzeroc c  {\err}} 
\end{align*}
Intuitively the reasons for these equivalences are obvious.}
The transmission along channel $d$ is restricted in $\confC_0$, so  
it cannot be observed by the external environment. The only activity which 
can be observed is the broadcast of value $v$ along channel $c$, which takes 
place after $\delta_v$ instants of time. 
For $\confC_1$, a collision can happen along channel $d$, which is again 
restricted; the only activity that can be detected by the external environment 
is a transmission which takes place after $\delta_v$ instants of time. Such a 
transmission will contain either the value $v$ or an error message of length 
$\delta_v$.

\MHc{
The formal proof of these identities involves exhibiting two bisimulations, 
containing the relevant pairs of configurations. Here we exhibit a bisimulation 
for showing that $\confC_1 \simeq \conf{\Gamma}{\tau.\sigma^{\delta_v}.\bcastzeroc c v + 
\tau.\sigma^{\delta_v}.\bcastzeroc c  {\err}}$.}
For the sake of simplicity, let $\delta_\err = \deltav = 1$ and define the system terms 
\[
\begin{array}{lcl@{\hspace*{20mm}}lcl}
W &=& \nu d:(0,\cdot).(S | E | R) & W_s &=& \nu d:(1,v).(\sigma | E | \arcv c x {\bcastzeroc c x})\\ 
W_e &=& \nu d:(1,\err).(S | \sigma | \arcv c x {\bcastzeroc c x})& 
W' &=& \nu d:(0,\cdot).(S | \nil | R)\\
W'' &=& \nu d:(1,\err).(\sigma | \sigma | \arcv c x {\bcastzeroc c x})&
W_{\text{ok}} &=& \nu d:(0,\cdot).(\nil | \nil | \bcastzeroc c v)\\
W_{\err} &=& \nu d:(0,\cdot).(\nil | \nil | \bcastzeroc c {\err})&
W_{c} &=& \nu d:(0,\cdot).(\nil | \nil | \sigma)
\end{array}
\]

Then it is easy to show that the relation
\[
\begin{array}{lcllcrr}
\Ss &=& \{& (\conf{\Delta}{W} &,& \conf{\Delta}{\tau.\sigma.\bcastzeroc c v + \tau.\sigma.\bcastzeroc c \err})&,\\
&&&(\conf{\Delta}{W_s}&,& \conf{\Delta}{\tau.\sigma.\bcastzeroc c v + \tau.\sigma.\bcastzeroc c \err})&,\\
&&&(\conf{\Delta}{W_e}&,& \conf{\Delta}{\sigma.\bcastzeroc c \err})&,\\
&&&(\conf{\Delta}{W'}&,&\conf{\Delta}{\sigma.\bcastzeroc c v})&,\\
&&&(\conf{\Delta}{W''}&,&\conf{\Delta}{\sigma.\bcastzeroc c {\err}})&,\\
&&&(\conf{\Delta}{W_{\text{ok}}}&,& \conf{\Delta}{\bcastzeroc c v})&,\\
&&&(\conf{\Delta}{W_{\err}}&,& \conf{\Delta}{\bcastzeroc c {\err}})&,\\
&&&(\conf{\Delta}{W_c}&,&\conf{\Delta}{\sigma})&\\
&&|& \Delta && \text{arbitrary channel environment}&\}
\end{array}
\]
is a weak bisimulation.
\MHf{Andrea: One of these bisimulations needs to be described.}
\end{example}

\bigskip
The next example shows how the 
TDMA modulation technique \cite{tanenbaum} \MHc{can be described in CCCP}.  \emph{Time
  Division Multiple Access} (TDMA) is a type of Time Division
Multiplexing, where instead of having one transmitter connected to one
receiver, there are multiple transmitters. TDMA is used in the digital
2G cellular systems such as \emph{Global System for Mobile
  Communications} (GSM). TDMA allows several users to share the same
frequency channel by dividing the signal into different time
slots. The users transmit in rapid succession, one after the other,
each using his own time slot. This allows multiple stations to share
the same transmission medium (e.g. radio frequency channel) while
using only a part of its channel capacity.

As a simple example let us describe how two messages $v_0$ and $v_1$
can be delivered in TDMA style; for simplicity, we assume  $\delta_{v_0} = \delta_{v_1} = 2$.   
The main idea here is 
to \MHc{split each of these values into two packets of length 
one, transmit the packets individually, which will then} 
be concatenated together before being forwarded to the 
external environment. So let us assume values 
$v_0^0, v_0^1, v_1^0, v_1^1$, 
each of which requires one time instant to be transmitted, 
and  a binary operator $\circ$ for composing messages such that 
\begin{eqnarray*}
v_0^0 \circ v_0^1 &=& v_0\\
v_1^0 \circ v_1^1 &=& v_1\\
v \circ \err &=& \err \circ v = \err
\end{eqnarray*}
where $v$ is an arbitrary value; in this case we assume that $\delta_{\err} = 2$.

More specifically, for this example we assume four different 
stations, $s_0, s_1, r_0, r_1$, running the code $\hat{S}_0,\, \hat{S}_1,\,
\hat{R}_0,\,\hat{R}_1$ respectively. 
\begin{table}
\begin{center}
\pgfdeclarelayer{background}
\pgfdeclarelayer{foreground}
\pgfsetlayers{background,main,foreground}

\tikzstyle{sensor}=[draw, fill=blue!20, text width=2em, 
    text centered, minimum height=2em]
\tikzstyle{receiver}=[draw, fill=red!20, text width=2em, 
    text centered, minimum height=2em]
\tikzstyle{ann} = [above, text width=5em]
\tikzstyle{naveqs} = [sensor, text width=2em, fill=red!20, 
    minimum height=1em, rounded corners]
\def\blockdist{1.0}
\def\edgedist{2.5}

\begin{tikzpicture}
    \node (s00) [sensor] {$!v_0^0$};
    \path (s00)+(\blockdist,0) node (s01) [sensor]{$\sigma$};
    \path (s01)+(\blockdist,0) node (s02) [sensor]{$!v_0^1$};
    \path (s02)+(\blockdist,0) node (s03) [sensor]{$\sigma$};
    \node (s0) [below of=s01]{$s_0$};
    \begin{pgfonlayer}{background}
    \path (s00.north west)+(-0.2,0.2) node (a) {};
        \path (s0.south -| s03.east)+(+0.2,-0.2) node (b) {};
        \path[fill=blue!10,rounded corners, draw=black!50]
            (a) rectangle (b);
    \end{pgfonlayer}
    \path (s02)+(4*\blockdist,0) node (s10) [sensor] {$\sigma$};
    \path (s10)+(\blockdist,0) node (s11) [sensor]{$!v_1^0$};
    \path (s11)+(\blockdist,0) node (s12) [sensor]{$\sigma$};
    \path (s12)+(\blockdist,0) node (s13) [sensor]{$!v_1^1$};
    \node (s1) [below of=s11]{$s_1$};
    \begin{pgfonlayer}{background}
    \path (s10.north west)+(-0.2,0.2) node (a) {};
        \path (s1.south -| s13.east)+(+0.2,-0.2) node (b) {};
        \path[fill=blue!10,rounded corners, draw=black!50]
            (a) rectangle (b);
    \end{pgfonlayer}
    \path (s00)+(0,-4*\blockdist) node (r00) [receiver] {$?x$};
    \path (r00)+(\blockdist,0) node (r01) [receiver]{$\sigma$};
    \path (r01)+(\blockdist,0) node (r02) [receiver]{$?y$};
    \path (r02)+(\blockdist,0) node (r03) [receiver]{$\sigma$};
    \node (r0) [below of=r01]{$r_0$};
    \begin{pgfonlayer}{background}
    \path (r00.north west)+(-0.2,0.2) node (a) {};
        \path (r0.south -| r03.east)+(+0.2,-0.2) node (b) {};
        \path[fill=red!10,rounded corners, draw=black!50]
            (a) rectangle (b);
    \end{pgfonlayer}
		\path (s10)+(0,-4*\blockdist) node (r10) [receiver] {$\sigma$};
    \path (r10)+(\blockdist,0) node (r11) [receiver]{$?x$};
    \path (r11)+(\blockdist,0) node (r12) [receiver]{$\sigma$};
    \path (r12)+(\blockdist,0) node (r13) [receiver]{$?y$};
    \node (r1) [below of=r11]{$r_1$};
    \begin{pgfonlayer}{background}
    \path (r10.north west)+(-0.2,0.2) node (a) {};
        \path (r1.south -| r13.east)+(+0.2,-0.2) node (b) {};
        \path[fill=red!10,rounded corners, draw=black!50]
            (a) rectangle (b);
    \end{pgfonlayer}
	\path [draw, ->] (s0.south)+(0,-0.2) -- node [left] {$d$} 
        (r01.north)+(0,0.2);
  \path [draw, ->] (s1.south)+(0,-0.2) -- node [left] {$d$} 
        (r11.north)+(0,0.2);
\end{tikzpicture}
\caption{Two transmitting stations using different 
time slots to broadcast values\label{fig:tdma}}
\end{center}

\end{table}
\MHf{MM: In Figure~\ref{fig:tdma} the second $s_0$ must be $s_1$ \MHc{MH: I don't 
understand this comment by Massimo. }}
The network we consider for modelling the TDMA transmission 
is \MHc{then} given by
\[
\confC_0 = \conf{\Gamma}{\crest{d}{(0,\cdot)} \big( \hat{S}_0 | \hat{S}_1 | 
\hat{R}_0 | \hat{R}_1 \big)}
\]
where 
\begin{eqnarray*}
\hat{S}_0 &=& \bcastc d {v_0^0} {\sigma.\bcastzeroc d {v_0^1}}\\
\hat{S}_1 &=& \sigma.\bcastc d {v_1^0} {\sigma.\bcastzeroc d {v_1^1}}\\
\hat{R}_0 &=& \rcvtimec d x {\sigma.\rcvtimec d y {\sigma.\bcastzeroc c {x \circ y}}{}}{}\\
\hat{R}_1 &=& \sigma.\rcvtimec d x {\sigma.\rcvtimec d y {\sigma^2.\bcastzeroc c {x \circ y}}{}}{}
\end{eqnarray*}
The intuitive behaviour of this network is 
depicted in Table~\ref{fig:tdma}.
Station $s_0$ wishes to broadcast value $v_0$, while 
$s_1$ wishes to broadcast value $v_1$. They both use 
the same (restricted) channel $d$ to broadcast their 
respective values; however, both stations split 
the value to be broadcast in two packets. Value 
$v_0$ is split in $v_0^0$ and $v_0^1$, while 
$v_1$ is split in $v_1^0$ and $v_1^1$. 

The two stations run a TDMA protocol with a time 
frame of length two. Station $s_0$ takes control 
of the first time frame, hence transmitting its 
two packets $v_0^0$ and $v_0^1$  in the first and the 
third time slot, respectively. 
Station $s_1$  takes control of 
the second time frame; hence the two packets $v_1^0$ 
and $v_1^1$ are broadcast in the 
second and fourth time slot, respectively. 

Stations $r_0$ and $ r_1$ wait to collect the values broadcast 
along channel $d$. However, the former is interested 
only in packets sent in the first time frame, 
while the latter detects only values 
sent in the second time frame. 
At the end of their associated time frame the 
 stations $r_0$ and $r_1$  have  received 
two packets which are concatenated together and then
 broadcast to the external environment along channel $c$. Note that 
station $r_1$ is a little slower than $r_0$, for we have added 
a delay of two time units before broadcasting the concatenated values.

\begin{table}
\begin{center}
\begin{tikzpicture}
          \node[state](s0){$s_0$};
          \node[state](s1)[above=of s0]{$s_1$}; 
          \node[state](r)[right=of s1]{$r$}; 
          \node[white](env)[right=of s0]{};
 \path[to]
       (s0) edge[thick] (env)
       (s1) edge[thick] (r)
       (r) edge[thick] (env);
    \end{tikzpicture}
\caption{Forwarding two messages to the external environment\label{tab:top2}}
\end{center}
\end{table}

\MHc{As an alternative to TDMA, the two}
values $v_0$, $v_1$ 
can be also be delivered to the external environment by means of 
a simple routing, along the lines suggested in Example~\ref{ex:net.topology}. 
\MHc{Here we} consider the configuration 
\[
\MHc{   \confC_1 = \conf{\Gamma}{\crest{d}{(0,\cdot)}.(S_0 | S_1 | R)}   }
\]
where 
\begin{eqnarray*}
S_0 &=& \sigma^4.\bcastzeroc c {v_0}\\
S_1 &=& \sigma^4.\bcastzeroc d {v_1}\\
R &=& \rcvc d x {\bcastzeroc c x} 
\end{eqnarray*} 

Intuitively, 
the configuration $\confC_1$ models three 
wireless stations $s_0,s_1,r$, running
 the code $S_0,S_1$,\\$R$, respectively, and  connected as in 
Table~\ref{tab:top2}. 
Station $s_0$ waits four instants of time, 
then it broadcasts value $v_0$ directly to 
the external environment via the free channel $c$. Similarly, 
after four instants of time the 
station $s_1$ broadcasts value $v_1$ 
to station $r$ via the restricted channel $d$. Finally, 
$r$ forwards the message to the external 
environment via the free channel $c$.  

From the point of view of the external environment 
the configuration $\confC_1$ performs the following 
activities:
\begin{itemize}
\item it remains idle for the first four instants of time
\item it transmits value $v_0$ in the fifth and sixth time instants
\item it transmits value $v_1$ in the seventh and eighth time instants. 
\end{itemize}
In this manner, \MHc{at least informally the observable behaviour of $\confC_1$, which uses direct routing, is the same as 
that of $\confC_0$, which uses TDMA. }

Formally, we can prove 
\begin{align}\label{eq:zero.one}
   & \confC_0 \simeq \confC_1 
\end{align}
\MHc{However, instead of proving this by giving a bisimulation containing this pair of configurations,
we prove them individually bisimilar to a simpler specification. Let ${\mathcal S}_1$ denote the 
configuration $\conf{\Gamma}{S_1}$ where $S_1$ is the code\MHf{Andrea: Can you describe these bisimulations? Or at least one
of them?} 
\begin{align*}
  &\sigma^4. \bcastc c {v_0} \bcastc c {v_1}
\end{align*}
Then we can show that $\confC_0 \approx {\mathcal S}_1$ and  $\confC_1 \approx {\mathcal S}_1$, from
which (\ref{eq:zero.one}) follows by soundness.
}
Let us show that $\confC_0 \simeq {\mathcal S}_1$; for the sake of simplicity, 
it will be convenient to define the following system terms: 
\[
\begin{array}{lcr@{\hspace*{20mm}}lcl}
\hat{S}_0^n&=& \sigma^n.\bcastzeroc d {v_0^1}&
\hat{S}_1' &=& \bcastc d {v_1^0} {\sigma.\bcastzeroc d {v_1^1}}\\
\hat{S}_1^n &=& \sigma^n.\bcastzeroc d {v_1^1} &
(\hat{R}_0)^\text{act} &=& \arcv d x {\sigma.\rcvtimec d y {\sigma.\bcastzeroc c {x \circ y}}{}}\\
\hat{R}_0' &=& \rcvtimec d y {\sigma.\bcastzeroc c {v_0^0 \circ y}}{}&
(\hat{R}_0')^{\text{act}} &=& \arcv d y {\sigma.\bcastzeroc c {v_0^0 \circ y}}\\
\hat{R}_0^f &=& \bcastzeroc c {v_0^0 \circ v_0^1}&
\hat{R}_1' &=& \rcvtimec d x {\sigma.\rcvtimec d y {\sigma^2.\bcastzeroc c {x \circ y}}{}}{}\\
(\hat{R}_1')^{\text{act}} &=& \arcv d x {\sigma.\rcvtimec d y {\sigma^2.\bcastzeroc c {x \circ y}}{}} & 
\hat{R}_1'' &=& \rcvtimec d y {\sigma^2.\bcastzeroc c {v_1^0 \circ y}}{}\\
(\hat{R}_1'')^{\text{act}} &=& \arcv d y {sigma^2.\bcastzeroc c {v_1^0 \circ y}}& 
\hat{R}_1^f &=& \bcastzeroc c {v_1^0 \circ v_1^1}\\
W_n &=& \sigma^n.\bcastc c {v_0} {\bcastzeroc c {v_1}}
\end{array}
\]
Then the relation 
\[
\begin{array}{lcllcrr}
\RR &=& \{& (\conf{\Delta}{\nu d:(0,\cdot).(\hat{S}_0 | \hat{S}_1 | \hat{R}_0 | \hat{R}_1)} &,& \conf{\Delta}{W_4})&,\\
&&&(\conf{\Delta}{\nu d:(1,v_0^0).(\hat{S}_0^2 | \hat{S}_1 | \hat{R}_0^{\text{act}} | \hat{R}_1)}&,&\conf{\Delta}{W_4})&,\\
&&&(\conf{\Delta}{\nu d:(0,\cdot).(\hat{S}_0^1 | \hat{S}_1' | \sigma.\hat{R}_0' | \hat{R}_1')}&,&\conf{\Delta}{W_3})&,\\
&&&(\conf{\Delta}{\nu d:(1,v_1^0).(\hat{S}_0^1 | \hat{S}_1^2 | \sigma.\hat{R}_0' | (\hat{R}_1')^\text{act})}&,&\conf{\Delta}{W_3})&,\\
&&&(\conf{\Delta}{\nu d:(0,\cdot).(\bcastzeroc d {v_0^1} | \hat{S}_1^1 | \hat{R}_0' | \sigma.\hat{R}_1'')}&,&\conf{\Delta}{W_2})&,\\
&&&(\conf{\Delta}{\nu d:(1,v_0^1).(\sigma | \hat{S}_1^1 | (\hat{R}_0')^{\text{act}} | \sigma.\hat{R}_1'')}&,&\conf{\Delta}{W_2})&,\\
&&&(\conf{\Delta}{\nu d:(0,\cdot).(\nil | \bcastzeroc c {v_1^1} | \sigma.\hat{R}_0^f | \hat{R}_1'')}&,&\conf{\Delta}{W_1})&,\\
&&&(\conf{\Delta}{\nu d:(1,v_1^1).(\nil | \bcastzeroc c {v_1^1} | \sigma.\hat{R}_0^f | (\hat{R}_1'')^{\text{act}})}&,&\conf{\Delta}{W_1})&,\\
&&&(\conf{\Delta}{\nu d:(0,\cdot).(\nil | \nil | \bcastzeroc c {v_0} | \sigma^2.\bcastzeroc c {v_1})}&,&\conf{\Delta}{\bcastc c {v_0} {\bcastzeroc c {v_1}}})&,\\
&&&(\conf{\Delta}{\nu d:(0,\cdot).(\nil | \nil | \sigma^2 | \sigma^2.\bcastzeroc c {v_1})}&,&\conf{\Delta}{\sigma^2. {\bcastzeroc c {v_1}}})&,\\
&&&(\conf{\Delta}{\nu d:(0,\cdot).(\nil | \nil | \sigma | \sigma.\bcastzeroc c {v_1})}&,&\conf{\Delta}{\sigma. {\bcastzeroc c {v_1}}})&,\\
&&&(\conf{\Delta}{\nu d:(0,\cdot).(\nil | \nil | \nil | \bcastzeroc c {v_1})}&,&\conf{\Delta}{{\bcastzeroc c {v_1}}})&,\\
&&&(\conf{\Delta}{\nu d:(0,\cdot).(\nil | \nil | \nil | \sigma^2)}&,&\conf{\Delta}{{\sigma^2}})&,\\
&&&(\conf{\Delta}{\nu d:(0,\cdot).(\nil | \nil | \nil | \sigma)}&,&\conf{\Delta}{{\sigma}})&,\\
&&&(\conf{\Delta}{\nu d:(0,\cdot).(\nil | \nil | \nil | \nil)}&,&\conf{\Delta}{{\nil}})&,\\
&&|& \Delta \text{ arbitrary channel environment}&&&\}
\end{array}
\]

is a bisimulation. 
Below we also
show  that $\confC_1 \simeq {\mathcal S}_1$; for the sake of 
simplicity, define the following terms:
\[
\begin{array}{lcr@{\hspace*{40mm}}lcl}
S_0^n &=& \sigma^n.\bcastzeroc c {v_0}& 
S_1^n &=& \sigma^n.\bcastzeroc d {v_1}\\
R' &=& \arcv d x {\bcastzeroc c x}&
W_n &=& \sigma^n.\bcastc c {v_0} {\bcastzeroc c {v_1}}
\end{array}
\]
for any $n \in \mathbb{N}$. 
Then the relation 
\[
\begin{array}{lcllcrr}
\RR' &=& \{& (\conf{\Delta}{\nu d:(0,\cdot).(S_0^n | S_1^n | R)} &,& \conf{\Delta}{W_n})&,\\
&&&(\conf{\Delta}{\nu d:(0,\cdot).(\sigma^2 | \bcastzeroc d {v_1} | R)}&,&\conf{\Delta}{\sigma^2.\bcastzeroc c {v_1}})&,\\
&&&(\conf{\Delta}{\nu d:(2,v_1).(\bcastzeroc c {v_0} | \sigma^2 | R')}&,&\conf{\Delta}{\bcastc c {v_0} {\bcastzeroc c {v_1}}})&,\\
&&&(\conf{\Delta}{\nu d:(2,v_1).(\sigma^2 | \sigma^2 | R')}&,&\conf{\Delta}{\sigma^2.\bcastzeroc c {v_1}})&,\\
&&&(\conf{\Delta}{\nu d:(1,v_1).(\sigma | \sigma | R')}&,& \conf{\Delta}{\sigma.\bcastzeroc c {v_1}})&,\\
&&&(\conf{\Delta}{\nu d:(0,\cdot).(\nil | \nil | \bcastzeroc c {v_1})}&,& \conf{\Delta}{\bcastzeroc c {v_1}})&|\\
&&|& \Delta \text{ arbitrary channel environment}&&&\}
\end{array}
\]
is a relation which contains the most relevant couples needed for showing that $\confC_1 \approx {\mathcal S}_1$.
\medskip

\begin{example}
\MHc{As a final example we can modify the} 
behaviour of the two configurations $\confC_0$ and 
$\confC_1$ seen above \MHc{by adding the}
possibility of getting a \emph{collision} when delivering values $v_0,v_1$ to the external environment. 
In the routing case, this is accomplished by requiring 
that both stations $s_0, s_1$ can either broadcast 
their value directly to the external 
environment or to the forwarder node $r$, 
while in the TDMA case it is sufficient to allow 
both the stations $s_0$, $s_1$ to non-deterministically 
choose the time slot to be used to broadcast packets. 

To this end, let 
\begin{eqnarray*}
S_0^{\mathsf{c}} &=& \tau.\sigma^4.\bcastzeroc c {v_0} + 
\tau.\sigma^4.\bcastzeroc d {v_0}\\
S_1^{\mathsf{c}} &=& \tau.\sigma^4.\bcastzeroc c {v_1} + 
\tau.\sigma^4.\bcastzeroc d {v_1}\\[5pt]
\hat{S}_0^{\mathsf{c}} &=& \bcastc d {v_0^0} {\sigma.\bcastzeroc d {v_0^1}}+ 
\tau.\sigma.\bcastc d {v_0^0} {\sigma.\bcastzeroc d {v_0^1}}\\
\hat{S}_1^{\mathsf{c}} &=& \bcastc d {v_1^0} {\sigma.\bcastzeroc d {v_1^1}}
+\tau.\sigma.\bcastc d {v_1^0} {\sigma.\bcastzeroc d {v_1^1}}\\
\end{eqnarray*} 
and consider the configurations 
\begin{eqnarray*}
\MHc{\confC_1^{\mathsf{c}}} &=& \conf{\Gamma}{\crest{d}{(0,\cdot)}.(S_0^{\mathsf{c}} | S^{\mathsf{c}}_1 | R)}\\
\MHc{\confC_0^{\mathsf{c}}} &=& \conf{\Gamma}{\crest{d}{(0,\cdot)}.(\hat{S}^{\mathsf{c}}_0 | \hat{S}^{\mathsf{c}}_1 | \hat{R}_0 | \hat{R}_1)}
\end{eqnarray*}

It is not difficult to \MHc{see informally that} the observable behaviour of these 
two configurations is the same.  \MHc{Specifically}
\begin{itemize}
\item either value $v_0$ is broadcast in the fifth and sixth 
time slots and $v_1$ is broadcast in the  seventh and eighth instants of time slots, or
\item value $v_1$ is broadcast in the fifth and sixth time slots, 
while value $v_0$ is broadcast in the seventh and eighth time slots, or 
\item a collision occur in the fifth and sixth time slots, or
\item a collision occur in the seventh and eighth time slots.
\end{itemize}
\MHc{
This informal behaviour can be described by the term\MHf{Andrea: can you actually
describe $S_2$ and then describe one of the bisimulations?} 
\[
\begin{array}{lcll}
  S_2 &=& \tau.\sigma^4.\bcastc c {v_0} {\bcastzeroc c {v_1}}&+\\
  && \tau.\sigma^4.\bcastc c {v_1} {\bcastzeroc c {v_0}}&+\\
  && \tau.\sigma^4.\bcastzeroc c {\err}&+\\
  && \tau.\sigma^6.\bcastzeroc c {\err}&
\end{array}
\]
and once more we can exhibit  bisimulations to establish 
$\conf{\Gamma}{S_2} \approx \confC_0^{\mathsf{c}}$ and 
$\conf{\Gamma}{S_2} \approx \confC_1^{\mathsf{c}}$. 
Then soundness again ensures that }
\[\confC_0^{\mathsf{c}} 
\; \simeq \; 
\confC_1^{\mathsf{c}}
\]
\end{example}

\section{Conclusions and related work}
\label{sec:conclusions}
\leaveout{
We end this paper by performing a comparison with the most relevant related work.

We start with the literature on process calculi for wireless systems. 
Nanz and Hankin~\cite{nanz} have introduced  a calculus for Mobile 
Wireless Networks (CBS$^{\sharp}$), relying on graph representation of node localities.
The main goal of the paper is to  present a framework for specification and security analysis of communication protocols for mobile wireless networks. 
Merro~\cite{Merro07}  has proposed an untimed process calculus for 
Mobile Ad Hoc Networks with 
 a labelled characterisation of reduction barbed congruence; some of 
our algebraic properties already appear in \cite{Merro07}. 
Godskesen~\cite{Godskesen07}  has proposed  a calculus for mobile 
ad hoc networks (CMAN). 
The paper proves 
a characterisation of reduction barbed congruence in terms of a
 contextual bisimulation. It also contains a formalisation of an attack on
 the cryptographic routing protocol ARAN.  
Singh, Ramakrishnan and Smolka \cite{SRS06} 
have proposed the $\omega$-calculus, a conservative extension of the 
$\pi$-calculus. A key feature of the $\omega$-calculus is 
the separation of a node's communication and computational behaviour 
from the description of its physical transmission range. 
The authors provide a labelled transition semantics and  
a bisimulation in ``open'' style. The
$\omega$-calculus is then used for modelling the AODV 
routing protocol.  
Ghassemi et al.\ \cite{GhasWanFok08} have proposed
a process algebra for mobile ad hoc networks (RBPT) where,  
topology changes are implicitly modelled in the 
(operational) semantics rather than 
in the syntax. The authors propose a notion of
bisimulation for networks parametrised on a set of 
topology invariants that must be respected by equivalent 
networks. This work in then refined in \cite{GhasWanFok09}
where the authors propose an equational theory for an extension 
of RBPT. Godskesen and Nanz~\cite{GodNanz09} 
have proposed a 
simple timed calculus for wireless systems to express a
wide range of mobility models. 
Kouzapas and Philippou~\cite{KouzapasP11} have developed a theory of confluence for
 a calculus of dynamic networks and use their machinery to verify a leader-election algorithm for mobile ad hoc networks.
Borgstr\"om et al. \cite{Borgstrom_etal11} have proposed an extension 
of the $\pi$-calculus to model ad-hoc routing protocols, called LUNAR, 
verifying a basic reachability property.  
All the previous calculi abstract from the presence of interferences. Lanese and Sangiorgi~\cite{LaneseS10} have instead proposed the CWS calculus, 
a lower level untimed calculus
 to describe interferences in wireless systems. In their operational
semantics there is a 
separation between transmission beginning and transmission ending. 
In \cite{Wang13}, Wang et al. introduce a generalisation of CWS with timed communication and node mobility.
Another timed generalisation of CWS has been proposed in \cite{MerroBS11} to 
express MAC-layer protocols such as CSMA/CA; the authors propose
a bisimilarity which is proved to be sound but not complete with respect
to a notion of reduction barbed congruence. The current work wants
to be a simplification and a generalisation of \cite{MerroBS11}.


None of the calculi mentioned above, except for \cite{GodNanz09,MerroBS11}, deals with time, although there is an extensive literature on timed process algebra. From a purely syntactic point of 
view, the earliest proposals are extensions of the three main process algebras, 
ACP, CSP and CCS. For example, \cite{BB89} presents a real-time extension of 
ACP, \cite{Re88} contains a denotational model for a timed extension of CSP, 
while CCS is the starting point for \cite{MT90a}. 
In \cite{BB89} and \cite{Re88} time is real-valued, and at least semantically, 
associated directly with actions. The other major approach to representing 
time is to introduce special actions to model the passage of time, which 
the current paper shares with \cite{Gr89,BaBe96,MT90a,Sifakis94} and \cite{Yi90, Yi91}, 
although the basis for all those proposals may be found in \cite{BC84}.
The current paper shares many of the assumptions of the languages 
presented in these papers. For example, all the papers above 
assume that actions are instantaneous and only the extension of ACP 
presented in \cite{Gr89} does not incorporate time determinism; however 
maximal progress is less popular and patience is even rarer. 

More recent works on timed process algebra include the following papers. 
Aceto and Hennessy~\cite{AceHen93} have presented a simple process 
algebra where time emerges in the definition of a \emph{timed observational equivalence},
assuming that beginning and termination of actions are distinct events 
which can be observed.
Hennessy and Regan~\cite{HennessyR95}  have proposed a timed version of CCS
enjoying time determinism, maximal progress and patience. 
Our action $\sigma$ takes inspiration from theirs. 
The authors have developed a semantic theory based on testing and characterised in terms of a particular kind of ready traces.
Prasad~\cite{Pra96} has proposed a timed variant of his 
CBS~\cite{CBS}, called TCBS.
In TCBS a time out can force a process wishing to speak to remain idle for a specific interval of time; this corresponds to have a priority. 
TCBS also assumes time determinism and maximal progress. 
Corradini et al.~\cite{CFP01}  deal with \emph{durational actions}
 proposing a framework relying on the notions of reduction and observability 
to naturally  incorporate timing information in terms of process interaction.
Our definition of timed reduction barbed congruence takes inspiration from 
theirs.
Corradini and Pistore~\cite{CorPist01} have studied durational actions to describe and reason about the performance of systems. Actions have lower and upper time bounds, specifying their possible different durations. Their 
\textit{time equivalence} refines the untimed one.
Baeten and Middelburg~\cite{BaMi02} have proposed 
several timed process algebras treated 
in a common framework, and related by embeddings and conservative extensions
relations. These process algebras,  $\mathrm{ACP^{sat}}$, $\mathrm{ACP^{srt}}$, 
$\mathrm{ACP^{dat}}$ and  $\mathrm{ACP^{drt}}$, allow the execution of two 
or more actions consecutively at the same point in time, separate the execution
of actions from the passage of time, and consider actions to have no duration.
The process algebra $\mathrm{ACP^{sat}}$ is a real-time process algebra with 
absolute time, $\mathrm{ACP^{srt}}$ is a real-time process algebra with
relative time. Similarly, $\mathrm{ACP^{dat}}$ and  $\mathrm{ACP^{drt}}$ are 
discrete-time process algebras with absolute time and relative time, respectively. In these process algebra the focus is on unsuccessful termination or 
deadlock. 
In \cite{BaRe04} Baeten and Reniers extend the framework of \cite{BaMi02} to model successful
termination for the relative-time case. 
Laneve and Zavattaro~\cite{LaneveZ05} have proposed a timed extension of 
$\pi$-calculus 
where time proceeds asynchronously at the network level, while it is constrained by the local urgency at the process level. They propose a timed bisimilarity
whose discriminating is weaker when local urgency
is dropped. 
}

In this paper we have given a behavioural theory of wireless systems
at the MAC level.  In our framework individual wireless stations
broadcast information to their neighbours along virtual channels.
These broadcasts take a certain amount of time to complete, and are
subject to collisions.  If a broadcast is successful a recipient may
choose to ignore the information it contains, or may act on it, in
turn generating further broadcasts. We believe that our reduction
semantics, given in Section~\ref{sec:calculus}, captures much of the
subtlety of intensional MAC-level behaviour of wireless systems.

Then based on this reduction semantics we defined a natural contextual
equivalence between wireless systems which captures the intuitive idea
that one system can be replaced by another in a larger network without
affecting the observable behaviour of the original network. In the main result of
the paper, we then gave a sound and complete characterisation of this
behavioural equivalence in terms of \emph{extensional actions}. This
characterisation is important for two reasons. Firstly it gives an
understanding of which aspects of the intensional behaviour is important
from the point of view of external users of these wireless systems. Secondly
it gives a powerful sound and complete co-inductive proof method for demonstrating
that two systems are behaviourally equivalent. We have also demonstrated the
viability of this proof methodology by a series of examples.

Let us now examine some relevant related work.  We start with the
literature on process calculi for wireless systems.  Nanz and
Hankin~\cite{NaHa06} have introduced the first (untimed) calculus for Mobile
Wireless Networks (CBS$^{\sharp}$), relying on a graph representation
of node localities.  The main goal of that paper is to present a
framework for specification and security analysis of communication
protocols for mobile wireless networks.  Merro~\cite{Merro07} has
proposed an untimed process calculus for mobile ad-hoc networks with a
labelled characterisation of reduction barbed congruence, while
\cite{Godskesen07} contains a calculus called CMAN, also with mobile
ad-hoc networks in mind. This latter paper also gives a
characterisation of reduction barbed congruence, this time in terms of
a contextual bisimulation. It also contains a formalisation of an
attack on the cryptographic routing protocol ARAN. Kouzapas and
Philippou~\cite{KouzapasP11} have developed a theory of confluence for
a calculus of dynamic networks and they use their machinery to verify
a leader-election algorithm for mobile ad hoc networks.

Singh, Ramakrishnan and Smolka \cite{SRS06} have
proposed the $\omega$-calculus, a conservative extension of the
$\pi$-calculus. A key feature of the $\omega$-calculus is the
separation of a node's communication and computational behaviour from
the description of its physical transmission range.  The authors
provide a labelled transition semantics and a bisimulation in \emph{open}
style. The $\omega$-calculus is then used for modelling the AODV ad-hoc 
routing protocol. Another extension of the $\pi$-calculus for modelling
mobile wireless systems may be found in \cite{Borgstrom_etal11}; the calculus
is used to verify  reachability properties of the ad-hoc routing protocol
LUNAR.  
Fehnker et al.~\cite{Fehnker_etal2012}
have proposed a process algebra for wireless mesh networks that combines novel treatments of local broadcast, conditional unicast and data structures. In this framework, they also model the AODV routing protocol and (dis)prove crucial properties such as loop freedom and packet delivery. 
Vigo et al.~\cite{Vigo_etal2013}
have proposed a calculus of broadcasting processes that enables to reason about unsolicited messages and lacking of expected communication. Moreover, standard cryptographic mechanisms can be implemented in the calculus via term rewriting. 
The modelling framework is complemented by an executable specification of the semantics of the calculus in Maude. 

All the calculi, mentioned up to now, except for~\cite{NaHa06}, 
represent 
topological changes of mobile networks in the syntax. In contrast
Ghassemi et al.\ \cite{GhasWanFok08} have proposed a process algebra
called RBPT where topological changes to the connectivity graph are
implicitly modelled in the operational semantics rather than in the
syntax. They propose a notion of bisimulation for networks
parametrised on a set of topological invariants that must be
respected by equivalent networks. This work in then refined in
\cite{GhasWanFok09} where the authors propose an equational theory for
an extension of RBPT. 
Godskesen and Nanz~\cite{GodNanz09} have
proposed a simple timed calculus for wireless systems to express a
wide range of mobility models. 

A simple notion of time is also adopted in the calculus for wireless 
systems 
by Macedonio and Merro~\cite{Macedonio_Merro2012} to verify key management 
protocols 
for wireless sensor networks by applying semantics-based techniques. 
In~\cite{Lanotte_Merro11}
this notion of time is extended with probabilities. In this paper a probabilistic simulation theory is proposed 
 to evaluate the performances gossip protocols in the context of wireless sensor networks. 
Paper~\cite{Song_Godskesen2010} also presents a probabilistic broadcast calculus for wireless networks where, unlike \cite{Lanotte_Merro11}, nodes are 
mobile; due to mobility the connection probabilities may change. The authors examine the relation between a notion of weak bisimulation and a minor variant of PCTL*. Paper~\cite{Cerone_Hennessy2013} investigate in detail 
the probabilistic behaviour of wireless networks. The paper presents a compositional theory based on a probabilistic generalisation of the well known may-testing and must-testing pre-orders. Also, it provides an extensional semantics to define both simulation and deadlock simulation preorders for wireless networks. 
Gallina et al.~\cite{BGMR12} propose a process algebraic model targeted at the analysis of both connectivity and communication interference in ad hoc networks. 
The framework includes a probabilistic process calculus and a suite of analytical techniques based on a probabilistic observational congruence and an interference-sensitive preorder. 
In particular, the preorder makes it possible to evaluate the interference level of different, behaviourally equivalent, networks. 
They use their framework to analyse the Alternating Bit Protocol. 
Song and Godskesen~\cite{Song_Godskesen2012} introduce a continuous time stochastic broadcast calculus for mobile and wireless networks. The mobility between nodes in a network is modelled by a stochastic mobility function which allows to change part of a network topology depending on an exponentially distributed delay and a network topology constraint. They define a weak bisimulation congruence and apply their theory on a leader election protocol.

All the calculi mentioned up to now abstract away from the 
possibility of interference between broadcasts.  Lanese and
Sangiorgi~\cite{LaneseS10} have instead proposed the CWS calculus, a
lower level untimed calculus to describe interferences in wireless
systems. In their operational semantics there is a separation between
the beginning and ending of a broadcast, so there is some implicit
representation of the passage of time.  A more explicit timed
generalisation of CWS is given \cite{MerroBS11} to express MAC-layer
protocols such as CSMA/CA, where the authors propose a bisimilarity
which is proved to be sound but not complete with respect to a notion
of reduction barbed congruence. We view the current paper as a simplification
and generalisation of  \cite{MerroBS11}.
 
The research we have mentioned so far has been focused on formalising various
aspects of ad-hoc networks. However other than 
\cite{GodNanz09,MerroBS11}, these various calculi abstract away from 
time. Nevertheless there is an extensive literature on  timed process algebras,
which we now briefly review. From a purely syntactic
point of view, the earliest proposals are extensions of the three main
process algebras, ACP, CSP and CCS. For example, \cite{BB89} presents
a real-time extension of ACP, \cite{Re88} contains a denotational
model for a timed extension of CSP, while CCS is the starting point
for \cite{MT90a}.  In \cite{BB89} and \cite{Re88} time is real-valued,
and at least semantically, associated directly with actions. The other
major approach to representing time is to introduce a special action to
model the passage of time, and to assume that all other actions are
instantaneous. This approach is advocated in     
\cite{Gr89,BaBe96,MT90a,Sifakis94} and \cite{Yi90, Yi91}, although the
basis for this approach  may be found in \cite{BC84}.  The
current paper shares many of the assumptions of the languages
presented in these papers; in particular we have been influenced
by ~\cite{HennessyR95} which contains a timed version of CCS enjoying
time determinism, maximal progress and patience.
All the just mentioned papers  assume
that actions are instantaneous and only the extension of ACP presented
in \cite{Gr89} does not incorporate time determinism; however maximal
progress is less popular and patience is even rarer.

From this early work on timed process calculi a flourishing literature has
emerged. Here we briefly mention some highlights  of this research. 
\leaveout{
More recent works on timed process algebra include the following
papers.  Aceto and Hennessy~\cite{AceHen93} have presented a simple
process algebra where time emerges in the definition of a \emph{timed
  observational equivalence}, assuming that beginning and termination
of actions are distinct events which can be observed.  Hennessy and
Regan~\cite{HennessyR95} have proposed a timed version of CCS enjoying
time determinism, maximal progress and patience.  Our action $\sigma$
takes inspiration from theirs.  The authors have developed a semantic
theory based on testing and characterised in terms of a particular
kind of ready traces.}
Prasad~\cite{Pra96} has proposed a timed
variant of his CBS~\cite{CBS}, called TCBS.  In TCBS a timeout can
force a process wishing to speak to remain idle for a specific
interval of time; this corresponds to have a priority.  TCBS also
assumes time determinism and maximal progress.  Corradini et
al.~\cite{CFP01} deal with \emph{durational actions} proposing a
framework relying on the notions of reduction and observability to
naturally incorporate timing information in terms of process
interaction.  Our definition of timed reduction barbed congruence
takes inspiration from theirs.  Corradini and Pistore~\cite{CorPist01}
have studied durational actions to describe and reason about the
performance of systems. Actions have lower and upper time bounds,
specifying their possible different durations. Their \textit{time
  equivalence} refines the untimed one.  Baeten and
Middelburg~\cite{BaMi02} consider a range  timed process algebras
within a  common framework,  related by embeddings and
conservative extensions relations. These process algebras,
$\mathrm{ACP^{sat}}$, $\mathrm{ACP^{srt}}$, $\mathrm{ACP^{dat}}$ and
$\mathrm{ACP^{drt}}$, allow the execution of two or more actions
consecutively at the same point in time, separate the execution of
actions from the passage of time, and consider actions to have no
duration.  The process algebra $\mathrm{ACP^{sat}}$ is a real-time
process algebra with absolute time, $\mathrm{ACP^{srt}}$ is a
real-time process algebra with relative time. Similarly,
$\mathrm{ACP^{dat}}$ and $\mathrm{ACP^{drt}}$ are discrete-time
process algebras with absolute time and relative time,
respectively. In these process algebra the focus is on unsuccessful
termination or deadlock.  In \cite{BaRe04} Baeten and Reniers extend
the framework of \cite{BaMi02} to model successful termination for the
relative-time case.  Laneve and Zavattaro~\cite{LaneveZ05} have
proposed a timed extension of $\pi$-calculus where time proceeds
asynchronously at the network level, while it is constrained by the
local urgency at the process level. They propose a timed bisimilarity
whose discriminating is weaker when local urgency is dropped.


\bibliography{mh}
\bibliographystyle{plain}   

\appendix
\section{Technical Definitions and Proofs of some Lemmas and Propositions}

\begin{definition}[Process Environments]
A process environment, is a mapping 
from process variables to system terms. In the following 
we use $\rho$ to range over process environments. 
Given an open system term $W$ and a process environment 
$\rho$, the (possibly open) system term $W\rho$ correspond to 
the system term obtained from $W$ by replacing 
each free occurrence of any process variable $X$ with 
$\rho(X)$.
\end{definition}

\begin{lemma}
\label{lem:isrcv.procenv}
Let $\Gamma$ be a channel environment, and $W$ be an  
(open) system term whose free occurrences of process variables are time guarded. 
Then, given a channel $c$ and two process environments $\rho, \rho'$ such that 
both $(W\rho)$ and $(W\rho')$ are closed, $\isrcv{\conf{\Gamma}{W\rho, c}} = 
\isrcv{\conf{\Gamma}{W\rho', c}}$.
\end{lemma}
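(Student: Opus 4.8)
The plan is to prove, by structural induction on $W$, the following sharper statement, from which the lemma is immediate: \emph{if every free occurrence of a process variable in $W$ is time-guarded, i.e.\ lies within a broadcast prefix, an input residual, a timeout branch, a time-delay prefix, or a branch of a matching construct, then $\isrcv{W\rho,c} = \isrcv{W\rho',c}$ for every channel $c$ and all process environments $\rho,\rho'$.} The lemma follows at once, since by definition $\isrcv{\conf{\Gamma}{W'},c}$ holds exactly when both $\Gamma \vdash c:\cfree$ and $\isrcv{W',c}$ hold, and $\Gamma$ is untouched by $\rho$; so the sharper statement applied to $W$ yields $\isrcv{\conf{\Gamma}{W\rho},c} = \isrcv{\conf{\Gamma}{W\rho'},c}$. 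Observe that closedness of $W\rho$ and $W\rho'$ plays no part in this formulation, and I would drop it from the induction; it is retained in the lemma only because that is the shape in which the result is used later.

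The whole induction rests on one observation: the defining clauses of $\isrcv{\cdot,c}$ never look underneath a guard. Concretely, for $W$ of the form $\bcastc d e Q$, $\tau.Q$, $\sigma.Q$, $\matchb b P Q$, $\arcv d x P$, or $\nil$, the value $\isrcv{W,c}$ is the constant $\ffalse$, and for $W = \rcvtimec d x P Q$ it equals $(d=c)$; in none of these does the definition inspect the subterms $P,Q$. Hence for all such $W$ the quantity $\isrcv{W\rho,c}$ manifestly does not depend on $\rho$, and there is nothing further to do. The genuinely inductive cases are the operators over which $\isrcv{\cdot,c}$ distributes — choice $P+Q$ and parallel $W_1\,|\,W_2$ (via disjunction), restriction $\nu d.W_1$ and recursion $\fix X P$ (pass-through) — and for these one simply invokes the induction hypothesis on the immediate subterms. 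The only point needing a word is that each such subterm again satisfies the guardedness hypothesis: a free process-variable occurrence of, say, a summand of $P+Q$ or a component of $W_1\,|\,W_2$ or the body of $\nu d.W_1$ is still free in $W$ (none of $+$, $|$, $\nu$ being a guard), hence guarded in $W$ by hypothesis, and its guarding context is then necessarily contained in that subterm.

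The one case that deserves attention is $W = X$, a bare process variable: here $X$ occurs free and unguarded in $W$, so this case is \emph{vacuous} under the hypothesis — which is precisely where the hypothesis pays off, since otherwise $\isrcv{X\rho,c} = \isrcv{\rho(X),c}$ could differ from $\isrcv{\rho'(X),c}$. For recursion $W = \fix X P$ one argues up to $\alpha$-conversion, using that $\isrcv{\cdot,c}$ passes through both $\fix$ and the application of a process environment (which leaves the bound $X$ untouched); here $P$ meets the requirement of the induction hypothesis because its occurrences of $X$ are guarded by the standing well-guardedness convention of CCCP, while its remaining free process variables are guarded by the hypothesis on $W$ — and the fact that the relevant subterm of $P$ need not be closed is harmless, being one more reason why the closedness-free formulation is the convenient one to induct on. I do not anticipate any real obstacle here; the argument amounts entirely to the remark that $\isrcv$ is blind to everything under a guard, the rest being routine bookkeeping about binders and subterms.
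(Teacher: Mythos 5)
Your proposal is correct and takes essentially the same route as the paper's proof: a structural induction on $W$ in which the base cases are constant because $\isrcv{\cdot,c}$ never inspects subterms under a guard, the case of a bare process variable is vacuous by the time-guardedness hypothesis, and the cases $+$, $|$, $\nu$ and $\fix X P$ follow from the induction hypothesis (for $\fix X P$ using that occurrences of $X$ in $P$ are guarded by the calculus's standing convention). Dropping the closedness side-condition and inducting on the term-level predicate is a harmless, slightly cleaner reformulation of what the paper does, and in particular smooths over the fact that the body of a recursion need not be closed.
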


\proof
Note that if $\Gamma \vdash c: \cbusy$ then, for any channel environment $\rho$ such that 
$W\rho$ is closed, we have that $\isrcv{\conf{\Gamma}{(W\rho)}, c} = \ffalse$, and there is nothing else left 
to prove. 

Suppose then that $\Gamma \vdash c: \cfree$, and let $\rho, \rho'$ be two process environments 
such that $W\rho$ and $W\rho'$ are closed. We proceed by induction on the structure of $W$. 
\begin{itemize}
\item $W = \rcvtimec c x P Q$. In this case we have $\isrcv{\conf{\Gamma}{(\rcvtimec c x P Q)\rho}, c} = 
\isrcv{\conf{\Gamma}{(\rcvtimec c x P Q)\rho'}, c} = \ttrue$,

\item $W = X$. This case is vacuous, as it contains an unguarded free occurrence of a process variable.

\item $W  = \bcastc c e P$. In this case $\isrcv{\conf{\Gamma}{(\bcastc c e P)\rho}, c} = 
\isrcv{\conf{\Gamma}{\bcastc c e {(P\rho)}}, c} = \ffalse$, and $\isrcv{\conf{\Gamma}{(\bcastc c e P)\rho'}, c} = 
\isrcv{\conf{\Gamma}{\bcastc c e {(P\rho')}}, c} = \ffalse$,

\item $W = \tau.P$, $W = \sigma.P$, $W = \matchb b P Q$, $W = \nil$ or $W = \arcv d x P$ where $d$ 
is an arbitrary (possibly equal to $c$) channel; this case is analogous to the previous one,

\item $W = P + Q$.Then we have that
\begin{eqnarray*}
\isrcv{\conf{\Gamma}{(P+Q)\rho}, c} &=& \isrcv{\conf{\Gamma}{(P\rho)}, c} \vee 
\isrcv{\conf{\Gamma}{(Q\rho)}, c}\\
&=& \isrcv{\conf{\Gamma}{(P\rho')}, c} \vee
\isrcv{\conf{\Gamma}{(Q\rho')}, c}\\
&=& \isrcv{\conf{\Gamma}{(P+Q)\rho'}, c}
\end{eqnarray*}
where the equalities $\isrcv{\conf{\Gamma}{(P\rho)}, c} = \isrcv{\conf{\Gamma}{(P\rho')}, c}$ 
and $\isrcv{\conf{\Gamma}{(Q\rho)}, c} = \isrcv{\conf{\Gamma}{(Q\rho')}, c}$ follow by 
induction.

\item $W = \fix X P$. Then we have that 
\begin{eqnarray*}
\isrcv{\conf{\Gamma}{(\fix X P)\rho}, c} &=& \isrcv{\conf{\Gamma}{(P \rho)}, c}\\
&=& \isrcv{\conf{\Gamma}{(P \rho'), c}}\\
&=& \isrcv{\conf{\Gamma}{(\fix X P) \rho', c}}
\end{eqnarray*}
Again, the equality $\isrcv{\conf{\Gamma}{(P \rho)}, c} = \isrcv{\conf{\Gamma}{(P \rho'), c}}$ 
follows by induction.

\item $W = W_1 | W_2$. This case is analogous to the case $W = P + Q$, 

\item $W = \nu c:(t,\cdot).W'$. In this case $\isrcv{\conf{\Gamma}{(\nu c:(t,\cdot).W')\rho}, c} = 
\isrcv{\conf{\Gamma}{(\nu c:(t,\cdot).W')\rho'}, c} = \ffalse$, 

\item $W = \nu d:(t,\cdot).W'$, where $d \neq c$. Then we have 
\begin{eqnarray*}
\isrcv{\conf{\Gamma}{(\nu d:(t,\cdot).W')\rho}, c} &=& 
\isrcv{\conf{\Gamma}{(W'\rho)}, c}\\
&=& \isrcv{\conf{\Gamma}{(W'\rho')}, c}\\
&=& \isrcv{\conf{\Gamma}{(\nu d:(t,\cdot).W')\rho'}, c}\rlap{\hbox
  to88 pt{\hfill\qEd}}
\end{eqnarray*}
\end{itemize}

\begin{lemma}
\label{lem:rcv-enabling.procenv}
Let $\Gamma$ be a channel environment and $W$ be an open  
system term where every free occurrence of process variables is guarded. 
Let also $c$ be a channel and $v$ be a value.
There exists an open system term $W'$ such that, for any process environment 
$\rho$ for which $(W\rho)$ is closed, then $\W'\rho$ is also closed, and 
$\conf{\Gamma}{W\rho} \trans{c?v} {W'\rho}$.
\end{lemma}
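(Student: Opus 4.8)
The plan is to argue by structural induction on the open term $W$, simultaneously defining the witness $W'$ (with $\mathrm{FV}(W') \subseteq \mathrm{FV}(W)$, so that $W'\rho$ is closed whenever $W\rho$ is, and noting that $W'$ depends on $\Gamma$ and $c$ but not on $v$) and verifying that $\conf{\Gamma}{W\rho} \trans{c?v} W'\rho$ holds for \emph{every} closing process environment $\rho$. Two preliminary observations make this go through. First, the hypothesis that every free occurrence of a process variable in $W$ is guarded excludes $W$ being, or containing as an unguarded subterm, a bare process variable; moreover it is inherited by the components of a parallel composition, the summands of a choice, the body of a restriction, and — invoking the global syntactic constraint that recursion variables occur guarded — by the body $P$ of a recursion $\fix X P$. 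Hence every recursive appeal to the inductive hypothesis is legitimate. Second, by Lemma~\ref{lem:isrcv.procenv} the truth value of $\isrcv{\conf{\Gamma}{W\rho},c}$ does not depend on the choice of closing $\rho$, so the case splits below on whether $W$ is ``a receiver on $c$ in $\Gamma$'' are well posed and the side conditions of the rules we apply hold uniformly in $\rho$.

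First I would dispatch the cases in which $\neg\isrcv{\conf{\Gamma}{W\rho},c}$ holds: namely $W$ of the form $\bcastc d e P$, $\tau.P$, $\sigma.P$, $\matchb b P Q$, $\nil$, $\arcv d x P$, $\rcvtimec d x P Q$ with $d \neq c$, and also $\rcvtimec c x P Q$ when $\Gamma \vdash c:\cbusy$. For all of these I take $W' = W$ and obtain $\conf{\Gamma}{W\rho} \trans{c?v} W\rho$ by \rulename{RcvIgn}. The only remaining prefix case is $W = \rcvtimec c x P Q$ with $\Gamma \vdash c:\cfree$, where I set $W' = \arcv c x P$ and apply \rulename{Rcv}. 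For $W = W_1 | W_2$ I apply the inductive hypothesis to $W_1$ and $W_2$, put $W' = W_1' | W_2'$, and combine the two sub-transitions with \rulename{RcvPar}. For $W = \crest{d}{(n,v')}.W_1$ (where I may assume $d \neq c$, as is built into the convention for $\isrcv$) I apply the inductive hypothesis to $W_1$ under the updated environment $\Gamma[d \mapsto (n,v')]$, set $W' = \crest{d}{(n,v')}.W_1'$, and conclude by \rulename{ResV}, whose side condition amounts to $d \neq c$. For $W = P + Q$: if $\isrcv{\conf{\Gamma}{P},c}$ I take $W'$ to be the witness produced by the inductive hypothesis for $P$ and use \rulename{SumRcv} (its premise $\isrcv{\conf{\Gamma}{P\rho},c}$ holds for all $\rho$ by Lemma~\ref{lem:isrcv.procenv}); symmetrically if $\isrcv{\conf{\Gamma}{Q},c}$; and if neither holds then $\neg\isrcv{\conf{\Gamma}{P+Q},c}$ and I take $W' = P+Q$ via \rulename{RcvIgn}.

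The one genuinely delicate case is $W = \fix X P$. If $\neg\isrcv{\conf{\Gamma}{P},c}$ then $\neg\isrcv{\conf{\Gamma}{\fix X P},c}$ and I take $W' = \fix X P$ via \rulename{RcvIgn}. Otherwise I apply the inductive hypothesis to $P$ (whose free process-variable occurrences, including those of $X$, are guarded), obtaining an open term $\tilde P$ with $\conf{\Gamma}{P\theta} \trans{c?v} \tilde P\theta$ for every closing $\theta$. Given a closing $\rho$ for $\fix X P$, I instantiate $\theta = \rho[X \mapsto \fix X {(P\rho)}]$, which is closing for $P$; by the substitution lemma $P\theta$ is exactly the unfolding $\{\fix X {(P\rho)}/X\}(P\rho)$, so from $\conf{\Gamma}{P\theta} \trans{c?v} \tilde P\theta$ and \rulename{Rec} I get $\conf{\Gamma}{(\fix X P)\rho} \trans{c?v} \tilde P\theta$. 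It then remains to produce a single open $W'$ with $W'\rho = \tilde P\theta$ for all $\rho$: I take $W' = \{\fix X P/X\}\tilde P$ and verify, again by the substitution lemma together with the standard freshness conventions, that $(\{\fix X P/X\}\tilde P)\rho = \{\fix X {(P\rho)}/X\}(\tilde P\rho) = \tilde P\theta$.

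The main obstacle is precisely this recursion step: \rulename{Rec} describes a transition of $\fix X P$ in terms of a transition of the \emph{unfolded} term $\{\fix X P/X\}P$, and one has to juggle the process-variable substitutions carefully — commuting the unfolding substitution with the closing substitution $\rho$, and relying on guardedness of $X$ so that the inductive hypothesis applies to $P$ — in order to repackage the resulting target uniformly as $W'\rho$. Everything else is routine rule-chasing, with Lemma~\ref{lem:isrcv.procenv} supplying the uniformity in $\rho$ needed for the rule choices and side conditions in the $+$, $|$ and receiver cases.
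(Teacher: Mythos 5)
Your proof is correct and follows essentially the same route as the paper's: a structural induction in which Lemma~\ref{lem:isrcv.procenv} supplies uniformity in $\rho$, $W'=W$ via \rulename{RcvIgn} for non-receiving terms, \rulename{Rcv}, \rulename{SumRcv} and \rulename{RcvPar} for the receiver, sum and parallel cases, and the recursion case resolved exactly as in the paper by instantiating the inductive hypothesis at $\rho[X\mapsto(\fix X P)\rho]$ and repackaging the target as $W'=\{\fix X P/X\}\tilde P$. The only cosmetic differences are that the paper peels off the $\neg\isrcv$ situation once at the outset rather than inside each inductive case, and that in the restriction case it cites \rulename{ResI} where, as you correctly note, \rulename{ResV} is the rule that applies.
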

\begin{proof}

Note that if $\isrcv{\conf{\Gamma}{(W\rho), c} = \ffalse}$ for some environment 
$\rho$, it suffices to choose $W' = W$. In fact, 
by Lemma \ref{lem:isrcv.procenv} we have that 
$\isrcv{\conf{\Gamma}{W\rho'}, c} =\ffalse$ for any environment $\rho'$ 
such that $W\rho'$ is closed. By applying Rule \rulename{RcvIgn} we 
obtain the transition $\conf{\Gamma}{(W\rho')} \trans{c?v} (W\rho')$.

Therefore, suppose that $W$ is such that $\isrcv{\conf{\Gamma}{(W\rho)}, c} = \ttrue$ 
for some process environment $\rho$ (and, as a consequence of Lemma \ref{lem:isrcv.procenv}, 
$\isrcv{\conf{\Gamma}{(W\rho')}, c} = \ttrue$ for any other process environment 
$\rho'$). Note that in this case we have that $\Gamma \vdash c: \cfree$, and $W$ cannot take 
the form $\bcast c e P$, $\tau.P$, $\sigma.P$, $\matchb b P Q$, $\nil$ or $\arcv d x P$. 
We check the remaining cases, by performing an induction on $W$. In the following 
$\rho$ is an arbitrary process environment.

\begin{itemize}
\item Suppose that $W = \rcvtimec c x P Q$ for some processes $P,Q$. In this case we let 
$W' = \arcv c x P$. 
By definition $(\rcvtimec c x P Q)\rho = \rcvtimec c x {(P \rho')} {(Q \rho)}$, where $\rho' = \rho[x \mapsto x]$;
by applying Rule \rulename{Rcv} we obtain that $\conf{\Gamma}{(\rcvtimec c x {(P \rho')} {(Q \rho)}} \trans{c?v} \arcv c x (P \rho')$. 
note that the latter system term can be rewritten as $(\arcv c x P)\rho$; note in fact that the process 
environments $\rho$ and $\rho'$ differ only at the entry for variable $x$, which is bound in $\arcv c x P$. Therefore 
we have the transition $\conf{\Gamma}{(\rcvtimec c x P Q)\rho} \trans{c?v} (\arcv c x P)\rho$. 

\item Suppose that $W = P + Q$. Note that, in order to ensure that $\isrcv{\conf{\Gamma}{(P+Q)\rho}, c} = \ttrue$, 
it must be either $\isrcv{\conf{\Gamma}{(P\rho)}, c} = \ttrue$ or $\isrcv{\conf{\Gamma}{(Q\rho)},c} = \ttrue$. 
We consider only the first case, as the second one can be handled similarly. 
If $\isrcv{\conf{\Gamma}{(P\rho)}, c} = \ttrue$ then by inductive hypothesis we have that 
there exists a system term $W'$ such that $\conf{\Gamma}{(P\rho)} \trans{c?v} (W'\rho)$. 
By Rule \rulename{SumRcv}, we can derive the transition $\conf{\Gamma}{(P\rho)+(Q\rho)} \trans{c?v} 
W\rho$, which can be rewritten as $\conf{\Gamma}{(P+Q)\rho} \trans{c?v} W'\rho$.
Note also that if $\isrcv{\conf{\Gamma}{(P\rho)}, c} = \ttrue$, then 
$\isrcv{\conf{\Gamma}{(P\rho')},c} = \ttrue$ for any other process environment $\rho'$, 
as a consequence of Lemma \ref{lem:isrcv.procenv}, so that the choice of $W'$ is independent 
from the process environment.

\item Suppose that $W = \fix X P$. By inductive hypothesis, there exists a process 
$W''$ such that, for any process environment $\rho'$, $\conf{\Gamma}{P\rho'} \trans{c?v} W''\rho'$. 
In particular, let $\rho' = \rho[X \mapsto (\fix X P)\rho]$, where $\rho$ is an arbitrary process 
environment. We obtain that $\conf{\Gamma}{P\rho[X \mapsto (\fix X P)\rho]} \trans{c?v} W''\rho[X \mapsto (\fix X P)\rho]$. 
$\conf{\Gamma}{P\rho[X \mapsto (\fix X P)\rho]}  = (\{\fix X P/X\})P\rho$, and 
$W''\rho[X \mapsto (\fix X P)\rho] = (\{\fix X P/X\} W'') \rho$. Let then 
$W' = \{\fix X P/X\} W'$. It suffices to apply Rule \rulename{Rec} to obtain the transition 
$\conf{\Gamma}{(\fix X P)\rho} \trans{c?v} W'\rho$.

\item Suppose that $W = W_1 | W_2$. By inductive hypothesis there exist 
$W_1', W_2'$ such that $\conf{\Gamma}{(W_1\rho)} \trans{c?v} W_1'\rho$, and 
$\conf{\Gamma}{(W_2\rho)} \trans{c?v} W_2'\rho$. In this case we let 
$W' = W_1' | W_2'$. In fact, by Rule \rulename{RcvPar} 
it follows that $\conf{\Gamma}{(W_1\rho)} | (W_2\rho) \trans{c?v} (W_1'\rho) | (W_2'\rho)$, 
or equivalently $\conf{\Gamma}{(W_1 | W_2)\rho} \trans{c?v} (W_1' | W_2')\rho$.

\item Finally, suppose $W = \nu d:(n,v).W_1$, where $d \neq c$. By inductive hypothesis 
we have that $\conf{\Gamma[d \mapsto (n,v)]}{(W_1\rho)} \trans{c?v} W'\rho$ for some 
$W'$. Now it suffices to apply Rule \rulename{ResI} to obtain $\conf{\Gamma}{(W\rho)} \trans{c?v} (W'\rho)$.\qedhere
\end{itemize}
\end{proof}

\paragraph{\textbf{Proof of Lemma \ref{lem:rcv-enabling}.}}
\label{proof:rcv-enabling}
Let $\conf{\Gamma}{W}$ be a configuration.
First note that $W$ is a closed system term, hence $W\rho = W$ for any 
process environment $\rho$. Given an arbitrary channel $c$ and 
an arbitrary value $v$, Lemma \ref{lem:rcv-enabling.procenv} ensures that 
there exists a system term $W'$ such that
$\conf{\Gamma}{W} \trans{c?v} W'$. 

It remains to show that whenever 
$\conf{\Gamma}{W} \trans{c?v} W'$ for some $W'$, if $\isrcv{\conf{\Gamma}{W}, c} = \ttrue$ 
then $W' \neq W$; conversely, 
if $\isrcv{\conf{\Gamma}{W}, c} = \ffalse$ then $W' = W$. 
This last statement can be proved by performing an induction on the proof of the derivation 
$\conf{\Gamma}{W} \trans{c?v} W'$; the proof is relatively simple, and the details are 
left to the reader. 

The case where $\isrcv{\conf{\Gamma}{W}, c} = \ttrue$ is slightly more complicated. 
In practice, we define a function $\rcvno{\cdot, c}$ which maps any system term into 
its number of active receivers along channel $c$ and we show that, whenever 
$\conf{\Gamma}{W} \trans{c?v} W'$, then $\rcvno{W'} > \rcvno{W}$. As an immediate 
consequence, $W' \neq W$. 
Formally, the function $\rcvno{\cdot, c}$ is defined inductively on the structure of system terms, by letting 
for any process $P$ and system terms $W_1, W_2$, 
\begin{enumerate}[label=\({\alph*}]
\item $\rcvno{P, c} = 0$, 
\item $\rcvno{\arcv d x P,c} = 1$ if $d = c$, $0$ otherwise,  
\item $\rcvno{\nu d.(W_1), c} = \rcvno{W_1, c}$, when $d \neq c$, 
\item $\rcvno{(W_1 | W_2), c} = \rcvno{W_1, c} + \rcvno{W_2, c}$.
\end{enumerate}
We proceed by induction on the proof of the derivation $\conf{\Gamma}{W} \trans{c?v} W'$.

\begin{itemize}
\item The last rule applied in the proof of $\conf{\Gamma}{W} \trans{c?v} W'$ is Rule 
\rulename{Rcv}. It follows that $W = \rcvtimec c x P Q$ for some processes $P, Q$, 
hence $\rcvno{W, c} = 0$.
Further, $W' = \arcv c x P$, which leads to $\rcvno{W', c} = 1$;

\item the last Rule applied in the proof of $\conf{\Gamma}{W} \trans{c?v} W'$ is \rulename{SumRcv}; 
Then $W= P + Q$ for some processes $P,Q$ such that $\isrcv{\conf{\Gamma}{P}, c} = \ttrue$, 
and $\conf{\Gamma}{P} \trans{c?v} W'$.  By definition we have that $\rcvno{P+Q, c} = 0$; 
also, $\rcvno{P, c} = 0$, hence by inductive hypothesis $\rcvno{W', c} > 0$, as we wanted to prove; 
the symmetric case of Rule \rulename{SumRcv} is handled similarly.

\item the last rule applied in the proof of $\conf{\Gamma}{W} \trans{c?v} W'$ is Rule 
\rulename{Rec}; this case is analogous to the previous one, 

\item the last rule applied in the proof of $\conf{\Gamma}{W} \trans{c?v} W'$ is Rule \rulename{ResV}; 
then $W = \nu d.(W_1)$  and $W' = \nu d.(W_1')$ for some $d \neq c$, $W_1$ and $W_1'$ such that 
$\conf{\Gamma}[d\mapsto (\cdot, \cdot)]{W_1} \trans{c?v} W_1'$. In this case we have that 
$\rcvno{\nu d.(W_1), c} = \rcvno{W_1, c} > \rcvno{W_1', c} = \rcvno{\nu d.(W_1'), c}$, where 
the inequality $\rcvno{W_1, c} > \rcvno{W_1', c}$ follows from the inductive hypothesis, 

\item the last case to analyse is the one in which Rule \rulename{RcvPar} has been 
applied last in the proof of $\conf{\Gamma}{W} \trans{c?v} W'$. Then 
$W = W_1 | W_2$ for some $W_1, W_2$ such that $\conf{\Gamma}{W_1} \trans{c?v} 
W_1'$ and $\conf{\Gamma}{W_2} \trans{c?v} W_2'$. Further, since 
we are assuming that $\isrcv{\conf{\Gamma}{W_1 | W_2, c}} = \ttrue$, 
then either $\isrcv{\conf{\Gamma}{W_1}, c} = \ttrue$ or $\isrcv{\conf{\Gamma}{W_2}, c} = \ttrue$. 
Without loss of generality, suppose that $\isrcv{\conf{\Gamma}{W_1}, c} = \ttrue$. 
Note that in this case, if $\isrcv{\conf{\Gamma}{W_2}, c} = \ffalse$ then we know that 
$W_2' = W_2$, hence $\rcvno{W_2', c} = \rcvno{W_2, c}$. Otherwise, by inductive hypothesis 
it follows that $\rcvno{W_2', c} >\rcvno{W_2, c}$. In any case, we obtain that $\rcvno{W_2', c} \geq 
\rcvno{W_2, c}$. Also, by inductive hypothesis we have that $\rcvno{W_1', c} > \rcvno{W_1, c}$. 
By these two statements, and the definition of $\rcvno{\cdot, c}$, 
it follows that $\rcvno{W_1 | W_2, c} = \rcvno{W_1, c} + \rcvno{W_2, c} > \rcvno{W_1', c} + \rcvno{W_2, c} = 
\rcvno{W_1' | W_2', c}$.\qedhere
\end{itemize}

\begin{lemma}
\label{lem:time.auxiliary}
Suppose that $\conf{\Gamma}{W} \trans{\sigma} W'$;
\begin{enumerate}[label=(\roman*)]
\item \label{item:time.auxiliary.1} if $W = P + Q$ for some processes $P,Q$ then there exists two processes 
$P', Q'$ such that $\conf{\Gamma}{P} \trans{\sigma} P'$, $\conf{\Gamma}{Q} \trans{\sigma} Q'$ and 
$W' = P' + Q'$,
\item \label{item:time.auxiliary.2} if $W = W_1 | W_2$ for some $W_1, W_2$, then there exists two system terms 
$W_1', W_2'$ such that $W' = W_1' | W_2'$, $\conf{\Gamma}{W_1} \trans{\sigma} W_1'$ 
and $\conf{\gamma}{W_2} \trans{\sigma} W_2'$.
\end{enumerate}
\end{lemma}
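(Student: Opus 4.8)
The plan is to prove both parts by inversion on the derivation of $\conf{\Gamma}{W} \trans{\sigma} W'$, using the observation that the syntactic shape of $W$ together with the label $\sigma$ pins down the last rule that can have been applied. No genuine induction on subderivations is needed.

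First I would handle part~\ref{item:time.auxiliary.1}, where $W = P + Q$. Scanning the SOS rules in Tables~\ref{tab:net2}, \ref{tab:net3} and \ref{tab:struct}, the only one whose conclusion can have the form $\conf{\Gamma}{P+Q} \trans{\sigma} (\cdot)$ is \rulename{SumTime}: rules \rulename{TimeNil}, \rulename{Sleep}, \rulename{ActRcv}, \rulename{EndRcv}, \rulename{Timeout} have a subject that is not a choice; \rulename{RcvLate}, \rulename{Tau}, \rulename{Then}, \rulename{Else} carry a $\tau$ label; \rulename{Rec} requires the subject to be a recursion; \rulename{Sum} is restricted to $\lambda \in \{\tau, c!v\}$ and \rulename{SumRcv} to an input label $\rcva c v$; and \rulename{ResV} requires a restriction. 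Inverting \rulename{SumTime} yields exactly the premises $\conf{\Gamma}{P} \trans{\sigma} P'$ and $\conf{\Gamma}{Q} \trans{\sigma} Q'$, and its conclusion forces $W' = P' + Q'$, which is the claim.

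Part~\ref{item:time.auxiliary.2} is analogous: when $W = W_1 | W_2$, the only rule whose conclusion matches $\conf{\Gamma}{W_1 | W_2} \trans{\sigma} (\cdot)$ is \rulename{TimePar}, since \rulename{Sync} produces a $c!v$ label, \rulename{RcvPar} a $\rcva c v$ label, and every remaining rule has a subject of a different syntactic form. Inverting \rulename{TimePar} gives system terms $W_1', W_2'$ with $\conf{\Gamma}{W_1} \trans{\sigma} W_1'$, $\conf{\Gamma}{W_2} \trans{\sigma} W_2'$ and $W' = W_1' | W_2'$, as required.

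Since the argument is just a determinacy-of-rule-shape observation, I do not expect a real obstacle. The only point demanding care is to be exhaustive when enumerating the rules and, in particular, to verify that none of the structural rules \rulename{Rec}, \rulename{Sum}, \rulename{SumRcv}, \rulename{ResV} can fire on a subject of the relevant shape carrying a $\sigma$ label; this is immediate from their side conditions and from the disjointness of the grammar's constructors.
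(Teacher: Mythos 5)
Your inversion argument is sound under the paper's stated conventions, but it takes a genuinely different route from the paper's own proof, and two points deserve attention. First, a small completeness issue in your rule scan: you only enumerate the rules of Tables~\ref{tab:net2}, \ref{tab:net3} and \ref{tab:struct}, omitting Table~\ref{tab:proc}. The rules \rulename{Snd}, \rulename{Rcv}, \rulename{Sync} and \rulename{RcvPar} are excluded for the same reasons you give elsewhere, but \rulename{RcvIgn} is the one rule whose conclusion subject is a completely arbitrary $W$ (so it does match a choice or a parallel composition); it is ruled out only because its label is $\rcva c v$ rather than $\sigma$, and this should be said explicitly if you claim exhaustiveness. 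Second, and more substantively, your step ``inverting \rulename{SumTime} forces $W' = P' + Q'$'' silently assumes that the decomposition of a term as a sum (respectively, a parallel composition) is unique, i.e.\ that $P+Q = P_1+Q_1$ entails $P=P_1$ and $Q=Q_1$. This is legitimate if system terms are read as syntax trees identified only up to $\alpha$-conversion, which is what the paper officially states; on that reading your proof is correct and strictly simpler, needing no induction at all.

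The paper, however, does not argue this way: it proceeds by induction on the structure of $W$, and in the choice case it analyses three ways in which $P+Q$ can coincide with the subject $P_1+Q_1$ of the \rulename{SumTime} instance, including the overlap cases $P_1 = P+R$, $Q = R+Q_1$ (and symmetrically), applying the inductive hypothesis to the smaller summand and then reassembling with \rulename{SumTime}; part (ii) is treated in the same style for $|$. That extra machinery is exactly what is needed if $+$ and $|$ are implicitly identified up to associativity (a reading the authors lean on elsewhere, e.g.\ when flattening parallel compositions into $\prod_i P_i$); under such an identification unique decomposition fails and your pure inversion would be incomplete. So your approach buys brevity by exploiting the free syntax, while the paper's buys robustness against structural identifications at the cost of a structural induction; if you keep your version, state the unique-decomposition assumption explicitly and complete the rule enumeration.
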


\begin{proof}
Both statements can be proved by induction on the structure of $W$. We only provide the 
details for \eqref{item:time.auxiliary.1}, since the proof for \eqref{item:time.auxiliary.2} is identical in style.

\begin{itemize}
\item First note that if $W$ is a basic process, that is, it has
  either the form $\nil$, $\bcastc c e P$, $\matchb b P Q$, 
 $\rcvtimec  c x P Q$, $\tau.P$, $\fix X P$ or $\sigma.P$ then there is nothing to prove, as the assumption that 
$W = P + Q$ for some processes $P, Q$ is not valid;
\item suppose then that $W = P + Q$ for some processes $P,Q$, and that $\conf{\Gamma}{P+Q} \trans{\sigma} W'$. 
By inspecting the rules of the intensional semantics, it is clear that the last Rule applied in a proof of the transition 
above is \rulename{SumTime}. Thus, there exist processes $P_1, Q_1, P_1', Q_1'$ such that $P + Q = P_1 + Q_1$, 
$W' = P_1' + Q_1'$, $\conf{\Gamma}{P_1} \trans{\sigma} P_1'$ and $\conf{\Gamma}{Q_1} \trans{\sigma} Q_1'$. 
We need to show that there exist two processes $P', Q'$ such that $\conf{\Gamma}{P} \trans{\sigma} P'$, 
$\conf{\Gamma}{Q} \trans{\sigma} Q'$ and $P' + Q' = P_1' + Q_1'$. Note that the assumption 
$P + Q = P_1 + Q_1$ leads to three possible cases: 
\begin{enumerate}
\item there exists a process $R$ such that $P_1 = P + R$, $Q = R + Q_1$; In this case we can apply the inductive 
hypothesis to the system term $P_1$ (note that $P_1$ is a smaller term than $P + Q$, as $P + Q = P_1 + Q_1$). 
Thus the transition $\conf{\Gamma}{P_1} \trans{\sigma} P_1'$ ensures that there exist two system term $P', R'$ 
such that $\conf{\Gamma}{P} \trans{\sigma} P'$, $\conf{\Gamma}{R} \trans{\sigma} R'$ and $P_1' = P' + R'$. 
Further, by applying Rule \rulename{SumTime} to the transitions $\conf{\Gamma}{R} \trans{\sigma} R'$ 
and $\conf{\Gamma}{Q_1} \trans{\sigma} Q_1'$, we obtain $\conf{\Gamma}{R + Q_1} \trans{\sigma} \conf{\Gamma}{R' + Q_1'}$. 
By letting $Q' = R' + Q_1'$, we can rewrite this last transition as $\conf{\Gamma}{Q} \trans{\sigma} Q'$. 
Finally notice that we have $W = P_1' + Q_1' = (P' + R') + Q_1' = P' + (R' + Q_1') = P' + Q'$, as we wanted to prove, 
\item otherwise $P = P_1$ and $Q = Q_1$; this case is trivial, as it suffices to choose $P' = P_1', Q' = Q_1'$,
\item the last case possible is that there exists a process $R$ such that $P = P_1 + R$, $Q_1 = R + Q$; the 
proof here is symmetrical to the first case, as now it is necessary to apply the inductive hypothesis to $Q_1$, 
rather than to $P_1$,
\end{enumerate}

\item the last remaining cases are those in which either $W = \nu c.W_1$ or $W = W_1 | W_2$. Again, 
these cases invalidate the hypothesis that $W$ is a non-deterministic choice of processes, hence there 
is nothing to prove.\qedhere
\end{itemize}
\end{proof}

\paragraph{\textbf{Proof of Proposition \ref{prop:time-determinism}.}}
\label{proof:time-determinism}
We proceed by induction on the proof of the derivation $\confC \trans{\sigma} W_1$. 

\begin{itemize}
\item The last rule applied in the derivation $\confC \trans{\sigma} W_1$ 
is rule \rulename{EndRcv}. Then $\confC = \conf{\Gamma}{\arcv c x P}$ 
for some channel $c$, process $P$, channel environment $\Gamma$ 
for which $\Gamma \vdash_{\mathrm{t}} c: 1$ and $\Gamma \vdash_{\mathrm{v}} c: w$ for some 
closed value $w$. Also $W_1 =\{w/x\}P$. 
Suppose now that 
$\confC \trans{\sigma} W_2$ for some system term $W_2$. 
By inspecting the rules of the intensional semantics we have that 
the only rule which could have been applied to infer this transition 
is again Rule \rulename{EndRcv}. It follows that $W_2 = W_1 = \{w/x\}P$,

\item the cases where the last rule applied in the proof of $\confC \trans{\sigma} W_1$ 
is either \rulename{TimeNil}, \rulename{Sleep}, \rulename{ActRcv} or \rulename{Timeout} 
can be proved similarly to the previous one, 

\item if the last rule applied in the proof of $\confC \trans{\sigma} W_1$ 
is \rulename{SumTime}, then $\confC = \conf{\Gamma}{P + Q}$ 
for some processes $P,Q$. By Lemma \ref{lem:time.auxiliary}\eqref{item:time.auxiliary.1} we also know that 
$W_1 = P_1 + Q_1$ for some $P_1,Q_1$ such that 
$\conf{\Gamma}{P} \trans{\sigma} P_1, \conf{\Gamma}{Q} \trans{\sigma} Q_1$. 

Suppose that $\confC \trans{\sigma} W_2$ for some $W_2$. Then again, Lemma 
\ref{lem:time.auxiliary}\eqref{item:time.auxiliary.1} leads to 
$W_2 = P_2 + Q_2$ for some $P_2, Q_2$ such that 
$\conf{\Gamma}{P} \trans{\sigma} P_2$ and $\conf{\Gamma}{Q} \trans{\sigma} 
Q_2$. But by the inductive hypothesis we have that $P_1 = P_2$, $Q_1 = Q_2$. 
Hence $W_2 = P_2 + Q_2 = P_1 + Q_1 = W_1$,

\item if Rule \rulename{Rec} has been applied last, then $W = \fix X P$ for some process 
variable $X$ and process $P$; further, $\conf{\Gamma}{\{\fix X P/X\} P} \trans{\lambda} W_1$. 
Suppose now hat $\conf{\Gamma}{\fix X P} \trans{\sigma} W_2$ for some $W_2$; then again 
the last rule applied has been \rulename{Rec}, so that $conf{\Gamma}{\{\fix X P/X\} P} \trans{\lambda} W_2$. 
Now, by the inductive hypothesis, we get that $W_1 = W_2$, 

\item the case where \rulename{ResV} is the last one in the derivation $\confC \trans{\sigma} W_1$ 
is similar in style to the previous one, and is therefore left to the reader, 

\item the last case is the one in which the last rule applied for deriving $\confC \trans{\sigma} W_1$ 
is Rule \rulename{TimePar}; the proof in this case is analogous to the one where $\confC = \conf{\Gamma}{P+Q}$, 
using Lemma \ref{lem:time.auxiliary}\eqref{item:time.auxiliary.2} instead of \ref{lem:time.auxiliary}\eqref{item:time.auxiliary.1}.\qed
\end{itemize}

\paragraph{\textbf{Proof of Proposition \ref{prop:maximal-progress}.}}
\label{proof:maximal-progress}
By induction on the proof of the transition. We only supply the details for the most interesting cases.
\begin{itemize}
\item The last Rule applied in the proof of the derivation 
$\confC \trans{\sigma} W_1$ is Rule \rulename{TimeOut}. 
It follows that $\confC = \conf{\Gamma}{\rcvtimec c x P Q }$ for some $\Gamma$, 
channel $c$ and processes $P,Q$ such that $\Gamma \vdash c: \cfree$. 
By inspecting the rules of the intensional semantics we note that no Rule can 
be applied to obtain a transition of the form $\confC \trans{\sndac c v} W_2$, 
nor a transition of the form $\confC \trans{\tau} W_2$; for this last case, 
note in fact that a $\tau$-action can be inferred for a configuration of the form  
$\conf{\Gamma}{\rcvtimec c x P Q}$ only via Rule \rulename{RcvLate}, 
which however requires $\Gamma \vdash c: \cbusy$. This is in contrast with 
our assumption that $\Gamma \vdash c: \cfree$.

\item The last Rule applied in the proof of the transition 
$\confC \trans{\sigma} W_1$ is Rule \rulename{SumTime}. 
Then $\confC = \conf{\Gamma}{P+Q}$ for some $P,Q$ such that 
$\conf{\Gamma}{P} \trans{\sigma} P', \conf{\Gamma}{Q} \trans{\sigma} Q'$ 
and $W_1 = P' + Q'$. 

We show, by contradiction, that $\conf{\Gamma}{P+Q} \ntrans{\sndac c v}$ for any channel 
$c$ and value $v$, and $\conf{\Gamma}{P+Q} \ntrans{\tau}$. 
So suppose that $\conf{\Gamma}{P+Q} \trans{\lambda} W_2$  for some system term $W_2$ 
and action $\lambda \in \{\tau, \sndac c v \;|\; c \in \chanset, v \text{ closed value}\}$. Then the last rule applied in the proof of such a 
transition is either Rule \rulename{Sum} or its symmetric counterpart. 
In the first case we have that $\conf{\Gamma}{P} \trans{\lambda} W_2$, but this 
contradicts the inductive hypothesis; $\conf{\Gamma}{P} \trans{\sigma} P'$ implies 
$\conf{\Gamma}{P} \ntrans{\sndac c v}$. 
Similarly, in the second case $\conf{\Gamma}{Q} \trans{\lambda} W_2$, which 
contradicts the inductive hypothesis applied to the transition $\conf{\Gamma}{Q} \trans{\sigma} 
Q'$. Therefore $\conf{\Gamma}{P+Q} \ntrans{\lambda}$.\qed
\end{itemize}

\paragraph{\textbf{Proof of Proposition \ref{prop:exposure-consistency}.}}
\label{proof:exposure-consistency}
The proof is performed by induction on the structure of the proof of the derivation 
$\conf{\Gamma_1}{W} \trans{\lambda} W'$. Again, we only consider the most interesting 
cases:
\begin{itemize}
\item The last rule applied in the proof of the derivation $\conf{\Gamma_1}{W} \trans{\lambda} 
W'$ is Rule \rulename{Rcv}. Then $\lambda = \rcva c v$ for some channel $c$ 
and value $v$, $\Gamma_1 \vdash c: \cfree$, $W = 
\rcvtimec c x P Q$ for some $P,Q$ and $W' = \arcv c x P$. By Hypothesis we have that 
$\Gamma_2 \vdash c: \cfree$, so that $\conf{\Gamma_2}{\rcvtimec c x P Q} \trans{\rcva c v} 
\arcv c x P$.
\item The last Rule applied in the proof of $\conf{\Gamma_1}{W} \trans{\lambda} W'$ is 
Rule \rulename{RcvLate}. Then $\lambda = \tau$, $\Gamma_1 \vdash c: \cbusy$ for some channel $c$, 
$W = \rcvtimec c x P Q$ and $W' = \arcv c x {\{\err/x\}P}$. By hypothesis 
$\Gamma_2 \vdash c: \cbusy$, so that Rule \rulename{RcvLate} can be applied leading 
to $\conf{\Gamma_2}{\rcvtimec c x P Q} \trans{\lambda} \arcv c x {\{\err/x\}P}$.
\item The last rule applied in the proof of $\conf{\Gamma_1}{W} \trans{\lambda} W'$ is 
Rule \rulename{Then}. Then $W = \matchb b P Q$ for some $b$ such that 
$\interpr{b}_{\Gamma_1} = \ttrue$, $\lambda = \tau$ and 
$W' = \sigma.P$. Here it is necessary to make a case analysis 
on the form of the boolean expression $b$; the most interesting case, and the only one 
which we analyse, is $b = \expsd c$ for some channel $c$. 
Since $\interpr{b}_{\Gamma_1} = \ttrue$ then $\Gamma_1 \vdash c: \cbusy$. 
By hypothesis it follows that $\Gamma_2 \vdash c: \cbusy$, therefore 
$\interpr{b}_{\Gamma_2} = \ttrue$. Now we can apply Rule \rulename{Then} to 
infer $\conf{\Gamma_2}{\matchb b P Q} \trans{\tau} \sigma.P$.

\item The last rule applied in the proof of 
$\conf{\Gamma_1}{W} \trans{\lambda} W'$ is Rule \rulename{Sync}. 
It follows that $\lambda = \sndac c v$ for some channel $c$ and value $v$,  
$W = W_1 | W_2$ and $W' = W_1' | W_2'$ for some $W_1, W_2, W_1', W_2'$ 
such that $\conf{\Gamma_1}{W_1} \trans{\sndac c v} W_1'$, $\conf{\Gamma_2}{W_2} 
\trans{\sndac c v} W_2'$. Then by inductive hypothesis we have that 
$\conf{\Gamma_2}{W_1} \trans{\sndac c v} W_1'$ and $\conf{\Gamma_2}{W_2} 
\trans{\rcva c v} W_2'$. An application of Rule \rulename{Sync} gives 
$\conf{\Gamma_2}{W} \trans{\sndac c v} W'$.\qed
\end{itemize}

\paragraph{\textbf{Proof of Proposition \ref{prop:parallel-components} \eqref{par3}.}}
\label{proof:parallel-components}
Note that the proof of this statement uses Lemma, \ref{prop:parallel-components}\eqref{par2}, 
which can be proved independently. 
For the if implication,suppose that $\conf{\Gamma}{W_1} \trans{\sndac c v} W_1'$ 
and $\conf{\Gamma}{W_2} \trans{\rcva c v} W_2'$. Then, by an application of 
Rule \rulename{Sync} we obtain that $\conf{\Gamma}{W_1 | W_2} \trans{\sndac c v} 
W_1' | W_2'$. Similarly, if $\conf{\Gamma}{W_1} \trans{\rcva c v} W_1'$ and 
$W_2 \trans{\sndac c v} W_2'$, we can obtain the transition $\conf{\Gamma}{W_1 | W_2} 
\trans{\sndac c v} W_1' | W_2'$ using the symmetric counterpart of Rule \rulename{Sync}. 

For the only if implication, suppose that $\conf{\Gamma}{W_1 | W_2} \trans{\sndac c v} 
W'$. Note that we can rewrite $W_1 | W_2$ as $\prod_{i=1}^k P_k$ for some $k \geq 2$. 
We proceed by induction on $k$. 
\begin{itemize}
\item $k = 2$. Then $W_1 = P_1$, $W_2 = P_2$. The last rule applied in the derivation of 
$\conf{\Gamma}{P_1 | P_2} \trans{\sndac c v} W'$ is either Rule \rulename{sync} or 
its symmetric counterpart. 
In the first case we obtain that $\conf{\Gamma}{P_1} \trans{\sndac c v} P_1'$, 
$\conf{\Gamma}{P_2} \trans{\rcva c v} P_2'$ and $W' = P_1' | P_2'$, so that there is 
nothing to prove. The second case is analogous.

\item $k > 2$. Suppose that the statement is true for any index $i \leq k$. 
Again, the last rule applied in the proof of the transition $\conf{\Gamma}{W_1 | W_2} 
\trans{\sndac c v} W'$ is either Rule \rulename{Sync} or its symmetric counterpart. 
We consider only the first case, as the second one is treated similarly. 
If Rule \rulename{Sync} has been applied last, then there exist 
two system terms $W_a, W_b$ such that $W_1 | W_2 = W_a | W_b$ and 
$\conf{\Gamma}{W_a} \trans{\sndac c v} W_a'$, $\conf{\Gamma}{W_b} \trans{\sndac 
c v} W_b'$ and $W' = W_a' | W_b'$. 
Since $W_a | W_b = W_1 | W_2$, we have three possible cases: 

\begin{itemize}
\item $W_1 = W_a | W_x$, $W_b = W_x | W_2$ for some system term $W_x$. 
Then we can apply Proposition \ref{prop:parallel-components}
\eqref{par2}
to the transition 
$\conf{\Gamma}{W_x | W_2} \trans{\rcva c v} W_b'$ to show 
that $\conf{\Gamma}{W_x} \trans{\rcva c v} W_x'$, 
$\conf{\Gamma}{W_2} \trans{\rcva c v} W_2'$ for some $W_x', W_2'$ 
such that $W_b' = W_x' | W_2'$. Now we can apply Rule \rulename{Sync} 
to the transitions $\conf{\Gamma}{W_a} \trans{\sndac c v} W_a'$ and 
$\conf{\Gamma}{W_x} \trans{\rcva c v} W_x'$ to infer 
$\conf{\Gamma}{W_1} \trans{\rcva c v} W_a' | W_x'$. 
Let $W_1' = W_a' | W_x'$. Then we have 
\[
W' = W_a' | W_b' = W_a' | W_x' | W_2' = W_1' | W_2'.
\]

\item $W_a = W_1, W_b = W_2$. In this case there is nothing to prove, 
as it suffices to choose 
$W_1' = W_a', W_2' = W_b'$ to obtain the result.

\item $W_a = W_1 | W_x$, $W_2 = W_x | W_b$ for some $W_x$. 
By the inductive hypothesis we obtain that either 
\begin{itemize}
\item $\conf{\Gamma}{W_1} \trans{\sndac c v} W_1'$, 
$\conf{\Gamma}{W_x} \trans{\rcva c v} W_x'$ for some $W_1', W_x'$ such 
that $W_a' = W_1' | W_x'$, or 

\item $\conf{\Gamma}{W_1} \trans{\rcva c v} W_1'$, 
$\conf{\Gamma}{W_x} \trans{\sndac c v} W_x'$ for some $W_1', W_x'$ such 
that $W_a' = W_1' | W_x'$.
\end{itemize}

We consider only the first case. In this case we can apply Rule \rulename{rcvPar} 
to the transitions $\conf{\Gamma}{W_x} \trans{\rcva c v} W_x'$ and 
$\conf{\Gamma}{W_b} \trans{\rcva c v} W_b'$ to obtain 
$\conf{\Gamma}{W_2} \trans{\rcva c v} W_x' | W_b'$. Let 
$W_2' = W_x' | W_b'$. Then we have proved that 
$\conf{\Gamma}{W_1} \trans{\sndac c v} W_1'$, 
$\conf{\Gamma}{W_2} \trans{\rcva c v} W_2'$; further 
we have that 
\[
W' = W_a' | W_b' = W_1' | W_x' | W_b' = W_1' | W_2'
\]
as we wanted to prove.\qed
\end{itemize}
\end{itemize}

\paragraph{\textbf{Proof of Lemma \ref{lem:taus-Gamma}.}}
\label{proof:taus-gamma}
We first prove that if $\conf{\Gamma}{W} \exttrans{\tau} \conf{\Gamma'}{W'}$ 
then $\Gamma \leq \Gamma'$. 
Note that such a transition could have been inferred in two different ways: 
\begin{itemize}
\item via an application of Rule \rulename{TauExt}, from which it follows that 
$\Gamma' = \gupd{\tau}{\Gamma} = \Gamma$, or 
\item via an application of Rule \rulename{Shh}, applied to a transition of 
the form $\conf{\Gamma}{W} \trans{c!v} W'$; it follows that $\Gamma' = \gupd{c!v}{\Gamma}$, 
from which we obtain that $\Gamma \leq \Gamma'$.
\end{itemize}
Now suppose that $\conf{\Gamma}{W} \extTrans{\tau} \conf{\Gamma'}{W'}$. By definition, 
there exists an integer $n \geq 0$ such that $\conf{\Gamma}{W}  = \conf{\Gamma_0}{W_0} 
\exttrans{\tau} \conf{\Gamma_1}{W_1} \exttrans{\tau} \cdots \exttrans{\tau} \conf{\Gamma_n}{W_n} = 
\conf{\Gamma'}{W'}$. By applying the result proved above to each step in this sequence, 
we obtain $\Gamma = \Gamma_0 \leq \Gamma_1 \leq \cdots \leq \Gamma_n = \Gamma'$, hence $\Gamma \leq \Gamma'$.
\hfill\qed

\begin{corollary}
\label{cor:iota.actions}
For any channel $c$, $\conf{\Gamma}{W} \extTrans{\iota(c)}$ 
implies $\conf{\Gamma}{W} \exttrans{\iota(c)}$.
\end{corollary}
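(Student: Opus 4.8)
The plan is to unfold the weak transition and then invoke the monotonicity of channel environments along $\tau$-sequences, Lemma~\ref{lem:taus-Gamma}. The whole argument is short, so I will sketch all of it.

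First I would spell out what the hypothesis means. If $\conf{\Gamma}{W} \extTrans{\iota(c)}$, then by definition of weak extensional actions there is a configuration $\conf{\Gamma_1}{W_1}$ such that
$\conf{\Gamma}{W} \exttransStar{\tau} \conf{\Gamma_1}{W_1} \exttrans{\iota(c)} \conf{\Gamma_1}{W_1} \exttransStar{\tau} \confC'$
for some $\confC'$. Here the middle step must be an instance of rule \rulename{Idle}: inspecting Table~\ref{tab:extensional}, it is the only rule producing an $\iota(c)$-labelled transition, and it leaves the configuration unchanged, which is why the source and target of that step coincide. From the premise of \rulename{Idle} we immediately read off $\Gamma_1 \vdash c : \cfree$.

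Next I would bring in Lemma~\ref{lem:taus-Gamma}. The $\tau$-prefix $\conf{\Gamma}{W} \exttransStar{\tau} \conf{\Gamma_1}{W_1}$ is precisely $\conf{\Gamma}{W} \extTrans{\ } \conf{\Gamma_1}{W_1}$, so the lemma gives $\Gamma \leq \Gamma_1$. Now suppose $\Gamma \vdash_{\mathrm{t}} c : n$; by the definition of $\leq$ there is $m \geq n$ with $\Gamma_1 \vdash_{\mathrm{t}} c : m$, but $\Gamma_1 \vdash c : \cfree$ means $\Gamma_1 \vdash_{\mathrm{t}} c : 0$, and since $\Gamma_1$ is a function this forces $m = 0$, hence $n = 0$, i.e.\ $\Gamma \vdash c : \cfree$. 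A final application of rule \rulename{Idle} then yields $\conf{\Gamma}{W} \exttrans{\iota(c)} \conf{\Gamma}{W}$, which is what we wanted.

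I do not anticipate any real obstacle: the only point deserving a sentence of care is the opening observation that an $\iota(c)$ transition is always a self-loop coming solely from rule \rulename{Idle}, so that the $\tau$-closure preceding it already terminates in a configuration whose $c$-entry is idle. Once this is in place, the conclusion is a one-line consequence of the fact that $\tau$-steps can only increase exposure times (Lemma~\ref{lem:taus-Gamma}), hence cannot turn an exposed channel into an idle one.
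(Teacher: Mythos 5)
Your proposal is correct and follows essentially the same route as the paper's own proof: unfold the weak transition, read off $\Gamma' \vdash c:\cfree$ from the fact that the $\iota(c)$-step must come from rule \rulename{Idle}, use Lemma~\ref{lem:taus-Gamma} to transfer idleness of $c$ back to $\Gamma$, and close with one application of \rulename{Idle}. The extra care you take in spelling out why $\Gamma \leq \Gamma'$ together with $\Gamma' \vdash_{\mathrm{t}} c:0$ forces $\Gamma \vdash_{\mathrm{t}} c:0$ is just a more explicit rendering of the paper's one-line step.
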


\begin{proof}
By Definition, $\conf{\Gamma}{W} \extTrans{}
\conf{\Gamma'}{W'}\exttrans{\iota(c)}$ 
for some $\Gamma', W'$. Since, $\conf{\Gamma'}{W'} \exttrans{\iota(c)}$ 
we obtain that $\Gamma' \vdash c: \cfree$. Now Lemma \ref{lem:taus-Gamma} 
gives $\Gamma \leq \Gamma'$, hence $\Gamma \vdash c: \cfree$. Therefore we 
can apply Rule \rulename{Idle} of the extensional semantics and derive 
$\conf{\Gamma}{W} \exttrans{\iota(c)} \conf{\Gamma}{W}$.
\end{proof}

\paragraph{\textbf{Proof of Lemma \ref{lem:channel-exposure}.}}
\label{proof:channel-exposure}
Suppose $\conf{\Gamma_1}{W_1} \approx \conf{\Gamma_2}{W_2}$. 
If $\Gamma_1 \vdash c: \cfree$ then by definition of 
Rule \rulename{Idle} of Table~\ref{tab:extensional} it follows that
 $\conf{\Gamma_1}{W_1} 
\exttrans{\iota(c)}$. As $\conf {\Gamma_1} {W_1} \approx \conf 
{\Gamma_2} {W_2}$, 
it follows that $\conf{\Gamma_2}{W_2} \extTrans{\iota(c)}$. 
From Corollary \ref{cor:iota.actions} we have that $\conf{\Gamma_2}{W_2} 
\exttrans{\iota(c)}$, and by the definition of Rule 
\rulename{Idle} that $\Gamma_2 \vdash c: \cfree$. \hfill\qed

\paragraph{\textbf{Proof of Lemma \ref{lem:wf.preserved}.}}
\label{proof:wf.preserved}
We have to show that if $\confC$ is well-formed and 
$\confC \trans{\lambda} W'$, then $\confC' = \conf{\gupd{\lambda}{\Gamma}}{W'}$ 
is also well-formed. We provide the details of the most interesting cases 
of a rule induction on the proof of the aforementioned transition.
\begin{itemize}
\item The last rule applied is Rule \rulename{Rcv}. Then $\lambda = \rcva c v$ for 
some channel $c$ and closed value $v$. Further, $\confC = \conf{\Gamma}{\rcvtimec c x P Q}$, 
$W' = \arcv c x P$ and $\gupd{\rcva c v}{\Gamma} \vdash c: \cbusy$. 
The second equation in Definition \ref{def:wellformed} ensures that $\confC' \in \text{Wnets}$,
\item the last rule applied is Rule \rulename{EndRcv}; in this case $\lambda = \sigma$, 
$W = \arcv c x P$ for some $c$ such that $\Gamma \vdash c: \cbusy$, and $W' = \{w/x\}P$, 
where $w$ is the closed value such that $\Gamma \vdash_{\mathrm{v}} c: w$. It follows from the 
first equation in Definition \ref{def:wellformed} that $\confC' = \conf{\gupd{\sigma}{\Gamma}}{W'}$ is 
well formed,
\item the last rule applied is Rule \rulename{ActRcv}. In this case $W = W' = \arcv c x P$ for some 
$c$ such that $\Gamma \vdash_{\mathrm{t}} c: n$, where $n > 1$. To show that 
$\confC' = \conf{\gupd{\sigma}{\Gamma}}{\arcv c x P}$, it suffices to prove that 
$\gupd{\sigma}{\Gamma} \vdash c: \cbusy$; but this is true, since by Definition of 
$\gupd{\sigma}{\cdot}$ we have that $\gupd{\sigma}{\Gamma} \vdash_{\mathrm{t}} c: n -1$, 
and now $n -1 > 0$,

\item the last rule applied is Rule \rulename{Sync}. Then $\lambda = \sndac c v$, 
$W = W_1 | W_2$, $W' = W_1' | W_2'$ for some $W_1, W_2, W_1',W_2'$ such that 
$\conf{\Gamma}{W_1} \trans{\sndac c v} W_1'$, $\conf{\Gamma}{W_2} \trans{\rcva c v} 
W_2'$ and $W' = W_1' | W_2'$. By inductive hypothesis the configurations 
$\confC_1 = \conf{\gupd{\sndac c v}{\Gamma}}{W_1'}$ and 
$\confC_2 = \conf{\gupd{\rcva c v}{\Gamma}}{W_2'}$ are well formed, 
so by the third equation in Definition \ref{def:wellformed} we have that 
$\confC' \in \text{Wnets}$\footnote{Recall 
that $\gupd{\sndac c v}{\Gamma} = \gupd{\rcva c v}{\Gamma}$.}.\hfill\qed
\end{itemize}

\paragraph{\textbf{Proof of Proposition \ref{prop:noti2sigma}.}}
\label{proof:noti2sigma}
Let $\conf{\Gamma}{W}$ be a well-formed configuration. 
We give the details of the most important cases of a structural 
induction performed on the structure of a system term $W$. 

\begin{itemize}
\item $W = \bcastc c v P$, or $W = \tau.P$; this case is vacuous, since 
by definition of instantaneous reductions $\conf{\Gamma}{W} \red_{i}$,
\item $W = \sigma.P$; this case is trivial, since by applying Rule 
\rulename{Sleep} we infer $\conf{\Gamma}{W} \trans{\sigma} P$, hence 
$\conf{\Gamma}{W} \red_{\sigma} \conf{\gupd{\sigma}{\Gamma}}{P}$, 
\item $W = \arcv c x P$. By definition of well-formed networks 
we have that $\Gamma \vdash c: \cbusy$. Then there are 
two possible cases:
\begin{itemize}
\item $\Gamma \vdash_{\mathrm{t}} c: 1$ and $\Gamma \vdash_{\mathrm{v}} c: v$ for some 
value $v$. We can apply Rule \rulename{EndRcv} to infer the 
transition $\conf{\Gamma}{\arcv c x P} \trans{\sigma} \{v/x\}P$, 
which in turns gives the reduction $\conf{\Gamma}{\arcv c x P} \red_{\sigma} 
\conf{\gupd{\sigma}{\Gamma}}{\{v/x\}P}$, 
\item $\Gamma \vdash_{\mathrm{t}} c: n$ for some $n > 1$; in this case we can apply Rule 
\rulename{ActRcv} to infer $\conf{\Gamma}{\arcv c x P} \trans{\sigma} \arcv c x P$, 
leading to $\conf{\Gamma}{\arcv c x P} 
\red_{\sigma} \conf{\gupd{\sigma}{\Gamma}}{\arcv c x P}$.
 \end{itemize} 
\item $W = \fix X P$. Recall that in this case every occurrence of the process 
variable $X$ in $P$ is (time) guarded, so that we can apply the inductive 
hypothesis to the term  $\{\fix X P/X\}P$. Now suppose that 
$\conf{\Gamma}{\fix X P} \not\red_i$. Then it follows that 
$\conf{\Gamma}{\{\fix X P/X\}P} \not\red_i$, and by inductive 
hypothesis $\conf{\Gamma}{\{\fix X P/X\}P} \red_{\sigma}$. Now it is 
easy to show that $\conf{\Gamma}{\fix X P} \red_{\sigma}$.
\item $W = P + Q$. Suppose that $\conf{\Gamma}{P + Q} \not\red_i$. 
That is, $\conf{\Gamma}{P} \not\red_i$, $\conf{\Gamma}{Q} \not\red_i$, 
By inductive hypothesis we have that $\conf{\Gamma}{P} \trans{\sigma} P'$, 
$\conf{\Gamma}{Q} \trans{\sigma} Q'$ for some $P',Q'$. It follows from 
Rule \rulename{SumTime} that $\conf{\Gamma}{P + Q} \trans{\sigma} P' + Q'$, 
hence $\conf{\Gamma}{P+Q} \red_{\sigma} \conf{\gupd{\sigma}{\Gamma}}{P'+Q'}$.\qed
\end{itemize}

\begin{proposition}
\label{prop:welltimed}
%
For any channel environment $\Gamma$, (possibly open) process $P$ and process environment $\rho$ such that 
${P\rho}$ is closed, then $\conf{\Gamma}{P\rho}$ is well-timed.
\end{proposition}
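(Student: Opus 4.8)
The plan is to exhibit an explicit numerical bound. I would introduce a \emph{weight} function $\mu(\cdot)$ from (open) system terms to $\mathbb{N}$, intended to bound the number of instantaneous reductions that can be performed before time must pass, set
$$\mu(\nil)=\mu(X)=\mu(\sigma.P)=\mu(\arcv c x P)=0,\quad \mu(\bcastc c e P)=\mu(\rcvtimec c x P Q)=\mu(\matchb b P Q)=1,$$
$\mu(\tau.P)=1+\mu(P)$, $\mu(P+Q)=\mu(P)+\mu(Q)$, $\mu(\fix X P)=\mu(P)$, $\mu(W_1|W_2)=\mu(W_1)+\mu(W_2)$, $\mu(\nu c.W)=\mu(W)$. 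This is well defined by structural recursion (the clause for $\fix X P$ refers to the strictly smaller $P$). The values are dictated by the residuals of the intensional rules: rule (Snd) sends $\bcastc c e P$ to $\sigma^{\deltav}.P$, which has weight $0$ since $\deltav\ge 1$; rules (Then)/(Else) send a matching to $\sigma.P$ or $\sigma.Q$; rule (RcvLate) sends a timeout to an active receiver; and rule (Tau) strips one $\tau$. The final step will then be: since $P\rho$ is a closed system term of finite weight, and (as shown below) every $\trans{\tau}$- or $\trans{c!v}$-transition — hence every $\red_i$-step, by Definition~\ref{def:step} — strictly decreases $\mu$ while $\mu\ge 0$, any $\red_i$-sequence out of $\conf{\Gamma}{P\rho}$ has length at most $\mu(P\rho)$, so $k=\mu(P\rho)$ witnesses well-timedness.

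Two auxiliary lemmas come first. The first, by induction on derivations, records that inputs are weight non-increasing: $\conf{\Gamma}{W}\trans{c?v}W'$ implies $\mu(W')\le\mu(W)$; the only non-trivial cases are rule (Rcv), replacing a timeout (weight $1$) by an active receiver (weight $0$), together with (SumRcv), (RcvPar) and (ResV), all handled by the induction hypothesis. The second, and crucial, lemma concerns substitution under recursion: if every free occurrence of $X$ in a process $P$ is time-guarded — i.e.\ occurs within a broadcast, an input residual, a timeout branch, a time-delay prefix, or a branch of a matching construct, exactly the syntactic restriction imposed in Section~\ref{sec:syntax} — then $\mu(\{Q/X\}P)=\mu(P)$ for every process $Q$. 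This is a routine structural induction on $P$: in the prefix and guard cases $\mu$ does not depend on the continuation at all, so the claim holds outright; the $\tau$-case uses that $\tau$ is \emph{not} a guard, so the hypothesis descends into the body and the induction hypothesis applies; the $+$ and $\fix Y{(\cdot)}$ cases use the induction hypothesis directly. Example~\ref{ex:nonWT} is precisely the witness that this hypothesis cannot be dropped: without it, unfolding $\fix X{(\tau.X)}$ \emph{increases} $\mu$.

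With these in hand, the main claim — $\conf{\Gamma}{W}\trans{\lambda}W'$ with $\lambda\in\{\tau,c!v\}$ implies $\mu(W')<\mu(W)$ — follows by induction on the derivation of the transition, and is independent of $\Gamma$ since $\mu$ ignores channel environments (in particular the exposure-predicate instances of matching still land at weight $0$). The base cases (Tau), (Then), (Else), (RcvLate), (Snd) are immediate from the definition of $\mu$. For rule (Sync) one combines the induction hypothesis on the $c!v$-component with the first auxiliary lemma on the $c?v$-component; for (TauPar), (Sum), (ResI), (ResV) one uses the induction hypothesis on the strictly smaller subderivation; and for rule (Rec), a transition of $\conf{\Gamma}{\fix X P}$ factors through $\conf{\Gamma}{\{\fix X P/X\}P}\trans{\lambda}W'$, so the induction hypothesis gives $\mu(W')<\mu(\{\fix X P/X\}P)$, and the second auxiliary lemma — applicable because $X$ is guarded in $P$ by the syntactic restriction — rewrites the right-hand side as $\mu(P)=\mu(\fix X P)$. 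Finally, iterating $\mu$-strict-decrease along a sequence $\conf{\Gamma}{P\rho}\,(\red_i)^h\,\confC'$ gives $h\le\mu(P\rho)$, which is the required bound. I expect the only real obstacle to be pinning down $\mu$ so that all cases of the decrease lemma close simultaneously — above all the (Rec)/substitution interaction — but once the weight of every guarding construct is declared independent of its continuation, this is just careful bookkeeping rather than a genuine difficulty.
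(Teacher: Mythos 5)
Your proof is correct, but it takes a genuinely different route from the paper's. The paper proves this proposition by structural induction on the process $P$, with the statement universally quantified over process environments $\rho$: the case $\fix X P$ is dispatched by instantiating the induction hypothesis at $\rho'=\rho[X\mapsto(\fix X P)\rho]$, observing that $P\rho'=(\{\fix X P/X\}P)\rho$ and that $(\fix X P)\rho$ and its unfolding have the same $\red_i$-sequences; the choice case takes the maximum of the two bounds. You instead build an explicit ranking function $\mu$ that strictly decreases along every $\tau$- and $c!v$-transition (hence along every $\red_i$-step, uniformly in $\Gamma$), is non-increasing under inputs, and is invariant under substitution of guarded variables --- the last point being exactly where the syntactic time-guardedness of recursion variables enters, just as it enters the paper's argument through the vacuity of the case $P=X$. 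Your cases check out: every guarding construct is assigned a weight independent of its continuation, $\tau$ is correctly not treated as a guard, and the (Rec) case closes via the substitution lemma since $\mu(\fix X P)=\mu(\{\fix X P/X\}P)=\mu(P)$. One small bookkeeping remark: in your first auxiliary lemma (inputs non-increasing) the rule (Rec) can also occur with an input label, and that case again needs the substitution lemma, so it is cleaner to establish the substitution lemma first; this is purely presentational, since the two lemmas are logically independent. What your approach buys is an explicit numerical bound $\mu(P\rho)$ and, more significantly, a measure that is already defined on arbitrary system terms, including parallel composition and restriction; it therefore yields Proposition~\ref{cor:wf2wt} for the full calculus in one stroke, whereas the paper's appendix proof of that extension decomposes configurations into parallel processes and restricts attention to the restriction-free fragment, deferring the general case to a structural congruence. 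The paper's inductive argument, on the other hand, needs no auxiliary invariants and stays closer to the operational rules.
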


\begin{proof}
%
We give the details of the most important cases of an induction performed on 
the structure of the process $W$. In the following we assume that 
$\rho$ is a process environment such that $W\rho$ is closed; recall that we are assuming that 
free occurrences of process variables are time guarded in $W$.
\begin{itemize}
\item $W = \rcvtimec c x P Q$. Then we have that $\conf{\Gamma}{(\rcvtimec c x P Q)\rho}
\not\red_i$; it follows that $\conf{\Gamma}{(\rcvtimec c x P Q)\rho}$ is well-timed. 
\item $W = X$ for some process variable $X$; this case is vacuous, since it violates the 
assumption that free occurrences of process variables are (time) guarded in $W$,
\item $W = \fix X P$ for some process $P$. Let $\rho'$ be the environment defined as 
$\rho[X \mapsto (\fix X P)\rho]$. By inductive hypothesis we have that $\conf{\Gamma}{P\rho'}$ 
is well-timed. Further, by definition $P\rho' = (\{\fix X P/X\}P)\rho$. Now note that  
$\conf{\Gamma}{(\fix X P)\rho} \red^h \confC'$ if and only if $\conf{\Gamma}{(\fix X P/X\}P)\rho} 
\red^h \confC'$. 
It follows that $\conf{\Gamma}{(\fix X P)\rho}$ is well-timed. 
\item $W = P + Q$. Suppose that both $(P+Q)\rho$ is closed; 
that is, both $P\rho$ and $Q\rho$ are closed. By inductive hypothesis they are well timed, 
meaning that there exists $k_P \geq 0$ such that whenever 
$\conf{\Gamma}{P\rho} \red^{h} \conf{\Gamma'}{P'}$ then 
$h \leq k_P$; similarly, there exists $k_Q \geq 0$ such that 
whenever $\conf{\Gamma}{Q\rho} \red^h \conf{\Gamma'}{Q'}$ for some $h$, then 
$h \leq k_Q$. Choose $k = \max(k_P,k_Q)$. It is easy to show that 
whenever $\conf{\Gamma}{(P+Q)\rho} \red^h \conf{\Gamma'}{W'}$ then either 
$\conf{\Gamma}{P\rho} \red^h \conf{\Gamma'}{W'}$, in which 
case $h \leq k_P \leq k$, or $\conf{\Gamma}{Q\rho} \red^h \conf{\Gamma'}{W'}$, 
in which case $h \leq k_P \leq k$. It follows that $\conf{\Gamma}{(P+Q)\rho}$ is 
well-timed.\qedhere
\end{itemize}
\end{proof}

\paragraph{\textbf{Proof of Proposition \ref{cor:wf2wt}.}}
\label{proof:wf2wt}
We give the proof for a fragment of the language where channel 
restriction is omitted. This limitation is needed only to avoid technical 
complications in the proof of the statement. In fact, when channel restriction is present, 
we need to introduce a structural congruence $\equiv$ between system terms; the main 
property required by this relation is that it preserves transitions of configurations, meaning 
that whenever $W_1 \equiv W_2$ and $\conf{\Gamma}{W_1} \trans{\lambda} W_1'$, 
then $\conf{\Gamma}{W_2} \trans{\lambda}{W_2'}$, with $W_2 \equiv W_2'$. 
Also, the relation $\equiv$ needs to be defined so that any system term $W$ 
can be rewritten in the form $\nu \tilde{c}.\left(\prod_{i=1}^n P_i\right)$. 
See \cite{phdthesis}, Definition \textbf{9.1.2} at Page 174, for the definition of 
the structural congruence .

Let us focus on the case in which channel restriction is not present in our language
First note that the result holds for any well-formed 
configuration of the form $\conf{\Gamma}{P}$, where $P$ is a closed process; in fact we have that, 
$\conf{\Gamma}{P} = \conf{\Gamma}{P\rho}$ for any process environment $\rho$, and 
the latter is well-timed by Proposition \ref{prop:welltimed}.

Otherwise, we can rewrite $\conf{\Gamma}{W}$ as $\conf{\Gamma}{\prod_{i=1}^n P_i}$, 
for some processes $P_1, \cdots, P_n$. Note that each configuration $\conf{\Gamma_i}{P_i}$ 
is well-formed, hence well-timed; by definition there exists an index $k_{P_i} \geq 0$ such 
that, whenever $\conf{\Gamma}{P_i} \red_i^{h} \conf{\Gamma'}{P_i'}$, then $h \leq k_{P_i}$. 
Now suppose that $\conf{\Gamma}{\prod_{i =1}^n P_i} \red_{i}^h \conf{\Gamma'}{\prod_{i=1}^n P'_i}$; 
we show that $h \leq \left(\sum_{i=1}^n k_{P_i}\right)$ by induction on $h$. 

The case $h=0$ is trivial; suppose then that $h > 0$, and the statement is valid 
for $h -1$; in this case we can rewrite the (weak) reduction above as 
$\conf{\Gamma}{\prod_{i=1}^n P_i} \red_{i} \conf{\Gamma''}{\prod_{i=1}^n P''_i} \red_i^{h-1} \conf{\Gamma'}{\prod_{i=1}^n P'_i}$, 
and by inductive hypothesis $h-1 \leq \sum_{i=1^n} k_{P''_i}$. 
Let us focus on why $\conf{\Gamma}{\prod_{i=1}^n P_i} \red_i \conf{\Gamma''}{\prod_{i=1}^n P''_i} \red_i^{h-1}$.
\begin{enumerate}[label=(\roman*)]
\item $\conf{\Gamma}{\prod_{i=1}^n P_i }\trans{\tau} {\prod_{i=1}^n P''_i}$, and $\Gamma'' = \Gamma$; 
in this case it is not difficult to note that there exists an index $j: 1\leq j \leq n$ such that $\conf{\Gamma}{P_j} \trans{\tau} 
P''_j$, and for any index $i \neq j, 1 \leq i \leq n$, $P''_i = P_i$. 
In this case we have that $k_{P''_j} \leq k_{P_j} - 1$ 

Without loss of generality, let $j = 1$. Then we have that
\begin{align*}
h - 1 & \leq & &  \sum_{i=1}^n k_{P''_{i}} & &  =\\
&=&  k_{P''_{1}}& + & \sum_{i=2}^n k_{P_i} & \leq\\
&\leq& (k_{P_1} - 1) & + &  \sum_{i=2}^n k_{P_i} &=\\
&=& & \left(\sum_{i=1}^n k_{P_i}\right) - 1 & &
\end{align*}
Hence $h \leq \left(\sum_{i=1}^n k_{P_i}\right)$, as we wanted to prove;

\item Otherwise $\conf{\Gamma}{\prod_{i=1}^n P_i }\trans{c!v} {\prod_{i=1}^n P''_i}$, and $\Gamma'' = \gupd{c!v|}{\Gamma}$. 
In this case we can partition the set $\{1,\cdots,n\}$ into three sets $\{l\}$, $I$ and $J$ such that 
\textbf{(a)} $\conf{\Gamma}{P_l} \trans{c!v} \conf{\Gamma''}{P''_l}$ and $P'' = \sigma^{\delta_v}.Q$ for some process $Q$, 
\textbf{(b)} for any $i \in I$, $\isrcv{\conf{\Gamma}{P_i}, c} =\ttrue$ and $P''_i = \arcv c x Q_i$ for some process $Q_i$, 
\textbf{(c)} for any $j \in J$, $\isrcv{\conf{\Gamma}{P_j}, c}= \ffalse$ and $P''_j = P_j$. 
Note that \textbf{(a)} implies that $k_{P''_l} = 0$ and $1 \leq k_{P_l}$, \textbf{(b)} implies that $k_{P''_{i}} = 0$ for any $i \in I$ 
and \textbf{(c)} implies that $k_{P''_j} = k_{P_j}$ for any $j \in J$.

Without loss of generality, suppose that $l = 1$, $I = \{2,\cdots,m\}$ for some $m \leq n$, and $J = \{m+1, \cdot n\}$. 
In this case we have 
\begin{align*}
h - 1 &\leq & & & \sum_{i=1}^n k_{P''_{i}} & & & =\\
&=& k_{P''_1} & + &\left(\sum_{i=2}^m k_{P''_i} \right) & + & \left( \sum_{i = m+1}^{n} k_{P''_i}\right) &=\\
&=& 0 &+& 0 &+& \sum_{i = m+1}^n k_{P_i} &\leq\\
&\leq& (k_{P_1} - 1) &+& 0 &+& \sum_{i= m+1}^n k_{P_i}& \leq\\
&\leq& & & \sum_{i=1}^n k_{P_i} & & &
\end{align*}
Again the last inequation gives $h \leq \left(\sum_{i=1}^n k_{P_i}\right)$.\hfill\qed
\end{enumerate}

\begin{lemma}
\label{lem:reduction.independence}
Let us say that a system term $T$ is behaviourally 
independent from $W$ if 
each channel name appearing free in $T$ does not appear 
free in $W$, and vice versa. 

If $T$ is independent from a configuration $W$, then 
whenever $\conf{\Gamma}{W | T} \red_i \confC$, then either 
\begin{enumerate}[label=(\roman*)]
\item $\confC = \conf{\Gamma'}{W | T'}$, and $\conf{\Gamma}{T} \red_i \conf{\Gamma'}{W'}$, or 
\item $\confC = \conf{\Gamma'}{W' | T}$, and $\conf{\Gamma}{T} \red_i \conf{\Gamma'}{W'}$. 
\end{enumerate}
\end{lemma}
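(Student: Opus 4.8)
The plan is to proceed by a case analysis on the derivation of the instantaneous reduction $\conf{\Gamma}{W \mid T} \red_i \confC$. By Definition~\ref{def:step}, an instantaneous reduction is derived either from clause (i) (Transmission), so $\conf{\Gamma}{W \mid T} \trans{c!v} \widehat W$ with $\Gamma' = \gupd{c!v}{\Gamma}$ and $\confC = \conf{\Gamma'}{\widehat W}$, or from clause (iii) (Internal), so $\conf{\Gamma}{W \mid T} \trans{\tau} \widehat W$ with $\Gamma' = \Gamma$ and $\confC = \conf{\Gamma'}{\widehat W}$. In each case I would invoke Proposition~\ref{prop:parallel-components} to decompose the underlying intensional transition into contributions from $W$ and from $T$.

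For the $\tau$ case, Proposition~\ref{prop:parallel-components}(1) tells us that either $\conf{\Gamma}{W} \trans{\tau} W'$ with $\widehat W = W' \mid T$, or $\conf{\Gamma}{T} \trans{\tau} T'$ with $\widehat W = W \mid T'$; since $\gupd{\tau}{\Gamma} = \Gamma$, these translate directly into $\conf{\Gamma}{W} \red_i \conf{\Gamma}{W'}$ (case (ii) of the statement) or $\conf{\Gamma}{T} \red_i \conf{\Gamma}{T'}$ (case (i)). For the $c!v$ case, Proposition~\ref{prop:parallel-components}(3) gives two subcases; by symmetry consider the one where $\conf{\Gamma}{W} \trans{c!v} W'$, $\conf{\Gamma}{T} \trans{c?v} T'$, and $\widehat W = W' \mid T'$. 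Here is where behavioural independence enters: the output $\conf{\Gamma}{W} \trans{c!v} W'$ must be inferred ultimately from a (Snd) rule application, so $c$ occurs free in $W$; hence by independence $c$ does \emph{not} occur free in $T$. I would then argue, by induction on the structure of $T$ (or using the $\isrcv{\cdot}$ machinery of Lemma~\ref{lem:rcv-enabling}), that when $c$ is not free in $T$ the only way $\conf{\Gamma}{T} \trans{c?v} T'$ can hold is with $T' = T$, because there can be no unguarded receiver on $c$ inside $T$, so rule (RcvIgn) applies and leaves $T$ unchanged. Consequently $\widehat W = W' \mid T$ and, with $\Gamma' = \gupd{c!v}{\Gamma}$, we obtain $\conf{\Gamma}{W} \red_i \conf{\Gamma'}{W'}$, which is case (ii). The mirror subcase ($T$ transmits, $W$ receives) symmetrically yields case (i), using independence in the opposite direction to conclude $W' = W$.

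The main obstacle I anticipate is the bookkeeping in the subcase where one component transmits on a channel $c$ and the other is forced to perform the matching input $c?v$: one must show rigorously that the non-transmitting component's input action is necessarily a pure "ignore" step that returns the same term, which requires the structural characterisation of $\isrcv{\cdot, c}$ on a term not containing $c$ free, together with the determinacy of input transitions from Lemma~\ref{lem:rcv-enabling}. A secondary subtlety is channel restriction inside $W$ or $T$: if $c$ appears only under a restriction in one component it is not "free" there, so independence is genuinely about free names and the (RcvIgn)/(Sync) interplay through (ResV)/(ResI) must be handled; but since the statement speaks only of free channel names, the restriction cases are compatible with the argument and I would treat them in the structural induction on $T$. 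Everything else is routine rule-induction bookkeeping, and the two cases of the statement correspond exactly to the two symmetric subcases of Proposition~\ref{prop:parallel-components}.
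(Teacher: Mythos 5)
Your proposal is correct and follows essentially the same route as the paper's own proof: a case split on whether the instantaneous reduction arises from a $\tau$ or a $c!v$ intensional transition, decomposition via Proposition~\ref{prop:parallel-components}(1) and (3), and then, in the transmission case, using independence to conclude $\isrcv{\conf{\Gamma}{T},c}=\ffalse$ so that Lemma~\ref{lem:rcv-enabling}(1) forces the receiving component's input step to leave it unchanged. The extra points you flag (that the output action forces $c$ to be free in the transmitting component, and the treatment of restricted channels) are exactly the details the paper relies on implicitly, so no gap remains.
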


\begin{proof}
Suppose that $T$ is a system term independent from a configuration $\conf{\Gamma}{W}$, and 
that $\conf{\Gamma}{W | T} \red_i \confC$. By the definition of instantaneous reductions, 
there are two possibilities: 
\begin{enumerate}
\item $\conf{\Gamma}{W | T} \trans{\tau} \widehat{W}$, and $\confC = \conf{\Gamma}{\widehat{W}}$. 
By Proposition \ref{prop:parallel-components}(1) then either $\widehat{W} = W' | T$, and $\conf{\Gamma}{W} 
\trans{\tau} W'$, or $\widehat{W} = W | T'$, and $\conf{\Gamma}{T} \trans{\tau} T'$; in the first case 
we obtain the reduction $\conf{\Gamma}{W | T} \red_i \conf{\Gamma}{W' | T}$, while in the second 
one we get $\conf{\Gamma}{W | T} \red_i \conf{\Gamma}{W | T'}$, 
\item the second possibility is that $\conf{\Gamma}{W | T} \trans{c!v} \widehat{W'}$, 
and $\confC = \conf{\Gamma'}{\widehat{W}}$, where $\Gamma' = \gupd{c!v}{\Gamma}$. In this case, by Proposition 
\ref{prop:parallel-components}\ref{par3} then $\widehat{W} = W' | T'$ and either 
\begin{enumerate}
\item $\conf{\Gamma}{W} \trans{c!v} W'$, $\conf{\Gamma}{T} 
\trans{c!v} T'$; the first transition is possible only if $c$ appears free in $W$, which by assumption 
gives that $c$ does not appear free in $T$; it follows that $\isrcv{\conf{\Gamma}{T}, c} = \ffalse$, 
and by Lemma \ref{lem:rcv-enabling} we obtain that $T' = T$. By converting the intensional 
transition in a reduction (recalling that $\Gamma' = \gupd{c!v}{\Gamma}$), we obtain that 
$\conf{\Gamma}{W | T} \red_i \conf{\Gamma'}{W' | T}$, 
\item or $\conf{\Gamma}{W} \trans{c?v} W'$, $\conf{\Gamma}{T} \trans{c!v} T'$; this case can be handled symmetrically 
to the previous one, and leads to $\conf{\Gamma}{W | T} \red_i \conf{\Gamma'}{W | T'}$.\qedhere
\end{enumerate}
\end{enumerate}
\end{proof}

\begin{lemma}
\label{lem:reduction.preservation}
Let $\conf{\Gamma_1}{W}$ be a configuration, and let 
$\Gamma_2$ be a channel environment such that, for any channel $c$ appearing free in $W$, 
$\Gamma_2(c) = \Gamma_1(c)$. Then if $\conf{\Gamma_1}{W} \red \conf{\Gamma_1'}{W'}$, 
there exists a channel environment $\Gamma_2'$ such that $\conf{\Gamma_2}{W} \red \conf{\Gamma_2'}{W_2'}$, 
and $\Gamma_1'(c) = \Gamma_2'(c)$ for any $c$ appearing free in $W$.
\end{lemma}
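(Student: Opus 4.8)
The plan is to push the statement down to the intensional semantics and exploit the fact that an intensional transition $\conf{\Gamma}{W}\trans{\lambda}W'$ consults $\Gamma$ only through the channels that occur free in $W$. Throughout, write $\mathrm{fc}(W)$ for the set of channels occurring free in $W$ (defined by the obvious structural recursion, with $\mathrm{fc}(\crest{c}{(n,v)}.W)=\mathrm{fc}(W)\setminus\{c\}$ and $c\in\mathrm{fc}(\rcvtimec c x P Q)$, $c\in\mathrm{fc}(\arcv c x P)$, $c\in\mathrm{fc}(\bcastc c e P)$, etc.). First I would record two routine facts, each by an easy induction: (i) if $c\notin\mathrm{fc}(W)$ then $\isrcv{W,c}=\ffalse$, hence $\isrcv{\conf{\Gamma}{W},c}=\ffalse$ for every $\Gamma$ (an unguarded receiver awaiting on $c$ would force $c\in\mathrm{fc}(W)$); and (ii) whenever $\conf{\Gamma}{W}\trans{\lambda}W'$ one has $\mathrm{fc}(W')\subseteq\mathrm{fc}(W)$ — no rule creates fresh free occurrences, restrictions being preserved by \rulename{ResI}/\rulename{ResV} — and moreover if $\lambda=c!v$ then $c\in\mathrm{fc}(W)$, because such a transition must ultimately arise from an application of \rulename{Snd} to a subterm $\bcastc c e P$ not underneath a restriction of $c$.

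The core step is the following claim, proved by induction on the derivation of the transition: if $\conf{\Gamma_1}{W}\trans{\lambda}W'$ with $\lambda\in\{\tau,\sigma\}\cup\{c!v,c?v\mid c\in\chanset\}$, and $\Gamma_1$ and $\Gamma_2$ agree on every channel of $\mathrm{fc}(W)$, then $\conf{\Gamma_2}{W}\trans{\lambda}W_2'$ where $W_2'$ coincides with $W'$ except possibly for the time/value annotations attached to restricted channels (so in particular $\mathrm{fc}(W_2')=\mathrm{fc}(W')$). The only rules that actually inspect the environment are \rulename{Rcv}, \rulename{Timeout}, \rulename{ActRcv}, \rulename{EndRcv}, \rulename{RcvLate}, \rulename{RcvIgn} (through $\isrcv{\cdot}$), \rulename{Then}/\rulename{Else} with guard $\expsd c$, and \rulename{ResI}/\rulename{ResV} (which relativise $\Gamma$ at the restricted channel). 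In each rule of the first group the inspected channel $c$ occurs free in the relevant subterm, so agreement on $\mathrm{fc}(W)$ settles the side condition; for \rulename{RcvIgn}, fact (i) covers $c\notin\mathrm{fc}(W)$ and agreement on $c$ covers $c\in\mathrm{fc}(W)$; for \rulename{ResI}/\rulename{ResV} one observes $\mathrm{fc}(W)\subseteq\mathrm{fc}(\crest{c}{(n,v)}.W)\cup\{c\}$, so $\Gamma_1[c\mapsto(n,v)]$ and $\Gamma_2[c\mapsto(n,v)]$ agree on $\mathrm{fc}(W)$ (trivially on $c$, and on the rest by hypothesis), and the induction hypothesis applies — fact (ii) discharging the $c!v$ side condition on the premise. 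The congruence, parallel, choice and \rulename{Rec} rules are immediate from the induction hypotheses, using $\mathrm{fc}$ of a subterm $\subseteq\mathrm{fc}$ of the whole.

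The lemma itself then follows. By Definition~\ref{def:step} the reduction $\conf{\Gamma_1}{W}\red\conf{\Gamma_1'}{W'}$ is induced by an intensional transition $\conf{\Gamma_1}{W}\trans{\lambda}W'$ with $\lambda\in\{c!v,\sigma,\tau\}$ and $\Gamma_1'$ equal, respectively, to $\gupd{c!v}{\Gamma_1}$, $\tupd{\Gamma_1}$, $\Gamma_1$; when $\lambda=c!v$, fact (ii) gives $c\in\mathrm{fc}(W)$. The core claim yields $\conf{\Gamma_2}{W}\trans{\lambda}W_2'$, hence $\conf{\Gamma_2}{W}\red\conf{\Gamma_2'}{W_2'}$ with $\Gamma_2'$ defined by the same clause. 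To check that $\Gamma_1'$ and $\Gamma_2'$ agree on $\mathrm{fc}(W)$: for $\lambda=\tau$ this is the hypothesis; for $\lambda=\sigma$, $\tupd{\Gamma}(d)$ depends only on $\Gamma(d)$; for $\lambda=c!v$, $\gupd{c!v}{\Gamma}(d)=\Gamma(d)$ for $d\neq c$ by Definition~\ref{def:conf.after}(2), while $\gupd{c!v}{\Gamma}(c)$ is determined solely by $v$ and by whether $\Gamma\vdash c:\cfree$ — a status shared by $\Gamma_1$ and $\Gamma_2$ since $c\in\mathrm{fc}(W)$. Because $\mathrm{fc}(W_2')=\mathrm{fc}(W')\subseteq\mathrm{fc}(W)$, this gives exactly the conclusion.

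There is essentially no deep content here, so the work is all in keeping the inductions honest: getting the $\mathrm{fc}$ inclusions right through every SOS rule, and — the one genuinely delicate point — handling \rulename{ResI}/\rulename{ResV}, where the annotation that the residual attaches to the restricted channel is computed from the ambient environment rather than from the local data $(n,v)$ (arguably a slip in the rule). This is precisely why the statement is phrased with a possibly distinct residual $W_2'$ in place of a literal $W'$; since the conclusion only constrains $\Gamma_2'$ on $\mathrm{fc}(W)$, and these annotations affect neither $\mathrm{fc}$ of the residual nor any subsequent transition, the discrepancy is harmless.
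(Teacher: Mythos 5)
Your proposal follows essentially the same route as the paper's own (outline) proof: convert the reduction into an intensional transition, show by induction on its derivation that it is preserved when $\Gamma_1$ is replaced by $\Gamma_2$, since the rules only consult the environment at channels free in $W$ (and an output $c!v$ forces $c$ to be free in $W$), then take $\Gamma_2'=\gupd{\lambda}{\Gamma_2}$ and check agreement on the free channels via Definition~\ref{def:conf.after}. The additional care you take with the \rulename{ResI}/\rulename{ResV} annotations on restricted channels (allowing the residual to differ there) is a harmless refinement of a point the paper's outline glosses over, not a different approach.
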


\begin{proof}[Outline of the proof]
The reduction $\conf{\Gamma_1}{W} \red_i \conf{\Gamma_1'}{W'}$ can be converted 
in a transition of the form $\conf{\Gamma_1}{W} \trans{\lambda} W'$, where $\lambda$ 
takes either the form $\tau$, $c!v$ or $\sigma$. Note here that if $\lambda$ takes the 
form $c!v$, then $c$ appears free in $W$. 
By performing an induction on the proof of the derivation of this transition we can infer 
a transition for the configuration $\conf{\Gamma_2}{W}$, 
namely $\conf{\Gamma_2}{W} \trans{\lambda} W'$. 
Also, by letting $\Gamma_2' = \gupd{\lambda}{\Gamma_2}$, 
we obtain the reduction $\conf{\Gamma_2}{W} \red \conf{\Gamma_2'}{W'}$. 
Now it remains to note that if $c$ appears free then, by hypothesis, 
$\Gamma_1(c) = \Gamma_2(c)$; hence $\Gamma_1'(c) = \gupd{\lambda}{\Gamma_1}(c) 
= \gupd{\lambda}{\Gamma_2}(c) = \Gamma_2'(c)$.
\end{proof}

\begin{corollary}[Independence of Computations]
\label{cor:computation.independence}
Let $\conf{\Gamma}{W}$ be a configuration, and let $T$ be a system term 
which only uses fresh channels. Then whenever $\conf{\Gamma}{W | T} \red^{\ast} 
\conf{\Gamma''}{\widehat{W}}$ it follows that 
$\widehat{W} = W' | T'$ for some $W', T'$ such that $\conf{\Gamma}{W'} \red^\ast 
\conf{\Gamma'}{W'}$, where $\Gamma'$ is such that $\Gamma'(c) = \Gamma''(c)$ 
for any $c$ appearing free in $W$.
\end{corollary}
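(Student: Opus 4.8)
The plan is to prove the statement by induction on the length $n$ of the reduction sequence $\conf{\Gamma}{W \,|\, T} \red^{\ast} \conf{\Gamma''}{\widehat{W}}$, peeling off the first reduction. (I read the conclusion as: there are system terms $W',T'$ with $\widehat{W} = W' \,|\, T'$ and a channel environment $\Gamma'$ such that $\conf{\Gamma}{W} \red^{\ast} \conf{\Gamma'}{W'}$ and $\Gamma'(c) = \Gamma''(c)$ for every channel $c$ occurring free in $W$.) It is convenient to prove the more general statement in which the hypothesis on $T$ is relaxed to requiring that the channels occurring free in $T$ be disjoint from those occurring free in $W$; this is implied by $T$ using only fresh channels, and, unlike freshness, it is stable under reduction. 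Indeed no reduction introduces new free channels — recursion unfolding, value substitution, the transitions of active receivers and the restriction rules all leave the set of free channels unchanged or smaller, and restricted channels stay restricted — so along the whole computation every residual of $W$ and every residual of $T$ keep disjoint free channels, and hence remain behaviourally independent in the sense of Lemma~\ref{lem:reduction.independence}.

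For the inductive step, look at the first reduction $\conf{\Gamma}{W\,|\,T} \red \conf{\Gamma_1}{\widehat{W}_1}$. If it is instantaneous, Lemma~\ref{lem:reduction.independence} says it is either a move of $W$ alone, so $\widehat{W}_1 = W_1 \,|\, T$ with $\conf{\Gamma}{W} \red_i \conf{\Gamma_1}{W_1}$, or a move of $T$ alone, so $\widehat{W}_1 = W \,|\, T_1$ and $\Gamma_1$ differs from $\Gamma$ only on channels free in $T$. If it is timed, Proposition~\ref{prop:parallel-components}(\ref{prop:sigma}) gives $\widehat{W}_1 = W_1 \,|\, T_1$ with $\conf{\Gamma}{W} \trans{\sigma} W_1$, $\conf{\Gamma}{T} \trans{\sigma} T_1$ and $\Gamma_1 = \tupd{\Gamma}$, hence $\conf{\Gamma}{W} \red_\sigma \conf{\Gamma_1}{W_1}$. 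In each case $\widehat{W}_1 = W_1\,|\,T_1$, the disjointness invariant is preserved, and — since $T$-moves only ever alter entries for channels free in $T$, which are disjoint from those of $W$ — $\Gamma_1$ still agrees with $\Gamma$ on every channel free in $W$ apart from the entry (if any) modified by the $W$-move, which is precisely the entry changed by the projected reduction $\conf{\Gamma}{W}\red\conf{\Gamma_1}{W_1}$. Applying the induction hypothesis to $\conf{\Gamma_1}{W_1\,|\,T_1} \red^{\ast} \conf{\Gamma''}{\widehat{W}}$ yields $\widehat{W} = W'\,|\,T'$ and a reduction $\conf{\Gamma_1}{W_1} \red^{\ast} \conf{\Gamma_1^{W}}{W'}$ with $\Gamma_1^{W}$ agreeing with $\Gamma''$ on the channels free in $W_1$. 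Prepending the (possibly empty) first $W$-move produces a computation out of $\conf{\Gamma}{W}$, and Lemma~\ref{lem:reduction.preservation} is then used to realign its channel environments so that each step reads and writes only the entries for channels free in $W$; the resulting final environment $\Gamma'$ satisfies $\Gamma'(c) = \Gamma''(c)$ for all channels $c$ free in $W$, because $T$-moves never touched those entries along the original computation. The base case $n=0$ is the trivial choice $W' = W$, $T' = T$, $\Gamma' = \Gamma$.

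The main obstacle is exactly this channel-environment bookkeeping in the last step: the projected $W$-computation runs under an environment that may disagree with the one in the $W\,|\,T$-computation on the channels used by $T$ (and on channels free in neither term), and one must check that this discrepancy never affects either the behaviour of $W$ or the values that $\Gamma''$ records on $W$'s channels. Disjointness of the free channels of $W$ and $T$ is what rescues this: at the level of individual transitions a non-timed transition of $W$ depends only on the exposure state of $W$'s own channels (Proposition~\ref{prop:exposure-consistency}) and updates only those entries, and Lemma~\ref{lem:reduction.preservation} packages this for reductions. Threading through the induction the precise invariant — a step-by-step correspondence between individual channel entries rather than the loose assertion that \emph{$\Gamma'$ agrees with $\Gamma''$ on $W$} — is the point that needs care; the rest is a routine case analysis on the form of the first reduction.
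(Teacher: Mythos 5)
Your proposal follows essentially the same route as the paper's own (outline) proof: induction on the length of the reduction sequence, a case split on whether the first step is instantaneous or timed, with Lemma~\ref{lem:reduction.independence} handling the instantaneous case, Proposition~\ref{prop:parallel-components}(\ref{prop:sigma}) recovering the component timed transitions, and Lemma~\ref{lem:reduction.preservation} realigning the channel environments. The one point you rightly flag yourself --- that the induction must carry a strengthened invariant (agreement on all channels not free in the residual of $T$, rather than just on those free in the current residual of $W$, since free channels of $W$ may be lost along the computation) --- is exactly the refinement needed to close the argument, so the proposal is sound.
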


\begin{proof}[Outline]
By induction on the number of derivations $k$ in a sequence of $k$ reductions, 
$\conf{\Gamma}{W | T} \red^k \conf{\Gamma''}{\widehat{W}}$; in the inductive 
step it is necessary to distinguish whether the first reduction of the sequence 
is instantaneous or timed. In the first case, the result follows from 
lemmas \ref{lem:reduction.independence} and \ref{lem:reduction.preservation}. 
In the second case, we need to recover the timed transitions for the individual 
components $\conf{\Gamma}{W}$ and $\conf{\Gamma}{T}$, then apply Lemma 
\ref{lem:reduction.preservation}.
\end{proof}

\paragraph{\textbf{Proof of Lemma \ref{lem:fresh} (Outline).}}
\label{proof:fresh}
\MHc{This is a variation on analogous results already given in the literature, for a number of different
process calculi. We show that the relation
\begin{align*}
& \Ss = \{ (\conf{\Gamma_1}{W_1} , \conf{\Gamma_2}{W_2})\;:\; & \\ 
& \conf{\Gamma'_1}{W_1 | T_1} \simeq \conf{\Gamma'_2}{W_2 | T_2} \text{ for some } T_1, T_2 \text{ independent from both } W_1, W_2 & \\
& \text{ and } \conf{\Gamma_1}(c) = \Gamma_1'(c), \Gamma_2(c) = \Gamma_2'(c) \text{ whenever } c \text{ appears free in } W\} &
\end{align*}
is barb preserving, reduction closed and contextual. Note that it is necessary to employ Corollary \ref{cor:computation.independence} 
to prove that $\Ss$ is reduction closed.\qed
}

\paragraph{\textbf{Proof of Proposition \ref{prop:reduction.isolation}:}}
\label{proof:reduction.isolation}
The two statements are proved separately. Let $\conf{\Gamma_1}{W_1}, \conf{\Gamma_2}{W_2}$ be 
well-formed, and suppose that $\conf{\Gamma_1}{W_1} \simeq \conf{\Gamma_2}{W_2}$.
\begin{enumerate}
\item Suppose that $\conf{\Gamma_1}{W_1} \red_i \conf{\Gamma_1'}{W_1'}$. We have two possible cases, 
according to the definition of $\red_i$:
\begin{enumerate}[label=(\roman*)]
\item       $\conf {\Gamma_1} {W_1} \trans{\tau}  {W'_1}$ and 
				$\Gamma'_1 = \gupd{\tau}{\Gamma_1} = \Gamma_1$, by an application of rule \rulename{TauExt}
\item       $\conf {\Gamma_1} {W_1} \trans{c!v}  {W'_1}$ and $\Gamma'_1 = \gupd{c!v}{\Gamma_1}  $, by an application of rule \rulename{Shh}.
\end{enumerate}
We consider the first case; the proof for the second case is virtually identical. 
  Let \eureka be a fresh channel; that is it does not appear free in $W_1$ and must satisfy 
  $\Gamma_1 \vdash \eureka : \cfree$. 
Let $\arb$ be a message which requires one  time unit 
to be transmitted, i.e.\ $\delta_{\arb} =1$. 
By an application of rules \rulename{TauPar} and \rulename{TauExt}
we derive 
\[
\conf {\Gamma_1} {W_1 | \bcastzeroc \eureka \arb} \exttrans{\tau}
\conf {\Gamma'_1}{W'_1 | \bcastzeroc \eureka \arb}
\]
with $\conf {\Gamma'_1}{W'_1 | \bcastzeroc \eureka \arb}  \Downarrow_{\eureka}$
and $\Gamma'_1 \vdash \eureka : \cfree$. By Definition~\ref{def:step}
this transition corresponds in the reduction semantics to 
\[
\conf {\Gamma_1} {W_1 | \bcastzeroc \eureka \arb} \red 
\conf {\Gamma'_1}{W'_1 | \bcastzeroc \eureka \arb}
\]
As  $\conf {\Gamma_1} {W_1} \simeq \conf {\Gamma_2} {W_2} $ and $\simeq$ is 
contextual, this step must be matched by a sequence of
reductions 
\begin{align}\label{eq:steps2}
   \conf {\Gamma_2} {W_2 | \bcastzeroc \eureka \arb} \red^*    \confC
\end{align}
such that $\conf  {\Gamma'_1}   {W'_1 | \bcastzeroc \eureka \arb} \simeq \confC$. 
Depending on whether the transmission at $\eureka$ is part 
of the sequence of reductions or not, the configuration $\confC$ must be
one of the following: 
\[
\begin{array}{rclcl}
\confC_1 & = &\conf {\Gamma'_2}{W'_2 | \bcastzeroc \eureka \arb} &
\mbox{ with } & \Gamma'_2\vdash \eureka : \cfree\\
\confC_2 & = & \conf {\Gamma'_2}{W'_2 | \sigma.\nil} & \mbox{ with } & \Gamma'_2\vdash \eureka : \cbusy\\
\confC_3 & = & \conf {\Gamma'_2}{W'_2 | \nil}& \mbox{ with } & \Gamma'_2\vdash \eureka : \cfree
\end{array}
\]
As $\eureka$ is a fresh channel (hence not appearing free in $W_2$, it follows that $\calC_3\not\Downarrow_{\eureka}$; 
therefore $\calC$ 
cannot be $\calC_3$. Since $\conf  {\Gamma'_1}   {W'_1 | \bcastzeroc \eureka \arb} \simeq \confC$ and 
$\Gamma'_1 \vdash \eureka : \cfree$, by Proposition~\ref{prop:exposure} 
(which can be applied since we are assuming that $\calC$ 
is well-formed, hence well-timed) 
it follows that $\calC$ cannot be $\calC_2$. 
So, the only possibility is $\calC = \calC_1$. By Lemma~\ref{lem:fresh}
it follows that $\conf {\Gamma'_1}{W'_1} \simeq \conf {\Gamma'_2}{W'_2}$. 
It remains to show that 
$\conf {\Gamma_2} {W_2} \red_{i}^{\ast} \conf {\Gamma'_2} {W'_2}$.

 To this end we can extract out from the reduction sequence (\ref{eq:steps2}) above a reduction
sequence 
\[
   \conf {\Gamma_2} {W_2 } \red^*    \conf {\Gamma'_2} {W'_2 }
\]
We show that each step in this sequence, say $ \conf {\Gamma} {W }
\red \conf {\Gamma'} {W '}$, corresponds to an instantaneous reduction, 
$ \conf {\Gamma} {W } \red_i \conf {\Gamma'}
{W' }$, from which the result follows.

Recall from Definition~\ref{def:step} that there are three possible ways to infer the reduction step 
 $ \conf {\Gamma} {W } \red   \conf {\Gamma'} {W '}$. If it is either 
(Internal), i.e.\ $\conf \Gamma W \trans{\tau} W'$, or a (Transmission), 
i.e.\ $\conf \Gamma W \trans{c!v} W'$,   then by definition 
$ \conf {\Gamma} {W } \red_i   \conf {\Gamma'} {W' }$ follows. Condition (ii), (Time), is not possible
because in the original sequence (\ref{eq:steps2}) above the testing component 
$\bcastzeroc \eureka \arb$ 
can not make a $\sigma$ move, hence it cannot perform a timed reduction $\red_{\sigma}$.

\item Suppose now that $\conf{\Gamma_1}{W_1} \red_{\sigma} \conf{\Gamma_1'}{W_1'}$. 
In this case we will use the testing context:
  \begin{align*}
    T = \delay \sigma {( \tau. \bcastzeroc \eureka \arb   
+  \bcastzeroc \cfail \no)  }
  \end{align*}
where  \eureka and $\cfail$ are fresh channels. 
Since $\conf{\Gamma_1}{W_1} \red_{\sigma} \conf{\Gamma_1'}{W_1'}$ we also have 
$\conf{\Gamma_1}{W_1 | T} \red_{\sigma}\red_{i} \confC_1$, 
where $\conf{C_1} = \conf{\Gamma_1'}{W' | \bcastzeroc \eureka \arb )}$. Note that, 
since $\cfail$ is a fresh channel, we have that $\confC_1 \Downarrow_{\eureka}$ and 
$\confC_1 \not\Downarrow_{\cfail}$.

The contextuality of $\simeq$ gives that $\conf{\Gamma_1}{W_1 | T} \simeq \conf{\Gamma_2}{W_2 | T}$, 
so that we must have the series of reduction steps
 \begin{align}\label{eq:more.sigma}
 \conf   {\Gamma_2} {W_2 |T }  \red^* \confC_2
 \end{align}
where $\confC_1 \simeq \confC_2$. Because $\confC_1 \Downarrow_{\eureka}$ and $\confC_1 \not\Downarrow_{\cfail}$, 
the same must be true of $\confC_2$. 
As $\Gamma'_1 \vdash \eureka : \cfree$, it follows that $\calC_2$  must take the form
$\conf {\Gamma'_2} {W'_2 |   \bcastzeroc \eureka \arb}$. By Lemma~\ref{lem:fresh} we have that 
$\conf {{\Gamma'_1}} {W'_1} \simeq \conf {\Gamma'_2} {W'_2} $. It remains to establish that
$\conf {\Gamma_2} {W_2}  \red_i^{\ast}\red_{\sigma}\red_i^{\ast}  \conf {\Gamma'_2} {W'_2} $. 

We proceed as in the previous proposition, by extracting out of
(\ref{eq:more.sigma}) the contributions from $\conf {\Gamma_2}{W_2}$;
we know that because of the presence of the time delay in $T$, one time unit 
needs to pass before the broadcast along $\eureka$ is enabled in $\conf{\Gamma_2}{W_2 | T}$; 
also, by maximal 
progress (Proposition~\ref{prop:maximal-progress}), we know that such a broadcast must be 
fired before time passes.  So \MHc{ (\ref{eq:more.sigma})
actually takes the form}
\begin{align*}
  \conf {\Gamma_2}{W_2 |T }  \MHc{\,\red_i^*\,} \conf{\Gamma'}{W' | \ldots} \red_{\sigma} 
           \conf{\Gamma''}{W'' | \ldots } \MHc{\,\red_i^*\,\,} \conf {\Gamma'_2}{W'_2 |  \bcastzeroc \eureka \arb} 
\end{align*}
\MHc{Each individual reduction step can now be projected on to the first component, giving the required}
\[
  \conf {\Gamma_2}{W_2}  \red_i^{\ast} \conf{\Gamma}{W} \exttrans{\sigma} \conf{\Gamma'}{W'} 
                  \red_i^{\ast} \conf {\Gamma'_2}{W'_2} \eqno{\qEd}  
\]
\end{enumerate}

\paragraph{\textbf{Proof of Proposition \ref{prop:input.detection}.}}
\label{proof:input.detection}
The two implications are proved separately; 
first, let $\conf{\Gamma}{W}$ be a configuration such that $\conf{\Gamma}{W} \extTrans{c?v} \conf{\Gamma'}{W'}$; 
that is, $\conf{\Gamma}{W} \extTrans{} \conf{\Gamma^{\pre}}{W^{pre}} \exttrans{c?v} \conf{\Gamma^{\post}}{W^{\post}} 
\extTrans{}\conf{\Gamma'}{W'}$. 
Since $T_{c?v}$ does not contain any receiver, nor does $T^{\checkmark}_{c?v}$, 
we have the sequences of transitions $\conf{\Gamma}{W | T_{c?v}} \extTrans{} \conf{\Gamma^{\pre}}{W^{\pre} | T_{c?v}}$ and 
$\conf{\Gamma^{\post}}{W^{\post} | T^{\checkmark}_{c?v}} \extTrans{}\conf{\Gamma'}{W' | T^{\checkmark}_{c?v}}$.

Next we show that $\conf{\Gamma^{\pre}}{W^{\pre} | T_{c?v}} \exttrans{\tau} \conf{\Gamma^{\post}}{W^{\post}} | 
T^{\checkmark}_{c?v}$. Combined with the two (weak) transitions above, this gives the extensional transition 
$\conf{\Gamma}{W | T_{c?v}} \extTrans{} \conf{\Gamma}{W' | T^{\checkmark}_{c?v}}$, which can 
be rewritten as $\conf{\Gamma}{W | T_{c?v}} \red_i^{\ast} \conf{\Gamma}{W' | T^{\checkmark}_{c?v}}$. 

Consider then the transition $\conf{\Gamma^{\pre}}{W^{pre}} \exttrans{c?v} \conf{\Gamma^{\post}}{W^{\post}}$; 
this can only have been obtained by the intensional transition  $\conf{\Gamma^{\pre}}{W^{\pre}} \trans{c?v} W'$, and 
the equality $\Gamma^{\post} = \gupd{c?v}{\Gamma^{\pre}}$.
For the test $T_{c?v}$ we have the transition $\conf{\Gamma^{\pre}}{T_{c?v}} \trans{c!v} {T^{\checkmark}_{c?v}}$; 
Now we can combine the two transitions together, using Rule \rulename{sync}, and get $\conf{\Gamma^{\pre}}{W^{\pre} | T_{c?v}} 
\trans{c!v} W^{\post} | T^{\checkmark}_{c?v}$; also, we know that $\Gamma^{\post} = \gupd{c?v}{\Gamma^{\pre}} = 
\gupd{c!v}{\Gamma^{\pre}}$, hence we can infer the required transition 
$\conf{\Gamma^{\pre}}{W^{\pre} | T_{c?v}} \exttrans{\tau} \conf{\Gamma^{\post}}{W^{\post}} | 
T^{\checkmark}_{c?v}$.

For the other implication, suppose that $\conf{\Gamma}{W | T_{c?v}} \red_i^{\ast} \conf{\Gamma'}{W' | T^{\checkmark}_{c?v}}$. 
This is possible only if, at some point in the sequence, the test component $T_{c?v}$ fired the broadcast along channel 
$c$; in fact, we have that the broadcast along channel $\eureka$ is guarded by a broadcast action in $T_{c?v}$, while it 
is guarded by a delay of $\delta_v$ instants of time in $T^{\checkmark}_{c?v}$. Also, by Maximal Progress 
(Proposition \ref{prop:maximal-progress}) the broadcast performed by $T_{c?v}$ must happen before 
time elapses; formally, we have the sequence of reductions 
\[
\conf{\Gamma}{W | T_{c?v}} \red_i^{\ast} \conf{\Gamma^{\pre}}{W^{\pre} | T_{c?v}} \red_i \conf{\Gamma^{\post}}{W^{\text{\post}} | T_{c?v}^{\checkmark}}\red_i^{\ast} \conf{\Gamma'}{W' | T^{\checkmark}_{c?v}}
\] 
Now note that the sequence of instantaneous reductions  
\begin{equation}
\label{eq:input.sequence}
\conf{\Gamma}{W | T_{c?v}} \red_i^{\ast} \conf{\Gamma^{\pre}}{W^{\pre} | T_{c?v}}
\end{equation} 
induces the extensional transition $\conf{\Gamma}{W} \extTrans{} \conf{\Gamma^{\pre}}{W^{\pre}}$. 
This can be proved using the facts that, for any channel environment $\Gamma_x$ and channel $d$, 
whenever $\conf{\Gamma_x}{T_{c?v}} \exttrans{\tau} \conf{\Gamma_x'}{T'}$, then $T' = T_{c?v}$, 
and whenever $\conf{\Gamma_x}{T_{c?v}} \extTrans{\tau} \conf{\Gamma_x'}{T'}$ then 
$T' \neq T_{c?v}$.

Similarly, we can prove that the weak reduction 
\[
\conf{\Gamma^{\post}}{W^{\post} | T_{c?v}^{\checkmark}}\red_i^{\ast} \conf{\Gamma'}{W' | T^{\checkmark}_{c?v}}
\] 
induces the extensional transition $\conf{\Gamma^{\post}}{W^{\post}} \extTrans{} \conf{\Gamma'}{W'}$. 

It remains to show that we can infer the transition 
$\conf{\Gamma^{\pre}}{W^{\pre}} \exttrans{c?v} \conf{\Gamma^{\post}}{W^{\post}}$ from the reduction 
$\conf{\Gamma^{\pre}}{W^{\pre} | T_{c?v}} \red_i 
\conf{\Gamma^{\post}}{W^{\post} | T_{c?v}^{\checkmark}}$. 
Note that in $T_{c?v}$ we have a station which is ready to broadcast along channel $c$, while 
this is not true anymore in $T^{\checkmark}_{c?v}$.
By performing a case analysis on the intensional transition which could have led to the reduction above, 
we find that the only possible case is that 
$\conf{\Gamma^{\pre}}{W^{\pre} | T_{c?v}} \trans{c!v} 
W^{\post} | T_{c?v}^{\checkmark}$ and, more specifically, that 
$\conf{\Gamma^{\pre}}{W^{\pre}} \trans{c?v} W^{\post}$ and 
$\conf{\Gamma^{\pre}}{T_{c?v}} \trans{c!v} T^{\checkmark}_{c?v}$. 
Also, $\conf{\Gamma^{\post}} = \gupd{c!v}{\conf{\Gamma}^{\pre}}$. 
By an application of Rule \rulename{Input} in the extensional semantics, 
we get the required transition 
$\conf{\Gamma^{\pre}}{W^{\pre}} \exttrans{c?v} \conf{\Gamma^{\post}}{W^{\post}}$, 
which can be combined with the two weak transitions already derived, 
namely $\conf{\Gamma}{W} \extTrans{} \conf{\Gamma^{\pre}}{W^{\pre}}$ and 
$\conf{\Gamma^{\post}}{W^{\post}} \extTrans{} \conf{\Gamma'}{W'}$, 
to obtain $\conf{\Gamma}{W} \extTrans{c?v} \conf{\Gamma'}{W'}$.
\hfill\qed

\paragraph{\textbf{Proof of Proposition \ref{prop:exp.detection}.}}
\label{proof:exp.detection}
Suppose that $\conf{\Gamma}{W} \extTrans{\iota(c)} \conf{\Gamma'}{W'}$. 
This can be rewritten as $\conf{\Gamma}{W} \extTrans{} \conf{\Gamma^\pre}{W^\pre} 
\exttrans{\iota(c)} \conf{\Gamma^{\post}}{W^\post} \extTrans{}\conf{\Gamma'}{W'}$. 
Since the only rule of the extensional semantics that could have been used to 
derive $\conf{\Gamma^\pre}{W^\pre} \exttrans{\iota(c)} \conf{\Gamma^\post}{W^\post}$ 
is \rulename{Idle}, we obtain that $\conf{\Gamma^\pre}{W^\pre} = \conf{\Gamma^\post}{W^\post}$. 
Thus, we have $\conf{\Gamma}{W} \extTrans{} \conf{\Gamma^\pre}{W^\pre} = \conf{\Gamma^\post}{W^\post} 
\extTrans{}\conf{\Gamma'}{W'}$, or equivalently $\conf{\Gamma}{W} \extTrans{\conf{\Gamma'}{W'}}$. 
In terms of the reduction semantics, this can be rewritten as $\conf{\Gamma}{W} \red_i^{\ast} \conf{\Gamma'}{W'}$.

By Corollary \ref{cor:iota.actions} we know that $\conf{\Gamma}{W} \extTrans{\iota(c)}$ 
implies $\conf{\Gamma}{W} 
\exttrans{\iota(c)} \conf{\Gamma}{W}$; therefore $\Gamma \vdash c: \cfree$. 
Now it is easy to see that we have the reduction 
$\conf{\Gamma}{W | T_{\iota(c)}} \red_i \conf{\Gamma}{W |T^{\checkmark}_{\iota(c)}}  
\red_i^\ast \conf{\Gamma'}{W' | T_{\iota(c)}}$, 
where the first reduction has been obtained 
by letting the predicate $\expsd{c}$ be evaluated in $T_{\iota(c)}$, 
while the rest of the sequence can be derived using the facts that 
$\conf{\Gamma}{W} \red_i^{\ast} \conf{\Gamma'}{W'}$, and for 
any channel environment $\Gamma_x$ we have that 
$\conf{\Gamma_x}{T^{\checkmark}_{\iota(c)}} \not\red_i$, 
$\conf{\Gamma_x}{T^{\checkmark}_{\iota(c)}} \trans{c?v} T'$ implies 
$T' = T^{\checkmark}_{\iota(c)}$.

Conversely, suppose that $\conf{\Gamma}{W | T_{\iota(c)}} \red^i_{\ast} \conf{\Gamma'}{W' | T_{\iota(c)}^{\checkmark}}$. 
In this sequence of reductions, the evolution of the test component from $T_{\iota(c)}$ to $T^\checkmark_{\iota(c)}$ 
is possible only if eventually the exposure check on channel $c$ is evaluated to true. 
That is, we have the sequence of reductions  
\[\conf{\Gamma}{W | T_{\iota(c)}} \red_i^{\ast} \conf{\Gamma^\pre}{W^\pre | T_{\iota(c)}} 
\red_i \conf{\Gamma^\post}{W^\post | T^\checkmark_{\iota(c)}} \red_i^{\ast} \conf{\Gamma'}{W' | T^\checkmark_{\iota(c)}}
\] 
where $\Gamma^\pre \vdash c: \cfree$. 

Since the evaluation of the exposure check in the reduction 
$\conf{\Gamma^\pre}{W^\pre | T_{\iota(c)}} \red_i \conf{\Gamma^\post}{W^\post | T^\checkmark_{\iota(c)}}$ 
corresponds to a $\tau$-intensional transition which affects only the system term $T_{\iota(c)}$, 
that is $\conf{\Gamma^\pre}{T_{\iota(c)}} \trans{\tau} T^\checkmark_{\iota(c)}$, Proposition 
\ref{prop:parallel-components}(1) ensures that $W^\post = W^\pre$, and $\Gamma^\post = 
\gupd{\tau}{\Gamma^\pre} = \Gamma^\pre$. Using the facts that 
$\conf{\Gamma^\pre}{W^\pre} = \conf{\Gamma^\post}{W^\post}$ and $\Gamma^\pre \vdash c: \cfree$, 
we can apply Rule \rulename{Idle} of the extensional semantics and infer the transition 
$\conf{\Gamma^\pre}{W^\pre} \exttrans{\iota(c)} \conf{\Gamma^\post}{W^\post}$.

Next, note that for any configuration $\Gamma_x$, we have that 
$\conf{\Gamma_x}{T_{\iota(c)}} \trans{d?v} T'$ implies $T' = T_{\iota(c)}$, 
and $\conf{\Gamma_x}{T_\iota(c)} \red_i \conf{\Gamma'_x}{T'}$ implies 
$T' \neq T_{\iota(c)}$. Similar results hold for the system term $T^\checkmark_{\iota(c)}$. 
Using these facts, it is not difficult can derive the 
extensional transition $\conf{\Gamma}{W} \extTrans{} \conf{\Gamma^\pre}{W^\pre}$ 
from the sequence of reductions $\conf{\Gamma}{W | T_{\iota(c)}} \red_i^\ast\conf{\Gamma^\pre}{W^\pre}$, 
and the transition 
$\conf{\Gamma^\post}{W^\post} \extTrans{}\conf{\Gamma'}{W'}$ from 
the sequence of reductions $\conf{\Gamma^\post}{W^\post | T^\checkmark_{\iota(c)}} 
\red_i^\ast \conf{\Gamma'}{W' | T^\checkmark_{\iota(c)}}$.

Thus we have proved that $\conf{\Gamma}{W} \extTrans{} \conf{\Gamma^\pre}{W^\pre} \exttrans{\iota(c)} \conf{\Gamma^\post}{W^\post} 
\extTrans{} \conf{\Gamma'}{W'}$, or equivalently $\conf{\Gamma}{W} \extTrans{\iota(c)} \conf{\Gamma'}{W'}$.
\hfill\qed

\paragraph{\textbf{Proof of Proposition \ref{prop:delivery.detection}.}}
\label{proof:delivery.detection}
For any value $w$, let $T_w$ be the system term 
\[
T_w = \nu d:(0,\cdot).((\matchb {w=v} {\bcastzeroc d \arb} {\nil}) + \bcastzeroc \cfail \no | 
\sigma.\matchb{\expsd{d}}{\bcastzeroc \eureka \arb}{\nil}) 
\]

Suppose that $\conf{\Gamma}{W} \extTrans{\gamma(c,v)} \conf{\Gamma'}{W'}$. 
In particular, we have that $\conf{\Gamma}{W} \extTrans{} \conf{\Gamma^\pre}{W^\pre}  
\exttrans{\gamma(c,v)}
\conf{\Gamma^\post}{W^\post} \extTrans{} \conf{\Gamma'}{W'}$. 
From the transition $\conf{\Gamma^\pre}{W^\pre} \exttrans{\gamma(c,v)} \conf{\Gamma^\post}{W^\post}$ 
we get that $\Gamma^{\pre} = (1,v)$, and $\conf{\Gamma^\pre}{W^\pre} \trans{\sigma} W^\post$. 
In particular, note that $\Gamma^ \pre \vdash c: \cbusy$, hence $\conf{\Gamma}^{\pre}{W^{\pre} | T_{\iota(c,v)}}$ 
is well formed.
Note also that $\conf{\Gamma_x}{T_{\gamma(c,v)}} \not\red_i$ for any environment $\Gamma_x$ with 
$\Gamma_x \vdash c: \cbusy$, and that $\conf{\Gamma_x}{T_{\gamma(c,v)}} \trans{c?v} T'$ implies that 
$T' = T_{\gamma(c,v)}$. Also, since $\Gamma^\pre(c) = (1,v)$, we obtain the transition 
$\conf{\Gamma^\pre}{T_{\iota(c)}} \trans{\sigma} T_v$. 
Finally, note that, for any channel environment $\Gamma_x$ we also have the transition 
$\conf{\Gamma_x}{T_v} \trans{\tau} T^\checkmark_{\gamma(c,v)}$.
Using these facts, we can build the sequence of transitions 
\[
\conf{\Gamma}{W | T_{\gamma(c,v)}} \red_i^{\ast} \conf{\Gamma^\pre}{W^\pre | T_{\gamma(c,v)}} 
\red_{\sigma} \conf{\Gamma^\post}{W^\post | T_v} \red_i^{\ast} \conf{\Gamma'}{W' | T_v} 
\red_i \conf{\Gamma'}{W' | T^\checkmark_{\gamma(c,v)}}
\]

Now suppose that $\conf{\Gamma}{W}{T_{\gamma(c,v)}} \red_i^{\ast}\red_{\sigma}\red_{i}^{\ast} \conf{\Gamma}{W' | T_{\gamma(c,v)}}$; 
we need to show that $\conf{\Gamma}{W} \extTrans{\gamma(c,v)} \conf{\Gamma'}{W'}$. 
Note that, in order for the testing component $T_{\gamma(c,v)}$ to evolve into $T^\checkmark_{\gamma(c,v)}$, then 
\begin{enumerate}
\item when the first time instant passes, the test evolves into $T_w$ for some value $w$; this is because in $T^\checkmark_{\gamma(c,v)}$ 
the active receiver along channel $c$ has vanished, and  in CCCP active receivers along a channel $c$ can only disappear after a timed 
reduction has been performed, and only if the state of channel $c$ changes from exposed to idle,
\item at some point, in the remaining of the computation, the matching construct $[w=v]$ is evaluated in $T_w$, leading 
to the test component to evolve in $T^\checkmark_{\gamma(c,v)}$. Note that the matching construct $[w=v]$ cannot be 
evaluated to false, as this would cause the test component to evolve to a system term different from $T^\checkmark_{\gamma(c,v)}$. 
Therefore, $w = v$, and more specifically $T_{w} = T_v$. 
\item The evaluation of the matching construct $[v = v]$ to true is modelled as an $\tau$-intensional action, hence it 
does not affect the tested component $W$.
\end{enumerate} 
Formally, we have a sequence of reductions 
\begin{eqnarray*}
&\conf{\Gamma}{W | T_{\gamma(c,v)}} \red_i^{\ast} \conf{\Gamma^\pre}{W^\pre | T_{\gamma(c,v)}} &\red_{\sigma}\\
\red_{\sigma}& \conf{\Gamma^\post}{W^\post | T_v} \red_{i}^{\ast} \conf{\Gamma''}{W'' | T_v} &\red_i\\ 
\red_i & \conf{\Gamma''}{W'' | T^\checkmark_{\gamma(c,v)}}
\red_{i}^\ast\conf{\Gamma'}{W' | T^\checkmark_{\gamma(c,v)}}&
\end{eqnarray*}
where $\Gamma^\pre(c) = (1,v)$.

Let $T$ be either $T_{\gamma}{c}, T_v$ or $T^\checkmark_{\gamma{c,v}}$, and let $\Gamma_x$ be 
an arbitrary channel environment; note that we have 
that $\conf{\Gamma_x}{T} \trans{d?v} T'$ implies $T' = T$, and $\conf{\Gamma_x}{T} \red_i \conf{\Gamma'_x}{T'}$ 
implies that $T' \neq T$. Using these facts, it is not difficult to derive the transitions 
\begin{enumerate}[label=\({\alph*}]
\item $\conf{\Gamma}{W} \extTrans{} \conf{\Gamma^\pre}{W^\pre}$, 
\item $\conf{\Gamma^\post}{W^\post} \extTrans{}{\conf{\Gamma''}{W''}}$, 
\item $\conf{\Gamma''}{W''} \extTrans{} \conf{\Gamma'}{W'}$
\end{enumerate}
Thus, we only need to show that 
$\conf{\Gamma^\pre}{W^\pre} \exttrans{\gamma(c,v)} \conf{\Gamma^\post}{W^\post}$. 
The timed reduction $\conf{\Gamma^\pre}{W^\pre | T_{\gamma(c,v)}} 
\red_{\sigma} \conf{\Gamma^\post}{W^\post | T_v}$ can only be inferred if  
$\conf{\Gamma^\pre}{W^\pre} \trans{\sigma} W^\post$, 
$\conf{\Gamma^\pre}{T_{\gamma(c,v)}} \trans{\sigma} T_v$ and $\Gamma^\post = 
\gupd{\sigma}{\Gamma^\pre}$. Also, note that the only possibility for inferring 
the transition $\conf{\Gamma^\pre}{T_{\gamma(c,v)}} \trans{\sigma} T_v$ 
is by using an instance of Rule $\rulename{EndRcv}$ (where the channel environment 
contains value $v$ at channel $c$); therefore, we obtain that $\Gamma^\pre(c) = (1,v)$. 

We have proved that $\Gamma^\pre(c) = (1,v)$, $\conf{\Gamma^\pre}{W^\pre} \trans{\sigma} 
W^\post$ and $\Gamma^\post = \gupd{\sigma}{\Gamma^\pre}$; therefore, we can apply Rule \rulename{Deliver} 
to infer that $\conf{\Gamma^\pre}{W^\pre} \exttrans{\gamma(c,v)} \conf{\Gamma^\post}{W^\post}$, as 
we wanted to show. By combining this transition with the weak transitions listed in (a), (b), (c), above, 
we obtain the required $\conf{\Gamma}{W} \extTrans{\gamma(c,v)} \conf{\Gamma'}{W'}$.
\hfill\qed


\end{document}